\documentclass[journal, onecolumn]{IEEEtran}
\IEEEoverridecommandlockouts
\usepackage{cite}
\usepackage{amsmath,amssymb,amsfonts}
\usepackage{graphicx}
\usepackage{textcomp}
\usepackage{xcolor}
\usepackage[utf8]{inputenc} 
\usepackage[T1]{fontenc}    
\usepackage{hyperref}       
\usepackage{url}            
\usepackage{booktabs}       
\usepackage{amsfonts}       
\usepackage{tikz}
\usepackage{pgfplots}
\pgfplotsset{compat=1.18}
\usetikzlibrary{calc,fit,patterns}
\usepackage{bbm}
\usepackage{amsthm}
\usepackage{stmaryrd}
\usepackage{mathrsfs}
\usepackage{algpseudocode}
\usepackage{amsmath}
\usepackage{algorithm}
\usepackage{amssymb}
\usepackage{pifont}
\usepackage{subcaption}
\usepackage{nicefrac}       
\usepackage{microtype}      
\usepackage{multirow}
\usepackage{xcolor}
\usepackage{lipsum}
\usepackage{ulem} 
\usepackage{mathtools}
\usepackage{float}

\newcommand{\barp}{\bar{p}}

\newcommand{\barepsilon}{\bar{\varepsilon}}

\newcommand{\Fai}[1][]{\mathscr{F}_{\mathfrak{a}^{#1},1}}
\newcommand{\Faii}[1][]{\mathscr{F}_{\mathfrak{a}^{#1},2}}
\newcommand{\Fphi}[1][]{\mathscr{F}_{\varphi}^{#1}}
\newcommand{\Fpsii}[1][]{\mathscr{F}_{\psi,1}^{#1}}
\newcommand{\Fpsiii}[1][]{\mathscr{F}_{\psi,2}^{#1}}
\newcommand{\Ga}[1][]{\mathscr{G}_{|\mathfrak{a}^{#1}|}}
\newcommand{\Ba}[1][]{\mathfrak{B}(|\mathfrak{a}^{#1}|)}
\newcommand{\tilphi}[1][]{\tilde{\varphi}^{#1}}

\def\Xint#1{\mathchoice
   {\XXint\displaystyle\textstyle{#1}}%
   {\XXint\textstyle\scriptstyle{#1}}%
   {\XXint\scriptstyle\scriptscriptstyle{#1}}%
   {\XXint\scriptscriptstyle\scriptscriptstyle{#1}}%
   \!\int}
\def\XXint#1#2#3{{\setbox0=\hbox{$#1{#2#3}{\int}$}
     \vcenter{\hbox{$#2#3$}}\kern-.5\wd0}}

\def\dashint{\Xint-}

\algnewcommand\algorithmicswitch{\textbf{switch}}
\algnewcommand\algorithmiccase{\textbf{case}}
\algnewcommand\algorithmicassert{\texttt{assert}}
\algnewcommand\Assert[1]{\State \algorithmicassert(#1)}%
\algdef{SE}[SWITCH]{Switch}{EndSwitch}[1]{\algorithmicswitch\ #1\ \algorithmicdo}{\algorithmicend\ \algorithmicswitch}%
\algdef{SE}[CASE]{Case}{EndCase}[1]{\algorithmiccase\ #1}{\algorithmicend\ \algorithmiccase}%

\def\BibTeX{{\rm B\kern-.05em{\sc i\kern-.025em b}\kern-.08em
    T\kern-.1667em\lower.7ex\hbox{E}\kern-.125emX}}

\theoremstyle{definition}
\newtheorem{exmp}{Example}[section]
\newtheorem{theorem}{Theorem}[section]
\newtheorem{corollary}{Corollary}[theorem]
\newtheorem{lemma}[theorem]{Lemma}
\newtheorem{prop}{Proposition}[section]
\newtheorem{definition}{Definition}[section]
\theoremstyle{remark}
\newtheorem*{remark}{Remark}

\begin{document}
\usetikzlibrary{arrows.meta}
\usetikzlibrary{positioning}
\title{Complex Analysis of Channel Polarization on Discrete BMS Channels \\
}

\author{
\begin{center}
Dongxiao Xu, Moritz Wiese, Holger Boche \\
\textit{Chair of Theoretical Information Technology}\\
\textit{Technical University of Munich}\\
Munich, Germany\\
Emails: \{dongxiao.xu, wiese, boche\}@tum.de
\end{center}
}

\maketitle

\begin{abstract}
We develop component evolution (CE), a framework based on complex function theory for finite-blocklength channel polarization on discrete binary-input memoryless output-symmetric (BMS) channels. In this view, the Bhattacharyya parameter is treated as a real-valued instance of a broader class of complex-valued channel functionals. CE systematically derives analytic expressions for the Bhattacharyya parameters of the bit-channels of a given discrete BMS channel at arbitrary polarization levels. CE also enables structural analysis, providing new evidence of extremality of the binary erasure channel (BEC) and binary symmetric channel (BSC), and revealing new channel-dependent recursions for a class of BSC bit-channels.
\end{abstract}

\begin{IEEEkeywords}
Polar codes, channel polarization, code construction, complex analysis, Mellin transform, check-domain, variable-domain
\end{IEEEkeywords}

\section{Introduction}

Polar code construction for a given \textit{binary-input discrete memoryless channel} (B-DMC) $W$ was formulated in~\cite[Sec. IX]{arikan2009channel} as a decision problem: given the blocklength $N\triangleq 2^k$ at polarization level $k$, an index $i \in [N]$, and a threshold $\gamma \in [0,1]$, determine whether $i\in \mathcal{A}_\gamma$, where $\mathcal{A}_\gamma \triangleq\{i \in[N]: Z(W_N^{(i)})<\gamma\}$ and $Z(W_N^{(i)})$ denotes the Bhattacharyya parameter of the bit-channel $W_N^{(i)}$. This problem can be formulated via density evolution (DE)~\cite{mori2009performance, mori2009density}. While exact for the \textit{binary erasure channel} (BEC), DE becomes intractable for a rigorous mathematical analysis for general \textit{binary-input memoryless output-symmetric} (BMS) channels, including the \textit{binary symmetric channel} (BSC). Tal and Vardy’s degrading and upgrading quantizations~\cite{tal2013construct} enable efficient construction with provable bounds, while practical methods such as the Gaussian approximation~\cite{trifonov2012efficient,dai2017does} and the partial-order~\cite{schurch2016partial, mondelli2018construction, he2017beta} further reduce complexity. However, despite its importance for code construction, the structure of $\mathcal{A}_\gamma$ remains poorly understood beyond degradation relations~\cite{renes2015alignment, korada2009polar}. In the infinite-blocklength limit, polar codes exhibit a fractal-like structure~\cite{geiger2018fractality}, but it remains unclear how these asymptotic properties translate to finite-length codes.

This motivates us to investigate the deterministic mechanism of channel polarization on discrete BMS channels at finite blocklength via complex analysis, in contrast to the martingale-based perspective from measure theory. From this perspective, channel polarization is essentially the evolution of entire functions on variable- and check-domains; boundary cases like the BEC arise only as limiting instances and are not representative of the general behavior. Viewing the Bhattacharyya parameter as a real-valued instance of a broader class of complex-valued channel functionals, we introduce a framework of \textit{component evolution} (CE). CE models channel polarization as a complex dynamic system, offers a frequency-domain viewpoint complementary to DE, and provides a systematic procedure for deriving analytic expressions of $Z(W_N^{(i)})$ for every $i \in [N]$ and a given discrete BMS channel $W$. Although the number of additive terms in $Z(W_N^{(i)})$ grows exponentially with polarization level $k$, these expressions are derived deterministically, and can be pre-evaluated and stored as lookup tables for classes of discrete BMS channels.

More importantly, CE is not only a machinery for deriving $Z(W_N^{(i)})$, but also a framework for structural analysis, providing new evidence of extremality~\cite{alsan2012channel,richardson2008modern} of the BEC and BSC through the lens of complex analysis, revealing new channel-dependent recursions (e.g., for a class of BSC bit-channels)—thereby serving as both a symbolic derivation tool and a unifying analytical framework for analyzing polarization phenomenon.

\begin{figure}[!ht]
\centering
\begin{tikzpicture}[x=1.2cm,y=1.4cm]
\tikzset{
  m/.style={inner sep=2pt, text opacity=1},
  dot/.style={circle,fill=black,inner sep=1.2pt},
  solid arrow/.style={arrows = {-Latex[width'=0pt .5, length=4pt]}, shorten >=.25pt, shorten <=.25pt, line width=0.6pt},
  double arrow/.style={arrows = {Latex[width'=0pt .5, length=4pt]-Latex[width'=0pt .5, length=4pt]}, shorten >=.25pt, shorten <=.25pt, line width=0.6pt},
}

\coordinate (l1-0) at (0,1);
\coordinate (l1-1) at (4,1);
\coordinate (l1-2) at (6,1);
\coordinate (l1-3) at (6,1.65);

\coordinate (l2-0) at (0,0);
\coordinate (l2-1) at (2,0);
\coordinate (l2-2) at (6,0);

\node[inner sep=0pt, fit=(l1-0) (l1-1) (l1-2) (l2-0) (l2-1) (l2-2), inner ysep=20pt, inner xsep=30pt, yshift=-8pt] (box) {};

\definecolor{topcolor}{RGB}{180,210,240}    
\definecolor{bottomcolor}{RGB}{240,200,180} 

\begin{scope}
    \clip (box.north west) rectangle ($(box.south east)!0.59!(box.north east)$);
    \fill[topcolor, fill opacity=0.3, rounded corners] (box.north west) rectangle (box.south east);
\end{scope}

\begin{scope}
    \clip ($(box.south west)!0.59!(box.north west)$) rectangle (box.south east);
    \fill[bottomcolor, fill opacity=0.3, rounded corners] (box.north west) rectangle (box.south east);
\end{scope}

\begin{scope}
    \clip ($(box.north west)!0.5!(box.north east)$) rectangle (box.south east);
    \fill[pattern=dots, pattern color=gray!60, opacity=0.5] (box.north west) rectangle (box.south east);
\end{scope}

\begin{scope}
    \clip (box.north west) rectangle ($(box.south west)!0.5!(box.south east)$);
    \fill[pattern=crosshatch dots, pattern color=gray!60, opacity=0.5] (box.north west) rectangle (box.south east);
\end{scope}

\draw[line width=1.2pt] 
    ($(box.north west)!1/2!(box.north east)$) -- ($(box.south west)!1/2!(box.south east)$);

\draw[dash pattern=on 2pt off 2pt] 
    ($(box.north west)!1/4!(box.north east)$) -- ($(box.south west)!1/4!(box.south east)$);

\draw[dash pattern=on 2pt off 2pt] 
    ($(box.north west)!3/4!(box.north east)$) -- ($(box.south west)!3/4!(box.south east)$);

\draw[line width=1.2pt, rounded corners] 
    (box.north west) rectangle (box.south east);

\node[m] (L1-0) at (l1-0) {$\mathcal{G}_{\mathfrak{a}}(0,\nu)$};
\node[m] (L1-1) at (l1-1) {$\mathscr{F}_{\mathfrak{a},i}(s)$};
\node[m] (L1-2) at (l1-2) {$\mathcal{F}_{\mathfrak{a}}(s)$};
\node[m] (L1-3) at (l1-3) {$\mathfrak{B}(\mathfrak{a})$};

\node[m] (L2-0) at (l2-0) {$\mathcal{G}_{|\mathfrak{a}|}(0,\nu)$};
\node[m] (L2-1) at (l2-1) {$\mathscr{G}_{|\mathfrak{a}|}(\nu)$};
\node[m] (L2-2) at (l2-2) {$\mathcal{F}_{|\mathfrak{a}|}(s)$};

\node[m] (C1) at (4.84,0.5) {$s\in\mathbb{C}$};
\node[m] (C2) at (1.2,0.5) {$\nu\in\mathbb{C}$};

\node[m, font=\small] (SOA1) at (6,-0.4) {$\Re s\in(0,1)$};
\node[m, font=\small] (SOA2) at (0,-0.4) {$\Re \nu >0$};

\node[m] (A1) at (-1.4,0) {$\alpha_n$};
\node[m] (A2) at (7.4,1) {$\alpha_0$};

\definecolor{arrow1}{RGB}{0,102,153}    
\definecolor{arrow2}{RGB}{153,51,0}     
\definecolor{arrow3}{RGB}{0,128,102}    
\definecolor{arrow4}{RGB}{102,0,153}    
\definecolor{arrow5}{RGB}{51,51,51}     

\draw[solid arrow, color=arrow2] (A2) |- (L2-2);
\draw[solid arrow, color=arrow4] (A1) |- (L1-0);

\draw[solid arrow, color=arrow4] (L1-1) -- (L1-0);
\draw[solid arrow, color=arrow3] (L1-1) -- (L1-2);
\draw[solid arrow, color=arrow3] (A2) -- (L1-2);

\draw[solid arrow, color=arrow1] (L2-1) -- (L2-0);
\draw[solid arrow, color=arrow2] (L2-1) -- (L2-2);
\draw[solid arrow, color=arrow1] (A1) -- (L2-0);

\draw[solid arrow] (L1-2) -- (L1-3);

\draw[double arrow, color=arrow5] (L1-1) -- (L2-1);

\node at ($(L1-0)!0.5!(L2-0)$) {\Large\rotatebox{90}{$=$}};
\node at ($(L1-2)!0.5!(L2-2)$) {\Large\rotatebox{90}{$=$}};

\node[draw, fill=topcolor, fill opacity=0.3, minimum width=0.5cm, minimum height=0.25cm] at (-0.5,1.65) {};
\node[anchor=west, font=\small] at (-0.25,1.65) {$D$-density};

\node[draw, fill=bottomcolor, fill opacity=0.3, minimum width=0.5cm, minimum height=0.25cm] at (1.42,1.65) {};
\node[anchor=west, font=\small] at (1.67,1.65) {$|D|$-density};


\node[anchor=north, font=\small] 
    at ($(box.south west)!1/4!(box.south east) + (0,-0.1)$) 
    {check-domain};

\node[anchor=north, font=\small] 
    at ($(box.south west)!3/4!(box.south east) + (0,-0.1)$) 
    {variable-domain};


\begin{scope}[xshift=1.2cm] 

\node[anchor=west] at ($(box.north east)+(0.8,-0.2)$) {
    \tikz\draw[solid arrow, color=arrow1] (0,0) -- (0.6,0);
};
\node[anchor=west, font=\small] at ($(box.north east)+(1.6,-0.2)$) {Sec.~\ref{sec:check_conv_domain}};

\node[anchor=west] at ($(box.north east)+(0.8,-0.8)$) {
    \tikz\draw[solid arrow, color=arrow2] (0,0) -- (0.6,0);
};
\node[anchor=west, font=\small] at ($(box.north east)+(1.6,-0.8)$) {Sec.~\ref{sec:convert_G/F}};
---
\node[anchor=west] at ($(box.north east)+(0.8,-0.5)$) {
    \tikz\draw[solid arrow, color=arrow3] (0,0) -- (0.6,0);
};
\node[anchor=west, font=\small] at ($(box.north east)+(1.6,-0.5)$) {Sec.~\ref{sec:var_conv_domain}};

\node[anchor=west] at ($(box.north east)+(0.8,-1.1)$) {
    \tikz\draw[solid arrow, color=arrow4] (0,0) -- (0.6,0);
};
\node[anchor=west, font=\small] at ($(box.north east)+(1.6,-1.1)$) {Sec.~\ref{sec:convert_G/F}};

\node[anchor=west] at ($(box.north east)+(0.8,-1.4)$) {
    \tikz\draw[double arrow, color=arrow5] (0,0) -- (0.6,0);
};
\node[anchor=west, font=\small] at ($(box.north east)+(1.6,-1.4)$) {Sec.~\ref{sec:bi_convert_GF}};

\end{scope}

\end{tikzpicture}
\caption{Relations between the check-domain and variable -domain transforms on their SOAs and the corresponding entire entities in $D$- and $|D|$-densities.}
\label{fig:entities_domains}
\end{figure}

\section{Preliminaries and notation}~\label{sec:prelim}

Throughout, the letter $\mathrm{a}$, $a$, $\mathfrak{a}$ and $|\mathfrak{a}|$ denote $L$-, $G$-, $D$- and $|D|$-density~\cite{richardson2008modern}, respectively. Let $\bar{x}\triangleq 1-x$, $\forall x\in[0,1]$. The complex conjugate of $z\in\mathbb{C}$ is denoted by $z^{*}$. We consider $W_{\mathrm{BEC}}(\varepsilon)$ with erasure $\varepsilon\in(0,1)$ and $W_{\mathrm{BSC}}(p)$ with crossover $p\in(0,1/2)$, and define $\Delta \triangleq \log (p / \bar{p})<0$, $C \triangleq p \bar{p}$, $M_i \triangleq (1-2p)^i$, $S_i\triangleq \barp^i+p^i$, and $D_i\triangleq \barp^i-p^i$ with $i\in \mathbb{Z}_{> 0}$. Define $A_s\triangleq A(s) \triangleq \bar{p}^{1-s} p^s$, $B_s \triangleq B(s) \triangleq p^{1-s} \bar{p}^s$, $\forall s\in\mathbb{C}$. It follows that: \textit{i}) $C=A_sB_s$; \textit{ii}) $B_s=A_{1-s}$.


\subsection{Deterministic DE view}~\label{sub:comp_chain}
Let $b_1\ldots b_k$ denote the $k$-bit binary expansion of $i-1$ for $i\in[N]$, i.e., $(i-1)_{10}=(b_1\ldots b_k)_2$, and let $W_{b_1, \ldots, b_k} \triangleq W_N^{(i)}$, then $W_N^{(i)}$ can be obtained by repeating the following channel operations~\cite[Section 2.3]{korada2009polar}, i.e.,
\begin{equation*}
    W_{b_1, \ldots, b_{m-1}, b_m}= \begin{cases}W_{b_1, \ldots, b_{m-1}} \boxast W_{b_1, \ldots, b_{m-1}}, & \text { if } b_m=0, \\ W_{b_1, \ldots, b_{m-1}} \varoast W_{b_1, \ldots, b_{m-1}}, & \text { if } b_m=1 ,\end{cases}
\end{equation*}
with the initial condition
\begin{equation*}
    W_{b_1}=\begin{cases}
        W \boxast W, & \text { if } b_1=0, \\ 
        W \circledast W, & \text { if } b_1=1. \\ 
    \end{cases}
\end{equation*}
In the original work of Arıkan~\cite[Section IV]{arikan2009channel}, such recursive relationship is regarded as a \textit{random tree process} $\{K_m(\xi)=W_{\xi_1 \ldots \xi_m}; m\geq 0\}$ where the bit-channel as random variable $K_m$ is driven by a sequence of i.i.d. equiprobable Bernoulli random variables $\left\{B_m \in\{ 0,1 \}; m\geq 0\right\}$. Equivalently, conditioned on $B_1=b_1,\ldots, B_m=b_m$, the bit-channel $W_{b_1, \ldots, b_m}$ up to the current level $m$ is a realization of random variable $W_{\xi_1 \ldots \xi_m}$ as outcome of the Bernoulli experiments. The apparent "randomness" of $K_m$ therefore arises only from uniformly sampling a path through the deterministic binary tree of bit-channels; in fact, the nature of each bit-channels $W_{b_1, \ldots, b_m}$ is completely determined by: \textit{i}) the initial channel $W$; and \textit{ii}) the ordered application of the $\boxast$- and $\varoast$-convolutions. In this regard, for a given discrete BMS channel $W$, the solution of the Polar code construction defined in~\cite[Sec. IX]{arikan2009channel} is unique once the threshold $\gamma$ and blocklength $N$ are fixed.

We assign operational meaning to $0$ and $1$ in terms of the the $\boxast$- and $\varoast$-convolution, respectively, i.e., $b_m\in\{0,1\}\equiv \{\boxast,\varoast\}$, and also adopt the shorthand, e.g., $W \boxast W = W^{\boxast}$. For instance, since $(6-1)_{10}=(101)_2$, we write $W_8^{(6)}$ as $W_{101} \equiv W_{\oast\boxast\oast} \triangleq W_{\oast\boxast}^\oast=(W_{\oast\boxast})\oast (W_{\oast\boxast})$. Such an expression is clean and convenient to read: at level $m=3$, the bit-channel $W_{\oast\boxast\oast}$ is obtained by conducting a self-$\varoast$-convolution on its preceding bit-channel $W_{\oast\boxast}$. Moreover, we write $W_8^{(2)}=W_{001}=W_{\boxast 2\oast}$ which compactly denotes two successive self-$\boxast$-convolutions followed by one self-$\varoast$-convolution.

\subsection{Channel models and densities}~\label{sub:density}
In this paper, we consider discrete BMS channels, although a few definitions and results can be extended to the non-symmetric case. A discrete BMS channel has binary input alphabet $\mathcal{X}=\{\pm1\}$, finite output alphabet $\mathcal{Y}$, and transition probabilities $p_{Y|X}(y|x)$, for which there exists a permutation $\pi_1$ on $\mathcal{Y}$ such that $\pi_1^{-1}=\pi_1$ and $p_{Y|X}(y|1)=p_{Y|X}(\pi_1(y)|-1)$ for all $y\in\mathcal{Y}$. It can be equivalently described via induced densities. One is the $L$-distribution~\cite[p.177]{richardson2008modern} of the random variable (r.v.) $L=l(Y) \triangleq \log \frac{p_{Y \mid X}(Y \mid 1)}{p_{Y \mid X}(Y \mid-1)}$ conditioned on $X=1$. A related r.v. is $D\triangleq \tanh \left(L/2\right)$ whose distribution conditioned on $X=1$ is called the $D$-distribution~\cite[p.180]{richardson2008modern}. Another r.v. as tuple is $(\mathfrak{H}(L), \ln \coth(|L|/2))$ whose distribution conditioned on $X=1$ is called the $G$-distribution~\cite[p.182]{richardson2008modern}, where $\mathfrak{H}(\cdot)\in\{\pm 1\}$ denotes the \textit{hard-decision} function~\cite[(4.18)]{richardson2008modern}. Their absolute values induce distributions too, e.g., the absolute value $|D|$ induces the $|D|$-distribution. Accordingly, we denote the associated densities by the $L$-density $\mathrm{a}$, $D$-density $\mathfrak{a}$, $G$-density $a$, $|D|$-density $|\mathfrak{a}|$. In this paper, we primarily work with the $D$- and $|D|$-densities, and the Bhattacharyya parameter $Z(W_N^{(i)})$ is written interchangeably as $\mathfrak{B}(\mathfrak{a}^{b_1 \ldots b_k})$ or $\mathfrak{B}(|\mathfrak{a}^{b_1 \ldots b_k}|)$.

\begin{theorem}~\cite[Theorem 4.27]{richardson2008modern}
The $L$-, $D$- and $G$-distributions corresponding to a BMS channel are symmetric.
\end{theorem}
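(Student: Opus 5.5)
The plan is to prove symmetry of the $L$-density directly from the output symmetry $p_{Y|X}(y|1)=p_{Y|X}(\pi_1(y)|-1)$, and then transport it to the $D$- and $G$-domains through the change of variables that defines them. Recall the notions involved. An $L$-density $\mathrm{a}$ is symmetric if $\mathrm{a}(-l)=e^{-l}\mathrm{a}(l)$; since the output alphabet is finite, this means the point masses satisfy $\Pr(L=-l\mid X=1)=e^{-l}\Pr(L=l\mid X=1)$, with a possible mass at $l=+\infty$ allowed. The corresponding conditions are:
\begin{itemize}
\item for the $D$-density, $\mathfrak{a}(-d)(1+d)=\mathfrak{a}(d)(1-d)$;
\item for the $G$-density, the two components $a(+1,\cdot)$ and $a(-1,\cdot)$ are related by the factor $\tanh(y/2)$ in the variable $y=\ln\coth(|l|/2)$.
\end{itemize}
These are exactly the images of the $L$-condition under the bijections $d=\tanh(l/2)$ and $l\mapsto(\mathfrak{H}(l),\ln\coth(|l|/2))$, so the $L$-case is the core.

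\textbf{Step 1 ($L$-density).} For each $y$, the symmetry gives
\[
l(\pi_1(y))=\log\frac{p(\pi_1 y|1)}{p(\pi_1 y|-1)}=\log\frac{p(y|-1)}{p(y|1)}=-l(y),
\]
using $\pi_1^{-1}=\pi_1$. Now fix $l$ and sum over $y$ with $l(y)=-l$, reindexing by $y=\pi_1(y')$:
\[
\Pr(L=-l\mid X=1)=\sum_{y':\,l(y')=l}p(\pi_1 y'\mid 1)=\sum_{y':\,l(y')=l}p(y'\mid -1).
\]
For each $y'$ in this sum, $p(y'|-1)=e^{-l}p(y'|1)$. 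This yields the claimed identity. Outputs with $p(y|-1)=0$ give $l=+\infty$ and are paired with $l=-\infty$ outputs of zero $X=1$ probability, so they are consistent with the convention.

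\textbf{Step 2 ($D$-density).} Push forward via $d=\tanh(l/2)$, which is a bijection of $[-\infty,\infty]$ onto $[-1,1]$. This gives $e^{-l}=(1-d)/(1+d)$, hence $\Pr(D=-d)=\frac{1-d}{1+d}\Pr(D=d)$, which is the $D$-symmetry.

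\textbf{Step 3 ($G$-density).} The map $l\mapsto(\mathfrak{H}(l),\ln\coth(|l|/2))$ is a bijection away from $l=0$. At $l=0$ the hard decision is randomized, and this is consistent because the factor $e^{0}=1$ there. Transport the relation $\Pr(\mathfrak{H}=-1,\,|L|=t)=e^{-t}\Pr(\mathfrak{H}=+1,\,|L|=t)$ to the $G$-variable.

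Since we work with discrete channels, all densities are finite sums of point masses, so no measure-theoretic subtlety arises beyond that. The only step I expect to need care with is the bookkeeping of the boundary cases: $l=\pm\infty$ (erasure-free outputs, $d=\pm1$) and $l=0$ (erasure-like outputs, where $\mathfrak{H}$ is randomized and $G$ takes value $+\infty$). I would handle these by first checking them explicitly and then noting that the generic pairing argument of Step 1 covers all remaining outputs.
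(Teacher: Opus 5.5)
Your proof is correct. The paper gives no proof of its own and imports the result from Richardson–Urbanke (Theorem 4.27). Your route is essentially the standard argument there: the involution $\pi_1$ forces $l(\pi_1 y)=-l(y)$, the likelihood-ratio identity then gives $\Pr(L=-l\mid X=1)=e^{-l}\Pr(L=l\mid X=1)$, and the $D$- and $G$-statements follow by pushing the finitely many point masses through the defining bijections.

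One computation should be written out in Step 3 rather than left as ``transport''. From $y=\ln\coth(|l|/2)$ one gets $e^{-|l|}=\tanh(y/2)$, and this is what turns the transported relation into the form the paper uses, $a(0,y)=a(1,y)\coth(y/2)$.
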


Under symmetry, any density among $(L,D,G)$ is completely determined by its absolute-value-induced density, e.g., the $D$-density is determined by the corresponding $|D|$-density~\cite[p.180]{richardson2008modern}.

\begin{lemma}[$|D|$-density]~\label{lemma:absD_dens}
The $|D|$-density of a discrete BMS channel admits the form
\begin{equation}
    |\mathfrak{a}|(z)=\sum_{i=0}^{n} \alpha_i \delta\left(z-z_i\right),
\end{equation}
where $\alpha_i\in[0,1]$, $\sum_{i=0}^{n} \alpha_i=1$ and $0 \equiv z_0 <z_1<\ldots <z_{n} \equiv 1$, while $\delta(\cdot)$ is the \textit{(Dirac) delta function} and $n \in \mathbb{Z}_{>0}$.
\end{lemma}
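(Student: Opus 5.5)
The plan is to compute the law of $|D|$ directly from the finite output alphabet. Conditioned on $X=1$, the output $Y$ takes finitely many values $y\in\mathcal{Y}$ with probabilities $p_{Y|X}(y|1)$. Therefore $|D|=|\tanh(l(Y)/2)|$ is a discrete random variable taking at most $|\mathcal{Y}|$ distinct values. Its density is a finite mixture of Dirac deltas,
\[
|\mathfrak{a}|(z)=\sum_{w}\Pr\{|D|=w\mid X=1\}\,\delta(z-w),
\]
where $w$ ranges over the distinct values of $|\tanh(l(y)/2)|$, $y\in\mathcal{Y}$. The weights are nonnegative and sum to one, since $\sum_y p_{Y|X}(y|1)=1$.

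Next I would check the support. For $y$ with both likelihoods positive, $l(y)$ is finite, so $|D|\in[0,1)$. If exactly one likelihood vanishes, then $l(y)=\pm\infty$ under the usual convention, giving $|D|=1$. When $p_{Y|X}(y|1)=0$, that $y$ carries no mass under $X=1$, so it can be discarded. Symmetry via $\pi_1$ gives $p_{Y|X}(\pi_1(y)|1)=p_{Y|X}(y|-1)$, and hence $l(\pi_1(y))=-l(y)$. Thus $y$ and $\pi_1(y)$ produce the same $|D|$ value, which is consistent with the statement that the $|D|$-density determines the $D$-density; this fact is not needed for the form of the density itself.

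Then I would sort the distinct values as $z_1<\dots<z_{n-1}$ inside $(0,1)$. I would adjoin $z_0\equiv0$ and $z_n\equiv1$ as endpoints, with weights $\alpha_0=\Pr\{|D|=0\}$ and $\alpha_n=\Pr\{|D|=1\}$, allowing either to be zero; this is why the statement permits $\alpha_i\in[0,1]$ rather than $(0,1]$. Every other $\alpha_i$ is the total probability of the outputs mapping to $z_i$. The bound $\alpha_i\le1$ and $\sum\alpha_i=1$ follow immediately.

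The only delicate point is the bookkeeping convention rather than any analysis. The endpoints $0$ and $1$ must always be listed so that $z_0=0$ and $z_n=1$ hold, and $n\in\mathbb{Z}_{>0}$ must be guaranteed. Since $z_0\ne z_n$ are always included, $n\ge1$ holds automatically, covering, for example, the BEC ($n=1$, $\alpha_0=\varepsilon$) and the BSC ($z_1=1-2p$ with $\alpha_1=1$, padded with zero-weight endpoints).
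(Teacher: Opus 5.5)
Your argument is correct. Because $\mathcal{Y}$ is finite, $|D|$ is a discrete random variable with at most $|\mathcal{Y}|$ atoms in $[0,1]$; outputs with $l(y)=+\infty$ give the atom at $1$, and listing the endpoints $z_0\equiv 0$, $z_n\equiv 1$ with possibly zero mass yields exactly the stated form with $n\ge 1$. The paper gives no separate proof of this lemma and treats it as immediate from the definitions in the same way, so your proposal supplies the routine verification it leaves out.
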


\begin{lemma}[$D$-density]~\label{lemma:D_dens}
The $D$-density of a B-DMC admits the form
\begin{equation}
    \mathfrak{a}(z)=\sum_{j=-n}^{n}\beta_j\delta(z-z_j),
\end{equation}
where $\beta_j\in[0,1]$, $\sum_{j=-n}^{n} \beta_j=1$, $-1 \equiv z_{-n}<\ldots < 0 \equiv z_0 <\ldots <z_{n} \equiv 1$, and $z_{-j}=-z_j$ for $j=-n,\ldots,n\in \mathbb{Z}_{>0}$. The relation of Dirac masses between the $D$- and $|D|$-density is given by
\begin{align}
    \beta_{\pm i} &= (1\pm z_i)/2\cdot \alpha_i, \quad i=1,\ldots, n-1, \label{eq:beta_alpha_1_n-1} \\
    \beta_{-n} &= 0, \quad \beta_0 = \alpha_0, \quad \beta_n = \alpha_n.
\end{align}
\end{lemma}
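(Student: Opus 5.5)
The plan is to derive the $D$-density from the $|D|$-density of Lemma~\ref{lemma:absD_dens}, using the symmetry theorem (\cite[Theorem 4.27]{richardson2008modern}). Symmetry of the $D$-distribution means that, conditioned on $X=1$, the density satisfies $\mathfrak{a}(-z)=\frac{1-z}{1+z}\,\mathfrak{a}(z)$ for $z\in(-1,1)$. Equivalently, $\Pr(D=-z)=\frac{1-z}{1+z}\Pr(D=z)$ at every mass point. Since $\mathcal{Y}$ is finite, $D=\tanh(L/2)$ takes finitely many values in $[-1,1]$.

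\textbf{Support.} The support of $\mathfrak{a}$ is contained in $\{\pm z_i\}$, because $|D|$ has support $\{z_0,\dots,z_n\}$. Indexing $z_{-j}=-z_j$ gives the stated atomic form with $\beta_j\ge 0$ and $\sum_j\beta_j=1$.

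\textbf{Interior masses.} Fix $1\le i\le n-1$, so $0<z_i<1$. The event $\{|D|=z_i\}$ is the disjoint union of $\{D=z_i\}$ and $\{D=-z_i\}$, hence $\beta_i+\beta_{-i}=\alpha_i$. Symmetry gives $\beta_{-i}=\frac{1-z_i}{1+z_i}\beta_i$. Solving this $2\times 2$ linear system yields $\beta_i=\frac{1+z_i}{2}\alpha_i$ and $\beta_{-i}=\frac{1-z_i}{2}\alpha_i$, which is \eqref{eq:beta_alpha_1_n-1}.

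\textbf{Boundary masses.} For $z_0=0$, the points $\pm z_0$ coincide, so $\beta_0=\alpha_0$ directly.

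The endpoint $z_n=1$ is the only delicate step, since the ratio $\frac{1-z}{1+z}$ is a limit there and the density form of symmetry is not directly usable. I would argue through the $L$-domain instead. $D=-1$ means $L=-\infty$, i.e.\ $p_{Y|X}(y|1)=0<p_{Y|X}(y|-1)$. Conditioned on $X=1$, such outputs $y$ occur with probability $p_{Y|X}(y|1)=0$, so $\beta_{-n}=0$. Consequently all mass at $|D|=1$ sits at $D=+1$, giving $\beta_n=\alpha_n$.

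This is consistent with the general formula: plugging $z_n=1$ into $\frac{1\pm z}{2}\alpha$ gives $\alpha_n$ and $0$.

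Finally, the index bookkeeping in the statement ($j=-n,\dots,n$ with $z_{-j}=-z_j$) follows from the construction above.
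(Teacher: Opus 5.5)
Your proof is correct and follows the paper's route for the interior atoms and for $z_0=0$: split $\{|D|=z_i\}$ into $\{D=\pm z_i\}$, i.e.\ use $|\mathfrak{a}|(z)=\mathfrak{a}(z)+\mathfrak{a}(-z)$, then combine this with the symmetry ratio to get $\beta_{\pm i}=\frac{1\pm z_i}{2}\alpha_i$. The only difference is at $z_n=1$: the paper extends $\mathfrak{a}(\pm z)=\frac{1\pm z}{2}|\mathfrak{a}|(z)$ to $z=1$ through the degenerate symmetry ratio, whereas you argue directly in the $L$-domain that $D=-1$ forces $p_{Y|X}(y|1)=0$, which avoids that ratio and shows $\beta_{-n}=0$ for any B-DMC without needing symmetry.
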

\begin{proof}
Under symmetry, the $D$-density is determined by the corresponding $|D|$-density, i.e., $|\mathfrak{a}|(z)=\mathfrak{a}(z)+\mathfrak{a}(-z)$, $\forall z\in(0,1]$ and $|\mathfrak{a}|(0)=\mathfrak{a}(0)$. For a symmetric $\mathfrak{a}$, $\mathfrak{a}(z)/\mathfrak{a}(-z)=(1+z)/(1-z)$, $\forall z\in[0,1]$, cf.~\cite[(4.13)]{richardson2008modern}. Combining both, we have $\mathfrak{a}(\pm z)=(1\pm z)/2\cdot |\mathfrak{a}|(z)$, $\forall z\in(0,1]$, thus, $\mathfrak{a}(1)=|\mathfrak{a}|(1)$ and $\mathfrak{a}(-1)=0$. By Lemma~\ref{lemma:absD_dens}, we obtain $\beta_{ \pm i}=\left(1 \pm z_i\right) / 2 \cdot \alpha_i$ for $i=1, \ldots, n-1$, while $\beta_{-n}=0$, $\beta_0=\alpha_0$, and $\beta_n=\alpha_n$.
\end{proof}

\subsection{Variable- and check-domain transforms}

This subsection unifies the variable- and check-domain transforms in $D$- and $|D|$-densities and characterizes their strips of analyticity (SOA). The variable- resp. check-domain transforms correspond to a \textit{bilateral} resp. \textit{unilateral Laplace transform} in the $L$- resp. $G$-densities, cf. Definitions~\ref{def:var_node_double_LT} and~\ref{def:check_node_double_LT}. These two definitions, together with Lemma~\ref{lemma:var_node_double_LT} and~\ref{lemma:check_node_double_LT} do not require symmetry; this is the only exception in this paper. A summary is provided in Table~\ref{tab:transforms_summary}. To avoid ambiguity, the term \textit{Fourier transform} refers to its standard definition. Throughout the remainder of this paper, let $X\sim \mathrm{a}$, $Z \triangleq \tanh (X / 2) \sim \mathfrak{a}$, and $(S,Y)\sim a$ where $S\triangleq (1-\mathfrak{H}(X))/2\in\{0,1\}$ under the standard mapping~\cite[p.175]{richardson2008modern}, and $Y\triangleq \ln\coth (|X|/2)$; moreover, $|Z|=\tanh (|X| / 2) \sim|\mathfrak{a}|$. 

\begin{table}[h!]
    \centering
    \begin{tabular}{|c||c|c|}
     \hline
     & Variable-domain transform & Check-domain transform ($\mu=0$) \\
     \hline
        $L$-density $\mathrm{a}$ & $\mathcal{F}_{\mathrm{a}}(s)$ & -- \\
     \hline
        $D$-density $\mathfrak{a}$ & $\mathcal{F}_{\mathfrak{a}}(s)$ & $\mathcal{G}_{\mathfrak{a}}(0, \nu)$ \\
     \hline
        $|D|$-density $|\mathfrak{a}|$ & $\mathcal{F}_{|\mathfrak{a}|}(s)$ & $\mathcal{G}_{|\mathfrak{a}|}(0, \nu)$ \\
     \hline
        $G$-density $a$ & -- & $\mathcal{G}_a(0, \nu)$ \\
     \hline
        SOA & $\Re s\in(0,1)$ & $\Re \nu > 0$ \\
     \hline
    \end{tabular}
    \vspace{1mm}
    \caption{Comparison of variable- and check-domain transforms across $L$-, $D$-, $|D|$-, and $G$-densities.}
    \label{tab:transforms_summary}
\end{table}

\begin{definition}[Variable-domain transform{\cite[Definition 4.52]{richardson2008modern}}]\label{def:var_node_double_LT}
The variable-domain transform of $\mathrm{a}$ is
\begin{equation}~\label{eq:var_node_double_LT}
    \mathcal{F}_{\mathrm{a}}(s) \triangleq \int_{-\infty}^{+\infty} \mathrm{a}(x) \mathrm{e}^{-s x} \mathrm{~d} x,
\end{equation}
for $s \in \mathbb{C}$ where the integral exists.
\end{definition}

\begin{lemma}\label{lemma:var_node_double_LT}
Let $\mathfrak{a}$ be the $D$-density corresponding to $\mathrm{a}$, then $\mathcal F_{\mathrm{a}}(s)=\int_{-1}^1 \mathfrak{a}(z)\left(\frac{1-z}{1+z}\right)^s \mathrm{~d} z \triangleq\mathcal{F}_{\mathfrak{a}}(s)$.
\end{lemma}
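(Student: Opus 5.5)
The plan is a change of variables (equivalently, the law of the unconscious statistician) applied to $Z=\tanh(X/2)$. By definition $\mathcal{F}_{\mathrm{a}}(s)=\mathbb{E}[\mathrm{e}^{-sX}]$ with $X\sim\mathrm{a}$. I will express $\mathrm{e}^{-sX}$ as a function of $Z$ and then take the expectation with respect to the law $\mathfrak{a}$ of $Z$. No symmetry is used, which matches the remark that this lemma holds for general B-DMCs.

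\textbf{Step 1: invert the map.} The map $x\mapsto z=\tanh(x/2)$ is a strictly increasing bijection from $\overline{\mathbb{R}}=[-\infty,+\infty]$ onto $[-1,1]$, with inverse
\begin{equation*}
x=2\,\mathrm{artanh}(z)=\ln\frac{1+z}{1-z}.
\end{equation*}
Hence
\begin{equation*}
\mathrm{e}^{-sx}=\Bigl(\frac{1-z}{1+z}\Bigr)^{s},
\end{equation*}
where the complex power is taken as $\exp\bigl(s\ln\frac{1-z}{1+z}\bigr)$ with the real logarithm. This is valid because $\frac{1-z}{1+z}>0$ for $z\in(-1,1)$, so there is no branch ambiguity.

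\textbf{Step 2: push forward.} Since $\mathfrak{a}$ is by definition the push-forward of $\mathrm{a}$ under this bijection, we have $\mathbb{E}[g(X)]=\mathbb{E}[g(2\,\mathrm{artanh}\,Z)]$ for any measurable $g$ for which either side is defined. Taking $g(x)=\mathrm{e}^{-sx}$ gives the claimed identity, whenever one side converges absolutely.

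For densities with point masses, as in Lemma~\ref{lemma:D_dens}, I will argue directly on the Dirac sums. Each mass at $x_j$ becomes a mass at $z_j=\tanh(x_j/2)$ with the same weight, and $\mathrm{e}^{-sx_j}=\bigl(\frac{1-z_j}{1+z_j}\bigr)^{s}$ holds term by term. For absolutely continuous parts, one may alternatively use the Jacobian $\mathrm{d}x=\frac{2\,\mathrm{d}z}{1-z^2}$, together with the density transformation $\mathfrak{a}(z)=\mathrm{a}(x)\,\frac{\mathrm{d}x}{\mathrm{d}z}$.

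\textbf{Main obstacle: the endpoints $x=\pm\infty$, i.e.\ $z=\pm1$.} These are the erasure-free mass points of a discrete channel (e.g.\ $z_n=1$ in Lemma~\ref{lemma:absD_dens}). Here $\bigl(\frac{1-z}{1+z}\bigr)^{s}$ is $0^{s}$ at $z=1$ and $\infty^{s}$ at $z=-1$. I will handle this by adopting the same convention on both sides, $\mathrm{e}^{-s\cdot(+\infty)}=0$ for $\Re s>0$, so that $0^{s}=0$ there. I will also observe that the identity is asserted only for those $s$ where the integral exists. With matched conventions, the equality then holds on the common domain of existence, and in particular on the strip of analyticity listed in Table~\ref{tab:transforms_summary}.
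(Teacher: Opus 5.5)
Your proposal is correct and follows the route the paper relies on. The paper gives no separate proof of this lemma; following Richardson--Urbanke, it treats it as the change of variables $z=\tanh(x/2)$, i.e.\ $\mathrm{e}^{-X}=(1-Z)/(1+Z)$ (the identity it later reuses to define $\mathfrak{W}$), and it handles the atom at $z=1$ by the convention $0^s=0$ for $\Re s>0$, just as you do. One small caveat: your closing claim that the identity holds ``in particular on the strip $\Re s\in(0,1)$'' needs the absence of mass at $z=-1$. Symmetry guarantees this ($\beta_{-n}=0$ in Lemma~\ref{lemma:D_dens}), but a general B-DMC does not, so your ``common domain of existence'' formulation is the right one.
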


\begin{definition}[Check-domain transform{\cite[Definition 4.53]{richardson2008modern}}]~\label{def:check_node_double_LT}
The check-domain transform of $a$ is
\begin{equation}~\label{eq:check_node_double_LT}
    \mathcal{G}_a(\mu, \nu)\triangleq\int_{0}^{+\infty}\sum_{\sigma \in\{0,1\}}a(\sigma,y)\mathrm{e}^{-\mu \sigma-\nu y}\mathrm{~d} y,
\end{equation}
for $\mu\in\{0,\mathrm{i}\pi\}$ and $\nu\in\mathbb{C}$ where the integral exists.
\end{definition}
\begin{lemma}\label{lemma:check_node_double_LT}
Let $\mathfrak{a}$ be the $D$-density corresponding to $a$, then $\mathcal{G}_a(\mu, \nu) = \int_{-1}^1 \mathfrak{a}(z) \exp \left(-\mu \frac{1-\mathfrak{H}(z)}{2}\right)|z|^\nu \mathrm{~d} z \triangleq\mathcal{G}_{\mathfrak{a}}(\mu, \nu)$. In particular, $\mathcal{G}_{\mathfrak{a}}(0, \nu)=\int_{-1}^1 \mathfrak{a}(z)|z|^\nu \mathrm{~d} z$.
\end{lemma}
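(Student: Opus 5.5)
The lemma is a change of variables from the $G$-density $a$ to the $D$-density $\mathfrak{a}$, using the coupling fixed before Table~\ref{tab:transforms_summary}: $X\sim\mathrm{a}$, $Z=\tanh(X/2)\sim\mathfrak{a}$, and $(S,Y)\sim a$ with $S=(1-\mathfrak{H}(X))/2$ and $Y=\ln\coth(|X|/2)$. Because all three are functions of the same r.v. $X$, I will not manipulate densities directly. Instead I will read Definition~\ref{def:check_node_double_LT} as an expectation,
\begin{equation*}
\mathcal{G}_a(\mu,\nu)=\mathbb{E}\bigl[\mathrm{e}^{-\mu S-\nu Y}\bigr],
\end{equation*}
and re-express $S$ and $Y$ in terms of $Z$. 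This works in the same way for discrete densities (Dirac masses) and continuous ones, and it needs no symmetry, as the paper remarks.

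\textbf{Hard-decision term.} Since $\tanh$ is odd and strictly increasing, $\mathfrak{H}(X)=\mathfrak{H}(Z)$. Hence $S=(1-\mathfrak{H}(Z))/2$ and
\begin{equation*}
\mathrm{e}^{-\mu S}=\exp\Bigl(-\mu\,\tfrac{1-\mathfrak{H}(Z)}{2}\Bigr).
\end{equation*}
At $X=0$, i.e.\ $Z=0$, I adopt the convention of \cite[(4.18)]{richardson2008modern} for $\mathfrak{H}(0)$. The same convention is used on both sides, so the identity holds whatever it is.

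\textbf{Magnitude term.} Because $\tanh(|X|/2)=|Z|$, we have $\coth(|X|/2)=1/|Z|$, so $Y=-\ln|Z|$ and
\begin{equation*}
\mathrm{e}^{-\nu Y}=|Z|^{\nu}.
\end{equation*}
When $Z=0$ we get $Y=+\infty$, and $|Z|^\nu=0$ for $\Re\nu>0$. This is consistent with reading the integral with mass at $y=\infty$ contributing zero; I will note this boundary case explicitly.

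\textbf{Conclusion.} Combining the two factors gives
\begin{equation*}
\mathcal{G}_a(\mu,\nu)=\mathbb{E}\Bigl[\exp\Bigl(-\mu\,\tfrac{1-\mathfrak{H}(Z)}{2}\Bigr)|Z|^\nu\Bigr]=\int_{-1}^{1}\mathfrak{a}(z)\exp\Bigl(-\mu\,\tfrac{1-\mathfrak{H}(z)}{2}\Bigr)|z|^\nu\,\mathrm{d}z,
\end{equation*}
and setting $\mu=0$ yields the special case $\mathcal{G}_{\mathfrak{a}}(0,\nu)=\int_{-1}^1\mathfrak{a}(z)|z|^\nu\,\mathrm{d}z$.

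The only delicate point is the edge behavior: the masses at $z=0$, where $y=\infty$, and at $z=\pm1$, where $y=0$. I will also check that the two integrals exist on the same set of $\nu$. Since $|z|^\nu$ is bounded by $1$ on $[-1,1]$ for $\Re\nu\ge0$, both sides are finite there. For other $\nu$, one side is finite exactly when the other is, because the expectation identity holds pointwise in the underlying r.v.
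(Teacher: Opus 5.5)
Your proposal is correct and takes the same approach as the paper. The paper gives no separate proof; its remark after the lemma rests on exactly your two facts, $\mathfrak{H}(X)=\mathfrak{H}(Z)$ and $\mathrm{e}^{-Y}=|Z|$, applied to the common random variable $X$, and your explicit handling of the boundary masses (at $z=0$, i.e.\ $y=\infty$, and at $z=\pm1$) and of matching absolute convergence fills in details the paper leaves implicit.
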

\begin{remark}
The expression in the $D$-density is more convenient than that in the $G$-density, since: \textit{i}) $\tanh(\cdot)$ preserves the sign, i.e., $\mathfrak{H}(X)=\mathfrak{H}(Z)$; \textit{ii}) and by its oddness, $\mathrm{e}^{-Y}=|Z|$. Hence, $\mathcal{G}_{\mathfrak{a}}(0, \nu)$ is independent of $\mathfrak{H}(\cdot)$, which differs from the $D$-$k$\textit{-moment}~\cite[Definition 4.57]{richardson2008modern} with $\mathcal{G}_{\mathfrak{a}}(0,k)=\mathfrak{D}_k(\mathfrak{a})$ iff $k$ is even.
\end{remark}

We next present the variable-domain transforms in the $D$- and $|D|$-density within the SOA, i.e., $s\in\mathbb{C}$ with real part $\Re s\in(0,1)$.

\begin{lemma}~\label{lemma:var_node_holo}
$\mathcal{F}_{\mathfrak{a}}(s)$ is holomorphic for every $\Re s\in(0,1)$, and by symmetry, $\mathcal{F}_{\mathfrak{a}}(s)=\mathcal{F}_{\mathfrak{a}}(1-s)$, cf.~\cite[p.199]{richardson2008modern}.
\end{lemma}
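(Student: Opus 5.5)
We prove holomorphy directly from the discrete form of the $D$-density in Lemma~\ref{lemma:D_dens}. By Lemma~\ref{lemma:var_node_double_LT},
\begin{equation*}
\mathcal{F}_{\mathfrak{a}}(s)=\sum_{j=-n}^{n}\beta_j\left(\frac{1-z_j}{1+z_j}\right)^{s}.
\end{equation*}
We first handle the endpoint masses. At $z_n=1$ the base is $0$, and $0^s=\mathrm{e}^{s\ln 0}$ is interpreted as the limit $\lim_{z\uparrow 1}\left(\frac{1-z}{1+z}\right)^s$. This limit equals $0$ exactly when $\Re s>0$, which is why the left boundary of the strip is open. At $z_{-n}=-1$ the base is infinite, but $\beta_{-n}=0$ by Lemma~\ref{lemma:D_dens}, so this term is absent; in $L$-language, there is no mass at $-\infty$. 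At $z_0=0$ the base is $1$, contributing the constant $\beta_0$.

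Every remaining index $j$ with $|j|<n$ has $z_j\in(-1,1)$. For these, the base $r_j\triangleq(1-z_j)/(1+z_j)$ is a positive real number, so $r_j^s=\mathrm{e}^{s\ln r_j}$ is entire. Hence $\mathcal{F}_{\mathfrak{a}}(s)=\sum_{|j|<n}\beta_j r_j^{s}$ on $\Re s>0$. This is a finite sum of entire functions, and so it is holomorphic on $\Re s\in(0,1)$; in fact it extends to an entire function. The only care needed is to say precisely that the integral exists, with the $z=1$ mass contributing $0$, exactly for $\Re s>0$. The upper bound $\Re s<1$ is not needed for holomorphy. It is the natural strip because the symmetry below maps $\Re s>0$ to $\Re s<1$.

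For the symmetry $\mathcal{F}_{\mathfrak{a}}(s)=\mathcal{F}_{\mathfrak{a}}(1-s)$, we pair the terms $j$ and $-j$ for $1\le j\le n-1$. We have $r_{-j}=1/r_j$, and by \eqref{eq:beta_alpha_1_n-1}, $\beta_{\pm j}=(1\pm z_j)\alpha_j/2$. The key identity is
\begin{equation*}
\beta_{-j}=\frac{1-z_j}{1+z_j}\,\beta_j=r_j\beta_j .
\end{equation*}
The pair $j,-j$ therefore contributes
\begin{equation*}
\beta_j\bigl(r_j^{s}+r_j\,r_j^{-s}\bigr)=\beta_j\bigl(r_j^{s}+r_j^{1-s}\bigr),
\end{equation*}
which is invariant under $s\mapsto 1-s$. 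The $j=0$ term is the constant $\beta_0$, and the $j=\pm n$ terms vanish on the strip. Summing over all terms gives the functional equation on $\Re s\in(0,1)$, and hence on all of $\mathbb{C}$ for the entire extension.

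The main obstacle, such as it is, lies in the endpoint bookkeeping. In the symmetry, the mass $\beta_n=\alpha_n$ at $z=1$ has no partner at $z=-1$, since $\beta_{-n}=0$. The formal pairing would match $\beta_n\cdot 0^{s}$ with $\beta_n\cdot\infty\cdot 0^{1-s}$, which is meaningless. The symmetric statement is consistent only because both $0^{s}$ and $0^{1-s}$ vanish when $\Re s\in(0,1)$. This is exactly what the strip guarantees, and it should be stated explicitly.
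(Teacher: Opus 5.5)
Your proof is correct, but it takes a different route from the paper's. The paper works in the $L$-domain for a general symmetric density. It uses $\mathrm{a}(-x)=\mathrm{e}^{-x}\mathrm{a}(x)$ to fold the transform into $\int_0^{\infty}\mathrm{a}(x)\bigl(\mathrm{e}^{-sx}+\mathrm{e}^{(s-1)x}\bigr)\,\mathrm{d}x$. It then bounds the kernel by $2$ for $\Re s\in[0,1]$ and concludes holomorphy on the open strip from absolute convergence. The symmetry $\mathcal{F}_{\mathfrak{a}}(s)=\mathcal{F}_{\mathfrak{a}}(1-s)$ is only cited from \cite[p.199]{richardson2008modern}, though it is visible in the folded kernel, which is invariant under $s\mapsto 1-s$.

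You instead exploit discreteness from the outset. By Lemma~\ref{lemma:D_dens}, $\mathcal{F}_{\mathfrak{a}}$ is a finite exponential sum plus the boundary term $\beta_n\cdot 0^s$. This gives you three things:
\begin{itemize}
\item holomorphy is immediate;
\item you obtain an entire extension for free;
\item you see that $\Re s<1$ is superfluous here, precisely because $\beta_{-n}=0$ (no mass at $L=-\infty$).
\end{itemize}
Your pairing identity $\beta_{-j}=r_j\beta_j$ is the one the paper only uses later, in the proof of Corollary~\ref{corol:F12_sym}. The decomposition you arrive at is Corollary~\ref{corol:F_trans_parts}. So your argument front-loads the structure of Section~\ref{sec:var_conv_domain} and makes the endpoint bookkeeping at $L=0$ and $L=+\infty$ explicit, which the paper's folded integral handles only implicitly.

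The paper's argument, in turn, is more general. It does not rely on finitely many atoms, so it covers continuous symmetric $L$-densities, where the strip $\Re s\in(0,1)$ genuinely matters.
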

\begin{proof}
Please refer to Section~\ref{sec:var_node_holo}.
\end{proof}

\begin{corollary}~\label{corol:var_node_double_LT}
Let $|\mathfrak{a}|$ be the $|D|$-density corresponding to $\mathfrak{a}$, then $\mathcal{F}_{\mathfrak{a}}(s)=\int_0^1|\mathfrak{a}|(z)\Bigl(\frac{1-z}{2}\Bigl(\frac{1+z}{1-z}\Bigr)^s+\frac{1+z}{2}\Bigl(\frac{1-z}{1+z}\Bigr)^s\Bigr)\mathrm{~d} z\triangleq\mathcal{F}_{|\mathfrak{a}|}(s)$, for every $\Re s\in(0,1)$.
\end{corollary}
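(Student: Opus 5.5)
The plan is to start from the $D$-density representation of Lemma~\ref{lemma:var_node_double_LT}, namely $\mathcal{F}_{\mathfrak{a}}(s)=\int_{-1}^{1}\mathfrak{a}(z)\bigl(\frac{1-z}{1+z}\bigr)^s\,\mathrm{d}z$. We then split this integral at $z=0$ into a part over $[-1,0)$ and a part over $[0,1]$. On the negative half we substitute $z\mapsto -z$, so that both pieces become integrals over $[0,1]$. The negative piece then carries the kernel $\bigl(\frac{1+z}{1-z}\bigr)^s$ and the weight $\mathfrak{a}(-z)$.

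Next we invoke the symmetry relation established in the proof of Lemma~\ref{lemma:D_dens}: $\mathfrak{a}(\pm z)=\frac{1\pm z}{2}\,|\mathfrak{a}|(z)$ for $z\in(0,1]$, together with $\mathfrak{a}(0)=|\mathfrak{a}|(0)$. Substituting this gives, on $(0,1]$, the integrand
\[
|\mathfrak{a}|(z)\Bigl(\tfrac{1-z}{2}\bigl(\tfrac{1+z}{1-z}\bigr)^s+\tfrac{1+z}{2}\bigl(\tfrac{1-z}{1+z}\bigr)^s\Bigr),
\]
which is exactly the claimed one. The bracket equals $1$ at $z=0$, so the mass $\alpha_0$ at the origin is also counted correctly.

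The only delicate point is the endpoint $z=1$, where $\frac{1+z}{1-z}$ blows up. Since the densities are finite sums of Dirac masses (Lemmas~\ref{lemma:absD_dens} and~\ref{lemma:D_dens}), the integrals are really finite sums, and the issue reduces to interpreting the mass at $z_n=1$. There, $\beta_{-n}=0$, so no contribution arises from $z=-1$ in the original integral. In the new expression, the coefficient $\frac{1-z}{2}\bigl(\frac{1+z}{1-z}\bigr)^s=2^{s-1}(1-z)^{1-s}(1+z)^s$ tends to $0$ as $z\to1$ when $\Re s<1$. Likewise, $\bigl(\frac{1-z}{1+z}\bigr)^s\to0$ when $\Re s>0$.

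So we will rewrite the bracket in this factored form, which is continuous on $[0,1]$ for $\Re s\in(0,1)$. The mass at $1$ then contributes $0$ in both representations, matching $\mathcal{F}_{\mathfrak{a}}$ at $z=1$ (also $0$), and this makes the identity valid on the SOA. This endpoint check is the main obstacle; the rest is bookkeeping.
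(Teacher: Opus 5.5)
Your proposal is correct and is essentially the argument the paper relies on: the paper states this corollary without a separate proof, as a direct consequence of Lemma~\ref{lemma:var_node_double_LT} folded onto $[0,1]$ via the symmetry relation $\mathfrak{a}(\pm z)=\frac{1\pm z}{2}|\mathfrak{a}|(z)$ from the proof of Lemma~\ref{lemma:D_dens}. Your endpoint bookkeeping (atom at $0$ contributing $\alpha_0$, atom at $1$ contributing $0$) also matches how the paper later uses the formula. One harmless slip: $\frac{1-z}{2}\bigl(\frac{1+z}{1-z}\bigr)^s=\frac{1}{2}(1-z)^{1-s}(1+z)^s$, not $2^{s-1}(1-z)^{1-s}(1+z)^s$; the factor $2^{s-1}$ appears only as the limiting prefactor as $z\to 1$, and the vanishing at $z=1$ for $\Re s<1$ is unaffected.
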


\begin{corollary}~\label{corol:var_node_half}
The \textit{Bhattacharyya parameter} is obtained by letting $s=1/2$, cf.~\cite[Proof of Lemma 4.64]{richardson2008modern}, i.e., $\mathfrak{B}(|\mathfrak{a}|)=\mathcal{F}_{|\mathfrak{a}|}(1/2)=\int_0^1|\mathfrak{a}|(z) \sqrt{1-z^2} \mathrm{~d} z$.
\end{corollary}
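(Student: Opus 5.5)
We prove Corollary~\ref{corol:var_node_half} in two steps. First we identify $\mathcal{F}_{\mathfrak{a}}(1/2)$ with the Bhattacharyya parameter. Then we evaluate Corollary~\ref{corol:var_node_double_LT} at $s=1/2$.

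\textbf{Step 1: $s=1/2$ gives the Bhattacharyya parameter.} The point $s=1/2$ lies in the strip $\Re s\in(0,1)$. By Lemma~\ref{lemma:var_node_holo}, $\mathcal{F}_{\mathfrak{a}}$ is holomorphic there, so evaluating at $s=1/2$ is legitimate. By Definition~\ref{def:var_node_double_LT},
\begin{equation*}
\mathcal{F}_{\mathrm{a}}(1/2)=\mathbb{E}\bigl[\mathrm{e}^{-X/2}\bigr],
\end{equation*}
where $X=\log\frac{p(Y|1)}{p(Y|-1)}$ with $Y\sim p(\cdot|1)$. Hence
\begin{equation*}
\mathbb{E}\bigl[\mathrm{e}^{-X/2}\bigr]=\sum_{y\in\mathcal{Y}}p(y|1)\sqrt{\frac{p(y|-1)}{p(y|1)}}=\sum_{y}\sqrt{p(y|1)p(y|-1)}.
\end{equation*}
This is the definition of the Bhattacharyya parameter. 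The only care needed is at outputs with $p(y|1)=0$. These contribute zero on both sides: they carry no mass under $X\sim\mathrm{a}$, and the geometric mean vanishes. Similarly, outputs with $p(y|-1)=0$ give $X=+\infty$ and $\mathrm{e}^{-X/2}=0$, consistent with $z=1$. Lemma~\ref{lemma:var_node_double_LT} then gives $\mathfrak{B}=\mathcal{F}_{\mathfrak{a}}(1/2)$.

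\textbf{Step 2: evaluate the $|D|$-form at $s=1/2$.} Substituting $s=1/2$ into the integrand of Corollary~\ref{corol:var_node_double_LT} gives
\begin{equation*}
\frac{1-z}{2}\sqrt{\frac{1+z}{1-z}}+\frac{1+z}{2}\sqrt{\frac{1-z}{1+z}}=\frac{1}{2}\sqrt{(1-z)(1+z)}+\frac{1}{2}\sqrt{(1+z)(1-z)}=\sqrt{1-z^2}.
\end{equation*}
This holds for $z\in[0,1)$. At $z=1$ the first term is $0$ times an infinite factor, so we must interpret it. The convention comes from Lemma~\ref{lemma:D_dens}: $\beta_{-n}=0$, so the mass at $z=-1$, which is what produces the singular term, is absent. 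The surviving term $\frac{1+z}{2}\bigl(\frac{1-z}{1+z}\bigr)^{1/2}$ vanishes at $z=1$, matching $\sqrt{1-1^2}=0$. Therefore
\begin{equation*}
\mathcal{F}_{|\mathfrak{a}|}(1/2)=\int_0^1|\mathfrak{a}|(z)\sqrt{1-z^2}\,\mathrm{d}z.
\end{equation*}
With Lemma~\ref{lemma:absD_dens}, this becomes the finite sum $\sum_i\alpha_i\sqrt{1-z_i^2}$.

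\textbf{Main obstacle.} The algebra is routine. The only delicate point is the boundary mass at $z=\pm1$, i.e.\ $L=\pm\infty$, in both steps. I would handle it by working directly with the discrete Dirac representation of Lemma~\ref{lemma:D_dens} rather than formal integrals, checking termwise that each atom's contribution agrees with $\sqrt{p(y|1)p(y|-1)}$ summed over the corresponding output class.
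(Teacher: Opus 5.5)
Your proof is correct and follows the paper's route. The paper cites Richardson--Urbanke for $\mathfrak{B}=\mathcal{F}_{\mathfrak{a}}(1/2)$ and substitutes $s=1/2$ into Corollary~\ref{corol:var_node_double_LT}, where the integrand collapses to $\sqrt{1-z^2}$. Your Step~1 simply derives that cited identification from $\sum_y\sqrt{p(y|1)p(y|-1)}$, and your handling of the endpoint $z=1$ (the singular term corresponds to the absent atom $\beta_{-n}=0$, or equivalently $(1-z)^{1/2}\to 0$) agrees with the paper's $0^s$ convention.
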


We now extend this analysis to the check-domain transform originally defined in the $G$-density, relating it back to the $D$- and $|D|$-densities within the SOA, i.e., $\nu\in\mathbb{C}$ with real part $\Re \nu>0$.

\begin{lemma}\label{lemma:check_node_holo}
$\mathcal{G}_{\mathfrak{a}}(0, \nu)$ is holomorphic for every $\Re \nu>0$, and by symmetry, $\mathcal{G}_{\mathfrak{a}}(\mathrm{i}\pi, \nu)=\mathcal{G}_{\mathfrak{a}}(0, \nu+1)$, cf.~\cite[p.200]{richardson2008modern}.
\end{lemma}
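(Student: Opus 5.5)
The argument uses the discrete form of the $D$-density from Lemma~\ref{lemma:D_dens}. With $\mathfrak{a}(z)=\sum_{j=-n}^{n}\beta_j\delta(z-z_j)$, Lemma~\ref{lemma:check_node_double_LT} turns the check-domain transform into a finite sum,
\begin{equation*}
\mathcal{G}_{\mathfrak{a}}(0,\nu)=\sum_{j=-n}^{n}\beta_j |z_j|^{\nu}.
\end{equation*}
Holomorphy then follows term by term. For $j\neq 0$ we have $|z_j|\in(0,1]$, so $|z_j|^{\nu}=\mathrm{e}^{\nu\ln|z_j|}$ is entire in $\nu$. The only delicate term is $j=0$, where $z_0=0$. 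There $|z_0|^\nu=0^\nu$, which we define as $0$ for $\Re\nu>0$; equivalently $\mathrm{e}^{-\nu y}\to 0$ as $y=\ln\coth(|x|/2)\to+\infty$. This constant is trivially holomorphic on the half-plane.

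To see that the restriction $\Re\nu>0$ is the natural one, we can also argue from the integral in Definition~\ref{def:check_node_double_LT}. Since $|\mathrm{e}^{-\nu y}|=\mathrm{e}^{-\Re\nu\, y}\le 1$ for $y\ge 0$, the integrand is dominated uniformly on compact subsets of $\{\Re\nu>0\}$. Morera's theorem together with Fubini, or differentiation under the integral sign, then gives holomorphy for a general density. The discrete case is the clean route, though. The main obstacle is precisely the treatment of the mass at $z=0$ (an erasure-like output, $y=+\infty$), and the open half-plane is exactly what makes $0^\nu$ well defined and continuous.

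For the symmetry identity, Lemma~\ref{lemma:check_node_double_LT} with $\mu=\mathrm{i}\pi$ gives the weight $\exp(-\mathrm{i}\pi(1-\mathfrak{H}(z))/2)=\mathfrak{H}(z)$, which equals $+1$ for $z>0$ and $-1$ for $z<0$; the atom at $0$ contributes $0^\nu=0$ either way. Hence
\begin{equation*}
\mathcal{G}_{\mathfrak{a}}(\mathrm{i}\pi,\nu)=\int_{0}^{1}\bigl(\mathfrak{a}(z)-\mathfrak{a}(-z)\bigr)z^{\nu}\,\mathrm{d}z.
\end{equation*}
Next we use the symmetry relation from the proof of Lemma~\ref{lemma:D_dens}, namely $\mathfrak{a}(\pm z)=\frac{1\pm z}{2}|\mathfrak{a}|(z)$. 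This gives $\mathfrak{a}(z)-\mathfrak{a}(-z)=z\,|\mathfrak{a}|(z)$, and therefore
\begin{equation*}
\mathcal{G}_{\mathfrak{a}}(\mathrm{i}\pi,\nu)=\int_0^1|\mathfrak{a}|(z)z^{\nu+1}\,\mathrm{d}z.
\end{equation*}
By the same relation, $\mathcal{G}_{\mathfrak{a}}(0,\nu+1)=\int_0^1(\mathfrak{a}(z)+\mathfrak{a}(-z))z^{\nu+1}\,\mathrm{d}z=\int_0^1|\mathfrak{a}|(z)z^{\nu+1}\,\mathrm{d}z$, which is the same quantity. Writing this in the discrete form with $\beta_{\pm i}$ from \eqref{eq:beta_alpha_1_n-1} makes the computation a one-line check: $\beta_i-\beta_{-i}=z_i\alpha_i$, and $\beta_{-n}=0$ handles the endpoint $z_n=1$. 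Finally, since $\Re(\nu+1)>1>0$, both sides lie in the SOA, so the identity holds for every $\Re\nu>0$.
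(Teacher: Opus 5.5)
Your proof is correct, but it takes a different route from the paper.

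\textbf{The paper's route.} It works in the $G$-density. Using $a(0,y)=a(1,y)\coth(y/2)$, it rewrites $\mathcal{G}_a(0,\nu)$ as a unilateral Laplace transform $\int_0^{\infty} g(y)\mathrm{e}^{-\nu y}\,\mathrm{d}y$ with $g(y)=2a(1,y)/(1-\mathrm{e}^{-y})$. It then proves $g\in L^1$ from three facts: $1-\mathrm{e}^{-x}\ge(1-\mathrm{e}^{-1})\min(1,x)$, $a(1,y)=a(0,y)\tanh(y/2)$, and $\tanh x\le x$. This gives absolute convergence on $\Re\nu\ge 0$ and holomorphy on $\Re\nu>0$ for an arbitrary symmetric density. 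The shift identity $\mathcal{G}_{\mathfrak{a}}(\mathrm{i}\pi,\nu)=\mathcal{G}_{\mathfrak{a}}(0,\nu+1)$ is not derived; it is only cited from Richardson--Urbanke.

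\textbf{Your route.} You use the discrete atomic form, so the transform is a finite exponential sum and holomorphy is immediate. The interior part is in fact entire, which anticipates Lemma~\ref{lemma:f_a_analy_cont} and Corollary~\ref{corol:G2G_connect}. You also actually derive the shift identity from $\mathfrak{a}(z)-\mathfrak{a}(-z)=z\,|\mathfrak{a}|(z)$. Both endpoint atoms are handled explicitly: $0^\nu=0$ at $z=0$, and $\beta_{-n}=0$ with $z_n=1$ at the other end.

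\textbf{What each buys.} The paper's argument covers general, not necessarily discrete, densities and matches the Laplace-transform framing of the reference. Yours is elementary and self-contained on the symmetry part. Your explicit accounting of boundary atoms is also cleaner than the paper's $g(y)$ representation. There, the Dirac mass $\alpha_n$ of $a(0,\cdot)$ at $y=0$ is not captured by $a(1,y)\coth(y/2)$, since $a(1,\cdot)$ has no atom there, and so it has to be added separately.

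Your general-density sketch also shows that the paper's $L^1$ estimate on $g$ is not needed for holomorphy on the open half-plane. The integrand is already dominated by the probability density $\sum_{\sigma}a(\sigma,y)$ itself.
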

\begin{proof}
Please refer to Section~\ref{sec:check_node_holo}.
\end{proof}
Lemma~\ref{lemma:check_node_holo} shows that, under symmetry, evaluating at $\mu=\mathrm{i}\pi$ shifts the $\nu$-argument of $\mathcal{G}_{\mathfrak{a}}(0, \nu)$ by $1$. Hence, we restrict our attention to $\mu=0$ in the sequel.
\begin{corollary}\label{corol:check_node_double_LT}
Let $|\mathfrak{a}|$ be the $|D|$-density corresponding to $\mathfrak{a}$, then $\mathcal{G}_{\mathfrak{a}}(0, \nu)=\int_0^1 |\mathfrak{a}|(z) z^\nu \mathrm{d} z \triangleq\mathcal{G}_{|\mathfrak{a}|}(0, \nu)$, for every $\Re \nu>0$.
\end{corollary}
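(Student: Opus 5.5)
The plan is to start from the $D$-domain form of the check-domain transform in Lemma~\ref{lemma:check_node_double_LT}, namely $\mathcal{G}_{\mathfrak{a}}(0,\nu)=\int_{-1}^1\mathfrak{a}(z)|z|^\nu\,\mathrm{d}z$. We then split the range of integration at $z=0$ and fold the negative half onto $(0,1]$ with the substitution $z\mapsto -z$. Because the kernel $|z|^\nu$ is even in $z$, this gives
\begin{equation*}
\mathcal{G}_{\mathfrak{a}}(0,\nu)=\int_{0}^{1}\bigl(\mathfrak{a}(z)+\mathfrak{a}(-z)\bigr)z^\nu\,\mathrm{d}z ,
\end{equation*}
where the point $z=0$ needs separate treatment, discussed below.

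Next we invoke the symmetry relation already used in the proof of Lemma~\ref{lemma:D_dens}: $|\mathfrak{a}|(z)=\mathfrak{a}(z)+\mathfrak{a}(-z)$ for $z\in(0,1]$ and $|\mathfrak{a}|(0)=\mathfrak{a}(0)$. Substituting this identifies the integrand with $|\mathfrak{a}|(z)z^\nu$. The result is the claimed $\int_0^1|\mathfrak{a}|(z)z^\nu\,\mathrm{d}z$, which we define as $\mathcal{G}_{|\mathfrak{a}|}(0,\nu)$. Unlike Corollary~\ref{corol:var_node_double_LT}, no weights $(1\pm z)/2$ appear here, because the kernel is even and so the two weights simply add to $1$.

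The only delicate point is the mass at $z=0$. For a discrete channel it is a Dirac mass $\beta_0\delta(z)=\alpha_0\delta(z)$ (Lemmas~\ref{lemma:absD_dens}--\ref{lemma:D_dens}). We must make sure it is counted exactly once and that $0^\nu$ is meaningful. To keep this rigorous, we carry out the computation with the explicit discrete forms instead of manipulating delta functions formally. The $D$-side sum is
\begin{equation*}
\mathcal{G}_{\mathfrak{a}}(0,\nu)=\sum_{j=-n}^{n}\beta_j|z_j|^\nu .
\end{equation*}
For $\Re\nu>0$ we have $|z_0|^\nu=0^\nu=0$, so the $j=0$ term vanishes on both sides. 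Pairing $j=\pm i$ and using $\beta_i+\beta_{-i}=\alpha_i$ for $1\le i\le n-1$ (from \eqref{eq:beta_alpha_1_n-1}), together with $\beta_n=\alpha_n$ and $\beta_{-n}=0$, yields
\begin{equation*}
\sum_{i=0}^{n}\alpha_i z_i^\nu=\int_0^1|\mathfrak{a}|(z)z^\nu\,\mathrm{d}z .
\end{equation*}

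The restriction $\Re\nu>0$ is exactly what makes $0^\nu=0$ well defined and continuous. It also matches the SOA from Lemma~\ref{lemma:check_node_holo}. The resulting finite sum $\sum_i\alpha_i z_i^\nu$ is holomorphic in $\nu$ there, which is consistent with that lemma. I expect the main obstacle to be only this bookkeeping at $z=0$, and the explicit discrete computation handles it.
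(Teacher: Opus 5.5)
Your argument is correct and is essentially the derivation the paper intends. The paper states this corollary without a separate proof, as an immediate consequence of Lemma~\ref{lemma:check_node_double_LT} and the folding identity $|\mathfrak{a}|(z)=\mathfrak{a}(z)+\mathfrak{a}(-z)$ from the proof of Lemma~\ref{lemma:D_dens}; because the kernel $|z|^\nu$ is even, the $(1\pm z)/2$ weights never enter, and your atom-by-atom bookkeeping correctly counts the mass $\alpha_0$ at $z=0$ exactly once.
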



\section{Component evolution under the \texorpdfstring{$\boxast$}{boxast}-convolution}~\label{sec:check_conv_domain}
This section considers the $|D|$-density on the check-domain. We introduce the entire counterpart $\mathscr{G}_{|\mathfrak{a}|}(\nu)$ of $\mathcal{G}_{|\mathfrak{a}|}(0, \nu)$, relate it to $\mathcal{G}_{|\mathfrak{a}|}(0, \nu)$ and boundary atoms, and analyze the effect of the check-domain $\boxast$-convolution on those components.

\begin{lemma}~\label{lemma:f_a_analy_cont}
Define
\begin{equation}
    f_{|\mathfrak{a}|}(z)\triangleq \begin{cases}
        |\mathfrak{a}|(z)\cdot z, & 0 < z < 1, \\
        0, & z \geq 1,
    \end{cases}
\end{equation}
Its \textit{Mellin transform}~\cite[(3.1.1)]{paris2001asymptotics}
\begin{equation}~\label{eq:atom2G}
    \mathcal{M}\left\{f_{|\mathfrak{a}|}\right\}(\nu)\triangleq \mathscr{G}_{|\mathfrak{a}|}(\nu)=\sum_{i=1}^{n-1}\alpha_i z_i^\nu,
\end{equation}
is an \textit{entire} function on $\mathbb{C}$.
\end{lemma}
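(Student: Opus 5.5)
The plan is to compute the Mellin transform directly from the atomic form of the $|D|$-density in Lemma~\ref{lemma:absD_dens}. We then observe that the result is a finite exponential sum, which is entire. Recall the Mellin transform $\mathcal{M}\{f\}(\nu)=\int_0^\infty f(z)z^{\nu-1}\,\mathrm{d}z$.

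First, substitute $|\mathfrak{a}|(z)=\sum_{i=0}^{n}\alpha_i\delta(z-z_i)$ into the definition of $f_{|\mathfrak{a}|}$. Because $f_{|\mathfrak{a}|}$ is restricted to the open interval $0<z<1$, the boundary atoms are discarded.
\begin{itemize}
\item The atom at $z_0=0$ would in any case be annihilated by the factor $z$.
\item The atom at $z_n=1$ is excluded, since $f_{|\mathfrak{a}|}\equiv 0$ on $z\ge 1$.
\end{itemize}
Hence $f_{|\mathfrak{a}|}(z)=\sum_{i=1}^{n-1}\alpha_i z_i\,\delta(z-z_i)$ as a measure on $(0,\infty)$, with all support points $z_i\in(0,1)$ strictly inside the domain of integration. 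This removes any ambiguity about how a delta mass sitting at an endpoint should be integrated. I expect this bookkeeping to be the only delicate point: one must insist that the interval is open and the atoms at $0$ and $1$ are dropped.

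Next, integrate against $z^{\nu-1}$. By the sifting property,
\begin{equation*}
\mathcal{M}\{f_{|\mathfrak{a}|}\}(\nu)=\int_0^\infty \sum_{i=1}^{n-1}\alpha_i z_i\,\delta(z-z_i)\,z^{\nu-1}\,\mathrm{d}z=\sum_{i=1}^{n-1}\alpha_i z_i\,z_i^{\nu-1}=\sum_{i=1}^{n-1}\alpha_i z_i^{\nu}.
\end{equation*}
This identity holds for every $\nu\in\mathbb{C}$, because the integral is a finite sum of point evaluations and no convergence condition arises. Here $z_i^{\nu}$ is defined as $\mathrm{e}^{\nu\ln z_i}$, using the real logarithm of $z_i>0$.

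Finally, for entireness, each term $\alpha_i\mathrm{e}^{\nu\ln z_i}$ is an exponential in $\nu$ and hence entire. A finite sum of entire functions is entire, so $\mathscr{G}_{|\mathfrak{a}|}$ is entire on $\mathbb{C}$. As a consistency check, compare with Corollary~\ref{corol:check_node_double_LT}: for $\Re\nu>0$,
\begin{equation*}
\mathcal{G}_{|\mathfrak{a}|}(0,\nu)=\alpha_0\cdot 0^\nu+\mathscr{G}_{|\mathfrak{a}|}(\nu)+\alpha_n=\mathscr{G}_{|\mathfrak{a}|}(\nu)+\alpha_n.
\end{equation*}
So $\mathscr{G}_{|\mathfrak{a}|}$ is exactly the part of the check-domain transform that remains once the boundary atoms are removed, and it extends that part from the strip $\Re\nu>0$ to all of $\mathbb{C}$.
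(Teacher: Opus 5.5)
Your proposal is correct and follows essentially the same route as the paper. Both arguments drop the boundary atoms so that $f_{|\mathfrak{a}|}(z)=\sum_{i=1}^{n-1}\alpha_i z\,\delta(z-z_i)$, evaluate each term's Mellin transform as $\alpha_i z_i^{\nu}$ for all $\nu\in\mathbb{C}$, and conclude entireness from the finite sum of exponentials $\mathrm{e}^{\nu\log z_i}$. The only difference is that you use the sifting property directly, whereas the paper invokes the shift property of the Mellin transform.
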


\begin{proof}
Please refer to Section~\ref{sec:f_a_analy_cont}.
\end{proof}

\begin{corollary}\label{corol:G2G_connect}
$\mathcal{G}_{|\mathfrak{a}|}(0, \nu)=\mathscr{G}_{|\mathfrak{a}|}(\nu)+\alpha_0 \cdot 0^\nu + \alpha_{n}$, $\forall \Re \nu>0$; while $1\equiv\mathcal{G}_{|\mathfrak{a}|}(0, 0)=\mathscr{G}_{|\mathfrak{a}|}(0)+\alpha_0 + \alpha_{n}$.
\end{corollary}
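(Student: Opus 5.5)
The plan is to evaluate the check-domain transform directly on the discrete $|D|$-density and separate the boundary atoms from the interior ones. By Corollary~\ref{corol:check_node_double_LT}, for every $\Re \nu>0$ we have $\mathcal{G}_{|\mathfrak{a}|}(0,\nu)=\int_0^1|\mathfrak{a}|(z)z^\nu\,\mathrm{d}z$.

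First, substitute the atomic form $|\mathfrak{a}|(z)=\sum_{i=0}^n\alpha_i\delta(z-z_i)$ from Lemma~\ref{lemma:absD_dens}, with $z_0=0$ and $z_n=1$. The sifting property then gives the finite sum $\sum_{i=0}^n\alpha_i z_i^\nu$.

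Second, split this sum into three pieces:
\begin{itemize}
\item the $i=0$ term, which is $\alpha_0\cdot 0^\nu$;
\item the $i=n$ term, which is $\alpha_n\cdot 1^\nu=\alpha_n$;
\item the interior terms $\sum_{i=1}^{n-1}\alpha_i z_i^\nu$, which by \eqref{eq:atom2G} in Lemma~\ref{lemma:f_a_analy_cont} is exactly $\mathscr{G}_{|\mathfrak{a}|}(\nu)$.
\end{itemize}
Here $z_i^\nu=e^{\nu\ln z_i}$ is well defined because $0<z_i<1$. For $\Re\nu>0$ we also have $0^\nu=0$, so this term contributes nothing, but the statement keeps it in symbolic form and I will do the same. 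This proves the first identity.

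The one delicate point is the atom at the endpoints of the integration interval $[0,1]$. The Dirac masses at $z=0$ and $z=1$ must be counted fully, not halved. This follows from the convention in Lemma~\ref{lemma:absD_dens}, where the density is a distribution on the closed interval and $|\mathfrak{a}|(0)=\mathfrak{a}(0)$. Note also that $f_{|\mathfrak{a}|}$ in Lemma~\ref{lemma:f_a_analy_cont} deliberately excludes $z=0$ and $z=1$, which is why those two atoms appear separately.

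For the second identity at $\nu=0$, I interpret $\mathcal{G}_{|\mathfrak{a}|}(0,0)=\int_0^1|\mathfrak{a}|(z)z^0\,\mathrm{d}z$ with the convention $0^0=1$. Then $\mathcal{G}_{|\mathfrak{a}|}(0,0)=\sum_i\alpha_i=1$ by normalization. Splitting the sum in the same way gives $\mathscr{G}_{|\mathfrak{a}|}(0)+\alpha_0+\alpha_n$, where $\mathscr{G}_{|\mathfrak{a}|}(0)=\sum_{i=1}^{n-1}\alpha_i$ is well defined because $\mathscr{G}_{|\mathfrak{a}|}$ is entire.

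The main obstacle is this boundary point $\nu=0$, which lies outside the SOA. There $\nu\mapsto 0^\nu$ is discontinuous: it is $0$ for $\Re\nu>0$ and $1$ at $\nu=0$. So the $\nu=0$ identity cannot be obtained by letting $\nu\to 0^+$ in the first identity. It must be stated as a direct evaluation with the convention $0^0=1$, and I will make this explicit.
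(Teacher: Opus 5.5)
Your proposal is correct and matches the paper's own route: substitute the atomic $|D|$-density of Lemma~\ref{lemma:absD_dens} into Corollary~\ref{corol:check_node_double_LT}, sift to $\sum_{i=0}^{n}\alpha_i z_i^\nu$, and separate the $i=0$ and $i=n$ atoms from the interior sum, which is $\mathscr{G}_{|\mathfrak{a}|}(\nu)$ by Lemma~\ref{lemma:f_a_analy_cont}. Your convention $0^0=1$ at $\nu=0$ is exactly the paper's Hausdorff dimension function convention $0^\nu\triangleq\lim_{\delta\to 0^+}\delta^\nu$, stated in the remark that follows the corollary.
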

\begin{remark}
Here, $0^\nu\triangleq \lim_{\delta\to 0^+} \delta^\nu$ defines the \textit{Hausdorff dimension function}~\cite[Definition 1.2]{sudland2004mellin}, i.e., $0^\nu=0$ for $\Re \nu>0$ and $0^\nu=1$ for $\nu=0$. Hence, $\mathcal{G}_{|\mathfrak{a}|}(0, \nu)=\mathscr{G}_{|\mathfrak{a}|}(\nu)+ \alpha_{n}$, $\forall \Re \nu>0$; see blue arrows in Fig.~\ref{fig:entities_domains}.
\end{remark}

Next, we characterize the BEC and BSC via $\mathcal{G}_{|\mathfrak{a}|}(0, \nu)$, $\mathscr{G}_{|\mathfrak{a}|}(\nu)$ and boundary atoms in the $|D|$-density; see Table \ref{tab:channel_comp}.

\begin{table}[h!]
\begin{center}
\begin{tabular}{|c || c c c c c|} 
 \hline
 $W$ & $|\mathfrak{a}|(z)$ & $\mathbbm{1}_{\{\alpha_0 > 0\}}$ & $\mathbbm{1}_{\{\alpha_n > 0\}}$ & $\mathcal{G}_{|\mathfrak{a}|}(0, \nu), \ \forall\Re\nu>0$ & $\mathscr{G}_{|\mathfrak{a}|}(\nu), \ \forall\nu\in\mathbb{C}$ \\ [0.5ex] 
 \hline
 $W_{\operatorname{BEC}(\varepsilon)}$ & $\varepsilon \delta(z)+\bar{\varepsilon} \delta(z-1)$ & 1 & 1 & $\bar{\varepsilon}$ & 0 \\ 
 \hline
 $W_{\operatorname{BSC}(p)}$ & $\delta(z-(1-2p))$ & 0 & 0 & $(1-2p)^\nu$ & $(1-2p)^\nu$ \\
 \hline
\end{tabular}
\vspace{1mm}
\caption{Comparison of the BEC and BSC in terms of $|\mathfrak{a}|(z)$, $\mathcal{G}_{|\mathfrak{a}|}(0, \nu)$ and $\mathscr{G}_{|\mathfrak{a}|}(\nu)$}
\label{tab:channel_comp}
\end{center}
\end{table}

\begin{exmp}[$W_{\operatorname{BEC}(\varepsilon)}$ in the $|D|$-density]~\label{exmp:bec_G}
$|\mathfrak{a}|(z)=\varepsilon \delta(z)+\bar{\varepsilon} \delta(z-1)$, cf.~\cite[Example 4.23]{richardson2008modern}, i.e., $n=1$, $\alpha_0=\varepsilon$ and $\alpha_1=\barepsilon$. $\mathcal{G}_{|\mathfrak{a}|}(0, \nu)=\bar{\varepsilon}$, $\forall \Re \nu>0$. $\mathscr{G}_{|\mathfrak{a}|}(\nu)=0$, $\forall \nu\in\mathbb{C}$.
\end{exmp}
\begin{exmp}[$W_{\operatorname{BSC}(p)}$ in the $|D|$-density]
$|\mathfrak{a}|(z)=\delta(z-(1-2p))$, cf.~\cite[Example 4.24]{richardson2008modern}, i.e., $n=2$, $\alpha_0=\alpha_2=0$ and $z_1=1-2p\in(0,1)$ with $\alpha_1=1$. $\mathcal{G}_{|\mathfrak{a}|}(0, \nu)=\mathscr{G}_{|\mathfrak{a}|}(\nu)$, $\forall \Re \nu>0$. $\mathscr{G}_{|\mathfrak{a}|}(\nu)=(1-2p)^\nu$, $\forall \nu\in\mathbb{C}$.
\end{exmp}

\begin{remark}
For the BEC, only the boundary atom at $z_n\equiv 1$ contributes to $\mathcal{G}_{|\mathfrak{a}|}(0,\nu)$, whereas for the BSC, which possesses no boundary atoms, $\mathcal{G}_{|\mathfrak{a}|}(0,\nu)$ is completely determined by the interior contribution $\mathscr{G}_{|\mathfrak{a}|}(\nu)$ within its SOA.
\end{remark}

Excluding atoms at $z_0\equiv 0$ and $z_n\equiv 1$ from $\mathscr{G}_{|\mathfrak{a}|}(\nu)$ may seem inconvenient, but is necessary. Channel polarization in DE is driven by alternating $\boxast$- and $\varoast$-convolutions~\cite[(2.11)]{korada2009polar}. The proposed CE-framework aims to exploit the multiplicative properties of the variable- resp. check-domain transform under the $\varoast$- resp. $\boxast$-convolution~\cite[p.199]{richardson2008modern}. In fact, since polarization only requires self-convolution in each domain, e.g., $\mathfrak{a}^{\boxast}=\mathfrak{a} \boxast \mathfrak{a}$ in the check-domain~\cite[(2.9)]{korada2009polar}, such multiplicativity reduces to squaring, e.g., the check-domain transform of $\mathfrak{a}^{\boxast}$ is simply the square of that of $\mathfrak{a}$, cf.~\eqref{eq:sqr_G}. Thus, the remaining challenge is to draw bilateral conversions between $\mathcal{F}_{\mathfrak{a}}(s)$ (or equivalently $\mathcal{F}_{|\mathfrak{a}|}(s)$, cf. Corollary~\ref{corol:var_node_double_LT}) and $\mathcal{G}_{\mathfrak{a}}(0,\nu)$ (or equivalently $\mathcal{G}_{|\mathfrak{a}|}(0,\nu)$, cf. Corollary~\ref{corol:check_node_double_LT}) across the check- and variable-domains. A natural candidate is the \textit{Parseval formula for the Mellin transform}~\cite[(3.1.11)]{paris2001asymptotics} which provides an integral relation between transform pairs and thereby leads to a \textit{Mellin-Barnes integral representation}, which can then be evaluated by contour deformation via the residue theorem; see~\cite[Sec. 4]{fikioris2022mellin} for details. However, this approach requires sufficient analyticity of the integrand, which is violated in the presence of endpoint atom. 

Indeed, the Dirac mass $\alpha_0$ at $z_0\equiv 0$ in the $|D|$-density introduces the term $\alpha_0\cdot 0^\nu$ in $\mathcal{G}_{|\mathfrak{a}|}(0, \nu)$, cf. Corollary~\ref{corol:check_node_double_LT}. When converting $\mathcal{G}_{|\mathfrak{a}|}(0, \nu)$ to $\mathcal{F}_{|\mathfrak{a}|}(s)$ within its SOA, i.e., $\Re s \in(0,1)$ via a Mellin–Barnes representation, this term is well defined only for $\Re\nu>0$ and does not admit a single-valued analytic continuation to $\Re\nu\le 0$, thereby obstructing contour deformation across $\Re\nu=0$. To circumvent this issue, we instead work with the entire counterpart $\mathscr{G}_{|\mathfrak{a}|}(\nu)$ to $\mathcal{F}_{|\mathfrak{a}|}(s)$, and account for its contribution $\alpha_0$ separately, cf.~\eqref{eq:G2F_convert}. 

Likewise, the term $\beta_n \cdot 0^s$ in $\mathcal{F}_{\mathfrak{a}}(s)$ introduced by the Dirac mass $\beta_n$ at $z_n\equiv 1$ in the $D$-density, cf. Lemma~\ref{lemma:var_node_double_LT}, would obstruct contour deformation across $\Re s=0$, when converting $\mathcal{F}_{\mathfrak{a}}(s)$ to $\mathcal{G}_{\mathfrak{a}}(0, \nu)$ within its SOA, i.e., $\Re \nu>0$. Recall from Lemma~\ref{lemma:D_dens} that this endpoint corresponds to $\alpha_n=\beta_n$ at $z_n\equiv 1$ in the $|D|$-density. Accordingly, we work with the entire counterpart $\mathscr{F}_{\mathfrak{a}}(s)$ of $\mathcal{F}_{\mathfrak{a}}(s)$, and account for its contribution $\alpha_n$ separately in the reverse conversion, cf.~\eqref{eq:F2G_convert}.

Therefore, both boundary atoms at $z_0\equiv 0$ and $z_n\equiv 1$ are removed from $\mathscr{G}_{|\mathfrak{a}|}(\nu)$ (and later from $\mathscr{F}_{\mathfrak{a}}(s)$ in Lemma~\ref{lemma:f_a_w_analy_cont}) and treated separately. As a result, the relation between $\mathcal{G}_{|\mathfrak{a}|}(0, \nu)$ and $\mathscr{G}_{|\mathfrak{a}|}(\nu)$ must account for the contributions of the Dirac masses $\alpha_0$ and $\alpha_n$ within the SOA, cf. Corollary~\ref{corol:G2G_connect}.

Next, we investigate the effect of the $\boxast$-convolution on $\mathcal{G}_{|\mathfrak{a}|}(0, \nu)$ within its SOA, its entire counterpart $\mathscr{G}_{|\mathfrak{a}|}(\nu)$ on the whole check-domain $\nu\in\mathbb{C}$, and Dirac masses of the boundary atoms at $z_0\equiv 0$ and $z_n\equiv 1$ in the $|D|$-density, respectively.

\begin{lemma}~\label{lemma:G_comp_evo}
Under the check-domain $\boxast$-convolution, 
\begin{gather}
    \mathcal{G}_{|\mathfrak{a}^{\boxast}|}(0, \nu)=\mathcal{G}^2_{|\mathfrak{a}|}(0, \nu), \quad \forall \Re \nu >0, ~\label{eq:sqr_G} \\
    \mathscr{G}_{|\mathfrak{a}^{\boxast}|}(\nu)=\mathscr{G}^2_{|\mathfrak{a}|}(\nu)+2 \alpha_n \mathscr{G}_{|\mathfrak{a}|}(\nu), \quad \forall \nu \in\mathbb{C},~\label{eq:evo_G} \\
    \alpha^\boxast_0 = 2\alpha_0-\alpha_0^2,~\label{eq:evo_G_alpha_0} \\
    \alpha^\boxast_{n} = \alpha^2_{n}.~\label{eq:evo_G_alpha_n}
\end{gather}
\end{lemma}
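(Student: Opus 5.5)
Our plan is to derive all four identities from the single fact that the check-node convolution multiplies $|D|$-values: if $Z_1,Z_2$ are i.i.d.\ with $D$-density $\mathfrak{a}$, then $\mathfrak{a}^{\boxast}$ is the density of $Z_1Z_2$, and $|Z_1Z_2|=|Z_1||Z_2|$ with $|Z_1|,|Z_2|$ i.i.d.\ $\sim|\mathfrak{a}|$ (cf.~\cite[p.199]{richardson2008modern}). For \eqref{eq:sqr_G} we invoke Corollary~\ref{corol:check_node_double_LT}. For $\Re\nu>0$ the map $z\mapsto z^\nu$ is bounded on $[0,1]$, with $0^\nu=0$, so
\begin{equation*}
\mathcal{G}_{|\mathfrak{a}^{\boxast}|}(0,\nu)=\mathbb{E}\bigl[(|Z_1||Z_2|)^\nu\bigr]=\mathbb{E}\bigl[|Z_1|^\nu\bigr]\,\mathbb{E}\bigl[|Z_2|^\nu\bigr]=\mathcal{G}^2_{|\mathfrak{a}|}(0,\nu).
\end{equation*}
Equivalently, since $|\mathfrak{a}|$ is the finite atomic measure of Lemma~\ref{lemma:absD_dens}, this is just the double sum $\sum_{i,j}\alpha_i\alpha_j(z_iz_j)^\nu$.

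Next we read off the boundary atoms of $|\mathfrak{a}^{\boxast}|$ from the product of atoms. The product $z_iz_j$ equals $1$ if and only if $i=j=n$, because all $z_i\in[0,1]$ and $z_i<1$ for $i<n$. Hence $\alpha_n^{\boxast}=\alpha_n^2$, which is \eqref{eq:evo_G_alpha_n}. The product $z_iz_j$ equals $0$ if and only if $i=0$ or $j=0$. Therefore $\alpha_0^{\boxast}=1-(1-\alpha_0)^2=2\alpha_0-\alpha_0^2$, which is \eqref{eq:evo_G_alpha_0}.

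For \eqref{eq:evo_G} we use Corollary~\ref{corol:G2G_connect}, which gives $\mathcal{G}_{|\mathfrak{a}|}(0,\nu)=\mathscr{G}_{|\mathfrak{a}|}(\nu)+\alpha_n$ for $\Re\nu>0$, and the analogous identity for $\mathfrak{a}^{\boxast}$. Substituting into \eqref{eq:sqr_G} and using \eqref{eq:evo_G_alpha_n}, we get
\begin{equation*}
\mathscr{G}_{|\mathfrak{a}^{\boxast}|}(\nu)+\alpha_n^2=\mathscr{G}^2_{|\mathfrak{a}|}(\nu)+2\alpha_n\mathscr{G}_{|\mathfrak{a}|}(\nu)+\alpha_n^2, \qquad \Re\nu>0.
\end{equation*}
Cancelling $\alpha_n^2$ gives \eqref{eq:evo_G} on the right half-plane. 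Both sides are entire: the left side by Lemma~\ref{lemma:f_a_analy_cont} applied to the discrete density $|\mathfrak{a}^{\boxast}|$, and the right side as a polynomial in an entire function. The identity theorem then extends the equality to all $\nu\in\mathbb{C}$.

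As a cross-check, we can verify \eqref{eq:evo_G} directly. The interior atoms of $|\mathfrak{a}^{\boxast}|$ are the products $z_iz_j$ with $i,j\neq 0$ and not both equal to $n$. Summing $\alpha_i\alpha_j(z_iz_j)^\nu$ over these pairs splits into the pairs with $1\le i,j\le n-1$, contributing $\mathscr{G}^2_{|\mathfrak{a}|}(\nu)$, and the pairs with exactly one index equal to $n$, contributing $2\alpha_n\mathscr{G}_{|\mathfrak{a}|}(\nu)$. Coinciding products merge into a single atom, but that only regroups terms and does not change the sum.

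The main obstacle is justifying the product representation of $\boxast$ in the $|D|$-domain, which we take from the cited DE literature. A secondary obstacle is handling the $0^\nu$ convention carefully, which is why the extension to all $\nu\in\mathbb{C}$ goes through the identity theorem rather than through direct substitution.
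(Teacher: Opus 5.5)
Your proof is correct and follows the paper's argument. Both use the product of $|D|$-values under $\boxast$ to square the transform, then count which atom pairs land at $0$, at $1$, or in the interior. The only variation is how you get \eqref{eq:evo_G}: you subtract $\alpha_n^2$ on $\Re\nu>0$ and extend by the identity theorem. The paper instead counts the interior atom pairs directly, which is your cross-check, and that already yields an identity of finite exponential sums valid on all of $\mathbb{C}$.
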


\begin{proof}
Please refer to Section~\ref{sec:G_comp_evo}.
\end{proof}
\begin{remark}
$n^\boxast$ denotes the number of atoms in the $|D|$-density after the $\boxast$-convolution. For brevity, we write $\alpha^\boxast_{n}\equiv \alpha^\boxast_{n^\boxast}$.
\end{remark}

Next, we illustrate the effect of $\boxast$-convolutions on these components, using the BEC and BSC as representative examples.

\begin{exmp}[$W_{\operatorname{BEC}(\varepsilon)}$ under $\boxast$-convolutions]
Under one $\boxast$-convolution, $\mathcal{G}_{|\mathfrak{a}^{\boxast}|}(0, \nu)=\bar{\varepsilon}^2$, $\mathscr{G}_{|\mathfrak{a}^{\boxast}|}(\nu)=0$, $\alpha^\boxast_0 = 2\varepsilon-\varepsilon^2$ and $\alpha^\boxast_{1} = \bar{\varepsilon}^2$. Thus, a single $\boxast$-convolution maps $W_{\operatorname{BEC}(\varepsilon)}$ to another $W_{\operatorname{BEC}(\varepsilon^{\boxast})}$ with erasure $\varepsilon^{\boxast}=1-\barepsilon^2$. Additionally, $n^{\boxast}=1$, $z_0^{\boxast}\equiv z_0\equiv 0$, and $z_1^{\boxast}\equiv z_1=1$, trivially. By induction, we obtain an iteration formula of the erasure $\varepsilon^{\boxast m}$ of $W_{\operatorname{BEC}(\varepsilon)}$ under the $m$-fold $\boxast$-convolution, i.e., $\varepsilon^{\boxast (m+1)} = 1-(\bar{\varepsilon}^{\boxast m})^2$. Therefore, as $m\to \infty$,
\begin{equation*}
    \varepsilon^{\boxast m} = 1-\barepsilon^{2^m} \to 1^-,
\end{equation*}
which implies that $W_{\operatorname{BEC}(\varepsilon^{\boxast})}$ converges \textit{double-exponentially} in $m$ to a fully erasing channel.
\end{exmp}
\begin{exmp}[$W_{\operatorname{BSC}(p)}$ under $\boxast$-convolutions]~\label{emp:bsc_check}
Under one $\boxast$-convolution, $\mathscr{G}_{|\mathfrak{a}^{\boxast}|}(\nu)=\mathscr{G}^2_{|\mathfrak{a}|}(\nu)=(1-2p)^{2\nu}$, i.e., $\alpha^\boxast_0 =\alpha^\boxast_{2} = 0$. In particular, we have $1-2 p^{\boxast}=(1-2 p)^2=z_1^2$, which implies that one $\boxast$-convolution upon $W_{\operatorname{BSC}(p)}$ results in another $W_{\operatorname{BSC}(p^{\boxast})}$ with crossover $p^{\boxast}= 2p\bar{p}$. Additionally, $n^{\boxast}=2$ and $z_1^{\boxast}=z_1^2$. By induction, we obtain iteration formulae of the crossover $p^{\boxast m}$ and $z_1^{\boxast m}$ of $W_{\operatorname{BSC}(p^{\boxast m})}$ under the $m$-fold $\boxast$-convolution, i.e., $p^{\boxast (m+1)} = 2p^{\boxast m} \bar{p}^{\boxast m}$ and $z_1^{\boxast (m+1)} = (z_1^{\boxast m})^2$, respectively. Therefore, as $m\to \infty$,
\begin{align*}
    p^{\boxast m} &= 1/2\cdot(1-z_1^{2^m})\to 1/2^-, \\
    z_1^{\boxast m} &= z_1^{2^m} \to 0^+,
\end{align*}
which implies that $W_{\operatorname{BSC}(p^{\boxast m})}$ converges \textit{double-exponentially} in $m$ to a purely noisy BSC.
\end{exmp}

\section{Component evolution under the \texorpdfstring{$\varoast$}{varoast}-convolution}~\label{sec:var_conv_domain}
This section considers the $D$-density on the variable-domain. We introduce the entire counterpart $\mathscr{F}_{\mathfrak{a},i}(s)$ with $i\in\{1,2\}$ of $\mathcal{F}_{\mathfrak{a}}(s)$, relate them to $\mathcal{F}_{\mathfrak{a}}(s)$ and the boundary atoms, and analyze the effect of the variable-domain $\varoast$-convolution on those components. The entity $\mathscr{F}_{\mathfrak{a}}(s)$ is based on the new r.v. $\mathfrak{W}\triangleq (1-Z)/(1+Z)=\mathrm{e}^{-X}$, linking $X\sim \mathrm{a}$ and $Z\sim \mathfrak{a}$, which will be used throughout the remainder of the paper.

\begin{lemma}~\label{lemma:f_a_w_analy_cont}
Define
\begin{equation}
    f_{\mathfrak{a}}(w)\triangleq \begin{cases}
        \frac{2w}{(1+w)^2} \cdot \mathfrak{a}\left(\frac{1-w}{1+w}\right), \ w\in (0,1) \cup (1,\infty), \\
        0, \ w=1,
    \end{cases}
\end{equation}
of which the Mellin transform
\begin{equation}
    \mathcal{M}\{f_{\mathfrak{a}}\}(s)\triangleq \mathscr{F}_{\mathfrak{a}}(s) =\sum_{i=1}^{n-1} \beta_i w_i^s+\sum_{i=1}^{n-1} \beta_{-i} w_{-i}^s,
\end{equation}
where $\beta_{\pm i}=(1\pm z_i)/2\cdot \alpha_i$, and $w_{\pm i}=(1\mp z_i)/(1\pm z_i)$ for every $i=1,\ldots,n-1$, is an \textit{entire} function on $\mathbb{C}$.
\end{lemma}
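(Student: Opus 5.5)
We prove the statement by direct computation, in the same way as Lemma~\ref{lemma:f_a_analy_cont}. The plan is to treat $f_{\mathfrak{a}}$ as the pushforward of the discrete $D$-density under $z\mapsto w=(1-z)/(1+z)$, with the two endpoint atoms deleted.

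\emph{Step 1: identify $f_{\mathfrak{a}}$ as an atomic measure in $w$.} Insert the representation of Lemma~\ref{lemma:D_dens}, $\mathfrak{a}(z)=\sum_{j=-n}^{n}\beta_j\delta(z-z_j)$, into the definition of $f_{\mathfrak{a}}$. The map $z=z(w)=(1-w)/(1+w)$ is a smooth decreasing bijection from $(0,\infty)$ onto $(-1,1)$, and it satisfies
\[
\left|\frac{\mathrm{d}z}{\mathrm{d}w}\right|=\frac{2}{(1+w)^2}.
\]
Write $\frac{2w}{(1+w)^2}=w\cdot\frac{2}{(1+w)^2}$. The factor $\frac{2}{(1+w)^2}$ is exactly the Jacobian, so by the change-of-variables rule for the delta function, $\frac{2}{(1+w)^2}\,\delta(z(w)-z_j)=\delta(w-w_j)$ with $w_j=(1-z_j)/(1+z_j)$. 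Hence, as a measure on $(0,\infty)$,
\[
f_{\mathfrak{a}}(w)\,\mathrm{d}w=\sum_j \beta_j\, w\,\delta(w-w_j)\,\mathrm{d}w .
\]
The Mellin weight $w^{s-1}$ then combines with the extra factor $w$ to give $w^s$.

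\emph{Step 2: account for the excluded atoms.} Three atoms do not appear in the claimed sum, and each must be checked.
\begin{itemize}
\item The atom $z_n=1$ maps to $w=0$, which lies outside the open domain $(0,1)\cup(1,\infty)$, so it is dropped.
\item The atom $z_{-n}=-1$ maps to $w=\infty$, which is also outside the domain; in any case $\beta_{-n}=0$ by Lemma~\ref{lemma:D_dens}.
\item The atom $z_0=0$ maps to $w=1$, where $f_{\mathfrak{a}}$ is set to $0$ by definition. This removes the mass $\beta_0=\alpha_0$.
\end{itemize}
The remaining atoms $z_{\pm i}$, $i=1,\dots,n-1$, satisfy $z_{-i}=-z_i$. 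They map to $w_{\pm i}=(1\mp z_i)/(1\pm z_i)\in(0,1)\cup(1,\infty)$, and by Lemma~\ref{lemma:D_dens} they carry masses $\beta_{\pm i}=(1\pm z_i)\alpha_i/2$.

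\emph{Step 3: evaluate the transform.} Integrating term by term gives
\[
\mathcal{M}\{f_{\mathfrak{a}}\}(s)=\int_0^\infty f_{\mathfrak{a}}(w)w^{s-1}\mathrm{d}w=\sum_{i=1}^{n-1}\bigl(\beta_i w_i^s+\beta_{-i}w_{-i}^s\bigr).
\]
Each $w_{\pm i}$ is a fixed positive real number, so $w_{\pm i}^s=\mathrm{e}^{s\ln w_{\pm i}}$ is entire in $s$. A finite sum of entire functions is entire. Because the support of $f_{\mathfrak{a}}$ is a compact subset of $(0,\infty)$, bounded away from both $0$ and $\infty$, the integral converges for every $s\in\mathbb{C}$ and no analytic continuation is needed.

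\emph{Main obstacle.} The only delicate point is making the delta-function manipulation rigorous. I would do this in the language of measures: interpret $f_{\mathfrak{a}}(w)\,\mathrm{d}w$ as the pushforward of $\mathfrak{a}$, restricted to $(-1,1)\setminus\{0\}$, weighted by $w$. This is justified by the substitution rule $\int g(z)\,\mathfrak{a}(z)\,\mathrm{d}z=\int g(z(w))\,\mathfrak{a}(z(w))\,|z'(w)|\,\mathrm{d}w$. Stated this way, Step 1 follows from the Jacobian identity above, and the treatment of the point $w=1$ in Step 2 becomes a routine check.
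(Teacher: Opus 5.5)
Your proposal is correct and follows essentially the same route as the paper. The paper also rewrites $\mathfrak{a}\bigl(\frac{1-w}{1+w}\bigr)$ as a sum of Dirac masses in $w$ via the composition rule $\delta(h(w))=\delta(w-w_i)/|h'(w_i)|$ with $|h'|=2/(1+w)^2$, discards the atoms landing at $w=0$ and $w=1$ by the definition of $f_{\mathfrak{a}}$, and gets entireness from the finite sum of the terms $w_{\pm i}^s$. You note that the Jacobian cancels the factor $2/(1+w)^2$ exactly, so you can work with the $\beta_j$ directly, whereas the paper passes through $\alpha_i$ and the identity $(1+z_i)/2=1/(1+w_i)$; this is a small but tidy streamlining, and your measure-theoretic reading also makes precise what ``setting $f_{\mathfrak{a}}(1)=0$'' means.
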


\begin{proof}
Please refer to Section~\ref{sec:f_a_w_analy_cont}.
\end{proof}

\begin{corollary}~\label{corol:f_a_1_2_cont}
$f_{\mathfrak{a}}(w)$ admits the decomposition $f_{\mathfrak{a}}(w)=f_{\mathfrak{a},1}(w)+f_{\mathfrak{a},2}(w)$ with
\begin{equation}
    f_{\mathfrak{a},1}(w) \triangleq \begin{cases}
        \frac{2w}{(1+w)^2} \cdot \mathfrak{a}\left(\frac{1-w}{1+w}\right), \ w\in (0,1), \\
        0, \quad w\in[1,\infty);
    \end{cases} \quad \text{and} \quad
    f_{\mathfrak{a},2}(w) \triangleq \begin{cases}
        0, \quad w\in (0,1], \\
        \frac{2w}{(1+w)^2} \cdot \mathfrak{a}\left(\frac{1-w}{1+w}\right), \ w\in(1,\infty).
    \end{cases}
\end{equation}
The Mellin transform of $f_{\mathfrak{a}}(w)$ admits the decomposition $\mathscr{F}_{\mathfrak{a}}(s)=\mathscr{F}_{\mathfrak{a},1}(s)+\mathscr{F}_{\mathfrak{a},2}(s)$, where
\begin{equation}~\label{eq:atom2F12}
    \mathscr{F}_{\mathfrak{a},1}(s) \triangleq \mathcal{M}\{f_{\mathfrak{a},1}\}(s)=\sum_{i=1}^{n-1} \beta_i w_i^s, \quad \text{and} \quad
    \mathscr{F}_{\mathfrak{a},2}(s) \triangleq \mathcal{M}\{f_{\mathfrak{a},2}\}(s)=\sum_{i=1}^{n-1} \beta_{-i} w_{-i}^s,
\end{equation}
with $w_i\in(0,1)$ and $w_{-i}\in(1,\infty)$ for every $i=1,\ldots,n-1$. Both $\mathscr{F}_{\mathfrak{a},1}(s)$ and $\mathscr{F}_{\mathfrak{a},2}(s)$ are entire functions on $\mathbb{C}$.
\end{corollary}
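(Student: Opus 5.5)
The plan is to work directly from the atomic form of the $D$-density in Lemma~\ref{lemma:D_dens} and the definition of $f_{\mathfrak{a}}$ in Lemma~\ref{lemma:f_a_w_analy_cont}.

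\emph{Where the atoms sit in $w$.} Under the change of variables $z=(1-w)/(1+w)$, the map $w\mapsto z$ is a decreasing bijection of $(0,\infty)$ onto $(-1,1)$. It sends $w\in(0,1)$ to $z\in(0,1)$ and $w\in(1,\infty)$ to $z\in(-1,0)$. The point $w=1$ corresponds to $z=0$, which is exactly the atom $\beta_0=\alpha_0$ that $f_{\mathfrak{a}}$ discards by setting $f_{\mathfrak{a}}(1)=0$. The endpoint atoms at $z=\pm 1$ correspond to $w\in\{0,\infty\}$, which lie outside the domain. Hence every interior atom $z_{\pm i}$ with $i=1,\ldots,n-1$ is mapped to $w_{\pm i}=(1\mp z_i)/(1\pm z_i)$. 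Since $z_i\in(0,1)$, we get $w_i\in(0,1)$ and $w_{-i}\in(1,\infty)$.

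\emph{The decomposition of $f_{\mathfrak{a}}$.} Restricting $f_{\mathfrak{a}}$ to $(0,1)$ and to $(1,\infty)$, with zero elsewhere and at $w=1$, gives $f_{\mathfrak{a},1}$ and $f_{\mathfrak{a},2}$. Their sum equals $f_{\mathfrak{a}}$ pointwise by construction, so the decomposition $f_{\mathfrak{a}}=f_{\mathfrak{a},1}+f_{\mathfrak{a},2}$ is immediate.

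\emph{The Mellin transforms.} I would compute them by repeating the argument from the proof of Lemma~\ref{lemma:f_a_w_analy_cont}, restricted to each piece:
\begin{itemize}
\item The Jacobian is $|dz/dw|=2/(1+w)^2$. The extra factor $w$ in $f_{\mathfrak{a}}$ cancels the Mellin kernel's $w^{-1}$, so $\int_0^\infty w^{s-1}f_{\mathfrak{a}}(w)\,\mathrm{d}w$ becomes $\int \mathfrak{a}(z)\bigl((1-z)/(1+z)\bigr)^s\,\mathrm{d}z$ over the corresponding $z$-range.
\item Restricting to $w\in(0,1)$ picks up exactly the atoms $z_i$, $i=1,\ldots,n-1$, with masses $\beta_i$. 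This gives $\mathscr{F}_{\mathfrak{a},1}(s)=\sum_{i=1}^{n-1}\beta_i w_i^s$.
\item Restricting to $w\in(1,\infty)$ picks up exactly the atoms $z_{-i}$, with masses $\beta_{-i}$. This gives $\mathscr{F}_{\mathfrak{a},2}(s)=\sum_{i=1}^{n-1}\beta_{-i}w_{-i}^s$.
\end{itemize}
Linearity of the Mellin transform, or simply adding the two sums and comparing with Lemma~\ref{lemma:f_a_w_analy_cont}, then yields $\mathscr{F}_{\mathfrak{a}}=\mathscr{F}_{\mathfrak{a},1}+\mathscr{F}_{\mathfrak{a},2}$.

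\emph{Entireness.} Each $w_{\pm i}$ is a positive real number, so $w_{\pm i}^s=\mathrm{e}^{s\log w_{\pm i}}$ is entire in $s$. A finite sum of entire functions is entire, so both $\mathscr{F}_{\mathfrak{a},1}$ and $\mathscr{F}_{\mathfrak{a},2}$ are entire on $\mathbb{C}$.

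\emph{Main delicacy.} The only step needing care is justifying the Mellin transform of a Dirac-type integrand. Because the support is a finite set of points away from $0$ and $\infty$, the formal integral reduces to a finite sum for every $s\in\mathbb{C}$. No strip-of-analyticity restriction arises, so analytic continuation is not needed here.
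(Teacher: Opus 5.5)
Your proposal is correct and takes essentially the paper's approach. The paper does this computation in the proof of Lemma~\ref{lemma:f_a_w_analy_cont}: it pushes each atom $\delta(z-z_{\pm i})$ to the $w$-axis via the delta-composition rule (Lemma~\ref{lemma:delta_comp}), obtains $\alpha_i w_{\pm i}^s/(1+w_{\pm i})=\beta_{\pm i}w_{\pm i}^s$, and then splits by $w\gtrless 1$; your change of variables $z=(1-w)/(1+w)$ inside the Mellin integral is the equivalent computation and reads off the weights $\beta_{\pm i}$ directly, while entireness follows in both from the finite sum of exponentials $\mathrm{e}^{s\log w_{\pm i}}$.
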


\begin{corollary}~\label{corol:F12_sym}
$\mathscr{F}_{\mathfrak{a},2}(s)=\mathscr{F}_{\mathfrak{a},1}(1-s)$, $\forall s \in \mathbb{C}$.
\end{corollary}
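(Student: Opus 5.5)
The plan is to compare the two explicit finite sums in \eqref{eq:atom2F12} term by term, using the relation between $\beta_{\pm i}$ and $\alpha_i$ from Lemma~\ref{lemma:D_dens}. No analytic continuation is needed. By Corollary~\ref{corol:f_a_1_2_cont}, both $\mathscr{F}_{\mathfrak{a},1}$ and $\mathscr{F}_{\mathfrak{a},2}$ are finite exponential sums and hence entire. So it suffices to verify the identity $\beta_{-i}w_{-i}^s=\beta_i w_i^{1-s}$ for each $i=1,\ldots,n-1$ and every $s\in\mathbb{C}$. Here $w^s\triangleq \mathrm{e}^{s\ln w}$ with the real logarithm, since $w_{\pm i}>0$.

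The first step records the reciprocity of the abscissae: $w_{-i}=(1+z_i)/(1-z_i)=w_i^{-1}$, so $w_{-i}^s=w_i^{-s}$ for all $s$. The second step expresses the weight ratio through $w_i$: from \eqref{eq:beta_alpha_1_n-1},
\begin{equation*}
\frac{\beta_{-i}}{\beta_i}=\frac{1-z_i}{1+z_i}=w_i,
\end{equation*}
which is valid because $z_i\in(0,1)$ for $1\le i\le n-1$, so $\beta_i>0$ whenever $\alpha_i>0$. If $\alpha_i=0$, both terms vanish and nothing needs to be checked. Combining the two steps gives
\begin{equation*}
\beta_{-i}w_{-i}^s=\beta_i w_i\, w_i^{-s}=\beta_i w_i^{1-s}.
\end{equation*}
Summing over $i$ then yields $\mathscr{F}_{\mathfrak{a},2}(s)=\mathscr{F}_{\mathfrak{a},1}(1-s)$ for all $s\in\mathbb{C}$.

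As a consistency check, I would note that this is the termwise (interior-atom) version of the symmetry $\mathcal{F}_{\mathfrak{a}}(s)=\mathcal{F}_{\mathfrak{a}}(1-s)$ in Lemma~\ref{lemma:var_node_holo}. It also corresponds, at the level of $f_{\mathfrak{a}}$, to the identity $f_{\mathfrak{a},2}(w)=w^{-1}f_{\mathfrak{a},1}(w^{-1})$, which follows from the channel symmetry $\mathfrak{a}(-z)=\frac{1-z}{1+z}\mathfrak{a}(z)$ and the Mellin rule $\mathcal{M}\{w^{-1}g(w^{-1})\}(s)=\mathcal{M}\{g\}(1-s)$. That route could serve as an alternative proof, but the distributional substitution with Dirac masses is more delicate, so I would use the atom-wise computation.

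There is no real obstacle. The only points needing care are the use of the real branch of $w^s$ for positive $w$, so that $w_{-i}^s=w_i^{-s}$ holds on all of $\mathbb{C}$, and the exclusion of the boundary atoms $i=0$ and $i=n$. The latter is automatic, since the sums in \eqref{eq:atom2F12} range only over $1\le i\le n-1$.
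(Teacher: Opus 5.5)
Your proposal is correct and matches the paper's proof essentially verbatim. The paper likewise derives $\beta_{-i}=\beta_i w_i$ from \eqref{eq:beta_alpha_1_n-1}, uses $w_{-i}=w_i^{-1}$, and sums $\beta_i w_i\,(w_i^{-1})^s=\beta_i w_i^{1-s}$ over $i=1,\ldots,n-1$. Your separate case $\alpha_i=0$ is harmless but not needed, because $\beta_{-i}=\frac{1-z_i}{2}\alpha_i=\beta_i w_i$ follows by direct multiplication without dividing by $\beta_i$.
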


\begin{proof}
Please refer to Section~\ref{sec:F12_sym}.
\end{proof}


The decomposition of $\mathscr{F}_{\mathfrak{a}}(s)$ into $\mathscr{F}_{\mathfrak{a},1}(s)$ and $\mathscr{F}_{\mathfrak{a},2}(s)$ with symmetry $\mathscr{F}_{\mathfrak{a},2}(s)=\mathscr{F}_{\mathfrak{a},1}(1-s)$, cf. Corollary~\ref{corol:F12_sym}, simplifies the CE description in Theorem~\ref{thm:G2F_convert2} by reducing the analysis to a single entire function, e.g., $\mathscr{F}_{\mathfrak{a},1}(s)$. In analogy with Lemma~\ref{lemma:f_a_analy_cont}, the relation between $\mathcal{F}_{\mathfrak{a}}(s)$ and $\mathscr{F}_{\mathfrak{a}}(s)$ within the SOA must account for the exclusion of boundary atoms.

\begin{corollary}\label{corol:F_trans_parts}
$\mathcal{F}_{\mathfrak{a}}(s)=\mathscr{F}_{\mathfrak{a},1}(s)+\mathscr{F}_{\mathfrak{a},2}(s)+\beta_0 +\beta_n \cdot 0^s$, $\forall \Re s \in(0,1)$. In particular, $\mathfrak{B}(\mathfrak{a})=2\mathscr{F}_{\mathfrak{a},1}(1/2)+\alpha_0$.
\end{corollary}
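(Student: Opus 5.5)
The plan is to start from Lemma~\ref{lemma:var_node_double_LT} and plug in the discrete $D$-density of Lemma~\ref{lemma:D_dens}. This turns $\mathcal{F}_{\mathfrak{a}}(s)$ into a finite sum over the atoms $z_j$, $j=-n,\ldots,n$:
\begin{equation*}
\mathcal{F}_{\mathfrak{a}}(s)=\sum_{j=-n}^{n}\beta_j\Bigl(\frac{1-z_j}{1+z_j}\Bigr)^s ,\qquad \Re s\in(0,1).
\end{equation*}
I then split the sum into three groups: the interior atoms $j=\pm1,\ldots,\pm(n-1)$, the atom $z_0\equiv0$, and the two endpoint atoms $z_{\pm n}=\pm1$.

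For the interior atoms, the term $\bigl((1-z_j)/(1+z_j)\bigr)^s$ is exactly $w_j^s$, where $w_{\pm i}=(1\mp z_i)/(1\pm z_i)$ as in Lemma~\ref{lemma:f_a_w_analy_cont}. By Corollary~\ref{corol:f_a_1_2_cont}, these contributions sum to $\mathscr{F}_{\mathfrak{a},1}(s)+\mathscr{F}_{\mathfrak{a},2}(s)$.

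The remaining atoms are handled one by one:
\begin{itemize}
\item At $z_0=0$ we have $w=1$, so the contribution is $\beta_0\cdot1^s=\beta_0$.
\item At $z_n=1$ we have $w=0$. Interpreting $0^s$ as the Hausdorff dimension function from the remark after Corollary~\ref{corol:G2G_connect}, the contribution is $\beta_n\cdot0^s$.
\item At $z_{-n}=-1$ the weight is $\beta_{-n}=0$ by Lemma~\ref{lemma:D_dens}, so this term vanishes. This matters, because $w$ would otherwise be infinite there.
\end{itemize}
Collecting the three groups gives the first identity.

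The step needing the most care is exactly this treatment of the endpoint masses. At $z=-1$ I rely on $\beta_{-n}=0$, which comes from symmetry. At $z=1$ I have to justify that $\beta_n\cdot 0^s$ is the correct value on the strip. For $\Re s>0$ we have $|0^s|=\lim_{\delta\to0^+}\delta^{\Re s}=0$, consistent with the integral, where the Dirac mass at $z=1$ is evaluated against $((1-z)/(1+z))^s\to0$.

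For the second claim, I set $s=1/2$. Then $0^{1/2}=0$, so the $\beta_n$ term drops. Corollary~\ref{corol:F12_sym} gives $\mathscr{F}_{\mathfrak{a},2}(1/2)=\mathscr{F}_{\mathfrak{a},1}(1/2)$. By Corollary~\ref{corol:var_node_half}, $\mathcal{F}_{\mathfrak{a}}(1/2)=\mathfrak{B}(\mathfrak{a})$. Finally $\beta_0=\alpha_0$ by Lemma~\ref{lemma:D_dens}. Together these yield $\mathfrak{B}(\mathfrak{a})=2\mathscr{F}_{\mathfrak{a},1}(1/2)+\alpha_0$.
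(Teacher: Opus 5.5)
Your argument is correct and is essentially the paper's. The corollary is read off from the atom-by-atom evaluation in the proof of Lemma~\ref{lemma:f_a_w_analy_cont}: the interior atoms give $\mathscr{F}_{\mathfrak{a},1}(s)+\mathscr{F}_{\mathfrak{a},2}(s)$, the atoms at $w=1$ and $w=0$ (i.e.\ $z_0=0$ and $z_n=1$) are split off as $\beta_0$ and $\beta_n\cdot 0^s$, and then $s=1/2$, $\beta_0=\alpha_0$ and Corollary~\ref{corol:F12_sym} give the Bhattacharyya formula exactly as you do, with only cosmetic differences: you evaluate the Dirac masses directly in $z$ rather than after changing variables to $w$, and since Corollary~\ref{corol:var_node_half} is stated for $|\mathfrak{a}|$, you also tacitly use Corollary~\ref{corol:var_node_double_LT} to identify $\mathcal{F}_{\mathfrak{a}}(1/2)$ with $\mathfrak{B}(\mathfrak{a})$.
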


\begin{remark}
Since $\beta_0=\alpha_0$, cf. Lemma~\ref{lemma:D_dens}, and $0^s=0$, $\forall\Re s>0$, then $\mathcal{F}_{\mathfrak{a}}(s)=\mathscr{F}_{\mathfrak{a},1}(s)+\mathscr{F}_{\mathfrak{a},2}(s)+\alpha_0$, $\forall \Re s\in(0,1)$; see green arrows in Fig.~\ref{fig:entities_domains}. 
\end{remark}


Next, we characterize the BEC and BSC via $\mathcal{F}_{\mathfrak{a}}(s)$, $\mathscr{F}_{\mathfrak{a},1}(s)$, $\mathscr{F}_{\mathfrak{a},2}(s)$ and boundary atoms in the $|D|$-density; see Table \ref{tab:channel_comp_merged}.

\begin{table}[h!]
\centering
\small
\setlength{\tabcolsep}{4pt}
\renewcommand{\arraystretch}{1.15}
\resizebox{\linewidth}{!}{%
\begin{tabular}{|c || c c c c c c|}
\hline
$W$ & $\mathfrak{a}(z)$ & $\mathbbm{1}_{\{\beta_{0} > 0\}}$ & $\mathbbm{1}_{\{\beta_n > 0\}}$ & $\mathcal{F}_{\mathfrak{a}}(s)$, $\forall \Re s\in(0,1)$ & $\mathscr{F}_{\mathfrak{a},1}(s)$, $\forall s\in\mathbb{C}$ & $\mathscr{F}_{\mathfrak{a},2}(s)$, $\forall s\in\mathbb{C}$ \\
\hline
$W_{\operatorname{BEC}(\varepsilon)}$
& $\varepsilon \delta(z)+\bar{\varepsilon}\delta(z-1)$
& $1$ & $1$
& $\varepsilon$
& $0$
& $0$ \\
\hline
$W_{\operatorname{BSC}(p)}$
& $p\,\delta(z+(1-2p))+\bar{p}\,\delta(z-(1-2p))$
& $0$ & $0$
& $\mathscr{F}_{\mathfrak{a},1}(s)+\mathscr{F}_{\mathfrak{a},2}(s)$
& $\bar{p}^{\,1-s}p^s$
& $p^{\,1-s}\bar{p}^{\,s}$ \\
\hline
\end{tabular}%
}
\vspace{1mm}
\caption{Comparison of symmetric channels in terms of $\mathfrak{a}(z)$, $\mathcal{F}_{\mathfrak{a}}(s)$, $\mathscr{F}_{\mathfrak{a},1}(s)$ and $\mathscr{F}_{\mathfrak{a},2}(s)$}
\label{tab:channel_comp_merged}
\end{table}

\begin{exmp}[$W_{\operatorname{BEC}(\varepsilon)}$ in the $D$-density]~\label{exmp:bec_F12}
$\mathfrak{a}(z)=\varepsilon \delta(z)+\bar{\varepsilon} \delta(z-1)$. $\mathcal{F}_{\mathfrak{a}}(s)=\beta_0 w_0^s+\beta_n w_n^s=\varepsilon\cdot 1^s+\bar{\varepsilon}\cdot 0^s=\varepsilon$, $\forall \Re s\in(0,1)$, while $\Fai(s)=\Faii(s)=0$, $\forall s\in\mathbb{C}$.
\end{exmp}
\begin{exmp}[$W_{\operatorname{BSC}(p)}$ in the $D$-density]
$\mathfrak{a}(z)=p\cdot\delta(z+(1-2p))+\bar{p}\cdot\delta(z-(1-2p))$. $\mathcal{F}_{\mathfrak{a}}(s)=\mathscr{F}_{\mathfrak{a},1}(s)+\mathscr{F}_{\mathfrak{a},2}(s)=\bar{p}^{1-s}p^s+p^{1-s}\bar{p}^s$, $\forall \Re s\in(0,1)$, while $\beta_0=\beta_n=0$. Additionally, $\mathfrak{B}(\mathfrak{a})=\mathcal{F}_{\mathfrak{a}}(1/2)=2\sqrt{p\bar{p}}$, cf.~\cite[Proof of Lemma 4.65]{richardson2008modern}.
\end{exmp}

\begin{remark}
For the BEC, only the boundary atom at $z_0\equiv 0$ contributes to $\mathcal{F}_{\mathfrak{a}}(s)$, whereas for the BSC, which possesses no boundary atoms, $\mathcal{F}_{\mathfrak{a}}(s)$ is entirely determined by the interior parts $\mathscr{F}_{\mathfrak{a},1}(s)$ and $\mathscr{F}_{\mathfrak{a},2}(s)$ within the SOA.
\end{remark}

Next, we investigate the effect of the $\varoast$-convolution on these components in analogy with Lemma~\ref{lemma:G_comp_evo}.

\begin{lemma}~\label{lemma:F_comp_evo}
Under the variable-domain $\varoast$-convolution, 
\begin{align}
    \mathcal{F}_{\mathfrak a^{\varoast}}(s) &= \mathcal{F}^2_{\mathfrak{a}}(s), \quad  \forall \Re s\in(0,1), ~\label{eq:sqr_F} \\
    \mathscr{F}_{\mathfrak{a}^\varoast,i}(s) &=\mathscr{F}^2_{\mathfrak{a}, i}(s)+2 \alpha_0 \mathscr{F}_{\mathfrak{a}, i}(s)+2 \mathscr{F}_{\psi, i}(s), \forall s\in\mathbb{C},  \label{eq:evo_F1} \\
    \alpha_0^{\varoast} &= \alpha_0^2+2 \mathscr{F}_{\varphi}, \label{eq:evo_F_alpha_0} \\
    \alpha_n^{\varoast} &= 2\alpha_n-\alpha_n^2, \label{eq:evo_F_alpha_n}
\end{align}
where $i \in\{1,2\}$, $\mathscr{F}_{\psi,1}(s) \triangleq \sum_{1 \le j<i \le n-1} \beta_i \beta_{-j}(w_i/w_j)^s$, $\mathscr{F}_{\psi,2}(s) \triangleq \sum_{1 \le i<j \le n-1} \beta_i \beta_{-j}(w_i/w_j)^s = \mathscr{F}_{\psi,1}(1-s)$, and $\mathscr{F}_{\varphi} \triangleq \sum_{i=1}^{n-1} \beta_i \beta_{-i}$. All functions $\mathscr{F}_{\mathfrak{a}^\varoast,i}(s)$, $\mathscr{F}_{\psi,i}(s)$, $i \in\{1,2\}$, admit analytic continuation to entire functions on $\mathbb{C}$.
\end{lemma}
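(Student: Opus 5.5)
The plan is to work directly with the discrete $D$-density of Lemma~\ref{lemma:D_dens} and to read everything off in the variable $\mathfrak{W}=(1-Z)/(1+Z)=\mathrm{e}^{-X}$. Under the $\varoast$-convolution, LLRs add, $X^\varoast=X_1+X_2$ with $X_1,X_2$ i.i.d.\ $\sim\mathrm{a}$. Hence $\mathfrak{W}^\varoast=\mathfrak{W}_1\mathfrak{W}_2$.

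The atoms of $\mathfrak{a}$ in this variable are:
\begin{itemize}
\item $w_0=1$ with mass $\beta_0=\alpha_0$;
\item $w_n=0$ with mass $\beta_n=\alpha_n$ (the atom $z=1$);
\item the interior atoms $w_i\in(0,1)$ with mass $\beta_i$, and $w_{-i}=1/w_i\in(1,\infty)$ with mass $\beta_{-i}$, for $i=1,\dots,n-1$;
\item $\beta_{-n}=0$, so there is no atom at $w=\infty$.
\end{itemize}

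\textbf{Step 1: squaring of the full transform.} Equation~\eqref{eq:sqr_F} follows from Lemma~\ref{lemma:var_node_double_LT}. We have $\mathcal{F}_{\mathfrak{a}}(s)=\mathbb{E}[\mathfrak{W}^s]$, and independence gives $\mathbb{E}[(\mathfrak{W}_1\mathfrak{W}_2)^s]=\mathbb{E}[\mathfrak{W}^s]^2$ on the SOA of Lemma~\ref{lemma:var_node_holo}.

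\textbf{Step 2: the product atoms.} I would expand the law of $\mathfrak{W}_1\mathfrak{W}_2$ as a double sum over pairs of atoms and classify each product by where it lands: $0$, $1$, $(0,1)$ or $(1,\infty)$.
\begin{itemize}
\item \emph{Products equal to $0$.} These arise whenever one factor is $w_n=0$. Their total mass is $1-(1-\alpha_n)^2=2\alpha_n-\alpha_n^2$, which gives~\eqref{eq:evo_F_alpha_n}.
\item \emph{Products equal to $1$.} These arise from $w_0\cdot w_0$, with mass $\alpha_0^2$, and from the pairs $w_i\cdot w_{-i}$ in either order, with mass $2\beta_i\beta_{-i}$. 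Summing gives $\alpha_0^2+2\mathscr{F}_\varphi$, which is~\eqref{eq:evo_F_alpha_0}.
\item \emph{Products in $(0,1)$.} There are three kinds. A product $w_iw_j$ of two atoms in $(0,1)$ contributes $\mathscr{F}^2_{\mathfrak{a},1}(s)$. A product $w_0w_i$ in either order contributes $2\alpha_0\mathscr{F}_{\mathfrak{a},1}(s)$. Cross products $w_iw_{-j}=w_i/w_j$ with $w_i<w_j$, i.e.\ $i>j$ (ordering $z_i$ increasing makes $w_i$ decreasing), counted in both orders, contribute $2\sum_{j<i}\beta_i\beta_{-j}(w_i/w_j)^s=2\mathscr{F}_{\psi,1}(s)$.
\item \emph{Products in $(1,\infty)$.} By the same argument these contribute $\mathscr{F}^2_{\mathfrak{a},2}+2\alpha_0\mathscr{F}_{\mathfrak{a},2}+2\mathscr{F}_{\psi,2}$.
\end{itemize}

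Since $\mathfrak{a}^\varoast$ is again a symmetric discrete $D$-density, Corollary~\ref{corol:f_a_1_2_cont} applies to it. Thus $\mathscr{F}_{\mathfrak{a}^\varoast,1}$ is exactly the finite exponential sum over the atoms of $\mathfrak{W}^\varoast$ in $(0,1)$, and $\mathscr{F}_{\mathfrak{a}^\varoast,2}$ the sum over those in $(1,\infty)$. This yields~\eqref{eq:evo_F1}, and both sides are finite sums of exponentials $c\,w^s$ with $w>0$, hence entire.

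\textbf{Step 3: consistency and symmetry.} As a cross-check, combining everything with Corollary~\ref{corol:F_trans_parts} ($0^s=0$ on the SOA) reproduces $\mathcal{F}^2_{\mathfrak{a}}$. For the identity $\mathscr{F}_{\psi,2}(s)=\mathscr{F}_{\psi,1}(1-s)$, I would use the symmetry $\beta_{-i}=\beta_i w_i$, which holds because $\beta_{-i}/\beta_i=(1-z_i)/(1+z_i)=w_i$. Then
\[
\beta_i\beta_{-j}(w_i/w_j)^{1-s}=\beta_i\beta_j w_i^{1-s}w_j^{s},
\]
which is symmetric in $(i,j)$ up to the swap $s\leftrightarrow 1-s$. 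Relabelling $i\leftrightarrow j$ turns $\mathscr{F}_{\psi,1}(1-s)$ into $\sum_{i<j}\beta_i\beta_{-j}(w_i/w_j)^s$, and this check closes the argument.

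The main obstacle is the bookkeeping in Step 2. Coinciding products must be merged consistently: distinct pairs may give equal products, but merging does not affect the Mellin sums. The $w_i w_{-i}=1$ terms must be diverted to the new $\alpha_0$ rather than counted in $\mathscr{F}_{\psi,i}$. Finally, the strict ordering $w_i<w_j\iff i>j$ must correctly route the cross terms between $\mathscr{F}_{\psi,1}$ and $\mathscr{F}_{\psi,2}$.
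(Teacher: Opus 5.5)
Your proposal is correct and follows essentially the same route as the paper. It multiplies $\mathfrak{W}_1\mathfrak{W}_2$, assigns the product atoms at $w=0$ and $w=1$ to $\alpha_n^{\varoast}$ and $\alpha_0^{\varoast}$, and routes the remaining terms of $\mathcal{F}_{\mathfrak{a}}^2$ (with the cross term split by $w_i/w_j\gtrless 1$) to $\mathscr{F}_{\mathfrak{a}^{\varoast},1}$ or $\mathscr{F}_{\mathfrak{a}^{\varoast},2}$. Your explicit check of $\mathscr{F}_{\psi,2}(s)=\mathscr{F}_{\psi,1}(1-s)$ via $\beta_{-i}=\beta_i w_i$ fills in a step the paper only asserts ``by symmetry.''
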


\begin{proof}
Please refer to Section~\ref{sec:F_comp_evo}.
\end{proof}

By isolating entire components $\Ga(\nu)$ and $\mathscr{F}_{\mathfrak{a},i}(s)$ from $\mathcal{G}_{|\mathfrak{a}|}(0, \nu)$ and $\mathcal{F}_{\mathfrak{a}}(s)$, we highlight the asymmetry between the $\boxast$- and $\varoast$-convolutions: although each squares the domain transform within its own SOA, cf.~\eqref{eq:sqr_G} and~\eqref{eq:sqr_F}, their effects on entire components and boundary atoms are not reciprocal, cf.~\eqref{eq:evo_G}-\eqref{eq:evo_G_alpha_n} and~\eqref{eq:evo_F1}-\eqref{eq:evo_F_alpha_n}. Under the $\varoast$-convolution, the product $\Fai(s)\Faii(s)$ yields additional entire components $\mathscr{F}_{\psi, i}(s)$ and the DC component $\mathscr{F}_{\varphi}$, cf.~Fig.~\ref{fig:variable_conv}, left. The symmetry $\mathscr{F}_{\psi,2}(s)=\mathscr{F}_{\psi,1}(1-s)$ further simplifies the CE description in Theorem~\ref{thm:CE}. Next, we show that these three can be obtained solely from $\mathscr{F}_{\mathfrak{a},1}(s)$, evaluated along $\tau=1/2+\mathrm{i}u$ via the Fourier transform, cf.~Fig.~\ref{fig:variable_conv}, right.


\begin{figure}[!ht]
\centering
\begin{tikzpicture}[x=0.8cm,y=1.75cm]
\tikzset{
  m/.style={inner sep=2pt, text opacity=1},
  dot/.style={circle,fill=black,inner sep=1.2pt},
  solid arrow/.style={arrows = {-Latex[width'=0pt .5, length=4pt]}, shorten >=.25pt, shorten <=.25pt, line width=0.6pt},
  double arrow/.style={arrows = {Latex[width'=0pt .5, length=4pt]-Latex[width'=0pt .5, length=4pt]}, shorten >=.25pt, shorten <=.25pt, line width=0.6pt},
}

\coordinate (l1-0) at (0,1.5);
\coordinate (l1-1) at (2,2.5);
\coordinate (l1-2) at (4,1.5);

\coordinate (lm-0) at (2,1.5);
\coordinate (lm-1) at (2,1);
\coordinate (lm-2) at (2,0.5);

\coordinate (l2-0) at (0,0);
\coordinate (l2-1) at (2,0);
\coordinate (l2-2) at (4,0);

\node[m] (L1-0) at (l1-0) {$\mathscr{F}_{\mathfrak{a},1}(s)$};
\node[m] (L1-2) at (l1-2) {$\mathscr{F}_{\mathfrak{a},2}(s)$};

\node[m] (LM-0) at (lm-0) {$\otimes$};
\node[m] (LM-1) at (lm-1) {$\varphi(s)$};
\node[anchor=west] at ($(LM-1.east)+(-0.15,0)$) {$\in\mathbb{C}$};
\node[m] (LM-2) at (lm-2) {$\oplus$};

\node[m] (L2-0) at (l2-0) {$\mathscr{F}_{\psi,1}(s)$};
\node[m] (L2-1) at (l2-1) {$\mathscr{F}_{\varphi}$};
\node[m] (L2-2) at (l2-2) {$\mathscr{F}_{\psi,2}(s)$};

\draw[solid arrow] (L1-0) -- (LM-0);
\draw[solid arrow] (L1-2) -- (LM-0);
\draw[solid arrow] (LM-0) -- (LM-1);

\draw[solid arrow] (LM-2) -- (LM-1);
\draw[solid arrow] (L2-0) -- (LM-2);
\draw[solid arrow] (L2-1) -- (LM-2);
\draw[solid arrow] (L2-2) -- (LM-2);


\coordinate (r1-0) at (10,1.5);
\coordinate (r1-1) at (12,2.5);
\coordinate (r1-2) at (14,1.5);

\coordinate (rm-0) at (12,1.5);
\coordinate (rm-1) at (12,1);
\coordinate (rm-2) at (12,0.5);

\coordinate (r2-0) at (10,0);
\coordinate (r2-1) at (12,0);
\coordinate (r2-2) at (14,0);

\node[m] (R1-0) at (r1-0) {$\mathscr{F}_{\mathfrak{a},1}(\tau)$};
\node[m] (R1-2) at (r1-2) {$\mathscr{F}_{\mathfrak{a},1}(\tau^*)$};

\node[m] (RM-0) at (rm-0) {$\otimes$};
\node[m] (RM-1) at (rm-1) {$\tilde{\varphi}(u)$};
\node[anchor=west] at ($(RM-1.east)+(-0.15,0)$) {$=\varphi(\tau)\in\mathbb{R}$};
\node[m] (RM-2) at (rm-2) {$\widehat{\varphi}(\omega)$};

\node[m] (R2-0) at (r2-0) {$\mathscr{F}_{\psi,1}(s)$};
\node[m] (R2-1) at (r2-1) {$\mathscr{F}_{\varphi}$};
\node[m] (R2-2) at (r2-2) {$\mathscr{F}_{\psi,2}(s)$};

\draw[solid arrow] (R1-0) -- (RM-0);
\draw[solid arrow] (R1-2) -- (RM-0);
\draw[solid arrow] (RM-0) -- (RM-1);

\draw[double arrow] (RM-1) -- (RM-2);
\draw[solid arrow] (RM-2) -- (R2-0);
\draw[solid arrow] (RM-2) -- (R2-1);
\draw[solid arrow] (RM-2) -- (R2-2);


\begin{scope}[shift={(0,-0.375)}]
\draw[->, line width=0.5pt] (9,0) -- (15,0) node[right] {$\omega$};

\draw[densely dashed] (12,0) -- (12,0.231); 
\node[below, font=\small] at (12,0) {$\omega_k=0$};
\node[below, font=\small] at (10,0) {$\omega_k<0$};
\node[below, font=\small] at (14,0) {$\omega_k>0$};

\foreach \x/\h in {
  9.6/0.0594,
  10.32/0.1056,
  10.8/0.0792,
  11.28/0.1485,
  11.64/0.0924
} {
  \draw[line width=0.5pt] (\x,0) -- (\x,\h);
  \fill (\x,\h) circle (0.6pt);
}

\draw[line width=0.5pt] (12,0) -- (12,0.2244); 
\fill (12,0.2244) circle (0.6pt);

\foreach \x/\h in {
  12.36/0.0924,
  12.72/0.1485,
  13.2/0.0792,
  13.68/0.1056,
  14.4/0.0594
} {
  \draw[line width=0.5pt] (\x,0) -- (\x,\h);
  \fill (\x,\h) circle (0.6pt);
}
\end{scope}

\begin{scope}[shift={(2.8,0.15)}, scale=0.7]

\fill[pattern=dots, pattern color=gray!60] (5.3,0.4) rectangle (6.7,1.65);

\draw[->, line width=0.5pt] (5.3,1) -- (6.7,1) node[right, font=\scriptsize] {$\Re s$};
\draw[->, line width=0.5pt] (6,0.4) -- (6,1.65) node[above, font=\small] {$\Im s$};

\draw[dashed, line width=0.6pt] (6.3,0.4) -- (6.3,1.65);

\node[font=\scriptsize, anchor=west] at (6.3,1.3) {$\tau=\frac{1}{2}+\mathrm{i}u$};

\end{scope}

\begin{scope}[shift={(7.0,0.2)}, scale=0.8]

  \draw[step=0.2, gray!30, very thin] (-1.5,-0.4) grid (1.5,0.1);
  
  \draw[->, line width=0.5pt] (-1.5,-0.25) -- (1.7,-0.25);
  \draw[->, line width=0.5pt] (-1.5,-0.4) -- (-1.5,0.1);

  \node[anchor=north, font=\footnotesize] at (0,-0.35) {$u$};

  \draw[domain=0:3, samples=2]
    plot ({\x-1.5}, {-0.05});

  \draw[domain=0:3, samples=200]
    plot ({\x-1.5}, {-0.25 + 0.15*cos(deg(2*\x))});

  \draw[domain=0:3, samples=200]
    plot ({\x-1.5}, {-0.25 + 0.0725*cos(deg(4*\x))});

  \draw[domain=0:3, samples=200]
    plot ({\x-1.5}, {-0.25 + 0.04*cos(deg(6*\x))});

\end{scope}

\end{tikzpicture}
\caption{Left panel: Relations among $\mathscr{F}_{\mathfrak{a},i}(s)$, $\mathscr{F}_{\psi,i}(s)$, $i\in\{1,2\}$, with $\mathscr{F}_{\varphi}$. Right panel: Their acquisition via the Fourier transform along $\tau=1/2+\mathrm{i}u$.}
\label{fig:variable_conv}
\end{figure}

\begin{corollary}~\label{corol:F_dc_comp}
Let $\varphi(s) \triangleq \mathscr{F}_{\mathfrak{a},1}(s)\mathscr{F}_{\mathfrak{a},2}(s)$. Let $\tau\triangleq c+\mathrm{i}u\in\mathbb{C}$ with any $c\in\mathbb{R}$ fixed, then
\begin{equation}~\label{eq:tilde_phi_u}
    \tilde{\varphi}(u)\triangleq \varphi(\tau)= \sum_{k=1}^{(n-1)^2} A_k^{(c)}\cdot \mathrm{e}^{\mathrm{i}\omega_k u} \in\mathbb{C},
\end{equation}
where the $k$-index runs over all \textit{finite} \textit{ordered} pairs $(i,j)$ for every $i,j=1, \ldots, n-1$, with $A_{ij}^{(c)}\triangleq \beta_i \beta_{-j}\left(w_i / w_j\right)^c\in\mathbb{R}$ and $\omega_{ij}\triangleq \log (w_i / w_j)\in\mathbb{R}$. The \textit{Fourier transform} of~\eqref{eq:tilde_phi_u} is given as $\widehat{\varphi}(\omega)=2\pi \sum_{k=1}^{(n-1)^2} A_k^{(c)} \cdot \delta(\omega-\omega_k)$, from which
\begin{equation}~\label{eq:phi_zero}
    \mathscr{F}_{\varphi} =\sum_{k: \omega_k=0} A_k^{(c)}= \frac{1}{2\pi} \widehat{\varphi}(0),
\end{equation}
where $|\{k:\omega_k=0\}|=n-1$.
\end{corollary}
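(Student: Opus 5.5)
The plan is to expand the product $\varphi(s)=\mathscr{F}_{\mathfrak{a},1}(s)\mathscr{F}_{\mathfrak{a},2}(s)$ directly from the finite atomic representations in Corollary~\ref{corol:f_a_1_2_cont}, namely $\mathscr{F}_{\mathfrak{a},1}(s)=\sum_{i=1}^{n-1}\beta_i w_i^s$ and $\mathscr{F}_{\mathfrak{a},2}(s)=\sum_{j=1}^{n-1}\beta_{-j}w_{-j}^s$.

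\textbf{Step 1 (rewrite the second sum).} By Lemma~\ref{lemma:f_a_w_analy_cont}, $w_{-j}=(1+z_j)/(1-z_j)=1/w_j$, so $w_{-j}^s=w_j^{-s}$. Multiplying out gives the finite double sum
\[
\varphi(s)=\sum_{i,j=1}^{n-1}\beta_i\beta_{-j}\,(w_i/w_j)^s,
\]
with exactly $(n-1)^2$ ordered pairs $(i,j)$. Since every $w_i>0$, the powers are $\exp(s\log(w_i/w_j))$ with real logarithms, and $\varphi$ is entire.

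\textbf{Step 2 (restrict to the vertical line).} Substitute $s=\tau=c+\mathrm{i}u$ to get $(w_i/w_j)^{\tau}=(w_i/w_j)^c\,\mathrm{e}^{\mathrm{i}u\log(w_i/w_j)}$. This yields \eqref{eq:tilde_phi_u} with $A_{ij}^{(c)}=\beta_i\beta_{-j}(w_i/w_j)^c$, which is real because $\beta_{\pm i}\ge 0$ and $w_i>0$, and with $\omega_{ij}=\log(w_i/w_j)\in\mathbb{R}$. Thus $\tilde\varphi$ is a trigonometric polynomial (an almost periodic function) in $u$.

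\textbf{Step 3 (Fourier transform).} Take the Fourier transform in the distributional sense, using $\mathcal{F}\{\mathrm{e}^{\mathrm{i}\omega_k u}\}=2\pi\delta(\omega-\omega_k)$ and linearity over the finite sum. Terms with a common frequency simply add their coefficients.

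\textbf{Step 4 (isolate the zero frequency).} This is the only step with content. Since $0<z_1<\dots<z_{n-1}<1$ and $z\mapsto(1-z)/(1+z)$ is strictly decreasing, the $w_i$ are pairwise distinct. Hence $\omega_{ij}=0$ if and only if $i=j$, so $|\{k:\omega_k=0\}|=n-1$. On the diagonal $(w_i/w_i)^c=1$, so $A_{ii}^{(c)}=\beta_i\beta_{-i}$ for every $c$. Summing these gives $\sum_{k:\omega_k=0}A_k^{(c)}=\sum_{i=1}^{n-1}\beta_i\beta_{-i}=\mathscr{F}_\varphi$, as defined in Lemma~\ref{lemma:F_comp_evo}.

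\textbf{Step 5 (reading off $\widehat\varphi(0)$).} The identity $\mathscr{F}_\varphi=\frac{1}{2\pi}\widehat\varphi(0)$ has to be read as the weight of the Dirac mass at $\omega=0$, not as a pointwise value of a distribution. To make it rigorous, I would state it equivalently as the mean value
\[
\mathscr{F}_\varphi=\lim_{T\to\infty}\frac{1}{2T}\int_{-T}^{T}\tilde\varphi(u)\,\mathrm{d}u,
\]
which holds because every term with $\omega_k\neq0$ averages to zero.

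I expect this interpretation of $\widehat\varphi(0)$ to be the main obstacle. The algebra in Steps 1 through 4 is routine, but the final equality depends on fixing what evaluating a sum of Dirac masses at a point means, and the mean-value formulation is how I would resolve it.
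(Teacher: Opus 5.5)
Your proposal is correct and follows the paper's own proof. Both expand $\varphi$ into the $(n-1)^2$ terms $\beta_i\beta_{-j}(w_i/w_j)^s$, restrict to $\Re s=c$, transform term by term, and use the strict ordering of the interior atoms to show that only the $n-1$ diagonal pairs sit at $\omega=0$. Your mean-value reading of $\frac{1}{2\pi}\widehat{\varphi}(0)$, as the weight of the Dirac mass at the origin, is a rigorous version of what the paper does informally when it ``evaluates'' the distributional transform at $\omega=0$.
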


\begin{proof}
Please refer to Section~\ref{sec:F_dc_comp}.
\end{proof}


\begin{corollary}~\label{corol:F_dc_comp_half}
Let $\tau= 1/2+\mathrm{i}u\in\mathbb{C}$, then
\begin{equation}~\label{eq:phi_omega_half}
    \tilde{\varphi}(u) = \mathscr{F}_{\varphi} + 2\sum_{k:\omega_k<0} A_{k}^{(c)} \cdot \cos (\omega_{k} u) \in\mathbb{R},
\end{equation}
where $|\{k:\omega_k<0\}|=|\{k:\omega_k>0\}|=(n-1)(n-2)/2$.
\end{corollary}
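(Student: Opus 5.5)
Starting from Corollary~\ref{corol:F_dc_comp} with $c=1/2$, I would split the sum over ordered pairs $(i,j)$ by the sign of $\omega_{ij}=\log(w_i/w_j)$. Recall $w_i=(1-z_i)/(1+z_i)$ for $i=1,\ldots,n-1$, and $0<z_1<\dots<z_{n-1}<1$. The map $z\mapsto(1-z)/(1+z)$ is strictly decreasing, so the $w_i\in(0,1)$ are pairwise distinct. Hence $\omega_{ij}=0$ if and only if $i=j$, which gives $n-1$ diagonal pairs. The off-diagonal ordered pairs are $(n-1)^2-(n-1)=(n-1)(n-2)$ in number. The swap $(i,j)\mapsto(j,i)$ negates $\omega_{ij}$, so it is a bijection between $\{k:\omega_k<0\}$ and $\{k:\omega_k>0\}$. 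Each of these sets therefore has $(n-1)(n-2)/2$ elements, which is the counting claim.

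The key step is to show that the amplitudes of paired terms coincide at $c=1/2$, that is, $A_{ij}^{(1/2)}=A_{ji}^{(1/2)}$. By definition $A_{ij}^{(1/2)}=\beta_i\beta_{-j}(w_i/w_j)^{1/2}$. Using $\beta_{-j}=\beta_j w_j$ gives
\begin{equation*}
A_{ij}^{(1/2)}=\beta_i\beta_j\sqrt{w_iw_j}.
\end{equation*}
The identity $\beta_{-j}=\beta_j w_j$ follows from Lemma~\ref{lemma:D_dens}: we have $\beta_{\pm j}=(1\pm z_j)\alpha_j/2$, so $\beta_{-j}/\beta_j=(1-z_j)/(1+z_j)=w_j$. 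The resulting expression is symmetric in $(i,j)$, so the symmetry holds. As an alternative route, one can note $\mathscr{F}_{\mathfrak{a},2}(s)=\mathscr{F}_{\mathfrak{a},1}(1-s)$ by Corollary~\ref{corol:F12_sym}, and that $\mathscr{F}_{\mathfrak{a},1}$ has real coefficients. On the line $\tau=1/2+\mathrm{i}u$ we have $1-\tau=\tau^*$, so $\varphi(\tau)=\mathscr{F}_{\mathfrak{a},1}(\tau)\mathscr{F}_{\mathfrak{a},1}(\tau)^*=|\mathscr{F}_{\mathfrak{a},1}(\tau)|^2\in\mathbb{R}$. This already confirms that the result is real.

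Finally, I would assemble the pieces. The diagonal terms give $\sum_i A_{ii}^{(1/2)}=\sum_i\beta_i\beta_{-i}=\mathscr{F}_\varphi$, consistent with \eqref{eq:phi_zero}. Each off-diagonal pair with $\omega_k<0$ combines with its swapped partner as $A_k(\mathrm{e}^{\mathrm{i}\omega_k u}+\mathrm{e}^{-\mathrm{i}\omega_k u})=2A_k\cos(\omega_k u)$. This yields \eqref{eq:phi_omega_half}, and the right-hand side is real because every $A_k$ and $\omega_k$ is real.

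The only delicate point is the amplitude symmetry. It works precisely because $c=1/2$ and because of the identity $\beta_{-j}=w_j\beta_j$ derived from the symmetry of the $D$-density. For a general $c$ the paired amplitudes differ by the factor $(w_i/w_j)^{2c-1}$, so the cosine form fails there.
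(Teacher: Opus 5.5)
Your proposal is correct and follows the paper's proof. Both arguments use $\beta_{-j}=\beta_j w_j$ to show that the swapped amplitudes $A_{ij}^{(c)}$ and $A_{ji}^{(c)}$ coincide exactly when $c=1/2$: the paper does this via the ratio $(w_i/w_j)^{1-2c}$, and you via the symmetric form $\beta_i\beta_j\sqrt{w_iw_j}$. Both then pair each off-diagonal term with its swap to obtain $2A_k\cos(\omega_k u)$ and count the two off-diagonal index sets of size $(n-1)(n-2)/2$. Your extra checks — that the $w_i$ are distinct (so $\omega_{ij}=0$ iff $i=j$), and that $\varphi(\tau)=|\mathscr{F}_{\mathfrak{a},1}(\tau)|^2$ — add rigor but do not change the route.
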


\begin{proof}
Please refer to Section~\ref{sec:F_dc_comp_half}.
\end{proof}

\begin{remark}
Corollary~\ref{corol:F_dc_comp} shows that for any $c\in\mathbb{R}$, $\tilde{\varphi}(u)$ is \textit{a finite sum of complex exponentials}, from which $\mathscr{F}_{\varphi}$ can be extracted as the DC component, while Corollary~\ref{corol:F_dc_comp_half} shows that if $c=1/2$, it reduces to \textit{a finite sum of cosines with a DC offset}, where the number of \textit{distinct} cosine frequencies in~\eqref{eq:phi_omega_half} is the cardinality of $\left\{\left|\log \left(w_i / w_j\right)\right|: 1 \leq j<i \leq n-1\right\}$, i.e., the number of \textit{distinct differences} of $\log w$'s, which is up to $(n-1)(n-2)/2$. Moreover, a reduction from the generic upper bounds occurs iff there exist two distinct ordered pairs $(i, j) \neq(i',j')$ with $\omega_{i j}=\omega_{i'j'}>0$.
\end{remark}

\begin{corollary}~\label{corol:F_mellin_comp}
Let $\tau= 1/2+\mathrm{i}u\in\mathbb{C}$, then
\begin{gather}
    \mathscr{F}_{\psi, 1}(s) = \sum_{k:\omega_k <0} A_k^{(c)}\cdot \mathrm{e}^{\omega_k(s-c)}, \quad  \forall s\in\mathbb{C}, \label{eq:F_psi_1_F}\\
    \mathscr{F}_{\psi, 2}(s) = \sum_{k:\omega_k >0} A_k^{(c)}\cdot \mathrm{e}^{\omega_k(s-c)}, \quad  \forall s\in\mathbb{C},\label{eq:F_psi_2_F}
\end{gather}
where $A_k^{(c)}$ and $\omega_k$ are defined in Corollary~\ref{corol:F_dc_comp}.
\end{corollary}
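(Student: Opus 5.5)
The plan is a direct term-by-term identification. By Corollary~\ref{corol:F_dc_comp}, each index $k$ corresponds to an ordered pair $(i,j)$ with $1\le i,j\le n-1$, and
\[
A_k^{(c)}\,\mathrm{e}^{\omega_k(s-c)}=\beta_i\beta_{-j}\Bigl(\frac{w_i}{w_j}\Bigr)^{c}\exp\Bigl((s-c)\log\frac{w_i}{w_j}\Bigr)=\beta_i\beta_{-j}\Bigl(\frac{w_i}{w_j}\Bigr)^{s}.
\]
Here $w_i,w_j>0$ by Corollary~\ref{corol:f_a_1_2_cont}, so $\log(w_i/w_j)$ is real and the power is the principal one, exactly as in Lemma~\ref{lemma:F_comp_evo}. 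So the dependence on $c$ cancels identically. The choice $c=1/2$, i.e.\ $\tau=1/2+\mathrm{i}u$, only matters for reading the coefficients off the real cosine expansion of Corollary~\ref{corol:F_dc_comp_half}.

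It then remains to match the index sets. I will show that
\[
\{k:\omega_k<0\}=\{(i,j):1\le j<i\le n-1\},\qquad \{k:\omega_k>0\}=\{(i,j):1\le i<j\le n-1\}.
\]
The map $z\mapsto (1-z)/(1+z)$ is strictly decreasing on $(0,1)$. Lemma~\ref{lemma:absD_dens} gives $0<z_1<\dots<z_{n-1}<1$, so $w_1>w_2>\dots>w_{n-1}>0$. Hence $\omega_{ij}=\log(w_i/w_j)<0$ iff $w_i<w_j$ iff $i>j$. Likewise $\omega_{ij}>0$ iff $i<j$, and $\omega_{ij}=0$ iff $i=j$. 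The last case is consistent with $|\{k:\omega_k=0\}|=n-1$ in Corollary~\ref{corol:F_dc_comp}.

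Summing the identity above over $j<i$ reproduces the definition of $\mathscr{F}_{\psi,1}(s)$ in Lemma~\ref{lemma:F_comp_evo}, and summing over $i<j$ reproduces $\mathscr{F}_{\psi,2}(s)$. Both sides are finite sums of exponentials in $s$, so the equalities hold for all $s\in\mathbb{C}$ and both sides are entire.

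The only point needing care is the strict ordering of the $w_i$: distinct atoms must give distinct frequencies between $i$ and $j$, so that no off-diagonal pair lands at $\omega=0$ and the sign of $\omega_k$ sorts the pairs exactly into the two sets. This follows from the strict ordering of the $z_i$ in Lemma~\ref{lemma:absD_dens}. The rest is bookkeeping.

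Several pairs may share the same $\omega_k$, as noted in the remark after Corollary~\ref{corol:F_dc_comp_half}. This does not affect the identity, since we sum over indices $k$ rather than over distinct frequencies. Finally, this representation shows how the coefficients can be recovered from the Fourier side: the weights of $\widehat{\varphi}$ at $\omega_k\neq0$ give $\sum A_k^{(c)}$ per distinct frequency, which is all that the sums require.
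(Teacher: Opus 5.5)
Your proof is correct, and it takes a genuinely different and far more elementary route than the paper.

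\textbf{Your route.} You observe that each summand collapses to $A_{ij}^{(c)}\mathrm{e}^{\omega_{ij}(s-c)}=\beta_i\beta_{-j}(w_i/w_j)^{s}$. The strict ordering $w_1>\dots>w_{n-1}$ then sorts the ordered pairs by the sign of $\omega_{ij}$. So the statement reduces to the definitions of $\mathscr{F}_{\psi,1}$ and $\mathscr{F}_{\psi,2}$ in Lemma~\ref{lemma:F_comp_evo}.

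\textbf{The paper's route.} The paper instead works along the line $\tau=c+\mathrm{i}u$, in several steps:
\begin{enumerate}
\item It writes $\mathscr{F}_{\psi,1}=-\Phi^{-}$ on $\Re s>c$ and $\mathscr{F}_{\psi,2}=\Phi^{+}$ on $\Re s<c$, where $\Phi$ is the Cauchy-type integral of $\psi=\varphi-\mathscr{F}_{\varphi}$ (Lemma~\ref{lemma:plemelj_F}, obtained through an inverse Mellin representation).
\item It evaluates each $\int_L \mathrm{e}^{\omega_k\tau}(\tau-s)^{-1}\,\mathrm{d}\tau$ by closing the contour to the right or to the left according to the sign of $\omega_k$.
\item It extends the resulting half-plane formulas to all of $\mathbb{C}$ by analytic continuation.
\item It invokes $c=1/2$ to secure $\mathscr{F}_{\psi,1}(1-s)=\mathscr{F}_{\psi,2}(s)$.
\end{enumerate}

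\textbf{What each buys.} The paper's route gives an operational reading that matches the $\mathsf{CrossUpdate}$ step of Procedure~\ref{alg:chan_polar}. There, $\mathscr{F}_{\psi,1}$ and $\mathscr{F}_{\psi,2}$ are the negative- and positive-frequency parts of the single function $\tilde{\varphi}(u)$, extracted without tracking individual atom pairs.

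Your route buys rigor and economy. It avoids the distributional inverse-Mellin/Fubini step and the line integrals, which converge only conditionally. It also shows that the right-hand sides do not depend on $c$ at all. Hence neither the identity nor the symmetry actually requires $c=1/2$; the symmetry already follows from $\beta_{-i}=\beta_i w_i$, as in Lemma~\ref{lemma:F_comp_evo}. Your closing remark about grouping coinciding frequencies recovers exactly the operational content of the paper's version.
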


\begin{proof}
Please refer to Section~\ref{sec:F_mellin_comp}.
\end{proof}

Next, we illustrate the effect of $\varoast$-convolutions on these components, using the BEC and BSC as representative examples.

\begin{exmp}[$W_{\operatorname{BEC}(\varepsilon)}$ under $\varoast$-convolutions]
According to Table~\ref{tab:channel_comp_merged}, $\varphi(s)=\mathscr{F}_{\mathfrak{a}, 1}(s) \mathscr{F}_{\mathfrak{a}, 2}(s)=0$ $\forall s\in \mathbb{C}$, thus, $\mathscr{F}_{\varphi}=\mathscr{F}_{\psi, 1}(s)=\mathscr{F}_{\psi, 2}(s)=0$, $\forall s\in \mathbb{C}$. It follows that only $\alpha_0^{\varoast m}$ and $\alpha_n^{\varoast m}$ change along the $m$-fold $\varoast$-convolution, while all other components remain zero for every $m=0,1,2,\ldots$, cf. Table~\ref{tab:bec_variable_evo}.
\end{exmp}

\begin{table}[h!]
\begin{center}
\begin{tabular}{|c || c c c c c || c c c |} 
 \hline
 $m$ & $\mathscr{F}_{\mathfrak{a}^{\varoast m},1}(s)$ & $\mathscr{F}_{\mathfrak{a}^{\varoast m},2}(s)$ & $\alpha_0^{\varoast m}$ & $\alpha_n^{\varoast m}$ & $n^{\varoast m}$ & $\mathscr{F}^{\varoast m}_{\varphi}$ & $\mathscr{F}^{\varoast m}_{\psi, 1}(s)$ & $\mathscr{F}^{\varoast m}_{\psi, 2}(s)$\\ [0.5ex] 
 \hline
 $0$ & $0$ & $0$ & $\varepsilon$ & $\barepsilon$ & $1$ & $0$ & $0$ & $0$ \\ 
 \hline
 $1$ & $0$ & $0$ & $\varepsilon^2$ & $1-\varepsilon^2$ & $1$ & $0$ & $0$ & $0$ \\ 
 \hline
 $2$ & $0$ & $0$ & $\varepsilon^4$ & $1-\varepsilon^4$ & $1$ & $0$ & $0$ & $0$ \\ 
 \hline
 $3$ & $0$ & $0$ & $\varepsilon^8$ & $1-\varepsilon^8$ & $1$ & $0$ & $0$ & $0$ \\ 
 \hline
\end{tabular}
\vspace{1mm}
\caption{Component evolution under the $m$-fold variable-domain $\varoast$-convolution for the BEC}
\label{tab:bec_variable_evo}
\end{center}
\end{table}

\begin{exmp}[$W_{\operatorname{BSC}(p)}$ under $\varoast$-convolutions]~\label{exmp:bsc_var_decomp}
According to Table~\ref{tab:channel_comp_merged}, $\varphi(s)=\mathscr{F}_{\mathfrak{a}, 1}(s) \mathscr{F}_{\mathfrak{a}, 2}(s)=C$ is constant in $s$, thus $\mathscr{F}_{\varphi}=C$ and $\mathscr{F}_{\psi, 1}(s)=\mathscr{F}_{\psi, 2}(s)=0$, $\forall s\in \mathbb{C}$. Define $\alpha\triangleq \bar{p}^{1-c} p^c$ and $\beta\triangleq p^{1-c} \bar{p}^c$ for any $c\in(0,1)$, then
\begin{equation}
    A(s)=\alpha \cdot \mathrm{e}^{\Delta(s-c)}, \quad B(s)=\beta \cdot \mathrm{e}^{-\Delta(s-c)}, \quad \forall s\in \mathbb{C}.
\end{equation}
Note that these are merely convenient factorizations for $W_{\operatorname{BSC}(p)}$, which \textit{per se} do not depend on the choice of $c$. Therefore, we have $\tilde{A}(u)\triangleq A(\tau)=\alpha \mathrm{e}^{\mathrm{i} u \Delta}$ and $\tilde{B}(u)\triangleq B(\tau)=\beta \mathrm{e}^{-\mathrm{i} u \Delta}$. For $m=1$, the component evolution is trivial as given in Table~\ref{tab:bsc_variable_evo}, where $\alpha_0^\varoast=2C$, i.e., a single $\varoast$-convolution does not preserve the BSC class~\cite{wang2023sub}. In what follows, we consider level $m=2$. By definition, $\varphi^{\varoast 2}(\tau)= \mathscr{F}_{\mathfrak{a}^{\varoast 2}, 1}(\tau)\mathscr{F}_{\mathfrak{a}^{\varoast 2}, 2}(\tau)=C^3(17 C+4(\tilde{A}^2(u)+\tilde{B}^2(u)))=17 C^4+4 C^3(\alpha^2 \mathrm{e}^{\mathrm{i} 2 u \Delta}+\beta^2 \mathrm{e}^{-\mathrm{i} 2 u \Delta})\triangleq \tilde{\varphi}^{\varoast 2}(u)$. By applying the Fourier transform, we obtain
\begin{equation}
    \widehat{\varphi}^{\varoast 2}(\omega) =2 \pi \left(17 C^4 \cdot  \delta(\omega)+4 C^3\left(\alpha^2  \delta(\omega-2 \Delta)+\beta^2 \delta(\omega+2 \Delta)\right)\right).
\end{equation}
Hence, $\mathscr{F}^{\varoast 2}_{\varphi}=\frac{1}{2 \pi} \widehat{\varphi}^{\varoast 2}(0)=17C^4$. Moreover, if $c=1/2$, then $\alpha=\beta=C^{1/2}$, thus
\begin{equation}~\label{eq:bsc_2nd_var_conv}
    \widehat{\varphi}^{\varoast 2}(\omega) = 2 \pi C^4\left( 17\delta(\omega) + 4 \delta(\omega-2 \Delta) + 4 \delta(\omega+2 \Delta) \right),
\end{equation}
which corresponds to the \textit{power spectral density} (PSD) of a cosine with a DC offset, cf. Corollary~\ref{corol:F_dc_comp_half}. Therefore, the corresponding amplitudes and frequencies can be identified from~\eqref{eq:bsc_2nd_var_conv}: $A_1=A_2=4C^4$, $\omega_1=-2\Delta<0$, $\omega_2=2\Delta>0$. By substituting these values into~\eqref{eq:F_psi_1_F} and~\eqref{eq:F_psi_2_F} with $c=1/2$, we obtain $\mathscr{F}^{\varoast 2}_{\psi, 1}(s)=4C^4 \mathrm{e}^{2\Delta(s-c)}=4C^3 A^2(s)$ and $\mathscr{F}^{\varoast 2}_{\psi, 2}(s)=4C^4 \mathrm{e}^{-2\Delta(s-c)}=4C^3 B^2(s)$. By repeating the same procedure, we showcase results for $m=3$ as follows.
\begin{align}
    \mathscr{F}_{\mathfrak{a}^{\varoast 3},1}(s) &= A_s^8+8 C A_s^6+28 C^2 A_s^4+56 C^3 A_s^2, \label{eq:F_bsc_1_3} \\
    \mathscr{F}_{\mathfrak{a}^{\varoast 3},2}(s) &= B_s^8+8 C B_s^6+28 C^2 B_s^4+56 C^3 B_s^2, \label{eq:F_bsc_2_3} \\
    \alpha_{0}^{\varoast 3} &= 70C^4, \quad \alpha_{n}^{\varoast 3} = 0, \label{eq:alpha_bsc_0n_3} \\
    \widehat{\varphi}^{\varoast 3}(\omega) &= 2\pi C^8(3985 \delta(\omega)+1800\delta(\omega\pm 2 \Delta)+476\delta(\omega\pm 4 \Delta)+56\delta(\omega\pm 6 \Delta)),  \\
    \mathscr{F}^{\varoast 3}_{\varphi} &= 3985 C^8, \label{eq:F_bsc_phi_3} \\
    \mathscr{F}^{\varoast 3}_{\psi, 1}(s) &= 1800 C^7 A_s^2+476 C^6 A_s^4+56 C^5 A_s^6, \label{eq:F_bsc_psi_1_3} \\
    \mathscr{F}^{\varoast 3}_{\psi, 2}(s) &= 1800 C^7 B_s^2+476 C^6 B_s^4+56 C^5 B_s^6. \label{eq:F_bsc_psi_2_3}
\end{align}
As shown in Table~\ref{tab:bsc_variable_evo}, the $m$-fold $\varoast$-convolution requires participation of all entire components defined in Lemma~\ref{lemma:F_comp_evo}. One observes that a single $\varoast$-convolution does not preserve the BSC class~\cite{wang2023sub}, since the DC component $\mathscr{F}_{\varphi}$ in~\eqref{eq:phi_zero} contributes to the evolution of $\alpha_0^\varoast$ via~\eqref{eq:evo_F_alpha_0}.
\end{exmp}

\begin{table}[h!]
\begin{center}
\begin{tabular}{|c || c c c c c || c c c |} 
 \hline
 $m$ & $\mathscr{F}_{\mathfrak{a}^{\varoast m},1}(s)$ & $\mathscr{F}_{\mathfrak{a}^{\varoast m},2}(s)$ & $\alpha_0^{\varoast m}$ & $\alpha_n^{\varoast m}$ & $n^{\varoast m}$ & $\mathscr{F}^{\varoast m}_{\varphi}$ & $\mathscr{F}^{\varoast m}_{\psi, 1}(s)$ & $\mathscr{F}^{\varoast m}_{\psi, 2}(s)$\\ [0.5ex] 
 \hline
 $0$ & $A_s$ & $B_s$ & $0$ & $0$ & $2$ & $C$ & $0$ & $0$ \\ 
 \hline
 $1$ & $A_s^2$ & $B_s^2$ & $2C$ & $0$ & $2$ & $C^2$ & $0$ & $0$ \\ 
 \hline
 $2$ & $A_s^4+4CA_s^2$ & $B_s^4+4CB_s^2$ & $6C^2$ & $0$ & $3$ & $17C^4$ & $4 C^3 A_s^2$ & $4 C^3 B_s^2$ \\ 
 \hline
 $3$ & \eqref{eq:F_bsc_1_3} & \eqref{eq:F_bsc_2_3} & \eqref{eq:alpha_bsc_0n_3} & \eqref{eq:alpha_bsc_0n_3} & $6$ & \eqref{eq:F_bsc_phi_3} & \eqref{eq:F_bsc_psi_1_3} & \eqref{eq:F_bsc_psi_2_3} \\ 
 \hline
\end{tabular}
\vspace{1mm}
\caption{Component evolution under the $m$-fold variable-domain $\varoast$-convolution for the BSC}
\label{tab:bsc_variable_evo}
\end{center}
\end{table}

The comparisons of the BEC and BSC in Sections~\ref{sec:check_conv_domain} and~\ref{sec:var_conv_domain} further illustrate their roles as extremal channels~\cite{alsan2012channel}~\cite[Theorem 4.143]{richardson2008modern}. In contrast to the BEC, the BSC exhibits a more intricate evolution under convolutions.

\section{Conversion between the variable-domain and the check-domain}~\label{sec:convert_G/F}
This section addresses the conversion between the variable- and check-domain, establishing links between $\mathscr{F}_{\mathfrak{a},i}(s)$ with $i\in\{1,2\}$ and $\mathcal{G}_{\mathfrak{a}}(0,\nu)$ in the $D$-density, and between $\mathscr{G}_{|\mathfrak{a}|}(\nu)$ and $\mathcal{F}_{|\mathfrak{a}|}(s)$ in the $|D|$-density, within the SOA $\Re \nu>0$ and $\Re s\in(0,1)$; see brown and purple arrows in Fig.~\ref{fig:entities_domains}.

\begin{prop}~\label{prop:G2F_convert}
Fix $\Re s \in(0,1)$, then
\begin{gather}
    \mathcal{F}_{|\mathfrak{a}|}(s)=\alpha_0 +\sum_{t=0}^{\infty} \frac{(-1)^t}{t!}\left(A(1-s,t)+A(s,t)\right) \cdot \mathscr{G}_{|\mathfrak{a}|}(t), \label{eq:G2F_convert} \\
    A(\alpha,t) \triangleq 2^{-\alpha} \cdot \Gamma(\alpha+1)_2 \tilde{F}_1(\alpha-1, \alpha+1 ; 1+\alpha-t; 1 / 2), \quad \forall \alpha\in\{s,1-s\}, \label{eq:A_s_nu}
\end{gather}
where ${ }_2 \tilde{F}_1(a, b ; c ; z)\triangleq { }_2 F_1(a, b ; c ; z) / \Gamma(c)$ denotes the \textit{regularized Gauss hypergeometric function}~\cite[(9.02)]{olver1997asymptotics}.
\end{prop}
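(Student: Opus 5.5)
The plan is to evaluate $\mathcal{F}_{|\mathfrak{a}|}(s)$ atom by atom, using Corollary~\ref{corol:var_node_double_LT} and Lemma~\ref{lemma:absD_dens}, and to split off the two boundary atoms. Write
\begin{equation*}
g_s(z)\triangleq \tfrac{1}{2}(1-z)^{1-s}(1+z)^{s}+\tfrac{1}{2}(1-z)^{s}(1+z)^{1-s},
\end{equation*}
so that $\mathcal{F}_{|\mathfrak{a}|}(s)=\sum_{i=0}^{n}\alpha_i\, g_s(z_i)$.

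The boundary atoms are handled directly. At $z_0\equiv 0$ we have $g_s(0)=1$, which produces the term $\alpha_0$. At $z_n\equiv 1$, both $(1-z)^{1-s}$ and $(1-z)^{s}$ vanish because $\Re s\in(0,1)$. Hence $g_s(1)=0$ and $\alpha_n$ does not contribute. This is consistent with $\beta_n\cdot 0^s=0$ in Corollary~\ref{corol:F_trans_parts}.

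For the interior atoms $z_i\in(0,1)$, $i=1,\ldots,n-1$, I will expand $g_s$ in its Maclaurin series $g_s(z)=\sum_{t\ge0}c_t(s)z^t$. The branch points of $g_s$ are at $\pm1$, so the series converges absolutely for $|z|<1$. Since the atoms are finitely many and satisfy $\max_i z_i<1$, the two summations can be exchanged. Then
\begin{equation*}
\sum_{i=1}^{n-1}\alpha_i g_s(z_i)=\sum_{t\ge0}c_t(s)\sum_{i=1}^{n-1}\alpha_i z_i^{t}=\sum_{t\ge0}c_t(s)\,\mathscr{G}_{|\mathfrak{a}|}(t)
\end{equation*}
by Lemma~\ref{lemma:f_a_analy_cont}, evaluated at the nonnegative integers $\nu=t$. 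This is exactly where the entire counterpart is needed. The value $t=0$ is included, and $\mathscr{G}_{|\mathfrak{a}|}(0)$ legitimately omits the boundary masses, which is why the $\alpha_0$ term was separated beforehand.

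It remains to identify the coefficients, namely to show $c_t(s)=\frac{(-1)^t}{t!}\bigl(A(1-s,t)+A(s,t)\bigr)$. The two halves of $g_s$ are both of the form $h_\alpha(z)=\tfrac{1}{2}(1-z)^{\alpha}(1+z)^{1-\alpha}$, with $\alpha=s$ for the second term and $\alpha=1-s$ for the first. So it suffices to prove that the $t$-th Taylor coefficient of $h_\alpha$ equals $\frac{(-1)^t}{t!}A(\alpha,t)$.

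The Cauchy product gives the finite sum
\begin{equation*}
[z^t]h_\alpha=\tfrac{1}{2}\sum_{k=0}^{t}(-1)^k\tbinom{\alpha}{k}\tbinom{1-\alpha}{t-k}.
\end{equation*}
This sum is a terminating ${}_2F_1$ evaluated at $-1$. I will then apply the Pfaff transformation, since $z\mapsto z/(z-1)$ sends $-1$ to $1/2$ and produces a factor $2^{\pm\alpha}$. If needed, an Euler transformation follows, to bring the parameters to $(\alpha-1,\alpha+1;1+\alpha-t)$. Finally, I will convert the Pochhammer/binomial prefactors into $\Gamma(\alpha+1)/\Gamma(1+\alpha-t)$, which is where the regularized ${}_2\tilde F_1$ enters.

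As an alternative route, or as a check, the coefficient can be obtained as $h_\alpha^{(t)}(0)/t!$ by writing $h_\alpha=\tfrac12(1-z)^{\alpha-1}(1-z^2)^{1-\alpha}\cdots$. One can also compare generating functions in $t$ directly: $\sum_t \frac{(-1)^t}{t!}A(\alpha,t)z^t$ should reproduce $h_\alpha$, verified via the Euler integral representation of ${}_2F_1$.

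I expect this hypergeometric identification to be the main obstacle. The specific parameter triple and the choice of argument $1/2$ must be matched exactly, and the regularization is essential for $t\ge 1+\alpha$ when $1+\alpha-t$ approaches the nonpositive integers. I will first test the identity at $t=0,1,2$ against the direct expansion to fix signs and the power of $2$, and then prove the general case through the contiguous/Pfaff chain described above.
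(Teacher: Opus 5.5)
Your argument is correct, but it takes a different route from the paper's proof of this proposition.

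\textbf{The paper's route.} It uses the Mellin--Parseval formula. It computes the Mellin transform $\mathcal{K}_s(\nu)=\Gamma(\nu-1)\bigl(A(1-s,1-\nu)+A(s,1-\nu)\bigr)$ of the kernel divided by $z$, via the Euler integral and a Pfaff transformation. It then continues $\mathcal{K}_s$ meromorphically and proves polynomial decay on vertical lines. Finally it closes the Mellin--Barnes contour to the right, using the exponential decay of the entire function $\mathscr{G}_{|\mathfrak{a}|}(\nu)$ as $\Re\nu\to+\infty$. The coefficients appear as residues of $\Gamma(-\nu)$ at $\nu=t\in\mathbb{Z}_{\ge 0}$.

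\textbf{Your route.} You Taylor-expand the kernel at the finitely many interior atoms. This is exactly the argument the paper itself uses later for Theorem~\ref{thm:G2F_convert2} (identity \eqref{eq:evo_G2F1_reform1}). Combined with $\mathcal{F}_{|\mathfrak{a}|}(s)=\alpha_0+\mathscr{F}_{\mathfrak{a},1}(s)+\mathscr{F}_{\mathfrak{a},1}(1-s)$ from Corollaries~\ref{corol:F_trans_parts} and~\ref{corol:F12_sym}, it gives the proposition immediately.

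\textbf{Comparison.} Your route is more elementary and more robust:
\begin{itemize}
\item it needs no Parseval formula for atomic measures, no meromorphic continuation and no arc estimates;
\item this matters because the exponential bound on $\mathscr{G}_{|\mathfrak{a}|}$ used on the paper's semicircle degrades near $\theta=\pm\pi/2$, where $\Re\nu$ is small, so that step needs a Jordan-type refinement;
\item it shows the interior part of the identity holds for every $s\in\mathbb{C}$, with $\Re s\in(0,1)$ needed only to kill the atom at $z_n\equiv 1$.
\end{itemize}
What the paper's route buys is a structural explanation of the coefficients as residues. It also stays inside the Mellin--Barnes framework that motivates separating the boundary atoms in the first place.

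\textbf{The step you flag as the main obstacle.} With the Cauchy product indexed as you wrote it ($k$ on the factor $(1-z)^{\alpha}$), you get $\frac{(-1)^t(\alpha-1)_t}{t!}\,{}_2F_1(-t,-\alpha;2-\alpha-t;-1)$. Neither Pfaff nor Euler alters the lower parameter $2-\alpha-t$. Contiguous relations cannot reach $1+\alpha-t$ either, since $2\alpha-1\notin\mathbb{Z}$. The paper crosses this same gap without comment, effectively using the terminating $z\mapsto 1-z$ relation at the fixed point $z=1/2$.

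The clean fix is to put the summation index on the other factor:
\[
\sum_{k=0}^{t}\tbinom{1-\alpha}{k}(-1)^{t-k}\tbinom{\alpha}{t-k}=\frac{(-\alpha)_t}{t!}\,{}_2F_1(-t,\alpha-1;1+\alpha-t;-1).
\]
One Pfaff step with numerator parameter $\alpha-1$ turns this into $\frac{(-\alpha)_t}{t!}\,2^{1-\alpha}\,{}_2F_1(\alpha-1,\alpha+1;1+\alpha-t;1/2)$. Since $(-\alpha)_t=(-1)^t\Gamma(\alpha+1)/\Gamma(1+\alpha-t)$, halving gives exactly $\frac{(-1)^t}{t!}A(\alpha,t)$.

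Also, for $\Re\alpha\in(0,1)$ the parameter $1+\alpha-t$ is never a nonpositive integer. So the regularization plays no role in this proposition.
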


\begin{proof}
Please refer to Section~\ref{sec:G2F_convert}.
\end{proof}

\begin{corollary}
Let $s=1/2$, then
\begin{gather}
\mathfrak{B}(|\mathfrak{a}|) = \alpha_0 +\sum_{t=0}^{\infty} \frac{(-1)^t}{t!} D(t)\cdot \mathscr{G}_{|\mathfrak{a}|}(t), \label{eq:evo_G2B} \\
        D(t) \triangleq \sqrt{\frac{\pi}{2}} { }_2 \tilde{F}_1\left(-\frac{1}{2}, \frac{3}{2} ; \frac{3}{2}-t ; \frac{1}{2}\right).
\end{gather}
\end{corollary}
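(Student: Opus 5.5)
This corollary is a direct specialization of Proposition~\ref{prop:G2F_convert}. I would not re-derive the Mellin--Barnes machinery. Instead I set $s=1/2$ in \eqref{eq:G2F_convert} and identify the coefficient.

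First, by Corollary~\ref{corol:var_node_half}, $\mathfrak{B}(|\mathfrak{a}|)=\mathcal{F}_{|\mathfrak{a}|}(1/2)$. The point $s=1/2$ lies in the SOA $\Re s\in(0,1)$, so Proposition~\ref{prop:G2F_convert} applies there. Since $1-s=s=1/2$, the two terms coincide, and the series coefficient becomes $A(1/2,t)+A(1/2,t)=2A(1/2,t)$.

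Next I evaluate $A(1/2,t)$ from \eqref{eq:A_s_nu}:
\begin{equation*}
A(1/2,t)=2^{-1/2}\,\Gamma(3/2)\,{}_2\tilde{F}_1\!\left(-\tfrac12,\tfrac32;\tfrac32-t;\tfrac12\right).
\end{equation*}
Using $\Gamma(3/2)=\sqrt{\pi}/2$ gives
\begin{equation*}
2A(1/2,t)=2\cdot 2^{-1/2}\cdot\frac{\sqrt{\pi}}{2}\,{}_2\tilde{F}_1(\cdots)=\sqrt{\frac{\pi}{2}}\,{}_2\tilde{F}_1\!\left(-\tfrac12,\tfrac32;\tfrac32-t;\tfrac12\right)=D(t).
\end{equation*}
Substituting this into \eqref{eq:G2F_convert} yields \eqref{eq:evo_G2B}, with the additive $\alpha_0$ carried over unchanged.

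The only point needing care is well-definedness when $c=3/2-t$ is a nonpositive integer. That never happens for integer $t$, because $3/2-t$ is always a half-integer. In any case, the regularized ${}_2\tilde{F}_1$ is entire in $c$, so each $D(t)$ is finite. Convergence of the series is inherited from the proposition at the admissible point $s=1/2$, so there is no real obstacle beyond this routine bookkeeping.
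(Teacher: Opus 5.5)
Your proposal is correct and follows the paper's proof exactly. You use Corollary~\ref{corol:var_node_half} to write $\mathfrak{B}(|\mathfrak{a}|)=\mathcal{F}_{|\mathfrak{a}|}(1/2)$, specialize Proposition~\ref{prop:G2F_convert} at $s=1/2$ where $A(1-s,t)=A(s,t)$, and compute $2A(1/2,t)=2\cdot 2^{-1/2}\,\Gamma(3/2)\,{}_2\tilde{F}_1(-\tfrac12,\tfrac32;\tfrac32-t;\tfrac12)=D(t)$ via $\Gamma(3/2)=\sqrt{\pi}/2$; your remark that $\tfrac32-t$ is never a nonpositive integer is a harmless extra, since the regularized ${}_2\tilde{F}_1$ is entire in its third argument anyway.
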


\begin{proof}
By Corollary~\ref{corol:var_node_half}, $\mathfrak{B}(|\mathfrak{a}|)=\mathcal{F}_{|\mathfrak{a}|}(1 / 2)$. Let $s=1/2$, it follows from~\eqref{eq:A_s_nu} that
\begin{equation*}
    A(1/2,t)=2^{-1 / 2} \cdot \Gamma(3 / 2){ }_2 \tilde{F}_1\left(-1/2, 3/2 ; 3/2-t ; 1/2\right),
\end{equation*}
with $\Gamma(3 / 2)=\sqrt{\pi}/2$. Moreover, for $s=1/2$, we have $1-s=s$, thus the two coefficients coincide and their sum simplifies to $D(t) = 2\cdot A(1/2,t)=\sqrt{\pi/2}\cdot { }_2 \tilde{F}_1\left(-1/2, 3/2 ; 3/2-t ; 1/2\right)$. 
\end{proof}

\begin{prop}~\label{prop:F2G_convert}
Fix $\Re \nu >0$, then
\begin{gather}
    \mathcal{G}_{\mathfrak{a}}(0, \nu) = \alpha_n +\sum_{t=0}^{\infty} \frac{(-1)^t}{t!} G(\nu,t)\left(\mathscr{F}_{\mathfrak{a}, 1}(t)+\mathscr{F}_{\mathfrak{a}, 2}(-t)\right),~\label{eq:F2G_convert} \\
    G(\nu,t) \triangleq 2^{-\nu}\cdot \Gamma(\nu+1){}_2\tilde{F}_1(\nu, \nu+1 ; 1+\nu-t ; 1 / 2).
\end{gather}
\end{prop}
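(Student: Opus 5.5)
The plan is to prove \eqref{eq:F2G_convert} atom by atom, working from the discrete $D$-density of Lemma~\ref{lemma:D_dens}. By Lemma~\ref{lemma:check_node_double_LT}, $\mathcal{G}_{\mathfrak{a}}(0,\nu)=\sum_{j}\beta_j|z_j|^\nu$. For $\Re\nu>0$ the boundary atoms are handled directly:
\begin{itemize}
\item the atom at $z_0\equiv 0$ contributes $\beta_0\cdot 0^\nu=0$;
\item the atom at $z_{-n}=-1$ has weight $\beta_{-n}=0$;
\item the atom at $z_n\equiv 1$ contributes $\beta_n=\alpha_n$.
\end{itemize}
This produces the isolated term $\alpha_n$. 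It remains to expand the interior atoms. For each of them I would use the variable $w_{\pm i}=(1\mp z_i)/(1\pm z_i)$ of Lemma~\ref{lemma:f_a_w_analy_cont}.

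For a positive interior atom $z_i$ we have $w_i\in(0,1)$ and $|z_i|=(1-w_i)/(1+w_i)$. For a negative interior atom $z_{-i}=-z_i$ we have $w_{-i}=1/w_i\in(1,\infty)$ and $|z_{-i}|=(w_{-i}-1)/(w_{-i}+1)=(1-w_{-i}^{-1})/(1+w_{-i}^{-1})$. Both cases therefore reduce to the single function
\begin{equation*}
g_\nu(x)\triangleq\Bigl(\frac{1-x}{1+x}\Bigr)^\nu,
\end{equation*}
evaluated at $x=w_i$ and at $x=w_{-i}^{-1}=w_i$ respectively, with $x\in(0,1)$. Since $g_\nu$ is holomorphic in the unit disc, it has a Taylor expansion $g_\nu(x)=\sum_{t\ge0}c_t(\nu)x^t$ converging for $|x|<1$. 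The sum over atoms is finite, so summation can be interchanged freely. We obtain
\begin{equation*}
\sum_{i=1}^{n-1}\beta_i g_\nu(w_i)=\sum_t c_t(\nu)\,\mathscr{F}_{\mathfrak{a},1}(t),\qquad
\sum_{i=1}^{n-1}\beta_{-i} g_\nu(w_{-i}^{-1})=\sum_t c_t(\nu)\,\mathscr{F}_{\mathfrak{a},2}(-t),
\end{equation*}
where the right-hand sides use \eqref{eq:atom2F12}. Hence the proposition reduces to identifying $c_t(\nu)=\frac{(-1)^t}{t!}G(\nu,t)$.

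The main obstacle is this coefficient identification, and for it I would use the Mellin transform, in the same spirit as the proof of Proposition~\ref{prop:G2F_convert}. Consider $\int_0^1 x^{s-1}g_\nu(x)\,\mathrm{d}x$. This is an Euler integral, and for $\Re s>0$ it equals
\begin{equation*}
\frac{\Gamma(s)\Gamma(\nu+1)}{\Gamma(s+\nu+1)}\;{}_2F_1(\nu,s;s+\nu+1;-1).
\end{equation*}
Applying Pfaff's transformation ${}_2F_1(a,b;c;z)=(1-z)^{-a}{}_2F_1\bigl(a,c-b;c;z/(z-1)\bigr)$ rewrites this as
\begin{equation*}
\Gamma(s)\,2^{-\nu}\Gamma(\nu+1)\,{}_2\tilde F_1(\nu,\nu+1;s+\nu+1;1/2).
\end{equation*}
Here the regularized function is entire in its third parameter, so the expression is meromorphic in $s$ with poles only from $\Gamma(s)$. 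I would then invoke the standard fact that, for a function analytic at $0$ with Taylor coefficients $c_t$, the continued Mellin transform over $(0,1)$ has simple poles at $s=-t$ with residue $c_t$. It suffices to subtract finitely many Taylor terms and integrate them explicitly. The residue of $\Gamma(s)$ at $s=-t$ is $(-1)^t/t!$, which gives
\begin{equation*}
c_t(\nu)=\frac{(-1)^t}{t!}\,2^{-\nu}\Gamma(\nu+1)\,{}_2\tilde F_1(\nu,\nu+1;1+\nu-t;1/2)=\frac{(-1)^t}{t!}G(\nu,t).
\end{equation*}
As a sanity check at $t=0$: ${}_2F_1(\nu,\nu+1;\nu+1;1/2)=2^{\nu}$, so $G(\nu,0)=1=g_\nu(0)$.

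Combining the boundary contribution $\alpha_n$ with the two interior series yields \eqref{eq:F2G_convert}. The remaining points to check are routine:
\begin{itemize}
\item the Pfaff step at $z=-1$ holds by analytic continuation in $s$, first for $\Re s$ large and then extended;
\item absolute convergence of each series follows from $w_i<1$;
\item the branch of $|z|^\nu$ is the principal one throughout, consistent with Corollary~\ref{corol:check_node_double_LT}.
\end{itemize}
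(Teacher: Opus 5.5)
Your proof is correct, and it takes a different route from the paper's.

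\textbf{The paper's route.} It writes $\mathcal{G}_{\mathfrak{a}}(0,\nu)-\alpha_n$ as $\int_0^\infty f_{\mathfrak{a}}(w)\phi_\nu(w)\,\mathrm{d}w$, with the kernel $\phi_\nu(w)=w^{-1}\lvert(1-w)/(1+w)\rvert^\nu$ of Lemma~\ref{lemma:h_v_analy_cont}, and splits the integral at $w=1$. The Mellin--Parseval formula turns the two halves into the Mellin--Barnes integrals $\frac{1}{2\pi\mathrm{i}}\int\mathscr{F}_{\mathfrak{a},1}(s)M(1+s,\nu)\,\mathrm{d}s$ and $\frac{1}{2\pi\mathrm{i}}\int\mathscr{F}_{\mathfrak{a},2}(s)M(1-s,\nu)\,\mathrm{d}s$. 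It then closes the contours to the right resp.\ left, collecting the residues of $\Gamma(-s)$ at $s=t$ and of $\Gamma(s)$ at $s=-t$.

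\textbf{Your route.} You never form a Mellin--Barnes integral. You Taylor-expand each atom's contribution $g_\nu(w_i)$, and handle the negative atoms with the same expansion via $w_{-i}^{-1}=w_i$. You use the Mellin transform of $g_\nu$ on $(0,1)$ (exactly the paper's $M(1-s,\nu)$) only to read off the Taylor coefficients from the poles of $\Gamma(s)$.

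\textbf{What your version gains.}
\begin{itemize}
\item The interchange of sums is trivial, since there are finitely many atoms.
\item Convergence is immediate, because every $w_i$ lies inside the disc of convergence.
\item You need neither a Parseval formula applied to Dirac masses nor growth estimates on the large arcs to justify closing the contours. The paper supplies such estimates for Proposition~\ref{prop:G2F_convert} but not here.
\item You obtain, as a by-product, the per-atom identity \eqref{eq:evo_F2G1_reform2}. The paper proves it separately in Theorem~\ref{thm:G2F_convert2} via a binomial Cauchy product and Pfaff/Euler transformations of terminating ${}_2F_1$'s. Your residue identification is a cleaner way to obtain those coefficients.
\end{itemize}

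\textbf{What the paper's route gains.} It keeps the argument inside the same Mellin--Barnes template as Proposition~\ref{prop:G2F_convert}. It also explains structurally why $\mathscr{F}_{\mathfrak{a},1}$ is sampled at $s=t$ and $\mathscr{F}_{\mathfrak{a},2}$ at $s=-t$. These are the pole sets of $M(1+s,\nu)$ and $M(1-s,\nu)$, lying to the right resp.\ left of the integration lines. In your proof the same sampling simply falls out of the atomic sum.
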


\begin{proof}
Please refer to Section~\ref{sec:F2G_convert}.
\end{proof}


\section{Component evolution-based channel polarization}~\label{sec:bi_convert_GF}
This section establishes conversions between $\mathscr{F}_{\mathfrak{a},1}(s)$ and $\Ga(\nu)$ over the entire complex planes and across the $D$- and $|D|$-densities; see gray arrows in Fig.~\ref{fig:entities_domains}, and proposes the CE-based channel polarization of a given discrete BMS channel. 

\begin{theorem}~\label{thm:G2F_convert2}
Let $\forall s,\nu\in \mathbb{C}$, then
\begin{align}
    \mathscr{F}_{\mathfrak{a}, 1}(s) &=  \sum_{t=0}^{\infty} \frac{(-1)^t}{t!}A(s,t) \cdot \mathscr{G}_{|\mathfrak{a}|}(t), \quad \forall s\in \mathbb{C}, \label{eq:evo_G2F1} \\
    \mathscr{G}_{|\mathfrak{a}|}(\nu) &=\sum_{t=0}^{\infty} \frac{(-1)^t}{t!} G(\nu,t)\left(\mathscr{F}_{\mathfrak{a}, 1}(t)+\mathscr{F}_{\mathfrak{a}, 1}(t+1)\right), \quad \forall \nu\in \mathbb{C}, \label{eq:evo_F122G}
\end{align}
where $A(s, t)$ resp. $G(\nu,t)$ defined in Proposition~\ref{prop:G2F_convert} resp.~\ref{prop:F2G_convert} are entire functions on $s\in\mathbb{C}$ resp. on $\nu\in\mathbb{C}$ for $t\in\mathbb{Z}_{\geq 0}$.
\end{theorem}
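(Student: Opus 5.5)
The plan is to prove both identities atom by atom. Both sides are finite combinations of exponentials in the atoms, so each identity reduces to a Maclaurin expansion of an explicit elementary function of one variable. The coefficients of that expansion then have to be identified with $A(s,t)$ resp.\ $G(\nu,t)$.

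\emph{First identity \eqref{eq:evo_G2F1}.} By Lemma~\ref{lemma:f_a_analy_cont}, $\mathscr{G}_{|\mathfrak{a}|}(t)=\sum_{i=1}^{n-1}\alpha_i z_i^t$. By Corollary~\ref{corol:f_a_1_2_cont} and Lemma~\ref{lemma:D_dens},
\[
\mathscr{F}_{\mathfrak{a},1}(s)=\sum_{i=1}^{n-1}\alpha_i\,h_s(z_i),\qquad h_s(z)\triangleq \tfrac{1+z}{2}\Bigl(\tfrac{1-z}{1+z}\Bigr)^s=\tfrac12(1+z)^{1-s}(1-z)^s .
\]
Since the sum over $i$ is finite, the claim follows once we show
\[
h_s(z)=\sum_{t\ge 0}\frac{(-1)^t}{t!}A(s,t)\,z^t \quad\text{for every } z\in(0,1),\ s\in\mathbb{C}.
\]
For fixed $s$, $h_s$ is holomorphic in the unit disc, since its only branch points are at $z=\pm1$. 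Its Maclaurin series therefore converges on $|z|<1$, which covers all interior atoms $z_i\in(0,1)$.

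Everything then rests on the coefficient identity $A(s,t)=(-1)^t h_s^{(t)}(0)$. By Leibniz,
\[
h_s^{(t)}(0)/t!=\tfrac12\sum_{k}\binom{1-s}{k}\binom{s}{t-k}(-1)^{t-k},
\]
which is a terminating ${}_2F_1$ at argument $-1$. I would convert it to the stated form ${}_2\tilde F_1(s-1,s+1;1+s-t;1/2)$ with prefactor $2^{-s}\Gamma(s+1)$ using a Pfaff/Euler transformation together with a connection formula. The regularized ${}_2\tilde F_1$ is used precisely so that the denominator parameter $1+s-t$ may pass through nonpositive integers. As a sanity check, $t=0$ must return $1/2$. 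This hypergeometric bookkeeping is the main obstacle.

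If the direct route stalls, I would instead reuse the Mellin--Barnes computation from the proof of Proposition~\ref{prop:G2F_convert}. That proposition applies to every discrete BMS $|D|$-density, and the residue expansion there is linear in the kernel of Corollary~\ref{corol:var_node_double_LT}. Splitting the kernel into its two terms $\tfrac{1-z}{2}(\cdot)^s$ and $\tfrac{1+z}{2}(\cdot)^{-s}$ should show that $A(s,t)$ and $A(1-s,t)$ arise separately, using the symmetry of Corollary~\ref{corol:F12_sym}. This gives the per-atom identity for $\Re s\in(0,1)$.

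The identity then extends to all $s\in\mathbb{C}$ by analytic continuation. The function $A(s,t)$ is entire in $s$, because ${}_2\tilde F_1$ is entire in its parameters and $\Gamma(s+1){}_2\tilde F_1$ has removable poles (alternatively, $A(s,t)$ is polynomial in $s$ via the Leibniz formula). Fix $r$ with $\max_i z_i<r<1$. Cauchy estimates on $|z|=r$, with $h_s$ uniformly bounded for $s$ in compacts, give locally uniform convergence of the series in $s$. Hence the right-hand side is entire and agrees with $h_s(z_i)$ everywhere.

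\emph{Second identity \eqref{eq:evo_F122G}.} Here I compute
\[
\mathscr{F}_{\mathfrak{a},1}(t)+\mathscr{F}_{\mathfrak{a},1}(t+1)=\sum_i\beta_i w_i^t(1+w_i).
\]
Since $\beta_i=\tfrac{1+z_i}{2}\alpha_i$ and $1+w_i=\tfrac{2}{1+z_i}$, this equals $\sum_i\alpha_i w_i^t$. Also $z_i=\tfrac{1-w_i}{1+w_i}$. The claim therefore reduces to
\[
\Bigl(\tfrac{1-w}{1+w}\Bigr)^{\nu}=\sum_{t\ge0}\frac{(-1)^t}{t!}G(\nu,t)\,w^t \quad\text{for } w\in(0,1),\ \nu\in\mathbb{C}.
\]
This is again a Maclaurin expansion of a function holomorphic on $|w|<1$. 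The coefficient identification $G(\nu,t)=(-1)^t\frac{d^t}{dw^t}\bigl(\tfrac{1-w}{1+w}\bigr)^\nu\big|_{w=0}$ follows the same hypergeometric transformation as in the first part. Consistency with Proposition~\ref{prop:F2G_convert} supports it, because $\mathscr{F}_{\mathfrak{a},2}(-t)=\mathscr{F}_{\mathfrak{a},1}(1+t)$ by Corollary~\ref{corol:F12_sym}, and \eqref{eq:F2G_convert} minus $\alpha_n$ is exactly $\mathscr{G}_{|\mathfrak{a}|}(\nu)$ by Corollary~\ref{corol:G2G_connect}. So one may alternatively obtain the identity for $\Re\nu>0$ from Proposition~\ref{prop:F2G_convert} and extend it to all $\nu$ by the same analytic-continuation and locally-uniform-convergence argument. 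Entirety of $G(\nu,t)$ in $\nu$ holds for the same reason as for $A(s,t)$.
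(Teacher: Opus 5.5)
Your proposal is correct and is essentially the paper's proof. The paper likewise reduces both identities atom by atom to the Maclaurin expansions of $\tfrac12(1-z)^s(1+z)^{1-s}$ and $(1-w)^\nu(1+w)^{-\nu}$ on $(0,1)$, writes the Cauchy-product coefficients as terminating ${}_2F_1$'s at argument $-1$, and identifies them with $A(s,t)$ and $G(\nu,t)$ via Pfaff and Euler transformations. The connection formula you anticipate is genuinely needed in the first identity, since the paper silently replaces the third parameter $2-s-t$ by $s-t+1$ at argument $1/2$; your polynomial-in-$s$ observation also justifies the entireness claim and the degenerate parameter values that the paper leaves implicit.
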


\begin{proof}
Please refer to Section~\ref{sec:G2F_convert2}.
\end{proof}

\begin{remark}
Rather than focusing on the check- and variable transforms, Theorem~\ref{thm:G2F_convert2} extends Proposition~\ref{prop:G2F_convert} and~\ref{prop:F2G_convert} to the entire complex planes and characterizes the relations between their entire counterparts: it derives $\mathscr{F}_{\mathfrak{a}, 1}(s)$ from infinitely many moments of $\mathscr{G}_{|\mathfrak{a}|}(\nu)$ and vice versa. However, this representation does not provide a numerically stable method for recovering one from the other via the alternating series. Instead, we demonstrate in Example~\ref{exmp:bsc_110} resp.~\ref{exmp:bsc_1101} that $\mathscr{G}_{|\mathfrak{a}|}(\nu)$ resp. $\mathscr{F}_{\mathfrak{a}, 1}(s)$  can be identified via
\begin{align}
    (1-w)^\nu(1+w)^{-\nu} &=\sum_{t=0}^{\infty} \frac{(-1)^t}{t!} G(\nu, t) w^t, \quad \forall w\in(0,1), \ \forall \nu\in\mathbb{C},~\label{eq:evo_F2G1_reform2} \\
    1/2 \cdot(1-z)^s(1+z)^{1-s} &= \sum_{t=0}^{\infty} \frac{(-1)^t}{t!} A(s,t) z^t, \quad \forall z\in(0,1), \ \forall s\in\mathbb{C},~\label{eq:evo_G2F1_reform1}    
\end{align}
respectively, provided that the expression of $\mathscr{F}_{\mathfrak{a}, 1}(s)$ resp. $\mathscr{G}_{|\mathfrak{a}|}(\nu)$ is available without evaluating $G(\nu,t)$ resp. $A(s,t)$, $\forall t\in\mathbb{Z}_{\geq 0}$. Such conversion via identification is illustrated in Fig.~\ref{fig:bilateral_convert}. In what follows, we consider several bit-channels of the BSC to illustrate the conversion using this approach.
\end{remark}

\begin{figure}[!ht]
\centering
\begin{tikzpicture}[x=0.8cm,y=1.75cm]
\tikzset{
  m/.style={inner sep=2pt, text opacity=1},
  dot/.style={circle,fill=black,inner sep=1.2pt},
  solid arrow/.style={arrows = {-Latex[width'=0pt .5, length=5pt]}, shorten >=.25pt, shorten <=.25pt, line width=0.6pt},
  double arrow/.style={arrows = {Latex[width'=0pt .5, length=5pt]-Latex[width'=0pt .5, length=5pt]}, shorten >=.25pt, shorten <=.25pt, line width=0.6pt},
  dot/.style={circle, fill=blue!80, inner sep=0, minimum size=3pt}, 
  dash line/.style={shorten >=.25pt, shorten <=.25pt, dash pattern=on 1.2pt off 1.2pt, line width=0.6pt},
  dash arrow/.style={arrows = {-Latex[width'=0pt .5, length=5pt]}, shorten >=.25pt, shorten <=.25pt, dash pattern=on 1.2pt off 1.2pt, line width=0.6pt},
}


\coordinate (l1-0) at (0,1.5);
\coordinate (l1-2) at (4,1.5);

\coordinate (lm-0) at (2,1.5);
\coordinate (lm-1) at (2,1);
\coordinate (lm-2) at (2,0.5);
\coordinate (lm-3) at (2,0);

\coordinate (l2-1) at (2,-0.5);

\node[m] (L1-0) at (l1-0) {$\mathscr{F}_{\mathfrak{a},1}(t)$};
\node[m] (L1-2) at (l1-2) {$\mathscr{F}_{\mathfrak{a},1}(t+1)$};

\node[m] (LM-0) at (lm-0) {$\oplus$};
\node[m] (LM-1) at (lm-1) {$\{\alpha_i, w_i^t\}_{i=1}^{n-1}$};
\node[
  m,
  draw,
  dashed,
  dash pattern=on 1.2pt off 1.2pt,
  rounded corners=3pt,
  inner sep=2.5pt
] (LM-2) at (lm-2) {$\{G(\nu,t)\}_{t=0}^{\infty}$}; 
\node[m] (LM-3) at (lm-3) {$\{\alpha_i, (1-w_i)^\nu(1+w_i)^{-\nu} \}_{i=1}^{n-1}$};

%

\node[m] (L2-1) at (l2-1) {$\mathscr{G}_{|\mathfrak{a}|}(\nu)$};

\draw[solid arrow] (L1-0) -- (LM-0);
\draw[solid arrow] (L1-2) -- (LM-0);
\draw[solid arrow] (LM-0) -- (LM-1);
\draw[dash line] (LM-1) -- (LM-2);
\draw[dash arrow] (LM-2) -- (LM-3);
\draw[solid arrow] (LM-3) -- (L2-1);


\coordinate (r1-0) at (11,2);
\coordinate (r1-1) at (13,2);
\coordinate (r1-2) at (15,2);

\coordinate (rm-0) at (13,1.5);
\coordinate (rm-1) at (13,1);
\coordinate (rm-2) at (13,0.5);
\coordinate (rm-3) at (13,0);

\coordinate (r2-1) at (13,-0.5);


\node[m] (RM-0) at (rm-0) {$\mathscr{G}_{|\mathfrak{a}|}(t)$};
\node[m] (RM-1) at (rm-1) {$\{\alpha_i, z_i^t\}_{i=1}^{n-1}$};
\node[
  m,
  draw,
  dashed,
  dash pattern=on 1.2pt off 1.2pt,
  rounded corners=3pt,
  inner sep=2.5pt
] (RM-2) at (rm-2) {$\{A(s,t)\}_{t=0}^{\infty}$}; 
\node[m] (RM-3) at (rm-3) {$\{\alpha_i, 1/2\cdot(1-z_i)^s(1+z_i)^{1-s} \}_{i=1}^{n-1}$};

\node[m] (R2-1) at (r2-1) {$\mathscr{F}_{\mathfrak{a},1}(s)$};

\draw[solid arrow] (RM-0) -- (RM-1);
\draw[dash line] (RM-1) -- (RM-2);
\draw[dash arrow] (RM-2) -- (RM-3);
\draw[solid arrow] (RM-3) -- (R2-1);

\begin{scope}[shift={(7.5,1.25)}]

    \fill[pattern=dots, pattern color=gray!60] (-1.5,-0.25) rectangle (1.5,0.1);
    
    \draw[->, line width=0.5pt] (-1.5,-0.075) -- (1.5,-0.075) node[right, font=\footnotesize] {$\Re s$};
    \draw[->, line width=0.5pt] (0,-0.25) -- (0,0.1) node[above, yshift=-2.5pt, font=\footnotesize] {$\Im s$};

    \foreach \x in {0,0.15,...,1.35} {
        \fill (\x,-0.075) circle (0.8pt);
    }

\end{scope}


\begin{scope}[shift={(7.5,0.75)}]

    \fill[pattern=crosshatch dots, pattern color=gray!60] (-1.5,-0.25) rectangle (1.5,0.1);
    
    \draw[->, line width=0.5pt] (-1.5,-0.075) -- (1.5,-0.075) node[right, font=\footnotesize] {$\Re \nu$};
    \draw[->, line width=0.5pt] (0,-0.25) -- (0,0.1) node[above, yshift=-2.5pt, font=\footnotesize] {$\Im \nu$};

    \foreach \x in {0,0.15,...,1.35} {
        \fill (\x,-0.075) circle (0.8pt);
    }

\end{scope}

\begin{scope}[shift={(7.5,0.25)}]

  \draw[step=0.2, gray!30, very thin] (-1.5,-0.25) grid (1.5,0.1);
  
  \draw[->, line width=0.5pt] (-1.5,-0.25) -- (1.7,-0.25);
  \draw[->, line width=0.5pt] (-1.5,-0.25) -- (-1.5,0.1);

  \node[anchor=north, font=\footnotesize] at (0,-0.25) {$t$};

  \draw[domain=0:3, samples=50]
    plot ({\x-1.5}, {0.15*0.3^\x - 0.25});
    
  \draw[domain=0:3, samples=50]
    plot ({\x-1.5}, {0.25*0.45^\x - 0.25});
    
  \draw[domain=0:3, samples=50]
    plot ({\x-1.5}, {0.2*0.6^\x - 0.25});

\end{scope}


\end{tikzpicture}
\caption{Left panel: Conversion from $\mathscr{F}_{\mathfrak{a},1}(s)$ to $\mathscr{G}_{|\mathfrak{a}|}(\nu)$. Right panel: Conversion from $\mathscr{G}_{|\mathfrak{a}|}(\nu)$ to $\mathscr{F}_{\mathfrak{a},1}(s)$.}
\label{fig:bilateral_convert}
\end{figure}

\begin{exmp}[$W_{\operatorname{BSC}(p)}$: from $\mathscr{F}_{\mathfrak{a}^{\varoast 2},1}(s)$ to $\mathscr{G}_{|\mathfrak{a}^{\varoast 2}|}(\nu)$]~\label{exmp:bsc_110}
According to Table~\ref{tab:bsc_variable_evo}, for $m=2$ and $t\in\mathbb{Z}_{\geq 0}$, we have $\mathscr{F}_{\mathfrak{a}^{\varoast 2},1}(t)=A^4(t)+4CA^2(t)$ and by symmetry $\mathscr{F}_{\mathfrak{a}^{\varoast 2},2}(-t)=B^4(-t)+4CB^2(-t)=\mathscr{F}_{\mathfrak{a}^{\varoast 2},1}(t+1)=A^4(t+1)+4CA^2(t+1)$. By definition of $A(s)$, $B(s)$, $C$ and $\Delta<0$ in Section~\ref{sec:prelim}, we obtain
\begin{equation}
    \mathscr{F}_{\mathfrak{a}^{\varoast 2},1}(t) + \mathscr{F}_{\mathfrak{a}^{\varoast 2},1}(t+1) = S_4\mathrm{e}^{4\Delta t} + 4CS_2\mathrm{e}^{2\Delta t},
\end{equation}
from which identifying with~\eqref{eq:evo_F2G1_reform2} and~\eqref{eq:evo_F2G1_reform1} it follows that $n^{\varoast 2}=3$, $\alpha^{\varoast 2}_1=4CS_2$, $\alpha^{\varoast 2}_2=S_4$, and $w^{\varoast 2}_1=\mathrm{e}^{2\Delta}>w^{\varoast 2}_2=\mathrm{e}^{4\Delta}$, which implies that $z^{\varoast 2}_1=(1-w^{\varoast 2}_1)/(1+w^{\varoast 2}_1)=D_2/S_2< z^{\varoast 2}_2=D_4/S_4$. Therefore, we have
\begin{equation}~\label{eq:bsc_var_2}
    \mathscr{G}_{|\mathfrak{a}^{\varoast 2}|}(\nu) = 4CD_2^\nu S_2^{1-\nu}+D_4^\nu S_4^{1-\nu}.
\end{equation}
Moreover, the only atom that does not contributes to $\mathscr{G}_{|\mathfrak{a}^{\varoast 2}|}(\nu)$ is at $z=z^{\varoast 2}_0\equiv 0$ with $\alpha^{\varoast 2}_0=6 C^2$, while $\alpha^{\varoast 2}_3=0$ at $z=z^{\varoast 2}_3\equiv 1$, trivially, cf. Table~\ref{tab:bsc_variable_evo}. These values, i.e., $\alpha^{\varoast 2}_0$ resp. $\alpha^{\varoast 2}_3$, follow directly from~\eqref{eq:evo_F_alpha_0} resp.~\eqref{eq:evo_F_alpha_n}, and are determined prior to conversion. One further verifies that $\sum_{i=0}^{3}\alpha^{\varoast 2}_i=1$.
\end{exmp}
\begin{remark}
Example~\ref{exmp:bsc_110} converts $W_{\varoast 2}$ from the variable- to check-domain. Conditioned on the subsequent bit being equal to $0$, the bit-channel $W_{\varoast 2}$ evolves into $W_{\varoast 2 \boxast}$, and the CE then continues via the $\boxast$-convolution.
\end{remark}

\begin{exmp}[$W_{\operatorname{BSC}(p)}$: from $\mathscr{G}_{|\mathfrak{a}^{\varoast 2 \boxast}|}(\nu)$ to $\mathscr{F}_{\mathfrak{a}^{\varoast 2 \boxast},1}(s)$]~\label{exmp:bsc_1101}
By applying~\eqref{eq:evo_G} into~\eqref{eq:bsc_var_2} and~\eqref{eq:evo_G_alpha_0} into $\alpha^{\varoast 2}_0$, we obtain
\begin{gather}
    \mathscr{G}_{|\mathfrak{a}^{\varoast 2 \boxast}|}(\nu) = 16 C^2 D_2^{2 \nu} S_2^{2(1-\nu)}+8 C D_2^\nu D_4^\nu S_2^{1-\nu} S_4^{1-\nu}+D_4^{2 \nu} S_4^{2(1-\nu)}, ~\label{eq:bsc_110} \\
    \alpha^{\varoast 2 \boxast}_0 = 12C^2-36C^4,
\end{gather}
from which identifying with~\eqref{eq:evo_G2F1_reform1} and~\eqref{eq:evo_G2F1_reform} it follows that $n^{\varoast 2 \boxast}=4$, $\alpha_1^{\varoast 2 \boxast}=16 C^2 S_2^2$, $\alpha_2^{\varoast 2 \boxast}=8 C S_2 S_4$, $\alpha_3^{\varoast 2 \boxast}=S_4^2$ and $z_1^{\varoast 2 \boxast}=D_2^2/S_2^2 < z_2^{\varoast 2 \boxast} = D_2D_4/(S_2S_4) < z_3^{\varoast 2 \boxast} =D_4^2/S_4^2$, which also implies that $w_1^{\varoast 2 \boxast} = 4C^2/(S_2^2+D_2^2) > w_2^{\varoast 2 \boxast}=(S_2 S_4-D_2 D_4)/(S_2 S_4+D_2 D_4) > w_3^{\varoast 2 \boxast}=4C^2/(S_4^2+D_4^2)$, while $\alpha_4^{\varoast 2 \boxast}=0$ trivially. One further verifies that $\sum_{i=0}^{4}\alpha^{\varoast 2 \boxast}_i=1$. Therefore, we have
\begin{equation}
    \mathscr{F}_{\mathfrak{a}^{\varoast 2 \boxast},1}(s) = 2^{s+4} C^{2 (s+1)} S_{4}^{1-s}+8 C^{2 s+1} S_{2}^{s} S_{6}^{1-s}+2^s C^{4 s} S_{8}^{1-s}.
\end{equation}
\end{exmp}
\begin{remark}
Example~\ref{exmp:bsc_1101} converts $W_{\varoast 2 \boxast}$ from the check- to variable-domain after the $\boxast$-convolution. Conditioned on the subsequent bit being equal to $1$, the bit-channel $W_{\varoast 2 \boxast}$ evolves into $W_{\varoast 2 \boxast \varoast}$, and the CE then continues via the $\varoast$-convolution.
\end{remark}

\begin{theorem}~\label{thm:CE}
CE derives analytic expressions for the Bhattacharyya parameters of bit-channels at arbitrary levels for a given discrete BMS channel via Procedure~\ref{alg:chan_polar}.
\begin{algorithm}
\floatname{algorithm}{Procedure}
\caption{CE-Based Channel Polarization of a discrete BMS channel}\label{alg:chan_polar}
\begin{algorithmic}[1]            
  \small
  \Require $W\equiv \mathfrak{a}$, $b_1\ldots b_k$ with $b_m\in\{0,1\}\equiv \{\boxast, \varoast\}$
  \Ensure $Z(W_{b_1\ldots b_k})$
  \State $\alpha_0, \alpha_n, \mathscr{G}_{|\mathfrak{a}|}(\nu), \mathscr{F}_{\mathfrak{a},1}(s) \gets \mathfrak{a}$; $\mathscr{F}_{\varphi}\gets 0$; $\mathscr{F}_{\psi,1}(s) \gets 0$
  \State $\mathrm{state_G} \gets (\mathscr{G}_{|\mathfrak{a}|}(\nu), \alpha_0, \alpha_n)$; $\mathrm{state_F} \gets (\mathscr{F}_{\mathfrak{a},1}(s), \alpha_0, \alpha_n)$
  \State $\mathrm{state_{Cr}} \gets (\mathscr{F}_{\varphi}, \mathscr{F}_{\psi,1}(s))$
  \For{$m = 1,\dots,k$}
      \Switch{$b_m$}
        \Case{$\boxast$}
          \State $\mathrm{state_G^\boxast} \gets \mathsf{CheckUpdate}(\mathrm{state_G})$ \Comment{\eqref{eq:evo_G}\eqref{eq:evo_G_alpha_0}\eqref{eq:evo_G_alpha_n}}
          \State $\mathrm{state_G} \gets \mathrm{state_G^\boxast}$
          \If{$m=k \ \textbf{or}\ b_{m+1}=\varoast$}
            \State $\mathrm{state_F} \gets \mathsf{CheckToVar}(\mathrm{state_G})$ \Comment{\eqref{eq:evo_G2F1}}
          \EndIf
        \EndCase
        \Case{$\varoast$}
          \State $\mathrm{state_{Cr}} \gets \mathsf{CrossUpdate}(\mathrm{state_F}[0])$ \Comment{\eqref{eq:phi_zero}\eqref{eq:F_psi_1_F}}
          \State $\mathrm{state_F^\varoast} \gets \mathsf{VarUpdate}(\mathrm{state_{Cr}},\mathrm{state_F})$ \Comment{\eqref{eq:evo_F1}\eqref{eq:evo_F_alpha_0}\eqref{eq:evo_F_alpha_n}}
          \State $\mathrm{state_F} \gets \mathrm{state_F^\varoast}$
          \If{$m<k \ \textbf{and}\ b_{m+1}=\boxast$}
            \State $\mathrm{state_G} \gets \mathsf{VarToCheck}(\mathrm{state_F})$ \Comment{\eqref{eq:evo_F122G}}
          \EndIf
        \EndCase
      \EndSwitch  
  \If{$m = k$}
    \State $\mathfrak{B}(|\mathfrak{a}|) \gets \mathsf{Bhatt}(\mathrm{state_F}[0], \mathrm{state_F}[1])$ \Comment{Corollary~\ref{corol:F_trans_parts}}
  \EndIf
  \EndFor
\end{algorithmic}
\end{algorithm}
\end{theorem}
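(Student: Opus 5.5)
This is a correctness statement about Procedure~\ref{alg:chan_polar}, so I will argue by induction on the level $m$. The invariant is: after processing $b_1\ldots b_m$, the state that is \emph{active} (namely $\mathrm{state_G}$ if $b_m=\boxast$ and the next operation is again $\boxast$, and $\mathrm{state_F}$ otherwise) consists of exactly the entire component and the boundary atoms $\alpha_0,\alpha_n$ of the $|D|$- resp. $D$-density of $W_{b_1\ldots b_m}$. The reduction that makes this work is Lemma~\ref{lemma:absD_dens} and Lemma~\ref{lemma:D_dens}. A discrete BMS density is fully determined by its atoms $\{(\alpha_i,z_i)\}_{i=0}^n$. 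These atoms are in turn equivalent to the triple $(\mathscr{G}_{|\mathfrak{a}|}(\nu),\alpha_0,\alpha_n)$, since a finite exponential sum $\sum\alpha_i z_i^\nu$ with distinct bases determines its coefficients and bases. They are likewise equivalent to $(\mathscr{F}_{\mathfrak{a},1}(s),\alpha_0,\alpha_n)$, since $\mathscr{F}_{\mathfrak{a},2}$ is recovered by Corollary~\ref{corol:F12_sym}. Hence each state is a faithful encoding of the bit-channel. Moreover, every bit-channel of a discrete BMS channel is again a discrete BMS channel: self-convolutions of finitely supported symmetric densities stay finitely supported and symmetric. This keeps the lemmas applicable at every level.

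The base case is Line 1. It sets up the correct components of $\mathfrak{a}$ via Lemma~\ref{lemma:f_a_analy_cont} and Corollary~\ref{corol:f_a_1_2_cont}, together with Lemma~\ref{lemma:D_dens} to pass between $\alpha_i$ and $\beta_{\pm i}$.

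For the inductive step I split on $b_m$.

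\emph{Case $\boxast$.} Lemma~\ref{lemma:G_comp_evo}, equations \eqref{eq:evo_G}--\eqref{eq:evo_G_alpha_n}, gives the $|D|$-components of $\mathfrak{a}^{\boxast}$ exactly from those of $\mathfrak{a}$. This requires $\mathrm{state_G}$ to be current. That holds because either the previous step was also $\boxast$, or the previous $\varoast$ step invoked $\mathsf{VarToCheck}$, whose correctness is \eqref{eq:evo_F122G} of Theorem~\ref{thm:G2F_convert2}. In the first level after initialization, both states are correct.

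\emph{Case $\varoast$.} $\mathsf{CrossUpdate}$ computes $\mathscr{F}_\varphi$ and $\mathscr{F}_{\psi,1}$ from $\mathscr{F}_{\mathfrak{a},1}$ alone. It forms $\varphi(\tau)=\mathscr{F}_{\mathfrak{a},1}(\tau)\mathscr{F}_{\mathfrak{a},1}(1-\tau)$ and reads off the DC term and the negative-frequency terms, which is valid by Corollaries~\ref{corol:F_dc_comp}--\ref{corol:F_mellin_comp}. $\mathsf{VarUpdate}$ is then exactly Lemma~\ref{lemma:F_comp_evo}. Here $\mathrm{state_F}$ is current, either because the previous step was $\varoast$ or because $\mathsf{CheckToVar}$ was invoked, which is valid by \eqref{eq:evo_G2F1}. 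The switch conditions in the procedure guarantee that a conversion happens precisely when the domain changes, and also at $m=k$ after a $\boxast$ step. At termination $\mathrm{state_F}$ is therefore current, and Corollary~\ref{corol:F_trans_parts} yields $Z(W_{b_1\ldots b_k})=\mathfrak{B}(\mathfrak{a}^{b_1\ldots b_k})=2\mathscr{F}_{\mathfrak{a}^{b_1\ldots b_k},1}(1/2)+\alpha_0^{b_1\ldots b_k}$.

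Finally I must show the output is an \emph{analytic expression}, i.e.\ a finite closed form in the channel parameters rather than merely a convergent value. All updates are polynomial operations on finite exponential sums. The conversions, although stated as infinite series, can be carried out in closed form by termwise identification with \eqref{eq:evo_F2G1_reform2} and \eqref{eq:evo_G2F1_reform1}, each atom mapping via $z=(1-w)/(1+w)$. I expect this to be the main obstacle: justifying that the alternating series in Theorem~\ref{thm:G2F_convert2} may be applied atom by atom and resummed. My plan is to use linearity together with the absolute convergence of each single-atom expansion for $w,z\in(0,1)$, since the sums are finite. In that way each conversion reduces to the explicit maps $\{(\alpha_i,w_i)\}\leftrightarrow\{(\alpha_i,z_i)\}$, and the procedure terminates with a finite closed form after $k$ steps.
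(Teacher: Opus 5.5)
Your proposal is correct and follows the paper's own route. The paper gives no separate proof of Theorem~\ref{thm:CE} beyond the results cited in the procedure's comments (Lemmas~\ref{lemma:G_comp_evo} and~\ref{lemma:F_comp_evo}, Corollaries~\ref{corol:F_dc_comp}--\ref{corol:F_mellin_comp}, Theorem~\ref{thm:G2F_convert2}, Corollary~\ref{corol:F_trans_parts}) plus the atomwise identification via~\eqref{eq:evo_F2G1_reform2} and~\eqref{eq:evo_G2F1_reform1}; your induction on $m$ with the state-currency invariant just makes that assembly explicit, and the step you flag as the main obstacle is already handled, since the paper's proof of Theorem~\ref{thm:G2F_convert2} reduces to exactly those single-atom expansions (and the atom maps $z_i=(1-w_i)/(1+w_i)$, $\beta_i=(1+z_i)\alpha_i/2$ from Lemma~\ref{lemma:f_a_w_analy_cont} give the conversions directly, without any series).
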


\section{CE Analysis of the BSC and Bhattacharyya Parameters}~\label{sec:discuss}

In this section, we analyze $ \mathfrak{a}\equiv W_{\operatorname{BSC}(p)}$ via Theorem~\ref{thm:CE}, where the BSC serves as a representative discrete BMS channel. We explicitly derive analytic expressions for the Bhattacharyya parameters of all bit-channels of $W_{\operatorname{BSC}(p)}$ for levels $k=3,4$. CE reveals prefix/suffix-based recursions for the BSC, reducing the analysis of all bit-channels to a specific subclass, whose simplest case is further analyzed combinatorially and probabilistically, enabling a series-expansion approach that bypasses exponential complexity.

\subsection{Prefix-\texorpdfstring{$\boxast$}{boxast}- and suffix-\texorpdfstring{$\varoast$}{varoast}-based recursions}
\begin{exmp}
At level $k=3$, the Bhattacharyya parameters are given by
\begin{align}
    \mathfrak{B}(|\mathfrak{a}^{\boxast 3}|) &= (1- M_{16})^{1/2}, \\
    \Ba[\boxast 2 \varoast] &= 1-M_8, \\
    \mathfrak{B}(|\mathfrak{a}^{\boxast \varoast \boxast}|) &= \frac{1}{4}\left(1-M_4\right)\left(\left(M_{8}+6 M_4+1\right)^{1 / 2}+M_4+3\right), \\
    \Ba[\boxast \varoast 2] &= (1-M_4)^2, \\
    \Ba[\varoast \boxast 2]  &= \left(S_2^8-M_8\right)^{1/2}+1-S_2^4, \\
    \Ba[\varoast \boxast \varoast] &= \left(\left(S_2^4-M_4\right)^{1/2}+4C\bar{C}\right)^2, \\
    \Ba[\varoast 2 \boxast] &= 32 \sqrt{2} C^3 S_4^{1/2} + 16 C^{2} S_2^{1/2} S_6^{1/2}  +2 \sqrt{2} C^2 S_8^{1/2}+12 C^2-36 C^4, ~\label{eq:Ba_110} \\
    \Ba[\varoast 3] &= 256C^4.
\end{align}
The Bhattacharyya parameters at level $k=4$ are provided in Section~\ref{sec:BSC_34}. Their evaluations across different values of $p\in(0,1/2)$ are illustrated in Fig.~\ref{fig:bhatt_k_3_20},~\ref{fig:bhatt_k_4_20},~\ref{fig:bhatt_k_3_2k} and~\ref{fig:bhatt_k_4_2k}.
\end{exmp}
\begin{proof}
Please refer to Section~\ref{sec:BSC_34}.
\end{proof}
\begin{remark}
Note that the $\boxast$-convolution upon $W_{\operatorname{BSC}}(p)$ results again in another $W_{\operatorname{BSC}}(p^{\boxast})$ with crossover $p^{\boxast}=2C$, cf. Example~\ref{emp:bsc_check}. We denote this identification under the $\boxast$-convolution via the parameter mapping $2C\gets p$. One further verifies that for every $i\in\mathbb{Z}_{>0}$, $M_{2i}\gets M_i$, $S_2^i+(2C)^i\gets S_i$, $S_2^i-(2C)^i\gets D_i$ and $(1-M_4)/4\gets C$. Let $\Ba[\ldots]\triangleq\mathfrak{B}_{|\mathfrak{a^{\ldots}}|}(p)$. Therefore, the Bhattacharyya parameters of $\mathfrak{a}^{\boxast\ldots}$ with \textit{prefix}-$\boxast$ can be obtained from those of $\mathfrak{a}^{\ldots}$ via the corresponding parameter mapping, which suggests that the Bhattacharyya parameters of the \textit{first} $2^{k-1}$ bit-channels at level $k$ can be recursively derived from those of \textit{all} bit-channels at level $k-1$, i.e., $\Ba[\boxast \ldots]\triangleq\mathfrak{B}_{|\mathfrak{a^{\boxast \ldots}}|}(p)=\mathfrak{B}_{|\mathfrak{a^{\ldots}}|}(2C)$. On the other hand, the Bhattacharyya parameters of $\mathfrak{a}^{\ldots\varoast}$ with \textit{suffix}-$\varoast$ can be obtained by \textit{squaring} those of $\mathfrak{a}^{\ldots}$, cf.~\cite[Lemma 4.64]{richardson2008modern}, which suggests that the Bhattacharyya parameters of the $2^{k-1}$ bit-channels with \textit{even} indices at level $k$ can be directly derived from those of \textit{all} bit-channels at level $k-1$, i.e., $\Ba[\ldots\varoast]\triangleq \mathfrak{B}_{|\mathfrak{a^{\ldots\varoast}}|}(p) = \mathfrak{B}^2_{|\mathfrak{a^{\ldots}}|}(p)$. Such recursive structure of Bhattacharyya parameters is illustrated in Fig.~\ref{fig:Bhatt_recur}. For example,
\begin{align}
     Z(W_{16}^{(4)})\equiv \Ba[\boxast 2 \varoast 2] \triangleq \mathfrak{B}_{|\mathfrak{a}^{\boxast 2 \varoast 2}|}(p) &= \mathfrak{B}_{|\mathfrak{a}^{\boxast \varoast 2}|}(2C) = \mathfrak{B}^2_{|\mathfrak{a}^{\boxast 2 \varoast}|}(p) = \left(1-M_8\right)^2, \\
     Z(W_{16}^{(5)})\equiv \Ba[\boxast \varoast \boxast 2] \triangleq \mathfrak{B}_{|\mathfrak{a}^{\boxast \varoast \boxast 2}|}(p) &= \mathfrak{B}_{|\mathfrak{a}^{\varoast \boxast 2}|}(2C) = 1-\frac{1}{16}\left(1+M_4\right)^4+\frac{1}{16}\left(\left(1+M_4\right)^8-256 M_{16}\right)^{1/2}, \\
     Z(W_{16}^{(10)})\equiv \Ba[\varoast \boxast 2 \varoast] \triangleq \mathfrak{B}_{|\mathfrak{a}^{\varoast \boxast 2 \varoast}|}(p) &= \mathfrak{B}^2_{|\mathfrak{a}^{\varoast \boxast 2}|}(p) = \left(\left(S_2^8-M_8\right)^{1/2}+1-S_2^4\right)^2.
\end{align}

\begin{figure}
    \centering
    \begin{subfigure}[t]{0.48\linewidth}
        \centering
        \includegraphics[width=\linewidth]{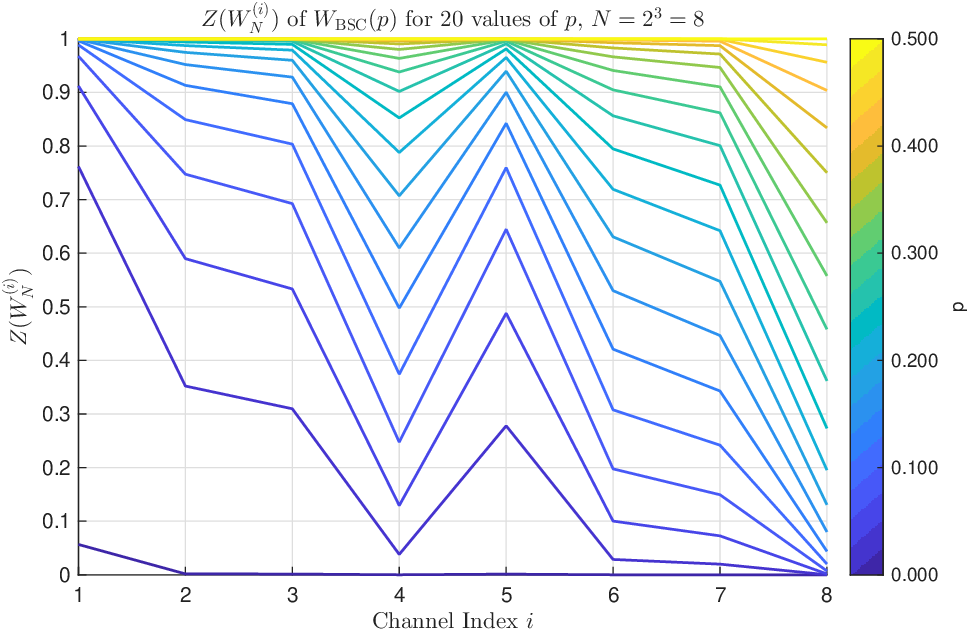}
        \caption{$k=3$}
        \label{fig:bhatt_k_3_20}
    \end{subfigure}
    \hfill
    \begin{subfigure}[t]{0.48\linewidth}
        \centering
        \includegraphics[width=\linewidth]{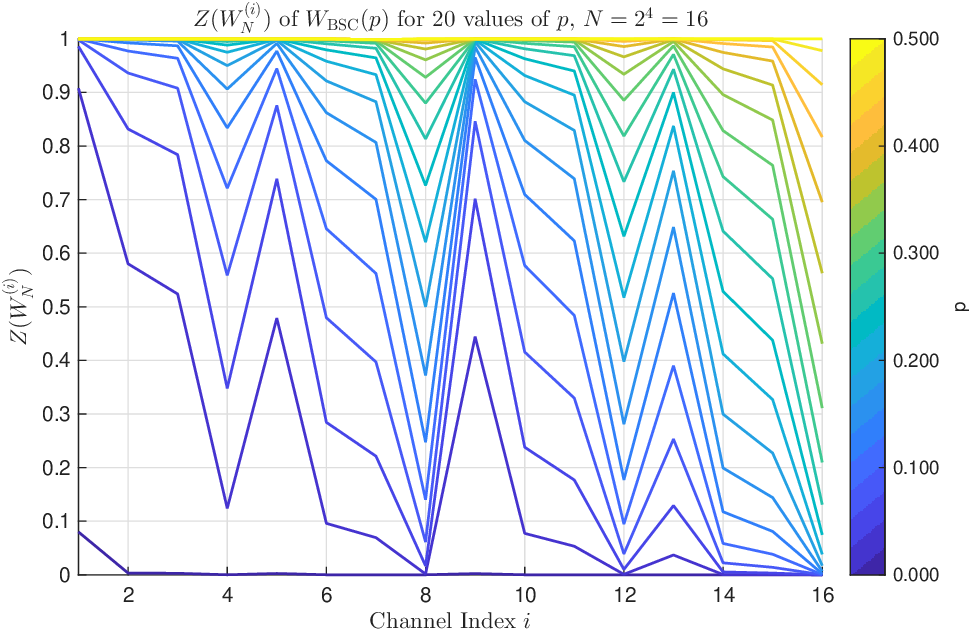}
        \caption{$k=4$}
        \label{fig:bhatt_k_4_20}
    \end{subfigure}
    \caption{Bhattacharyya parameters of bit-channels of $W_{\operatorname{BSC}(p)}$ for $20$ values of $p\in(0,1/2)$.}
    \label{fig:bhatt_k_3_4_20}
\end{figure}

\begin{figure}
    \centering
    \begin{subfigure}[t]{0.48\linewidth}
        \centering
        \includegraphics[width=\linewidth]{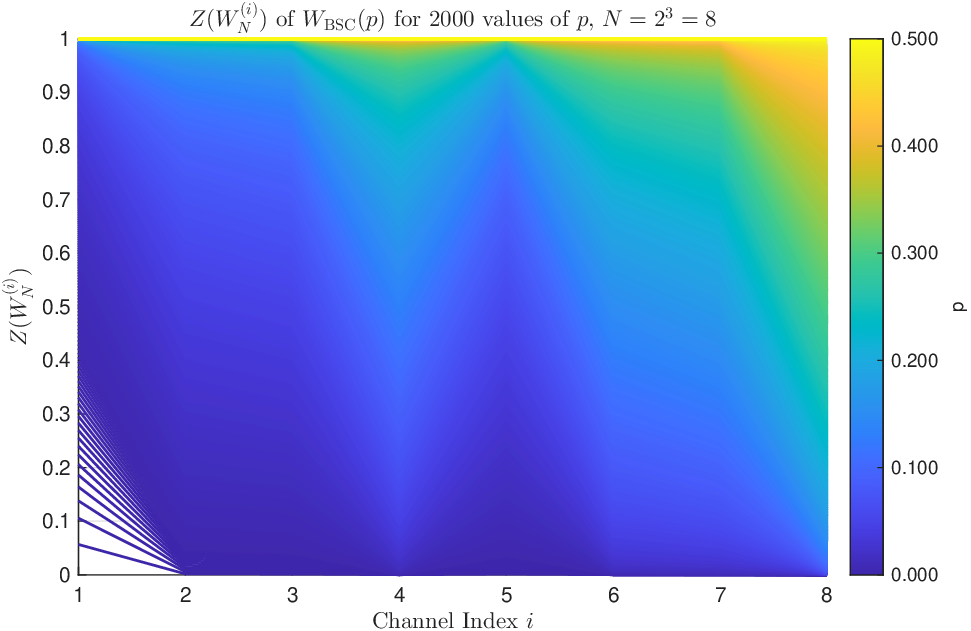}
        \caption{$k=3$}
        \label{fig:bhatt_k_3_2k}
    \end{subfigure}
    \hfill
    \begin{subfigure}[t]{0.48\linewidth}
        \centering
        \includegraphics[width=\linewidth]{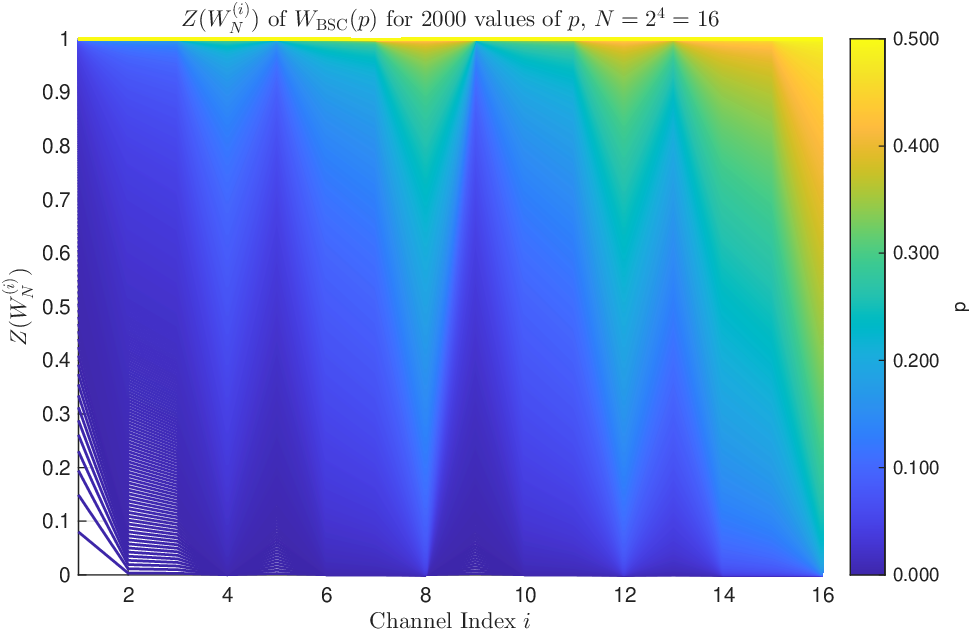}
        \caption{$k=4$}
        \label{fig:bhatt_k_4_2k}
    \end{subfigure}
    \caption{Bhattacharyya parameters of bit-channels of $W_{\operatorname{BSC}(p)}$ for $2000$ values of $p\in(0,1/2)$.}
    \label{fig:bhatt_k_3_4_2k}
\end{figure}

\begin{figure}[!ht]
\centering
\begin{tikzpicture}[x=2.2cm,y=0.3cm]
\tikzset{
  dot/.style={circle,fill=black,inner sep=1.2pt},
  hdot/.style={circle,draw=black,fill=none,inner sep=1.2pt},
  solid arrow/.style={arrows = {-Latex[width'=0pt .5, length=5pt]}, shorten >=.25pt, shorten <=.25pt},
  rsolid arrow/.style={arrows = {-Latex[width'=0pt .5, length=5pt]}, draw=red, line width=0.8pt, opacity=0.5, shorten >=.25pt, shorten <=.25pt},
  dash arrow/.style={arrows = {-Latex[width'=0pt .5, length=5pt]}, dash pattern=on 2pt off 3pt, shorten >=.25pt, shorten <=.25pt},
  lab/.style={right=1pt, font=\scriptsize}
}
\def\L{15}

\draw (0,0) -- (0,\L);
\draw (1,0) -- (1,\L);
\draw (2,0) -- (2,\L);
\draw (3,0) -- (3,\L);

\foreach \i in {1,2} {
  \node[dot] (L1-\i) at (0,{ \L - (\i-0.5)*\L/2 }) {};
}

\foreach \i in {1,2,4} {
  \node[dot] (L2-\i) at (1,{ \L - (\i-0.5)*\L/4 }) {};
}
\node[hdot] (L2-3) at (1,{ \L - (3-0.5)*\L/4 }) {};

\foreach \i in {1,2,3,4,6,8} {
  \node[dot] (L3-\i) at (2,{ \L - (\i-0.5)*\L/8 }) {};
}
\node[hdot] (L3-5) at (2,{ \L - (5-0.5)*\L/8 }) {};
\node[hdot] (L3-7) at (2,{ \L - (7-0.5)*\L/8 }) {};

\foreach \i in {1,2,...,8,10,12,14,16} {
  \node[dot] (L4-\i) at (3,{ \L - (\i-0.5)*\L/16 }) {};
}
\node[hdot] (L4-9) at (3,{ \L - (9-0.5)*\L/16 }) {};
\node[hdot] (L4-11) at (3,{ \L - (11-0.5)*\L/16 }) {};
\node[hdot] (L4-13) at (3,{ \L - (13-0.5)*\L/16 }) {};
\node[hdot] (L4-15) at (3,{ \L - (15-0.5)*\L/16 }) {};

\node[lab] at (L1-1) {$0$};
\node[lab] at (L1-2) {$1$};

\node[lab] at (L2-1) {$00$};
\node[lab] at (L2-2) {$01$};
\node[lab] at (L2-3) {$10$};
\node[lab] at (L2-4) {$11$};

\node[lab] at (L3-1) {$000$};
\node[lab] at (L3-2) {$001$};
\node[lab] at (L3-3) {$010$};
\node[lab] at (L3-4) {$011$};
\node[lab] at (L3-5) {$100$};
\node[lab] at (L3-6) {$101$};
\node[lab] at (L3-7) {$110$};
\node[lab] at (L3-8) {$111$};

\node[lab] at (L4-1)  {$0000 \ (1)$};
\node[lab] at (L4-2)  {$0001 \ (2)$};
\node[lab] at (L4-3)  {$0010 \ (3)$};
\node[lab] at (L4-4)  {$0011 \ (4)$};
\node[lab] at (L4-5)  {$0100 \ (5)$};
\node[lab] at (L4-6)  {$0101 \ (6)$};
\node[lab] at (L4-7)  {$0110 \ (7)$};
\node[lab] at (L4-8)  {$0111 \ (8)$};
\node[lab] at (L4-9)  {$1000 \ (9)$};
\node[lab] at (L4-10) {$1001 \ (10)$};
\node[lab] at (L4-11) {$1010 \ (11)$};
\node[lab] at (L4-12) {$1011 \ (12)$};
\node[lab] at (L4-13) {$1100 \ (13)$};
\node[lab] at (L4-14) {$1101 \ (14)$};
\node[lab] at (L4-15) {$1110 \ (15)$};
\node[lab] at (L4-16) {$1111 \ (16)$};

\draw[solid arrow] (L1-1) -- (L2-1);
\draw[solid arrow] (L1-2) -- (L2-2);

\draw[solid arrow] (L2-1) -- (L3-1);
\draw[solid arrow] (L2-2) -- (L3-2);
\draw[solid arrow] (L2-3) -- (L3-3);
\draw[solid arrow] (L2-4) -- (L3-4);

\draw[solid arrow] (L3-1) -- (L4-1);
\draw[solid arrow] (L3-2) -- (L4-2);
\draw[solid arrow] (L3-3) -- (L4-3);
\draw[solid arrow] (L3-4) -- (L4-4);
\draw[solid arrow] (L3-5) -- (L4-5);
\draw[solid arrow] (L3-6) -- (L4-6);
\draw[solid arrow] (L3-7) -- (L4-7);
\draw[solid arrow] (L3-8) -- (L4-8);

\draw[dash arrow] (L1-1) -- (L2-2);
\draw[dash arrow] (L1-2) -- (L2-4);

\draw[dash arrow] (L2-1) -- (L3-2);
\draw[dash arrow] (L2-2) -- (L3-4);
\draw[dash arrow] (L2-3) -- (L3-6);
\draw[dash arrow] (L2-4) -- (L3-8);

\draw[dash arrow] (L3-1) -- (L4-2);
\draw[dash arrow] (L3-2) -- (L4-4);
\draw[dash arrow] (L3-3) -- (L4-6);
\draw[dash arrow] (L3-4) -- (L4-8);
\draw[dash arrow] (L3-5) -- (L4-10);
\draw[dash arrow] (L3-6) -- (L4-12);
\draw[dash arrow] (L3-7) -- (L4-14);
\draw[dash arrow] (L3-8) -- (L4-16);




\end{tikzpicture}
\caption{Recursions of Bhattacharyya parameters: solid arrows (prefix-$\boxast$) denote parameter mapping, while dashed lines (suffix-$\varoast$) denote squaring.}
\label{fig:Bhatt_recur}
\end{figure}
\end{remark}
However, this recursive structure relies on the availability of the Bhattacharyya parameters of some \textit{base} channels (of size $2^{m-2}$ at level $m$), which have neither prefix-$\boxast$ nor suffix-$\varoast$ (or equivalently, they have both prefix-$\varoast$ and suffix-$\boxast$), and are indicated by the hollow dots in Fig.~\ref{fig:Bhatt_recur}. In general, the exact Bhattacharyya parameters of these \textit{base} channels should first be obtained via the CE prescribed in Theorem~\ref{thm:CE}, and typically exhibit a complicated structure, e.g.,~\eqref{eq:Ba_110},~\eqref{eq:Ba_1010},~\eqref{eq:Ba_1100}, and~\eqref{eq:Ba_1110}. Consequently, to derive the Bhattacharyya parameters of \textit{all} bit-channels at level $k$, the CE should be applied to the \textit{last} $2^{m-1}$ bit-channels with \textit{prefix}-$\varoast$ at each level $m$ up to $k$, so as to preserve the characterization in terms of $\Fai(s)$, $\Ga(\nu)$ and boundary atoms, from which $\Ba$ is derived.

\subsection{Prefix-\texorpdfstring{$\varoast$}{varoast}- and suffix-\texorpdfstring{$\boxast$}{boxast}-based analysis}
In this subsection, we analyze the simplest form among base channels $\mathfrak{a}^{\varoast \ldots \boxast}$, i.e., $\mathfrak{a}^{\varoast q \boxast \ell}$ with $q\geq 2$ and $\ell\geq0$. After applying the $q$-fold $\varoast$-convolution to $\mathfrak{a}\equiv W_{\operatorname{BSC}(p)}$, we obtain characterizations of the resultant bit-channels, as stated next.

\begin{lemma}~\label{lemma:q_base}
Let $q\geq 2$, then
\begin{align}
    \Ga[\varoast q](\nu)&=\sum_{i=0}^{Q-1} \binom{2Q}{i} C^{i}S_{r_i}^{1-\nu}D_{r_i}^{\nu}, ~\label{eq:G_q_base} \\
    \Fai[\varoast q](s)&=C^Q\sum_{i=0}^{Q-1}\binom{2Q}{i} \left(S_{r_i}-D_{r_i}\right)^{s-1/2}\left(S_{r_i}+D_{r_i}\right)^{1/2-s}, ~\label{eq:F_q_base} \\
    \Ba[\varoast q]&=2^{2Q}C^Q, ~\label{eq:Ba_q_base}
\end{align}
where $r_i\triangleq2Q-2i$ with $Q\triangleq 2^{q-1}$ and $i=0,\ldots, Q-1$.
\end{lemma}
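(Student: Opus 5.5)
The plan is to avoid iterating Lemma~\ref{lemma:F_comp_evo} $q$ times. Instead, I would identify the $|D|$-density of $\mathfrak{a}^{\varoast q}$ in closed form and then read off \eqref{eq:G_q_base}--\eqref{eq:Ba_q_base} from Lemma~\ref{lemma:f_a_analy_cont}, Corollary~\ref{corol:f_a_1_2_cont} and Corollary~\ref{corol:F_trans_parts}.

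\textbf{Step 1: closed form of the $L$-density.} The $q$-fold self-$\varoast$-convolution of $\mathfrak{a}\equiv W_{\operatorname{BSC}(p)}$ is the variable-node combination of $2^q=2Q$ independent copies of the channel. Its $L$-density is therefore that of $\sum_{k=1}^{2Q} L_k$ with $L_k$ i.i.d.\ (conditioned on $X=1$). Each $L_k$ equals $-\Delta>0$ with probability $\bar p$ and $\Delta$ with probability $p$. If $j\sim\mathrm{Bin}(2Q,p)$ counts the flipped copies, then $L=(2Q-2j)(-\Delta)$.

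\textbf{Step 2: the $|D|$-density.} Write $r_j\triangleq 2Q-2j$. Since $|L|=r_j|\Delta|$ for both $j$ and $2Q-j$, the value of $|D|$ is
\[
\tanh\bigl(r_j|\Delta|/2\bigr)=\frac{\bar p^{r_j}-p^{r_j}}{\bar p^{r_j}+p^{r_j}}=\frac{D_{r_j}}{S_{r_j}},
\qquad j=0,\ldots,Q-1 .
\]
The atom at $z=0$ comes from $j=Q$. Merging the outcomes $j$ and $2Q-j$, the interior atoms and their masses are
\[
z_j=D_{r_j}/S_{r_j},\qquad \alpha_j=\binom{2Q}{j}\bigl(p^j\bar p^{2Q-j}+p^{2Q-j}\bar p^j\bigr)=\binom{2Q}{j}C^jS_{r_j},
\]
and the boundary masses are
\[
\alpha_0^{\varoast q}=\binom{2Q}{Q}C^Q,\qquad \alpha_n^{\varoast q}=0 .
\]
The interior atoms are distinct and lie in $(0,1)$, because $D_r/S_r$ is strictly increasing in $r$ and strictly less than $1$; after reindexing by increasing $z$ this is exactly the form required by Lemma~\ref{lemma:absD_dens}. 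Substituting into \eqref{eq:atom2G} gives \eqref{eq:G_q_base}:
\[
\sum_j \alpha_j z_j^\nu=\sum_j\binom{2Q}{j}C^jS_{r_j}^{1-\nu}D_{r_j}^\nu .
\]

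\textbf{Step 3: the entire function $\mathscr{F}_{\mathfrak{a}^{\varoast q},1}(s)$.} By Lemma~\ref{lemma:D_dens}, $\beta_j=(1+z_j)\alpha_j/2=\binom{2Q}{j}C^j\bar p^{r_j}$, and $w_j=(1-z_j)/(1+z_j)=(p/\bar p)^{r_j}$. Then \eqref{eq:atom2F12} gives
\[
\mathscr{F}_{\mathfrak{a}^{\varoast q},1}(s)=\sum_j\binom{2Q}{j}C^j\bar p^{r_j(1-s)}p^{r_j s}.
\]
To match \eqref{eq:F_q_base}, use $S_r-D_r=2p^r$ and $S_r+D_r=2\bar p^r$, so the factors of $2$ cancel. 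Then use $C^Q=C^{j}(p\bar p)^{r_j/2}$, which holds because $r_j/2=Q-j$. Together these give
\[
C^Q(2p^{r_j})^{s-1/2}(2\bar p^{r_j})^{1/2-s}=C^jp^{r_js}\bar p^{r_j(1-s)},
\]
as required.

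\textbf{Step 4: the Bhattacharyya parameter.} The quickest route is the suffix-$\varoast$ squaring rule (cf.~\cite[Lemma 4.64]{richardson2008modern}) iterated $q$ times from $\mathfrak{B}(\mathfrak{a})=2\sqrt{C}$. This yields $(2\sqrt C)^{2^q}=2^{2Q}C^Q$. As a consistency check, I would also evaluate Corollary~\ref{corol:F_trans_parts}, $2\mathscr{F}_{\mathfrak{a}^{\varoast q},1}(1/2)+\alpha_0^{\varoast q}=C^Q\bigl(2\sum_{j<Q}\binom{2Q}{j}+\binom{2Q}{Q}\bigr)$. The bracket equals $\sum_{j=0}^{2Q}\binom{2Q}{j}=2^{2Q}$ by the symmetry of binomial coefficients.

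\textbf{Main obstacle.} No step is deep. The point needing the most care is Step~1: justifying that the CE components defined in Lemma~\ref{lemma:F_comp_evo} agree with those of the explicit $2Q$-fold convolution. This holds because $\varoast$ is the $L$-density convolution, and the CE components are uniquely determined by the density through Lemmas~\ref{lemma:f_a_analy_cont} and~\ref{lemma:f_a_w_analy_cont}. A second subtlety is the bookkeeping of merging $j$ with $2Q-j$ and isolating the $j=Q$ atom at $z_0\equiv0$. I would cross-check the bookkeeping against Table~\ref{tab:bsc_variable_evo} for $q=2,3$; for example, $q=2$ gives $\alpha_0=6C^2$ and atoms with masses $S_4$ and $4CS_2$, matching Example~\ref{exmp:bsc_110}.
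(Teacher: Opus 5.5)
Your proof is correct, but it takes a different route from the paper's.

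\textbf{The paper's route.} It obtains $\mathscr{F}_{\mathfrak{a}^{\varoast q},1}(s)=\sum_{i=0}^{Q-1}\binom{2Q}{i}C^iA^{r_i}(s)$ ``by induction from Table~\ref{tab:bsc_variable_evo}''. In effect, this extrapolates the pattern produced by the CE update of Lemma~\ref{lemma:F_comp_evo}, with $\mathscr{F}_{\varphi}$ and $\mathscr{F}_{\psi,1}$ extracted spectrally. It then rewrites $A^{r}(s)=C^{r/2}(S_r-D_r)^{s-1/2}(S_r+D_r)^{1/2-s}$. Next it obtains $\mathscr{G}_{|\mathfrak{a}^{\varoast q}|}(\nu)$ through the variable-to-check identification \eqref{eq:evo_F122G} of Theorem~\ref{thm:G2F_convert2}. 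Finally it evaluates $\mathfrak{B}=1-\mathscr{G}(0)+2\mathscr{F}_1(1/2)$, using $\mathscr{G}(0)=1-\binom{2Q}{Q}C^Q$ from the binomial theorem.

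\textbf{Your route.} You read off the entire $|D|$-density of $\mathfrak{a}^{\varoast q}$ from the $\mathrm{Bin}(2Q,p)$ count of flipped copies. This gives $\mathscr{G}$, $\mathscr{F}_1$, $\alpha_0=\binom{2Q}{Q}C^Q$ and $\alpha_n=0$ in one stroke.

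\textbf{Comparison.} Your route supplies an actual proof of the closed form; the paper's induction over $q$ is asserted rather than carried out, and iterating Lemma~\ref{lemma:F_comp_evo} symbolically would be laborious. It also needs neither the spectral DC/$\psi$ extraction nor the conversion theorem. In addition, it exposes from the start the structure $z=\tanh((Q-i)|\Delta|)$ with $K\sim\mathrm{Bin}(2Q,p)$, which the paper only surfaces later in the proof of Proposition~\ref{prop:Ba_G_L}. What the paper's route offers is a demonstration of the CE pipeline itself (variable-domain update followed by conversion), which is the paper's purpose. Your consistency check in Step 4 coincides with the paper's final binomial-sum computation.

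\textbf{On your ``main obstacle''.} It is not really an obstacle. The quantities $\mathscr{G}$, $\mathscr{F}_{\cdot,1}$, $\alpha_0$ and $\alpha_n$ are defined as functionals of the density, and $\varoast$ is convolution of $L$-densities, as used in the proof of Lemma~\ref{lemma:F_comp_evo}. Agreement therefore holds by definition.
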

\begin{proof}
Please refer to Section~\ref{sec:q_base}.
\end{proof}

Next, applying $\ell$ successive $\boxast$-convolutions yields the following characterizations.

\begin{lemma}~\label{lemma:GF_bin_base}
Let $q\geq 2$, and $\ell\geq0$, then
\begin{gather}
    \Ga[\varoast q \boxast \ell](\nu)= \sum_{\mathbf{j}\in\mathcal{J}_{QL}}R_{\mathbf{j}} \cdot C^{\alpha_\mathbf{j}}\prod_{i=0}^{Q-1}\left(S_{r_i}^{j_{r_i}(1-\nu)}D_{r_i}^{j_{r_i}\nu
    }\right)=\left(\sum_{i=0}^{Q-1} \binom{2Q}{i} C^{i}S_{r_i}^{1-\nu}D_{r_i}^{\nu} \right)^L, ~\label{eq:G_10_base} \\
    \Fai[\varoast q \boxast \ell](s)= \frac{1}{2}\sum_{\mathbf{j}\in\mathcal{J}_{QL}}R_{\mathbf{j}} \cdot C^{\alpha_\mathbf{j}}\left(\prod_{i=0}^{Q-1}S_{r_i}^{j_{r_i}}-\prod_{i=0}^{Q-1}D_{r_i}^{j_{r_i}}\right)^{s}\left(\prod_{i=0}^{Q-1}S_{r_i}^{j_{r_i}}+\prod_{i=0}^{Q-1}D_{r_i}^{j_{r_i}}\right)^{1-s},~\label{eq:F_10_base}
\end{gather}
where $\mathcal{J}_{QL}\triangleq \left\{ (j_{r_{Q-1}},\ldots, j_{r_{0}})\in\mathbb{Z}_{\geq 0}^Q: \sum_{i=0}^{Q-1}j_{r_i}=L\triangleq 2^{\ell}, Q\triangleq 2^{q-1}\right\}$ with $\left|\mathcal{J}_{QL}\right|=\binom{L+Q-1}{Q-1}$, $\alpha_{\mathbf{j}} \triangleq \sum_{i=0}^{Q-1}i\cdot j_{r_i}$, $r_i\triangleq 2Q-2i$, and $R_{\mathbf{j}} \triangleq \frac{L!}{\prod_{i=0}^{Q-1}j_{r_i}!}\prod_{i=0}^{Q-1}\binom{2Q}{i}^{j_{r_i}}$.
\end{lemma}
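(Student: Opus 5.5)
The plan is to induct on $\ell$ using the check-domain evolution of Lemma~\ref{lemma:G_comp_evo}, and then to recover $\Fai[\varoast q \boxast \ell](s)$ by reading off atoms and applying the pointwise identity~\eqref{eq:evo_G2F1_reform1}.

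\textbf{Base case and boundary atoms.} For $\ell=0$ we have $L=1$, and~\eqref{eq:G_10_base} reduces to~\eqref{eq:G_q_base} of Lemma~\ref{lemma:q_base}. The boundary atom at $z=1$ of $\mathfrak{a}^{\varoast q}$ vanishes. Indeed, $\alpha_n=0$ for the BSC, and~\eqref{eq:evo_F_alpha_n} maps $0\mapsto 0$, so $\alpha_n^{\varoast q}=0$. By~\eqref{eq:evo_G_alpha_n}, $\alpha_n$ stays $0$ under every subsequent $\boxast$-convolution.

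\textbf{Inductive step for $\mathscr{G}$.} With $\alpha_n=0$, \eqref{eq:evo_G} simplifies to $\mathscr{G}_{|\mathfrak{a}^{\boxast}|}(\nu)=\mathscr{G}^2_{|\mathfrak{a}|}(\nu)$ for all $\nu\in\mathbb{C}$. Induction on $\ell$ then gives $\Ga[\varoast q \boxast \ell](\nu)=\bigl(\Ga[\varoast q](\nu)\bigr)^{2^\ell}$, which is the right-hand expression in~\eqref{eq:G_10_base}. The middle expression follows from the multinomial theorem applied to the $L$-th power of a sum of $Q$ terms. A multi-index $\mathbf{j}\in\mathcal{J}_{QL}$ contributes the coefficient $\frac{L!}{\prod_i j_{r_i}!}\prod_i\binom{2Q}{i}^{j_{r_i}}=R_{\mathbf{j}}$, a power $C^{\sum_i i j_{r_i}}=C^{\alpha_{\mathbf{j}}}$, and $\prod_i S_{r_i}^{j_{r_i}(1-\nu)}D_{r_i}^{j_{r_i}\nu}$. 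The count $|\mathcal{J}_{QL}|=\binom{L+Q-1}{Q-1}$ is stars and bars.

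\textbf{Transfer to $\mathscr{F}$.} Write each multinomial term as $\alpha_{\mathbf{j}}' z_{\mathbf{j}}^{\nu}$ with
\[
\alpha'_{\mathbf{j}}=R_{\mathbf{j}}C^{\alpha_{\mathbf{j}}}\prod_i S_{r_i}^{j_{r_i}}, \qquad z_{\mathbf{j}}=\prod_i D_{r_i}^{j_{r_i}}\Big/\prod_i S_{r_i}^{j_{r_i}}\in(0,1).
\]
Comparing with~\eqref{eq:atom2G}, $\mathscr{G}_{|\mathfrak{a}|}(\nu)=\sum_i\alpha_i z_i^\nu$, these are the interior atoms of the $|D|$-density. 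Equal values of $z_{\mathbf{j}}$ would merge, but the sum formula is unaffected. Corollary~\ref{corol:F_dc_comp}'s definition, i.e.\ Corollary~\ref{corol:f_a_1_2_cont} with $\beta_i=(1+z_i)\alpha_i/2$ and $w_i=(1-z_i)/(1+z_i)$, gives
\[
\mathscr{F}_{\mathfrak{a},1}(s)=\sum_i \tfrac12\alpha_i(1-z_i)^s(1+z_i)^{1-s},
\]
which is~\eqref{eq:evo_G2F1} combined with~\eqref{eq:evo_G2F1_reform1}. Substituting $\alpha'_{\mathbf{j}}$ and $z_{\mathbf{j}}$ and using the homogeneity $s+(1-s)=1$ absorbs the factor $\prod_i S_{r_i}^{j_{r_i}}$. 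This yields $\tfrac12 R_{\mathbf{j}}C^{\alpha_{\mathbf{j}}}(\prod S^{j}-\prod D^{j})^s(\prod S^{j}+\prod D^{j})^{1-s}$, which is~\eqref{eq:F_10_base}.

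\textbf{Main obstacle.} The delicate point is justifying that the multinomial terms are genuinely the interior atoms, namely that $0<z_{\mathbf{j}}<1$. This holds because $0<D_r<S_r$ for $p\in(0,1/2)$. It also requires $\alpha_0$ to be handled separately: it does not enter $\mathscr{F}_{\mathfrak{a},1}$, so it needs no tracking here.
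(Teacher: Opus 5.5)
Your proposal is correct and follows essentially the same route as the paper: since the atom at $z=1$ stays zero, \eqref{eq:evo_G} reduces to squaring, which gives $\Ga[\varoast q \boxast \ell](\nu)=\Ga[\varoast q]^{L}(\nu)$; the multinomial theorem yields the middle expression in \eqref{eq:G_10_base}; and \eqref{eq:F_10_base} follows from \eqref{eq:evo_G2F1} applied atom-wise, equivalently $\mathscr{F}_{\mathfrak{a},1}(s)=\sum_i \tfrac12\alpha_i(1-z_i)^s(1+z_i)^{1-s}$. Your explicit checks that $\alpha_n^{\varoast q}=0$ persists and that $0<z_{\mathbf{j}}<1$ (so the multinomial terms are genuine interior atoms) fill in details the paper leaves implicit; the only slip is that the definition you mean to cite is Corollary~\ref{corol:f_a_1_2_cont}, not Corollary~\ref{corol:F_dc_comp}.
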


\begin{proof}
Please refer to Section~\ref{sec:GF_bin_base}.
\end{proof}

\begin{remark}
Since $\Ga[\varoast q \boxast \ell](\nu)$ has the structure of a \textit{generating function}~\cite[Section 24.1.2]{abramowitz1965handbook}, the \textit{multinomial coefficient} $(L; j_{r_0},\ldots,j_{r_{Q-1}})$ appears in $R_{\mathbf{j}}$ and, together with $\alpha_{\mathbf{j}}$, forms the coefficients of both $\Fai[\varoast q \boxast \ell](s)$ and $\Ga[\varoast q \boxast \ell](\nu)$. This brings difficulty in deriving $\Ba[\varoast q \boxast \ell]$ at level $q+\ell$ via Procedure~\ref{alg:chan_polar}, as the cardinality $\left|\mathcal{J}_{QL}\right|$ grows as $\Theta\left(2^{\ell(Q-1)}\right)$ and becomes intractable for large $\ell$.
\end{remark}

In what follows, we show that by exploiting this multinomial structure, $\Ba[\varoast q \boxast \ell]$ at level $q+\ell$ can be linked, via a series expansion, to infinitely many even moments generated by $\Ga[\varoast q](\nu)$ at level $q$.

\begin{prop}~\label{prop:Ba_G_L}
Let $Q \triangleq 2^{q-1}$, $L\triangleq 2^\ell$ with $q\geq 2$, $\ell\geq0$. Let $X_Q \triangleq \tanh(\lvert Q-K \rvert \cdot \lvert\Delta \rvert)\in[0,1)$ with $K\sim\mathrm{Bin}(2Q,p)$. Let $X_{QL}\triangleq \prod_{i=1}^L X_{Q,(i)}$ where $X_{Q,(1)}, \ldots, X_{Q,(L)}$ are i.i.d. copies of $X_Q$, then
\begin{equation}
    \Ba[\varoast q \boxast \ell] = 1-\sum_{\kappa=1}^{\infty} \beta_\kappa \Ga[\varoast q]^L(2\kappa) = \mathbb{E}\left[ \sqrt{1- X_{QL}^2} \right], ~\label{eq:B2_bin_base}
\end{equation}
where $\beta_\kappa=(2\kappa-3)!!/(2^\kappa \kappa!)$.
\end{prop}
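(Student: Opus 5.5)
The plan is to work directly with the $|D|$-density of $\mathfrak{a}^{\varoast q\boxast\ell}$ and the Bhattacharyya integral of Corollary~\ref{corol:var_node_half}, $\mathfrak{B}(|\mathfrak{a}|)=\int_0^1|\mathfrak{a}|(z)\sqrt{1-z^2}\,\mathrm{d}z$, rather than pushing Procedure~\ref{alg:chan_polar} through $|\mathcal{J}_{QL}|$ terms.

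\textbf{Step 1: identify the level-$q$ atoms.} From Lemma~\ref{lemma:q_base}, the interior part of $|\mathfrak{a}^{\varoast q}|$ has atoms at $z=D_{r_i}/S_{r_i}$ with masses $\binom{2Q}{i}C^iS_{r_i}$, $i=0,\ldots,Q-1$. In addition, $\alpha_0^{\varoast q}=\binom{2Q}{Q}C^Q$ sits at $z=0$, and $\alpha_n^{\varoast q}=0$. This is consistent with $\sum_i\alpha_i=1$ and with Table~\ref{tab:bsc_variable_evo}.

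Interpret these probabilistically: the $2Q$ i.i.d.\ LLRs $\pm\Delta$ sum to $(2Q-2K)|\Delta|$ up to sign, with $K\sim\mathrm{Bin}(2Q,p)$. Then $D_r/S_r=\tanh(r|\Delta|/2)$ with $r=|2Q-2K|$, which is $\tanh(|Q-K||\Delta|)$. The mass at that value is $\binom{2Q}{i}(p^i\bar p^{2Q-i}+p^{2Q-i}\bar p^{i})=\binom{2Q}{i}C^iS_{r_i}$, and $K=Q$ gives the zero atom. Hence $|\mathfrak{a}^{\varoast q}|$ is exactly the law of $X_Q$.

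\textbf{Step 2: carry this through the $\boxast$-convolutions.} Each $\boxast$-convolution multiplies independent $|D|$ values; equivalently, by \eqref{eq:sqr_G}, $\mathcal{G}$ squares. So after $\ell$ steps $|\mathfrak{a}^{\varoast q\boxast\ell}|$ is the law of $X_{QL}$, and
\[
\mathcal{G}_{|\mathfrak{a}^{\varoast q\boxast\ell}|}(0,\nu)=\mathbb{E}[X_{QL}^\nu]=\mathbb{E}[X_Q^\nu]^L .
\]
Since $\alpha_n=0$ throughout, Corollary~\ref{corol:G2G_connect} gives $\mathbb{E}[X_Q^{\nu}]=\Ga[\varoast q](\nu)$ for $\Re\nu>0$. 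This matches \eqref{eq:G_10_base}. The second equality of \eqref{eq:B2_bin_base} then follows from Corollary~\ref{corol:var_node_half}.

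\textbf{Step 3: the series.} Expand
\[
\sqrt{1-x^2}=1-\sum_{\kappa\ge1}\beta_\kappa x^{2\kappa},\qquad \beta_\kappa=\frac{(2\kappa-3)!!}{2^\kappa\kappa!}\ge0,
\]
which checks with $\beta_1=1/2$ and $\beta_2=1/8$. This series converges on $[0,1]$, including at $x=1$, since $\sum\beta_\kappa=1$. Because the terms are nonnegative, monotone convergence (Tonelli) lets us interchange expectation and sum:
\[
\mathbb{E}\sqrt{1-X_{QL}^2}=1-\sum_\kappa\beta_\kappa\,\mathbb{E}[X_{QL}^{2\kappa}]=1-\sum_\kappa\beta_\kappa\,\Ga[\varoast q]^L(2\kappa),
\]
using Step 2 at $\nu=2\kappa>0$.

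\textbf{Main obstacle.} The delicate step is Step 1: rigorously matching the atoms of Lemma~\ref{lemma:q_base}, including the boundary atom at $0$ excluded from $\mathscr{G}$, to $X_Q$. The cleanest route is to justify directly that $q$-fold $\varoast$ of a BSC is the $2Q$-fold LLR sum, using the fact that $\varoast$ adds independent LLRs. A secondary point is that the zero atom contributes $0^{2\kappa}=0$ to the moments. The interchange of sum and expectation is harmless by nonnegativity.
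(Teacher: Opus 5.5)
Your argument is correct, but it follows a different route from the paper.

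The paper obtains the first equality inside the CE framework. It starts from $\Ba[\varoast q\boxast\ell]=1-\Ga[\varoast q\boxast\ell](0)+2\Fai[\varoast q\boxast\ell](1/2)$ and substitutes the closed forms of Lemma~\ref{lemma:GF_bin_base}. It then writes each summand as $R_{\mathbf{j}}C^{\alpha_{\mathbf{j}}}X_{\mathbf{j}}\bigl((1-r_{\mathbf{j}})^{1/2}-1\bigr)$, Taylor-expands, and re-sums over $\mathcal{J}_{QL}$ with the multinomial theorem to produce $\Ga[\varoast q]^L(2\kappa)$. Only afterwards does it rewrite $\Ga[\varoast q](2\kappa)$ as $\mathbb{E}[X_Q^{2\kappa}]$, absorbing the $K=Q$ term (which contributes $\tanh^{2\kappa}(0)=0$), and invoke independence; a second proof uses MGFs. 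In that proof $X_Q$ is introduced only as a variable whose even moments happen to match.

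You instead identify $X_Q$ and $X_{QL}$ as the actual $|D|$-variables of $\mathfrak{a}^{\varoast q}$ and $\mathfrak{a}^{\varoast q\boxast\ell}$: $\varoast$ adds LLRs and $\boxast$ multiplies $\tanh(|L|/2)$. Corollary~\ref{corol:var_node_half} then gives the outer equality immediately, and the middle expression follows from a single Taylor expansion plus Tonelli.

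What your route buys:
\begin{itemize}
\item If you take the direct LLR justification you flag as cleanest, you need none of Lemma~\ref{lemma:GF_bin_base}, Theorem~\ref{thm:G2F_convert2}, or the inductive derivation of Lemma~\ref{lemma:q_base}. Only $\alpha_n^{\varoast q}=0$ and Corollary~\ref{corol:G2G_connect} are used.
\item The interchange of sum and expectation is justified rigorously by nonnegativity.
\item It explains why only even moments at level $q$ matter. In fact it shows $\mathfrak{B}(|\mathfrak{a}|)=1-\sum_{\kappa\ge 1}\beta_\kappa\,\mathcal{G}_{|\mathfrak{a}|}(0,2\kappa)$ for every discrete BMS density. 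The $\varoast q\boxast\ell$ structure enters only through $\mathcal{G}\mapsto\mathcal{G}^L$ under the $\ell$-fold $\boxast$-convolution.
\end{itemize}

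The paper's route, in exchange, doubles as a consistency check. It shows that the entire components produced by Procedure~\ref{alg:chan_polar} reproduce the same Bhattacharyya parameter.
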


\begin{proof}
Please refer to Section~\ref{sec:Ba_G_L}.
\end{proof}

\begin{remark}
The second proof of the second equality in~\eqref{eq:B2_bin_base} provides an alternative interpretation of the power series expansion: it arises not only from the Taylor expansion of $\sqrt{1-z}$, but also from the \textit{moment generating function} (MGF) of a product of i.i.d. random variables $X_{Q,(i)}$ induced by repeated $\boxast$-convolutions.
\end{remark}

Proposition~\ref{prop:Ba_G_L} shows that applying $\ell$-fold $\boxast$-convolution only corresponds to raising each even moment to the power $L$. This brings a significant benefit in evaluating $\Ba[\varoast q \boxast \ell]$ at level $q+\ell$, provided that the expression of $\Ga[\varoast q](\nu)$ at level $q$ is available, thereby bypassing the exponential grow of $\left|\mathcal{J}_{QL}\right|$. Moreover, tighter upper bounds of $\Ba[\varoast q \boxast \ell]$ can be obtained systematically by truncating the series expansion at higher order. Since $\Ga[\varoast q](2\kappa) = \mathbb{E}[X_Q^{2\kappa}]$, it also reveals that $\Ba[\varoast q \boxast \ell]$ depends not on the scalar $\Ba[\varoast q]=2^{2Q}C^Q$, but on all even moments of $X_Q$.


\section{Conclusion}
In this paper, we developed CE as a complex-analytic framework for finite-blocklength channel polarization on discrete BMS channels. By viewing the Bhattacharyya parameter as a real-valued instance of a broader class of complex-valued channel functionals, CE yields a systematic procedure for deriving analytic expressions for the Bhattacharyya parameters of the bit-channels of a given discrete BMS channel at arbitrary polarization levels. Beyond its importance in exact channel-dependent code construction, the framework provides structural insights, including further evidence of the extremality of the BEC and BSC, as well as new channel-dependent recursions for a class of BSC bit-channels.

\section{Appendices}

\subsection{Proof of Lemma~\ref{lemma:var_node_holo}}~\label{sec:var_node_holo}

Observe that~\eqref{eq:var_node_double_LT} is a form of the \textit{bilateral Laplace transform}, of which the region of convergence (ROC) consists of those values of $s$ for which $\mathrm{a}(x) \mathrm{e}^{-s x}$ is \textit{absolutely integrable}, i.e., $\int_{-\infty}^{+\infty} |\mathrm{a}(x) \mathrm{e}^{-s x}| \mathrm{~d} x<\infty$. Recall that if $\mathrm{a}$ is a symmetric $L$-density, then $\mathrm{a}(-x)=\mathrm{e}^{-x}\mathrm{a}(x)$ for all $x\in\mathbb{R}$, cf.~\cite[Definition 4.11]{richardson2008modern}. Therefore, $\mathcal{F}_{\mathrm{a}}(s)=\int_{0}^{+\infty}\mathrm{a}(x)(\mathrm{e}^{-sx}+\mathrm{e}^{(s-1)x})\mathrm{d}x$. Let $s\triangleq \sigma+\mathrm{i} t$. The integrand is absolutely integrable if and only if $\int_{0}^{+\infty}\mathrm{a}(x)|\mathrm{e}^{-sx}+\mathrm{e}^{(s-1)x}|\mathrm{d}x<\infty$. If $\sigma\in[0,1]$, then for all $x\geq 0$, $|\mathrm{e}^{-sx}|=\mathrm{e}^{-\sigma x}\leq 1$ and $|\mathrm{e}^{(s-1)x}|=\mathrm{e}^{(\sigma-1) x}\leq 1$, thus, $\int_{0}^{+\infty}\mathrm{a}(x)|\mathrm{e}^{-sx}+\mathrm{e}^{(s-1)x}|\mathrm{d}x\leq 2\int_0^{\infty} \mathrm{a}(x) d x<\infty$. Therefore, $\{s \in \mathbb{C}: 0<\Re s<1\}$ is the \textit{strip of absolute convergence} of $\mathcal{F}_{\mathrm{a}}(s)$, over which $\mathcal{F}_{\mathrm{a}}(s)$ is holomorphic.

\subsection{Proof of Lemma~\ref{lemma:check_node_holo}}~\label{sec:check_node_holo}

Let $\mu=0$ and recall that if $a$ is a symmetric $G$-density, then $a(0,y)=a(1,y)\coth(y/2)$ for all $y\in [0,+\infty]$, cf.~\cite[(4.21)]{richardson2008modern}. Therefore, $\mathcal{G}_a(0, \nu)=\int_0^{+\infty}g(y)\cdot \mathrm{e}^{-\nu y}\mathrm{d}y$, which is a form of the \textit{unilateral Laplace transform}, with $g(y)\triangleq a(1, y)(\operatorname{coth}(y / 2)+1)=2\cdot a(1,y)/(1-\mathrm{e}^{-y})$. A necessary condition for existence of such integral is that $g(y)$ must be \textit{locally integrable} on $y\in [0,\infty)$. Indeed,
\begin{equation}~\label{eq:check_node_exist}
    \begin{aligned}
        \int_{0}^{+\infty}|g(y)|\mathrm{~d}y = 2 \int_0^{+\infty} \frac{a(1,y)}{1-\mathrm{e}^{-y}} \mathrm{~d} y \leq \frac{2}{1-\mathrm{e}^{-1}}\left(\int_0^1 a(1,y) /y \mathrm{~d} y+\int_1^{\infty} a(1, y) \mathrm{~d} y\right),
    \end{aligned}
\end{equation}
which follows from that $1-\mathrm{e}^{-x} \geq\left(1-\mathrm{e}^{-1}\right) \cdot \min (1, x)$ for all $x\geq 0$. Note that for the $G$-distribution~\cite[p.183]{richardson2008modern}, i.e., $A(s,y)=A(0,y)\cdot \mathbbm{1}_{\{s=0\}}+A(1,y)\cdot \mathbbm{1}_{\{s=1\}}$, we have $A(0,0)\geq 0$ and $A(1,0)=0$. These conditions correspond, respectively, to $\lim_{x\to+\infty}\mathsf{A}(x)\leq 1$ and $\lim_{x\to-\infty}\mathsf{A}(x)=0$ for $\mathsf{A}\in\mathcal{A}_L$, cf.~\cite[p.178]{richardson2008modern}. In another words, $a(0,y)$ has a Dirac mass at $y=0$ of magnitude $A(0,0)$, while $a(1,y)$ has no Dirac mass at $y=0$. Moreover, both $a(0,y)$ and $a(1,y)$ have Dirac masses at $y=+\infty$ of magnitude $\frac{1}{2}\left(1-\lim _{x \rightarrow+\infty} A(0, x)-\lim _{x \rightarrow+\infty} A(1, x)\right)$. For the second integral in the RHS of~\eqref{eq:check_node_exist},
\begin{equation}~\label{eq:check_int2}
    \int_1^{+\infty} a(1, y) \mathrm{~d} y < +\infty,
\end{equation}
follows from $a$ being a $G$-density. For the first integral in the RHS of~\eqref{eq:check_node_exist}, since $a(1,y)$ has no Dirac mass at $y=0$,
\begin{equation}~\label{eq:check_int1}
    \int_0^1 a(1,y) /y \mathrm{~d} y \stackrel{(a)}{=} \int_0^1 a(0,y) \tanh(y/2) /y \mathrm{~d} y \stackrel{(b)}{\leq} \frac{1}{2}\int_0^1 a(0,y) \mathrm{~d} y < +\infty,
\end{equation}
where $(a)$ follows from the symmetry $a(1,y)=a(0,y)\tanh(y/2)$ for every $y>0$, while $(b)$ uses $\tanh (x) \leq x$ for every $x\geq 0$. Combining~\eqref{eq:check_int2} and~\eqref{eq:check_int1} yields $\int_0^{+\infty}|g(y)| \mathrm{~d} y<+\infty$. Consequently, for every $\nu=\sigma+\mathrm{i} t$ with $\sigma \geq 0$,
\begin{equation}
    \int_0^{+\infty}|g(y)\cdot \mathrm{e}^{-\nu y}|\mathrm{~d}y = \int_0^{+\infty}|g(y)|\cdot \mathrm{e}^{-\sigma y} \mathrm{~d} y \leq \int_0^{+\infty}|g(y)| \mathrm{~d} y<+\infty,
\end{equation}
thus, $\mathcal{G}_a(0, \nu)$ converges absolutely for $\Re\nu\geq 0$ and is holomorphic for $\Re\nu> 0$. For $\mu=\mathrm{i}\pi$, $\mathcal{G}_{a}(\mathrm{i} \pi, \nu)=\mathcal{G}_{a}(0, \nu+1)$ by symmetry~\cite[p. 200]{richardson2008modern}. Since $\Re(\nu+1)>0$ whenever $\Re \nu \geq 0$, it follows that $\mathcal{G}_{\mathfrak{a}}(\mathrm{i} \pi, \nu)$ also converges absolutely for $\Re \nu \geq 0$.



\subsection{Proof of Lemma~\ref{lemma:f_a_analy_cont}}~\label{sec:f_a_analy_cont}
Since $f_{|\mathfrak{a}|}(z)=\sum_{i=1}^{n-1} \alpha_i\cdot z\cdot \delta\left(z-z_i\right)$ by Lemma~\ref{lemma:absD_dens}, atoms of $|\mathfrak{a}|(z)$ at $z_0\equiv 0$ and $z_n\equiv 1$ (if present) are not reflected in $f_{|\mathfrak{a}|}(z)$. The expression $\mathscr{G}_{|\mathfrak{a}|}(\nu)=\sum_{i=1}^{n-1}\alpha_i z_i^\nu$ follows from the fact that the Mellin transform of $z\cdot \delta(z-a)$ with $a>0$ equals $a^{\nu}$ for every $\nu\in\mathbb{C}$, as a consequence of the translational property of the Mellin transform~\cite[(3.1.7)]{paris2001asymptotics}. Since each $z_i^\nu=\mathrm{e}^{\nu \log z_i}$ for every $z_i\in(0,1)$, with $i=1,\ldots,n-1$, is entire on $\mathbb{C}$, $\mathscr{G}_{|\mathfrak{a}|}(\nu)$ is entire too.

\subsection{Proof of Lemma~\ref{lemma:G_comp_evo}}~\label{sec:G_comp_evo}
Let $\mathrm{a}$ and $\mathrm{b}$ be independent symmetric $L$-densities with $X\sim \mathrm{a}$ and $Y\sim \mathrm{b}$, then $\tanh (|X| / 2) \sim|\mathfrak{a}|$ and $\tanh (|Y| / 2) \sim|\mathfrak{b}|$. By property of the check-domain $\boxast$-convolution~\cite[p.195]{richardson2008modern}, $Z=2 \tanh ^{-1}(\tanh (X / 2) \tanh (Y / 2)) \sim \mathrm{a} \boxast \mathrm{b}$, hence for $\mathrm{a}=\mathrm{b}$, thus, $\mathfrak{a}=\mathfrak{b}$, i.e., $\mathfrak{a} \boxast \mathfrak{b} = \mathfrak{a}^\boxast$, we have $\tanh (|Z| / 2) = \tanh (|X| / 2) \tanh (|Y| / 2)\sim |\mathfrak{a}^{\boxast}|$. Recall Corollary~\ref{corol:check_node_double_LT}, i.e., $\mathcal{G}_{|\mathfrak{a}|}(0, \nu)=\int_0^1|\mathfrak{a}|(z) z^\nu \mathrm{d} z$, then
\begin{equation}
    \begin{aligned}
        &\mathcal{G}_{|\mathfrak{a}^{\boxast}|}(0, \nu) = \int_0^1|\mathfrak{a^{\boxast}}|(z) z^\nu \mathrm{d} z \\
        =& \int_0^1\int_0^1|\mathfrak{a}|(x) |\mathfrak{a}|(y)\cdot (x y)^\nu \mathrm{~d} x\mathrm{~d} y  = \left(\int_0^1|\mathfrak{a}|(x) x^\nu \mathrm{d} x\right)^2  \\
        =& \sum_{i, j} \alpha_i \alpha_j\left(z_i z_j\right)^\nu=\left(\sum_{i=0}^{n} \alpha_i z_i^\nu\right)^2 = \mathcal{G}^2_{|\mathfrak{a}|}(0, \nu).
    \end{aligned}
\end{equation}
By Corollary~\ref{corol:G2G_connect}, we write $\mathcal{G}_{|\mathfrak{a}^{\boxast}|}(0, \nu)= \mathscr{G}_{|\mathfrak{a}^{\boxast}|}(\nu)+ \alpha^\boxast_0 \cdot 0^\nu + \alpha^\boxast_{n}$, which is holomorphic for $\Re \nu>0$ by Lemma~\ref{lemma:check_node_holo}. In the sequel, we discuss $\alpha^\boxast_0$, $\alpha^\boxast_{n}$ and $\mathscr{G}_{|\mathfrak{a}^{\boxast}|}(\nu)$ individually. The mass of the atom at $z=0$ after the $\boxast$-convolution equals the total weight of all pairs $(i,j)$ with $z_iz_j=0$, thus $\alpha^\boxast_0 = \alpha_0^2 + 2\alpha_0\cdot \sum_{i\geq 1}\alpha_i=\alpha_0^2 + 2\alpha_0(1-\alpha_0)=2\alpha_0-\alpha_0^2$, while the mass of the atom at $z=1$ after the $\boxast$-convolution equals the total weight of all pairs $(i,j)$ with $z_iz_j=1$, which is only possible with $z_i=z_j=1$, i.e., $\alpha^\boxast_{n} = \alpha^2_{n}$. If there is no atom at $z=1$, the interior part after the $\boxast$-convolution is $\mathscr{G}^2_{|\mathfrak{a}|}(\nu)$. If an atom at $z=1$ is present, its interaction with the interior atoms contributes $\alpha_n \alpha_i z_i^\nu$ for $i=1,\ldots,n-1$, each occurring twice. Consequently, $\mathscr{G}_{|\mathfrak{a}^{\boxast}|}(\nu)=\mathscr{G}^2_{|\mathfrak{a}|}(\nu)+2 \alpha_n \mathscr{G}_{|\mathfrak{a}|}(\nu) \cdot \mathbbm{1}_{\{\alpha_n > 0\}}$, which admits an \textit{analytic continuation} to an \textit{entire} function on $\mathbb{C}$.

\subsection{Proof of Lemma~\ref{lemma:f_a_w_analy_cont}}~\label{sec:f_a_w_analy_cont}

To prove Lemma~\ref{lemma:f_a_w_analy_cont}, we need the following Lemma~\ref{lemma:delta_comp}.

\begin{lemma}[Sec.II.2.5. in~\cite{gelfand1969generalized}]~\label{lemma:delta_comp}
Let $h(x)\in C^1$ be an \textit{continuously differentiable} function, having the simple roots $x_k$ with $h'(x_k)\neq 0$ for each $k$, then for every \textit{compactly supported and infinitely differentiable} test functions $\varphi(x)\in C^\infty_c$,
\begin{equation}
    \int_{-\infty}^{\infty} \delta(h(x)) \varphi(x) \mathrm{~d} x=\sum_k \frac{\varphi\left(x_k\right)}{\left|h^{\prime}\left(x_k\right)\right|},
\end{equation}
where the sum is taken over all simple roots of $h(x)$. It can also be written in the form
\begin{equation}
    \delta(h(x))=\sum_k \frac{\delta\left(x-x_k\right)}{\left|h^{\prime}\left(x_k\right)\right|}.
\end{equation}
\end{lemma}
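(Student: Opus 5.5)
The plan is to give $\delta(h(x))$ a precise meaning as a limit of regularizations, and then compute that limit locally near each root by a change of variables. Fix a mollifier family $\delta_\varepsilon(y)\triangleq \varepsilon^{-1}\rho(y/\varepsilon)$, where $\rho\in C_c^\infty$, $\rho\ge 0$, $\operatorname{supp}\rho\subset[-1,1]$ and $\int\rho=1$. Define
\begin{equation*}
\int_{-\infty}^{\infty}\delta(h(x))\varphi(x)\,\mathrm{d}x\triangleq\lim_{\varepsilon\to 0^+}\int_{-\infty}^{\infty}\delta_\varepsilon(h(x))\varphi(x)\,\mathrm{d}x .
\end{equation*}
This is the standard pullback definition used in~\cite{gelfand1969generalized}. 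Showing that the limit exists and equals $\sum_k\varphi(x_k)/|h'(x_k)|$ simultaneously proves that the definition does not depend on $\rho$ and yields the distributional identity in the second display.

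First, reduce to finitely many roots. Let $K\triangleq\operatorname{supp}\varphi$, which is compact. Each simple root $x_k$ is isolated: since $h'$ is continuous and $h'(x_k)\neq0$, $h$ is strictly monotone on a neighbourhood $U_k=(x_k-r_k,x_k+r_k)$, so $x_k$ is the only root in $U_k$. Hence only finitely many roots lie in a neighbourhood of $K$. Otherwise they would accumulate at some point $x^\ast$, which by continuity is a root, and by hypothesis a simple one, contradicting isolatedness.

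Next, split the integral by a partition of unity. Choose $\chi_k\in C_c^\infty(U_k)$ with $\chi_k\equiv1$ near $x_k$, and set $\chi_0\triangleq1-\sum_k\chi_k$ on $K$. On $\operatorname{supp}(\chi_0\varphi)$ there are no roots, so by compactness and continuity $|h|\ge c>0$ there. Consequently $\delta_\varepsilon(h(x))\chi_0(x)\varphi(x)\equiv0$ once $\varepsilon<c$, and this piece contributes nothing in the limit.

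On each $U_k$, $h$ is a $C^1$ diffeomorphism onto an open interval $V_k\ni0$, with inverse $g_k$ satisfying $g_k'(y)=1/h'(g_k(y))$. Substituting $y=h(x)$, so that $\mathrm{d}x=\mathrm{d}y/|h'(g_k(y))|$ with the absolute value absorbing orientation, gives
\begin{equation*}
\int\delta_\varepsilon(h(x))\chi_k\varphi(x)\,\mathrm{d}x=\int_{V_k}\delta_\varepsilon(y)\,\Phi_k(y)\,\mathrm{d}y,\qquad \Phi_k\triangleq\frac{(\chi_k\varphi)\circ g_k}{|h'\circ g_k|}.
\end{equation*}
The function $\Phi_k$ is continuous and compactly supported in $V_k$. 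By the approximate-identity property of $\delta_\varepsilon$ (uniform continuity of $\Phi_k$ suffices), this integral tends to $\Phi_k(0)=\varphi(x_k)/|h'(x_k)|$. Summing over the finitely many $k$ gives the first display. Since the test function $\varphi$ was arbitrary, the second display follows as an identity of distributions.

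The main obstacle is the first two steps rather than the change of variables. One must show that only finitely many roots meet $\operatorname{supp}\varphi$, and that the mollified integrand vanishes identically away from the roots. Both depend on the simplicity hypothesis combined with compactness of $\operatorname{supp}\varphi$, via the uniform lower bound $|h|\ge c$ off the root neighbourhoods. Once this localization is in place, the local computation is the routine one-dimensional substitution above.
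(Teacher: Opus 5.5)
Your proof is correct. The first localization step (only finitely many roots meet a compact neighbourhood of $\operatorname{supp}\varphi$) is where the hypothesis that \emph{every} root is simple is used. The second step ($|h|\ge c>0$ on $\operatorname{supp}(\chi_0\varphi)$, so that piece vanishes for $\varepsilon<c$) then follows from the choice of $\chi_k$. The local substitution $y=h(x)$ followed by the approximate-identity limit is sound; continuity of $\Phi_k$ at $0$ is all that step needs.

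The paper gives no proof to compare against. It imports the lemma from Gel'fand--Shilov and uses it as a black box in the proof of Lemma~\ref{lemma:f_a_w_analy_cont}. In the usual textbook treatment, $\delta(h(x))$ is \emph{defined} near each root by the substitution rule, so the identity is nearly tautological. Your route instead defines it as the limit of $\delta_\varepsilon(h(x))$ and derives the substitution formula. This additionally shows that the limit exists, that it is independent of the mollifier $\rho$, and that only continuity of $\varphi$ is needed. That regularized reading is also a natural justification for how the paper uses the lemma: it rewrites the point masses $\delta(z-z_i)$ of a $D$-density as $\tfrac{(1+w_i)^2}{2}\delta(w-w_i)$ under $w=(1-z)/(1+z)$. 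Your attribution of the mollifier definition to the cited source may be loose. Nothing hinges on it, though, because your computation shows the regularized definition agrees with the substitution-rule one.
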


Now, we prove Lemma~\ref{lemma:f_a_w_analy_cont}. By substituting $\beta_{\pm i}=(1\pm z_i)/2\cdot \alpha_i$ for all $i=1,\ldots,n$ and $\beta_0=\alpha_0$, we have
\begin{equation}
    \mathfrak{a}(z)=\alpha_0\delta(z)+\alpha_n\delta(z-1) + \sum_{i=1}^{n-1}\alpha_i \cdot \left( \frac{1+z_i}{2}\delta(z-z_i)+ \frac{1+z_{-i}}{2}\delta(z-z_{-i}) \right), \quad \forall z\in(-1,1].
\end{equation}
Let us first discuss the summand $(1+z_i)/2\cdot \delta(z-z_i)$. Let $w=(1-z)/(1+z)\in[0,\infty)$.  In order to apply Lemma~\ref{lemma:delta_comp}, define $h(w)\triangleq (1-w)/(1+w)-z_i$. Solving $h(w)=0$ for the unique root, i.e., $w=w_i\triangleq (1-z_i)/(1+z_i)$, and computing the derivative, i.e., $h'(w)=-2/(1+w)^2$, we have $\delta(z-z_i)=\delta(h(w))=\delta(w-w_i)/|h'(w_i)|=(1+w_i)^2/2\cdot\delta(w-w_i)$. In the same vein, $\delta(z-z_{-i})=(1+w_{-i})^2/2\cdot \delta(w-w_{-i})$ with $w_{-i}=w_i^{-1}$ and $\delta(z-z_0)=2\delta(w-w_0)$, i.e., $\delta(z)=2\delta(w-1)$. Putting it together, we have
\begin{equation}
    \mathfrak{a}\left(\frac{1-w}{1+w}\right)=2\alpha_0 \delta(w-1) + \frac{\alpha_n}{2}\delta(w) + \sum_{i=1}^{n-1}\alpha_i \cdot \left( \frac{1+w_i}{2} \delta(w-w_i) + \frac{1+w_{-i}}{2} \delta(w-w_{-i}) \right), \ \forall w\in[0,\infty),
\end{equation}
where $w=w_n\equiv 0$ resp. $w=w_0\equiv 1$ is attained by the atom at $z_n\equiv 1$ resp. $z_0\equiv 0$. Therefore, by definition of $f_{\mathfrak{a}}(w)$ for every $w\in(0,\infty)$, they do not contribute to $\mathcal{M}\{f_{\mathfrak{a}}\}(s)$. In particular,
\begin{equation}
    \begin{aligned}
        &\mathscr{F}_{\mathfrak{a}}(s)\triangleq \mathcal{M}\{f_{\mathfrak{a}}\}(s) =\int_0^\infty f_{\mathfrak{a}}(w) \cdot w^{s-1}\mathrm{~d}w = \sum_{i=1}^{n-1} \alpha_i\cdot \left( \frac{w_i^s}{1+w_i} + \frac{w_{-i}^s}{1+w_{-i}} \right) \\
        =& \sum_{i=1}^{n-1} \alpha_i\left( \frac{1+z_i}{2} \left( \frac{1-z_i}{1+z_i} \right)^s + \frac{1-z_i}{2} \left( \frac{1+z_i}{1-z_i} \right)^s \right) = \sum_{i=1}^{n-1}\left(\beta_i w_i^s+\beta_{-i} w_{-i}^{s}\right),
    \end{aligned}
\end{equation}
where $\beta_{\pm i}=(1\pm z_i)/2\cdot \alpha_i$ and $w_{\pm i}=(1\mp z_i)/(1\pm z_i)$ for every $i=1,\ldots,n-1$. Since each $w_i^s=\mathrm{e}^{s\log w_i}$ for every $w_i\in(0,1) \cup(1, \infty)$, with $i=1,\ldots,n-1$, is entire on $\mathbb{C}$, $\mathcal{M}\{f_{\mathfrak{a}}\}(s)$ is entire too.

\subsection{Proof of Corollary~\ref{corol:F12_sym}}~\label{sec:F12_sym}
\begin{proof}
Recall that by symmetry we have~\eqref{eq:beta_alpha_1_n-1}, i.e., $\beta_{ \pm i}=\left(1 \pm z_i\right) / 2 \cdot \alpha_i$ for all $i=1, \ldots, n-1$, then $\beta_{-i}/\beta_i=(1-z_i)/(1+z_i)$. By definition, $w_i=(1-z_i)/(1+z_i)$, then $\beta_{-i}=\beta_i w_i$. Substituting it into the definition of $\mathscr{F}_{\mathfrak{a},2}(s)$, we have $\forall s \in \mathbb{C}$,
\begin{equation}
    \mathscr{F}_{\mathfrak{a}, 2}(s)=\sum_{i=1}^{n-1} \beta_{-i} w_{-i}^s=\sum_{i=1}^{n-1}\left(\beta_i w_i\right)\left(w_i^{-1}\right)^s=\sum_{i=1}^{n-1} \beta_i w_i^{1-s} = \mathscr{F}_{\mathfrak{a},1}(1-s).
\end{equation}
\end{proof}

\subsection{Proof of Lemma~\ref{lemma:F_comp_evo}}~\label{sec:F_comp_evo}

Let $\mathrm{a}$ and $\mathrm{b}$ be independent symmetric $L$-densities with $X\sim \mathrm{a}$ and $Y\sim \mathrm{b}$, then $\tilde{X}\triangleq \tanh (X / 2) \sim\mathfrak{a}$ and $\tilde{Y}\triangleq\tanh (Y/ 2) \sim\mathfrak{b}$. By property of the check-domain $\varoast$-convolution~\cite[p.195]{richardson2008modern}, $Z=X+Y \sim \mathrm{a} \varoast \mathrm{b}$. Let $\mathfrak{W}_1\triangleq (1-\tilde{X}) /(1+\tilde{X})=\mathrm{e}^{-X}$ and $\mathfrak{W}_2\triangleq (1-\tilde{Y}) /(1+\tilde{Y})=\mathrm{e}^{-Y}$, then $\mathfrak{W}_3\triangleq (1-\tilde{Z}) /(1+\tilde{Z})=\mathrm{e}^{-Z}=\mathrm{e}^{-(X+Y)}=\mathfrak{W}_1\mathfrak{W}_2$. Hence for $\mathrm{a}=\mathrm{b}$, thus, $\mathfrak{a}=\mathfrak{b}$, i.e., $\mathfrak{a} \varoast \mathfrak{b} = \mathfrak{a}^\varoast$, we have $\mathcal F_{\mathfrak a^{\varoast}}(s) = \mathcal F^2_{\mathfrak{a}}(s)$, cf.~\cite[p.199]{richardson2008modern}, which is holomorphic for $\Re s\in(0,1)$ by Lemma~\ref{lemma:var_node_holo}. 

In the sequel, we discuss $\alpha_0^{\varoast}$, $\alpha_n^{\varoast}$, $\mathscr{F}_{\mathfrak{a}^\varoast,1}(s)$ and $\mathscr{F}_{\mathfrak{a}^\varoast,2}(s)$ individually. If there is already an atom at $w=w_0\equiv 1$ with weight $\beta_0=\alpha_0$, then $\beta_0^2=\alpha_0^2$ contributes to $\alpha_0^{\varoast}$. Additionally, the mass of this atom at $w=w_0\equiv 1$ after the $\varoast$-convolution equals the total weight of all pairs $(i,j)$ with $w_i w_j=1$. Since for every $i=1,\ldots,n-1$, $w_i \cdot w_{-i}=w_i \cdot w_i^{-1}=1$, each occurring twice with the same weight $\beta_i \beta_{-i}$. Putting it together, $\alpha_0^{\varoast}=\alpha_0^2+2 \sum_{i=1}^{n-1} \beta_i \beta_{-i}$. The mass of atom at $w=w_n\equiv 0$ after the $\varoast$-convolution equals the total weight of all pairs $(i,j)$ with $w_i w_j=0$, which occurs each time when at least one operand is zero, thus $\alpha_n^{\varoast} = \alpha_n+\alpha_n-\alpha_n^2=2\alpha_n-\alpha_n^2$. 

When expanding terms and grouping from $\mathcal F^2_{\mathfrak{a}}(s)$, we have $\mathscr{F}^2_{\mathfrak{a}, 1}(s)=(\sum_i \beta_i w_i^s)^2=\sum_{i, j} \beta_i \beta_j\left(w_i w_j\right)^s$, where all products $w_iw_j\in(0,1)$ since $w_i,w_j\in(0,1)$, hence it entirely contributes to $\mathscr{F}_{\mathfrak{a}^\varoast,1}(s)$. In the same vein, $\mathscr{F}^2_{\mathfrak{a}, 2}(s)$ contributes to $\mathscr{F}_{\mathfrak{a}^\varoast,2}(s)$. Additionally, $2 \alpha_0 \mathscr{F}_{\mathfrak{a}, 1}(s)$ and $2 \alpha_0 \mathscr{F}_{\mathfrak{a}, 2}(s)$ arise from $2 \cdot\left(\alpha_0 \cdot 1^s\right) \cdot\left(\mathscr{F}_{\mathfrak{a}, 1}(s)+\mathscr{F}_{\mathfrak{a}, 2}(s)\right)$. The convolved atoms $w_0\cdot w$ remain in the same $w$-range as before the $\varoast$-convolution, hence, e.g., $2 \alpha_0 \mathscr{F}_{\mathfrak{a}, 1}(s)$ contributes to $\mathscr{F}_{\mathfrak{a}^\varoast,1}(s)$. On the other hand, the cross term $2\mathscr{F}_{\mathfrak{a}, 1}(s)\mathscr{F}_{\mathfrak{a}, 2}(s)$ is split as
\begin{equation}
    2\mathscr{F}_{\mathfrak{a}, 1}(s)\mathscr{F}_{\mathfrak{a}, 2}(s)=2\sum_{i, j} \beta_i \beta_{-j}\left(w_i / w_j\right)^s,
\end{equation}
where each \textit{ordered} pair $(i,j)$ contributes to $\mathscr{F}_{\mathfrak{a}^\varoast,1}(s)$ or $\mathscr{F}_{\mathfrak{a}^\varoast,2}(s)$ depending on whether $w_i/w_j \gtrless 1$. With monotonicity of the interior atoms for every $i,j=1, \ldots, n-1$, we have
\begin{gather}
    w_i / w_j \begin{cases}
        \in(0,1), \quad &i>j, \\
        =1, \quad &i=j, \\
        \in(1,\infty), \quad &i<j,
    \end{cases} \\
    \varphi(s)\triangleq\mathscr{F}_{\mathfrak{a}, 1}(s)\mathscr{F}_{\mathfrak{a}, 2}(s) = \left(\sum_{j<i}+\sum_{i<j}+\sum_{i=j}\right) \beta_i \beta_{-j}\left(w_i / w_j\right)^s \triangleq \mathscr{F}_{\psi, 1}(s) +\mathscr{F}_{\psi, 2}(s) + \mathscr{F}_{\varphi}, \label{eq:F_decomp}
\end{gather}
where $2\mathscr{F}_{\psi, 1}(s)\triangleq 2 \sum_{1 \leq j<i \leq n-1} \beta_i \beta_{-j}\left(w_i / w_j\right)^s$ resp. $2\mathscr{F}_{\psi, 2}(s)\triangleq 2 \sum_{1 \leq i<j \leq n-1} \beta_i \beta_{-j}\left(w_i / w_j\right)^s$ contributes to $\mathscr{F}_{\mathfrak{a}^\varoast,1}(s)$ resp. $\mathscr{F}_{\mathfrak{a}^\varoast,2}(s)$. By symmetry, we further have $\mathscr{F}_{\psi, 2}(s) = \mathscr{F}_{\psi, 1}(1-s)$, hence~\eqref{eq:F_decomp} can be written as
\begin{equation}
    \varphi(s)=\mathscr{F}_{\mathfrak{a}, 1}(s)\mathscr{F}_{\mathfrak{a}, 1}(1-s)=\mathscr{F}_{\psi, 1}(s) +\mathscr{F}_{\psi, 1}(1-s) + \mathscr{F}_{\varphi}. \label{eq:F_decomp_sym}
\end{equation}
Consequently, $\varphi(s)=\varphi(1-s)$ as well as $\mathscr{F}_{\mathfrak{a}^\varoast,2}(s)=\mathscr{F}_{\mathfrak{a}^\varoast,1}(1-s)$, too. On the other hand, the case $i=j$, i.e., $2\mathscr{F}_{\varphi}\triangleq 2\sum_{i=1}^{n-1} \beta_i \beta_{-i}$, contributes to $\alpha_0^{\varoast}$, coinciding with earlier analysis. Moreover, since for every $i,j=1, \ldots, n-1$, $w_i / w_j \in (0,\infty)$ both $\mathscr{F}_{\psi, 1}(s)$ and $\mathscr{F}_{\psi, 2}(s)$, thus $\varphi(s)$, admit an \textit{analytic continuation} to \textit{entire} functions on $\mathbb{C}$.

\subsection{Proof of Corollary~\ref{corol:F_dc_comp}}~\label{sec:F_dc_comp}

Let $\tau\triangleq c+\mathrm{i}u\in\mathbb{C}$ with any $c\in \mathbb{R}$ fixed and rewrite~\eqref{eq:F_decomp} as $\tilde{\varphi}(u) \triangleq \sum_{i,j} A_{ij}^{(c)} \cdot \mathrm{e}^{\mathrm{i}\omega_{ij}u}$, where $\omega_{ij}\triangleq \log (w_i / w_j)\in\mathbb{R}$ resp. $A_{ij}^{(c)}\triangleq \beta_i \beta_{-j}\left(w_i / w_j\right)^c\in\mathbb{R}$ is regarded as the \textit{frequency} resp. \textit{amplitude} attached to the ordered pair $(i,j)$, both are fixed once $c$ is chosen. Let $\mathcal{P}\triangleq \{ (i,j): 1 \leq i, j \leq n-1 \}$ and fix any bijection $\kappa: \mathcal{P}\to \{1,\ldots, M\}$ with $|\mathcal{P}|=M\triangleq (n-1)^2$. For any $k$-index with $k\triangleq \kappa(i,j)\in\{1,\ldots, M\}$, denote $\omega_k \triangleq \omega_{ij}$ and $A_k^{(c)} \equiv A_{i j}^{(c)}$. Define the swap involution $\bar{k}\triangleq \kappa(j,i)$ then $\omega_{\bar{k}}=-\omega_k$ and $A_{\bar{k}}^{(c)} \equiv A_{ji}^{(c)}$. Note that $\bar{k}=k$ \textit{if and only if} $i=j$ indicating diagonal indices with $\omega_{ii}=0$ and $A_{i i}^{(c)}=\beta_i \beta_{-i}$. In the sequel, we interchangeably write $A_{ij}\triangleq A_{ij}^{(c)}$ without superscript for brevity. With this convention,
\begin{equation}
    \tilde{\varphi}(u) \triangleq \sum_{k=1}^M A_k\cdot \mathrm{e}^{\mathrm{i}\omega_k u}= \sum_{k=1}^M A_k\cdot \mathrm{e}^{\omega_k (\tau-c)}\triangleq \varphi(\tau), \quad \text{and} \quad 
    \omega_k \begin{cases}
        <0, \quad &i>j, \\
        =0, \quad &i=j, \\
        >0, \quad &i<j,
    \end{cases} ~\label{eq:phi_tau}
\end{equation}
In the sequel, we show that $\mathscr{F}_{\varphi}$ corresponds to the DC-component of the \textit{Fourier transform} of $\tilde{\varphi}(u)$. In particular,
\begin{equation}
    \widehat{\varphi}(\omega)\triangleq \int_{-\infty}^{\infty} \tilde{\varphi}(u)\cdot \mathrm{e}^{-\mathrm{i}\omega u}\mathrm{~d}u = \sum_{k=1}^M A_k \cdot \int_{-\infty}^{\infty}  \mathrm{e}^{\mathrm{i}(\omega_k-\omega)u}\mathrm{~d}u = 2\pi \sum_{k=1}^M A_k \cdot \delta(\omega-\omega_k),
\end{equation}
which when evaluated at $\omega=0$ yields $\widehat{\varphi}(0)=2 \pi \sum_{k: \omega_k=0} A_k=2 \pi\sum_{i=1}^{n-1} A_{ii}= 2 \pi \mathscr{F}_{\varphi}$, resulting in~\eqref{eq:phi_zero}.

\subsection{Proof of Corollary~\ref{corol:F_dc_comp_half}}~\label{sec:F_dc_comp_half}

Every pair $(i,j)$ produces a term at frequency $\omega_{ij}$, and the swapped pair $(j,i)$ produces a terms at frequency $\omega_{ji}=-\omega_{ij}$ with coefficient $A_{ji}=\beta_j \beta_{-i}(w_j / w_i)^c \in \mathbb{R}$, By symmetry $\beta_{-i}=\beta_iw_i$, we have
\begin{equation}~\label{eq:F_dc_comp_half}
    \frac{A_{j i}}{A_{i j}}=\frac{\beta_j \beta_{-i}\left(w_j / w_i\right)^c}{\beta_i \beta_{-j}\left(w_i / w_j\right)^c}=\frac{\beta_j\left(\beta_i w_i\right)\left(w_j / w_i\right)^c}{\beta_i\left(\beta_j w_j\right)\left(w_i / w_j\right)^c}=\left(\frac{w_i}{w_j}\right)^{1-2 c},
\end{equation}
where the two coefficients are equal, i.e., $A_{j i}=A_{i j}$ \textit{if and only if} $c=1/2$. With such choice, the pair $(i,j)$ together with its swapped counterpart contributes to $A_{i j} \cdot \mathrm{e}^{\mathrm{i} \omega_{ij} u}+A_{j i} \cdot \mathrm{e}^{\mathrm{i} \omega_{ji} u}=2 A_{i j} \cos (\omega_{ij} u)\in \mathbb{R}$. By enumerating two off-diagonal index sets with $1 \leq j<i \leq n-1$ or $1 \leq i<j \leq n-1$, each of size $(n-1)(n-2)/2$, and using the $k$-index convention, we obtain~\eqref{eq:phi_omega_half}. Additionally, it makes no difference to write the second terms as $2\sum_{k:\omega_k>0} A_{k}^{(c)} \cdot \cos (\omega_{k} u)$ since the cosine function is \textit{even}.

\subsection{Proof of Corollary~\ref{corol:F_mellin_comp}}~\label{sec:F_mellin_comp}

To retrieve $\mathscr{F}_{\psi,i}(s)$ with $i\in\{1,2\}$ from the spectrum of $\tilde{\varphi}(u)$ in Corollary~\ref{corol:F_mellin_comp}, we first establish the following lemma.

\begin{lemma}~\label{lemma:plemelj_F}
Define $\psi(s)\triangleq \varphi(s)-\mathscr{F}_{\varphi}$, $\forall s\in\mathbb{C}$, and the \textit{Cauchy-type integral}~\cite[(7.2.1)]{ablowitz2003complex} $\Phi(s)\triangleq \frac{1}{2\pi \mathrm{i}}\int_L \frac{\psi(\tau)}{\tau-s}\mathrm{~d}\tau$ for $s\notin L$ and $s\in\mathbb{C}$, where $L\triangleq \{\tau = c+\mathrm{i}u, u\in\mathbb{R}\}$ with $c\in(0,1)$ and the positive orientation upward, then
\begin{enumerate}
    \item $\forall s\notin L$: $\mathscr{F}_{\psi, 1}(s)=-\Phi^{-}(s)$, $\forall \Re s>c$; $\mathscr{F}_{\psi, 2}(s) =\Phi^+(s)$, $\forall \Re s<c$.
    \item $\forall t\in L$: the \textit{Plemelj formulae}~\cite[Lemma 7.2.1]{ablowitz2003complex} give boundary values of $\Phi$, i.e., $\Phi^{\pm}(t) = \pm\frac{1}{2}\psi(t)+\frac{1}{2\pi \mathrm{i}} \dashint_L \frac{\psi(\tau)}{\tau-t}\mathrm{~d}\tau$.
    \item $\mathscr{F}_{\psi, 1}(1-s)= \mathscr{F}_{\psi, 2}(s)$ required by the symmetric $\mathfrak{a}$ holds true if $c=1/2$.
\end{enumerate}
Here, $\dashint$ denotes the \textit{principal value integral}.
\end{lemma}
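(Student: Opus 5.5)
The plan is to reduce everything to a termwise evaluation of the Cauchy-type integral. By~\eqref{eq:F_decomp} and~\eqref{eq:phi_tau}, $\psi(\tau)=\varphi(\tau)-\mathscr{F}_{\varphi}=\sum_{k:\omega_k\neq 0}A_k^{(c)}\mathrm{e}^{\omega_k(\tau-c)}$ is a finite sum of exponentials. On $L$ each term has constant modulus, so $\psi$ does not decay. The first step is therefore to fix the meaning of $\int_L$ as the symmetric limit $\lim_{R\to\infty}\int_{c-\mathrm{i}R}^{c+\mathrm{i}R}$. Along $L$ the integrand is $O(1/|u|)$ times a pure oscillation $\mathrm{e}^{\mathrm{i}\omega_k u}$ with $\omega_k\neq 0$, so this limit exists by a Dirichlet-type argument. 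Linearity then reduces the problem to computing
\begin{equation*}
I_\omega(s)\triangleq\frac{1}{2\pi\mathrm{i}}\int_L\frac{\mathrm{e}^{\omega(\tau-c)}}{\tau-s}\mathrm{~d}\tau,\qquad \omega\neq 0,\ s\notin L.
\end{equation*}

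For item 1, I would evaluate $I_\omega$ by contour closure and Jordan's lemma. If $\omega<0$, the factor $\mathrm{e}^{\omega\tau}$ decays as $\Re\tau\to+\infty$, so I close the truncated segment with a large rectangle or semicircle to the right. The arc contribution vanishes by Jordan's lemma, and the closed contour is traversed clockwise. The residue theorem then gives $I_\omega(s)=-\mathrm{e}^{\omega(s-c)}$ when $\Re s>c$ and $I_\omega(s)=0$ when $\Re s<c$. If $\omega>0$, I close to the left, which gives a counterclockwise contour. This yields $I_\omega(s)=\mathrm{e}^{\omega(s-c)}$ for $\Re s<c$ and $0$ for $\Re s>c$.

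Summing over $k$ and invoking Corollary~\ref{corol:F_mellin_comp} (terms with $\omega_k<0$ constitute $\mathscr{F}_{\psi,1}$, those with $\omega_k>0$ constitute $\mathscr{F}_{\psi,2}$) gives two formulas. On the right of $L$ (the $-$ side, given the upward orientation) we get $\Phi^-(s)=-\mathscr{F}_{\psi,1}(s)$, and on the left (the $+$ side) $\Phi^+(s)=\mathscr{F}_{\psi,2}(s)$. Both are entire, so their restrictions extend continuously up to $L$.

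For item 2, I would not appeal to the general Hölder-plus-decay hypotheses of~\cite[Lemma 7.2.1]{ablowitz2003complex}, since the decay hypothesis fails here. Instead I would verify the formulae directly from item 1. The boundary values are $\Phi^+(t)=\mathscr{F}_{\psi,2}(t)$ and $\Phi^-(t)=-\mathscr{F}_{\psi,1}(t)$, so $\Phi^+-\Phi^-=\psi(t)$. It remains to show $\Phi^++\Phi^-=\frac{1}{\pi\mathrm{i}}\dashint_L\frac{\psi(\tau)}{\tau-t}\mathrm{~d}\tau$. I would check this termwise by indenting the contour around $t$ with a small semicircle, which contributes half the residue, namely $\pm\frac{1}{2}\mathrm{e}^{\omega(t-c)}$. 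Combined with the closure argument above, this gives $\frac{1}{2\pi\mathrm{i}}\dashint=\pm\frac12\mathrm{e}^{\omega(t-c)}$, with sign matching the average of the two one-sided values. Justifying the interchange of the truncation limit $R\to\infty$ with the indentation limit is the main technical obstacle. I expect it to be handled by the explicit $1/|u|$ oscillatory bound, which is uniform for $s$ near $t$.

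For item 3, I would note that the reflection $s\mapsto 1-s$ maps $\{\Re s>c\}$ onto $\{\Re s<1-c\}$. Hence the representations in item 1 are mutually consistent under reflection (that is, $-\Phi^-(1-s)=\Phi^+(s)$ on matching half-planes) exactly when $1-c=c$, i.e.\ $c=1/2$. For this choice, Corollary~\ref{corol:F_dc_comp_half} gives $A_{ij}^{(1/2)}=A_{ji}^{(1/2)}$ and $\omega_{ji}=-\omega_{ij}$. Substituting these into the explicit sums identifies $\mathscr{F}_{\psi,1}(1-s)$ with $\mathscr{F}_{\psi,2}(s)$ term by term, in agreement with Lemma~\ref{lemma:F_comp_evo}.
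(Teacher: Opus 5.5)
Your proof is correct. For items~1 and~2, however, it takes a different route from the paper's proof of this lemma.

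\textbf{How the paper argues.} For item~1, the paper Mellin-inverts $\varphi$ along $L$ to a function $h(w)$ and splits it into $h_-$ on $(0,1)$ and $h_+$ on $(1,\infty)$. It then writes $\mathscr{F}_{\psi,1}(s)=\int_0^1 h_-(w)w^{s-1}\mathrm{d}w$ and $\mathscr{F}_{\psi,2}(s)=\int_1^\infty h_+(w)w^{s-1}\mathrm{d}w$. Interchanging the integrals and using $\int_0^1 w^{s-\tau-1}\mathrm{d}w=1/(s-\tau)$ for $\Re s>c$ (and the analogue for $\Re s<c$) gives the Cauchy integral. Item~2 is simply quoted. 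Item~3 follows from $\psi(s)=\psi(1-s)$ and the substitution $\tau\mapsto 1-\tau$: this maps $L$ onto $-L$ when $c=1/2$, so $\Phi(1-s)=-\Phi(s)$ off $L$.

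\textbf{What each route buys.} The paper's route is structural. It explains why the Cauchy integral separates the Mellin mass in $(0,1)$ from that in $(1,\infty)$, and item~3 uses only the reflection symmetry of $\psi$. But its interchange of integrals is formal, since $h$ is a sum of Dirac masses and $|\psi|$ does not decay on $L$. Your termwise residue evaluation with Jordan's lemma is precisely the computation the paper defers to the proof of Corollary~\ref{corol:F_mellin_comp}. It uses the finite exponential-sum form of $\psi$ to make all three items rigorous and self-contained. The cost is that it essentially merges the lemma with that corollary; your item~3 likewise coincides with the coefficient argument at the end of that proof.

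Three small fixes:
\begin{enumerate}
\item Do not cite Corollary~\ref{corol:F_mellin_comp} in item~1. The paper derives that corollary from this lemma, so the citation is circular. What you need follows directly from Lemma~\ref{lemma:F_comp_evo}: $A_{ij}^{(c)}\mathrm{e}^{\omega_{ij}(s-c)}=\beta_i\beta_{-j}(w_i/w_j)^s$, and $\omega_{ij}<0$ if and only if $i>j$.
\item In item~3, ``exactly when $c=1/2$'' can only refer to the two half-planes coinciding. The identity $\mathscr{F}_{\psi,1}(1-s)=\mathscr{F}_{\psi,2}(s)$ itself holds for every $c$. It is a statement about fixed entire functions that Lemma~\ref{lemma:F_comp_evo} already provides, which is consistent with the lemma asserting only ``if''.
\item The limit interchange you flag in item~2 is harmless. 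For fixed $R$ and $\varepsilon$ the closed-contour identity is exact, the arc term depends only on $R$, and the indentation term depends only on $\varepsilon$.
\end{enumerate}
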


\begin{proof}
By definition, $\psi(s)=\mathscr{F}_{\psi, 1}(s) +\mathscr{F}_{\psi, 2}(s)$, $\forall s\in\mathbb{C}$, which is \textit{entire} on $\mathbb{C}$. Let $\tau\triangleq c+\mathrm{i}u$, then we have $h(w)=\frac{1}{2\pi \mathrm{i}}\int_{c-\mathrm{i}\infty}^{c+\mathrm{i}\infty} \varphi(\tau)w^{-\tau}\mathrm{~d}\tau=\frac{1}{2\pi \mathrm{i}}\int_{c-\mathrm{i}\infty}^{c+\mathrm{i}\infty} \psi(\tau)w^{-\tau}\mathrm{~d}\tau + \mathscr{F}_{\varphi}\cdot \delta(w-1)$, $\forall c\in(0,1)$, which can be further decomposed as
\begin{equation}
    h(w)\triangleq \begin{cases}
        h_-(w), \quad & w\in(0,1), \\
        \mathscr{F}_{\varphi}, & w=1, \\
        h_+(w), & w\in(1,\infty).
    \end{cases}
\end{equation}
Therefore, $\mathscr{F}_{\psi, 1}(s)=\int_0^1 h_-(w)w^{s-1}\mathrm{~d}w=\frac{1}{2\pi \mathrm{i}}\int_{c-\mathrm{i}\infty}^{c+\mathrm{i}\infty} \psi(\tau) \left( \int_0^1 w^{s-\tau-1} \mathrm{~d}w\right)\mathrm{d}\tau=-\frac{1}{2\pi \mathrm{i}}\int_{c-\mathrm{i}\infty}^{c+\mathrm{i}\infty} \frac{\psi(\tau)}{\tau-s}\mathrm{~d}\tau$, where the inner $w$-integral converges if and only if $\Re s>\Re \tau=c$; while $\mathscr{F}_{\psi, 2}(s)=\int_1^\infty h_+(w)w^{s-1}\mathrm{~d}w=\frac{1}{2\pi \mathrm{i}}\int_{c-\mathrm{i}\infty}^{c+\mathrm{i}\infty} \psi(\tau) \left( \int_1^\infty w^{s-\tau-1} \mathrm{~d}w\right)\mathrm{d}\tau=\frac{1}{2\pi \mathrm{i}}\int_{c-\mathrm{i}\infty}^{c+\mathrm{i}\infty} \frac{\psi(\tau)}{\tau-s}\mathrm{~d}\tau$, where the convergence of inner $w$-integral holds true if and only if $\Re s<\Re \tau=c$. Let $L$ be the vertical line, i.e., $L\triangleq \{\tau\triangleq c+\mathrm{i}u, u\in\mathbb{R}\}$ with $c\in(0,1)$ fixed and the \textit{positive orientation upward}. Define the \textit{Cauchy-type integral}. i.e., 
\begin{equation}~\label{eq:phi_s}
    \Phi(s)\triangleq \frac{1}{2\pi \mathrm{i}}\int_L \frac{\psi(\tau)}{\tau-s}\mathrm{~d}\tau,
\end{equation}
which is \textit{sectionally analytic} off $L$, i.e., $s\notin L$. Therefore, we have
\begin{gather}
   \mathscr{F}_{\psi, 1}(s) =-\Phi^-(s), \quad \forall \Re s>c, \label{eq:F_psi_1} \\
   \mathscr{F}_{\psi, 2}(s) =\Phi^+(s),  \quad \ \ \ \forall \Re s<c, \label{eq:F_psi_2}
\end{gather}
By applying the \textit{Plemelj formulae}, we obtain \textit{boundary values} of the \textit{Cauchy-type integral} for every $t\in L$, i.e.,
\begin{equation}~\label{eq:plemelj}
    \Phi^{\pm}(t) = \pm\frac{1}{2}\psi(t)+\frac{1}{2\pi \mathrm{i}} \dashint_L \frac{\psi(\tau)}{\tau-t}\mathrm{~d}\tau, \qquad \Phi^+(t)-\Phi^-(t) = \psi(t),
\end{equation}
where $\dashint$ denotes the \textit{principal value integral} defined by $\dashint_L \frac{\psi(\tau) d \tau}{\tau-t}\triangleq \lim _{\varepsilon \rightarrow 0} \int_{L-L_{\varepsilon}} \frac{\psi(\tau) d \tau}{\tau-t}$ where $L_{\varepsilon}$ is the part of $L$ that has length $2 \varepsilon$ and is centered around $t$. In the sequel, the specific choice of $c\in(0,1)$ is determined in order to fulfill the symmetry imposed by $\mathscr{F}_{\psi, 2}(s) = \mathscr{F}_{\psi, 1}(1-s)$ in~\eqref{eq:F_decomp_sym}. Recall that $\psi(s) \triangleq \varphi(s)-\mathscr{F}_{\varphi}$ and $\varphi(s) = \varphi(1-s)$, then $\psi(s) = \psi(1-s)$. Let $\sigma \triangleq 1-\tau$, then the image of $L$ under $\tau \mapsto 1-\tau$ yields $L'\triangleq \{\sigma =(1-c)-\mathrm{i}u, u\in\mathbb{R}\}$ with $(1-c)\in(0,1)$ fixed and the \textit{positive orientation downward}, thus
\begin{equation}
    \Phi(1-s)=\frac{1}{2\pi \mathrm{i}} \int_L \frac{\psi(\tau)}{\tau-(1-s)}\mathrm{~d} \tau = -\frac{1}{2\pi \mathrm{i}} \int_{L'}\frac{\psi(\sigma)}{s-\sigma} \mathrm{~d} \sigma = \frac{1}{2\pi \mathrm{i}} \int_{L'}\frac{\psi(\sigma)}{\sigma-s} \mathrm{~d} \sigma,
\end{equation}
for which $L'=-L$ if $c=1/2$, which implies that $\Phi(1-s)=-\Phi(s)$ for every $s\notin L$. Therefore, $\mathscr{F}_{\psi, 1}(1-s)=-\Phi^{-}(1-s)=-\left(-\Phi^{+}(s)\right)=\mathscr{F}_{\psi, 2}(s)$.
\end{proof}

Now, we prove Corollary~\ref{corol:F_mellin_comp}. Recall in~\eqref{eq:phi_tau} that $\varphi(\tau)=\sum_k A_k\cdot \mathrm{e}^{\omega_k (\tau-c)}$. Substituting it into~\eqref{eq:phi_s}, we obtain
\begin{equation}
    \Phi(s) =  \frac{1}{2\pi \mathrm{i}} \sum_k A_k  \mathrm{e}^{-\omega_k c} \cdot \int_L \frac{\mathrm{e}^{\omega_k \tau}}{\tau-s} \mathrm{~d}\tau,
\end{equation}
where $L\triangleq \{\tau\triangleq c+\mathrm{i}u, u\in\mathbb{R}\}$ with the \textit{positive orientation upward}. If $\omega_k<0$ resp. $\omega_k>0$, then $\mathrm{e}^{\omega_k \tau}$ decays exponentially as $\Re \tau \to +\infty$ resp. $\Re \tau \to -\infty$. Consequently, if $\omega_k<0$ resp. $\omega_k>0$, closing the contour \textit{at right} resp. \textit{at left}, the semicircle contribution vanishes. Let us first discuss the first case, i.e., $\omega_k <0$, under contour deformation at right. If $\Re s> c$, since $\mathrm{e}^{\omega_k \tau}$ is \textit{entire} in $\tau\in\mathbb{C}$, there exists only one simple pole at $\tau = s$ to the right of $L$, by \textit{Cauchy's residue theorem}, we have $\int_L \frac{\mathrm{e}^{\omega_k \tau}}{\tau-s} d \tau=-2\pi \mathrm{i}\cdot \operatorname{Res}_{\tau=s} \frac{\mathrm{e}^{\omega_k \tau}}{\tau-s}=-2\pi \mathrm{i}\cdot \mathrm{e}^{\omega_k s}$; while if $\Re s< c$, no pole is enclosed, thus, $\int_L \frac{\mathrm{e}^{\omega_k \tau}}{\tau-s} d \tau=0$. In the same vein, for the case $\omega_k>0$ under contour deformation at left, if $\Re s < c$, $\int_L \frac{\mathrm{e}^{\omega_k \tau}}{\tau-s} d \tau=2\pi \mathrm{i}\cdot \mathrm{e}^{\omega_k s}$; while if $\Re s>c$, $\int_L \frac{\mathrm{e}^{\omega_k \tau}}{\tau-s} d \tau=0$. Putting it together,
\begin{equation}~\label{eq:phi_s_F}
    \Phi(s) = \begin{cases}
        - \sum_{k:\omega_k <0} A_k \cdot \mathrm{e}^{\omega_k(s-c)}, & \forall\Re s>c, \\
        \sum_{k:\omega_k >0} A_k \cdot \mathrm{e}^{\omega_k(s-c)}, & \forall\Re s<c. 
    \end{cases}
\end{equation}
By substituting~\eqref{eq:phi_s_F} into~\eqref{eq:F_psi_1} and~\eqref{eq:F_psi_2}, respectively, we obtain
\begin{gather}
   \mathscr{F}_{\psi, 1}(s) =\sum_{k:\omega_k <0} A_k \cdot \mathrm{e}^{\omega_k(s-c)}, \quad \forall \Re s>c, \label{eq:F_psi_1_A_omega} \\
   \mathscr{F}_{\psi, 2}(s) =\sum_{k:\omega_k >0} A_k \cdot \mathrm{e}^{\omega_k(s-c)}, \quad \forall \Re s<c. \label{eq:F_psi_2_A_omega}
\end{gather}
Moreover, as shown in Lemma~\ref{lemma:F_comp_evo}, both $\mathscr{F}_{\psi, 1}(s)$ and $\mathscr{F}_{\psi, 2}(s)$ admit an \textit{analytic continuation} to \textit{entire} functions on $\mathbb{C}$, hence~\eqref{eq:F_psi_1_A_omega} resp.~\eqref{eq:F_psi_2_A_omega} is analytically continued to $\Re s\leq c$ resp. $\Re s\geq c$, resulting in~\eqref{eq:F_psi_1_F} resp.~\eqref{eq:F_psi_2_F}. Here, $c=1/2$ is chosen to guarantee $\mathscr{F}_{\psi, 1}(1-s)=\mathscr{F}_{\psi, 2}(s)$ required by the symmetry of $\mathfrak{a}$, cf. Lemma~\ref{lemma:plemelj_F}.3. To see this, we rewrite~\eqref{eq:F_psi_1_F} as $\mathscr{F}_{\psi, 1}(s) =\sum_{k:\omega_{\bar{k}} >0} A_k^{(c)} \cdot \mathrm{e}^{-\omega_{\bar{k}}(s-c)}$. Hence,
\begin{equation}
    \mathscr{F}_{\psi, 1}(1-s) =\sum_{k:\omega_{\bar{k}} >0} A_k^{(c)} \cdot \mathrm{e}^{-\omega_{\bar{k}}(1-s-c)}=\sum_{k:\omega_{\bar{k}} >0} A_k^{(c)} \cdot \mathrm{e}^{\omega_{\bar{k}}(s-c)} \mathrm{e}^{\omega_{\bar{k}}(2 c-1)}.
\end{equation}
With $c=1/2$, $\mathrm{e}^{\omega_{\bar{k}}(2 c-1)}=1$ and $A_{k}^{(c)}=A_{\bar{k}}^{(c)}$, cf.~\eqref{eq:F_dc_comp_half}, then $\mathscr{F}_{\psi, 1}(1-s) = \sum_{k:\omega_{\bar{k}} >0} A_{\bar{k}}^{(c)} \cdot \mathrm{e}^{\omega_{\bar{k}}(s-c)}=\mathscr{F}_{\psi, 2}(s)$.

\subsection{Proof of Proposition~\ref{prop:G2F_convert}}~\label{sec:G2F_convert}

To prove Proposition~\ref{prop:G2F_convert}, we need the following Lemma~\ref{lemma:k_s_analy_cont} and its Corollary~\ref{corol:K_s_abs_int}.

\begin{lemma}~\label{lemma:k_s_analy_cont}
Fix $\Re s \in(0,1)$. Define
\begin{equation}
    g_s(z)\triangleq \begin{cases}
        \left(\frac{1-z}{2}\left(\frac{1+z}{1-z}\right)^s+\frac{1+z}{2}\left(\frac{1-z}{1+z}\right)^s\right) \cdot \frac{1}{z}, & 0< z < 1, \\
        0, & z\geq 1,
    \end{cases}
\end{equation}
of which the Mellin transform
\begin{equation}
    \mathcal{K}_s(\nu)\triangleq \mathcal{M}\{g_s\}(\nu)=\Gamma(\nu-1)(A(1-s,1-\nu)+A(s,1-\nu)),
\end{equation}
is initially defined and holomorphic for $\Re \nu>1$, where $A(\alpha,\beta)\triangleq 2^{-\alpha} \cdot \Gamma(\alpha+1)_2 \tilde{F}_1(\alpha-1, \alpha+1 ; 1-\beta+\alpha ; 1 / 2)$. $\mathcal{K}_s(\nu)$ admits an \textit{analytic continuation} to a meromorphic function on $\mathbb{C}$ whose only singularities are simple poles at $\nu=1,0,-1,-2,\ldots$. 
\end{lemma}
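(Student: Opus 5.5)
The plan is to compute $\mathcal{K}_s(\nu)=\int_0^1 g_s(z)z^{\nu-1}\,\mathrm{d}z$ directly. We split it into the two pieces
\[
\tfrac12\int_0^1 (1-z)^{1-s}(1+z)^{s}z^{\nu-2}\,\mathrm{d}z
\quad\text{and}\quad
\tfrac12\int_0^1 (1+z)^{1-s}(1-z)^{s}z^{\nu-2}\,\mathrm{d}z,
\]
which are exchanged by $s\leftrightarrow 1-s$, so it suffices to treat one of them.

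\textbf{Convergence and holomorphy for $\Re\nu>1$.} Near $z=1$ the factor $(1-z)^{\Re s}$ or $(1-z)^{1-\Re s}$ is integrable, since $\Re s\in(0,1)$. Near $z=0$ the integrand behaves like $z^{\Re\nu-2}$, which is integrable iff $\Re\nu>1$. On compact subsets of this half-plane we have a locally uniform integrable majorant, so holomorphy in $\nu$ follows by dominated convergence and Morera's theorem.

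\textbf{Closed form.} Each piece is an Euler-type integral. Writing $z^{\nu-2}(1-z)^{s}(1+z)^{1-s}$ and using the Euler integral
\[
\int_0^1 t^{b-1}(1-t)^{c-b-1}(1-xt)^{-a}\,\mathrm{d}t=B(b,c-b)\,{}_2F_1(a,b;c;x)
\]
with $x=-1$, we get $\tfrac12 B(\nu-1,s+1)\,{}_2F_1(s-1,\nu-1;\nu+s;-1)$. To reach the stated argument $1/2$, apply the Pfaff transformation ${}_2F_1(a,b;c;x)=(1-x)^{-a}{}_2F_1(a,c-b;c;x/(x-1))$, choosing the parameter that produces $(\alpha-1,\alpha+1)$ as the upper parameters. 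This yields the factor $2^{1-s}$ and lower parameter $\nu+s$.

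Writing $B(\nu-1,s+1)=\Gamma(\nu-1)\Gamma(s+1)/\Gamma(\nu+s)$ and absorbing $1/\Gamma(\nu+s)$ into the regularized ${}_2\tilde F_1$, we should obtain $\Gamma(\nu-1)\,2^{-\alpha}\Gamma(\alpha+1)\,{}_2\tilde F_1(\alpha-1,\alpha+1;\alpha+\nu;1/2)$. With $\beta=1-\nu$, this is exactly $\Gamma(\nu-1)A(\alpha,1-\nu)$. Here $\alpha$ is determined by which piece is treated; the $(1-z)^{s}$ piece gives $\alpha=s$. The main bookkeeping obstacle is choosing the correct Pfaff/Euler transformation so that the parameters match $(\alpha-1,\alpha+1;1+\alpha-\beta)$. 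If the direct transform gives different upper parameters, we would additionally apply Euler's transformation $(1-x)^{c-a-b}{}_2F_1(c-a,c-b;c;x)$ and verify the result against a special value, say $\nu=2$.

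\textbf{Continuation.} Since ${}_2\tilde F_1(a,b;c;1/2)$ is entire in $(a,b,c)$ (the series converges at $|x|=1/2<1$, locally uniformly, and $1/\Gamma(c)$ removes all $c$-poles), $A(\alpha,1-\nu)$ is entire in $\nu$. Hence $\mathcal{K}_s$ continues meromorphically to $\mathbb{C}$, with singularities only where $\Gamma(\nu-1)$ has its simple poles, namely $\nu=1,0,-1,\dots$. These poles are at most simple. To claim they are genuine poles, we would check that the residue $\tfrac{(-1)^t}{t!}\bigl(A(1-s,t)+A(s,t)\bigr)$ is generically nonzero, for instance at $t=0$ where it equals $g_s(z)z\to 1$ as $z\to 0$. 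Alternatively, one can read off the poles directly from the Taylor expansion of $z\,g_s(z)$ at $0$ via the standard Mellin pole-residue correspondence, which also matches the coefficients used in Proposition~\ref{prop:G2F_convert}.
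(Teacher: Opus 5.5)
Your computation of the closed form is the paper's own proof. You use the same split into $\tfrac12\bigl(J(1-s,1-\nu)+J(s,1-\nu)\bigr)$, Euler's integral at argument $-1$, and Pfaff's transformation to argument $1/2$. Your bookkeeping $a=s-1$, $b=\nu-1$, $c=\nu+s$ already gives $2^{1-s}\,{}_2F_1(s-1,s+1;\nu+s;1/2)$, so the fallback Euler transformation is not needed. You then absorb $\mathrm{B}(\nu-1,\alpha+1)=\Gamma(\nu-1)\Gamma(\alpha+1)/\Gamma(\nu+\alpha)$ into ${}_2\tilde F_1$ and use that ${}_2\tilde F_1$ is entire in $c$, so every singularity comes from $\Gamma(\nu-1)$. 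Your majorant/Morera argument for holomorphy on $\Re\nu>1$ is more careful than the paper's.

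The step that would fail is your final check that every point $\nu=1,0,-1,\dots$ is a genuine pole. Observe that
\begin{equation*}
z\,g_s(z)=\tfrac12\bigl(h(z)+h(-z)\bigr),\qquad h(z)\triangleq(1-z)^{1-s}(1+z)^{s},
\end{equation*}
is the even part of $h$. Hence its odd Taylor coefficients at $z=0$ vanish identically in $s$; the paper's own remark $g_s(z)=(1+O(z^2))/z$ already shows this for the first one.

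By the Mellin pole--residue correspondence you invoke, the residue $\frac{(-1)^t}{t!}\bigl(A(1-s,t)+A(s,t)\bigr)$ at $\nu=1-t$ is therefore zero for every odd $t$. Concretely, $A(s,1)=s-\tfrac12$ and $A(1-s,1)=\tfrac12-s$. So $\nu=0,-2,-4,\dots$ are removable singularities, and poles can occur only at $\nu=1,-1,-3,\dots$. The residues at $\nu=1$ and $\nu=-1$ are $1$ and $2s(s-1)\neq0$. Even the even-index residues are polynomials in $s$ that can vanish at isolated points of the strip: the one at $\nu=-3$ equals $(u^2+3)(u^2-1)/24$ with $u=2s-1$, which is zero at $s=(1\pm\mathrm{i}\sqrt{3})/2$.

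Read literally, the statement overclaims. Neither your argument nor the paper's, which simply reads the poles off $\Gamma(\nu-1)$, proves it as written. What both arguments establish is that $\mathcal{K}_s$ is meromorphic on $\mathbb{C}$ with at most simple poles, all contained in $\{1,0,-1,-2,\dots\}$. That is also all Proposition~\ref{prop:G2F_convert} uses, since the odd-$t$ terms of its residue sum just vanish. You should stop at that conclusion rather than attempt the nonvanishing check.
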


\begin{proof}
We first justify the strip of analyticity (SOA) of the Mellin transform $\mathcal{M}\{g_s\}(\nu)$, i.e., $\mathcal{K}_s(\nu)=\mathcal{K}(s,\nu)\triangleq \int_0^{+\infty} g_s(z)\cdot z^{\nu-1} \mathrm{d} z$. Fix $\Re s\in(0,1)$. As $z\to 0^+$, $\left(\frac{1+z}{1-z}\right)^s=1+2 s z+O\left(z^2\right)$ and $\left(\frac{1-z}{1+z}\right)^s=1-2 s z+O\left(z^2\right)$, hence $g_s(z)=(1+O(z^2))/z\sim z^{-1}$ for small $z$. Consequently, the integrand behaves like $g_s(z) z^{\nu-1} \sim z^{\nu-2}$. The integral $\int_0^\delta z^{\nu-2} d z$ for small $\delta>0$ converges absolutely~\cite[A.2]{fikioris2022mellin} if and only if $\Re\nu-2>-1$, i.e., $\Re\nu>1$. On the other hand, $g_s(z) \sim 2^{s-1}(1-z)^{1-s}+2^{-s}(1-z)^s \quad$ as $z \rightarrow 1^{-}$. Since $\Re s \in(0,1)$, i.e., both exponents $\Re(1-s)$ and $\Re s$ are positive, by definition of the \textit{Hausdorff dimension function}~\cite[Definition 1.2]{sudland2004mellin}, we have $\lim_{z\to 1^-}g_s(z)=2^{s-1}\cdot 0^{1-s}+2^{-s}\cdot 0^s=0$. By definition, $g_s(z)=0$ for $z \geq 1$, so there is no contribution from infinity. Therefore, $\mathcal{K}_s(\nu)$ converges absolutely and possesses its SOA~\cite[Sec. 3.1.1]{paris2001asymptotics} as $\Re\nu>1$. Moreover,
\begin{equation}
    \begin{aligned}
        \mathcal{K}(s, \nu) &\triangleq \int_0^1 g_s(z) z^{\nu-1} \mathrm{~d} z = \int_0^1 \left(\frac{1-z}{2}\left(\frac{1+z}{1-z}\right)^s+\frac{1+z}{2}\left(\frac{1-z}{1+z}\right)^s\right) z^{\nu-2} \mathrm{~d} z \\
        &= \frac{1}{2} \int_0^1 z^{\nu-2}(1-z)^{1-s}(1+z)^{s} \mathrm{~d} z + \frac{1}{2} \int_0^1 z^{\nu-2}(1-z)^s(1+z)^{1-s} \mathrm{~d} z \\
        &\triangleq \frac{1}{2}(J(1-s,1-\nu)+J(s,1-\nu)),
    \end{aligned}
\end{equation}

where $J(\alpha,\beta)\triangleq \int_0^1 z^{-(\beta+1)}(1-z)^\alpha(1+z)^{1-\alpha} \mathrm{~d} z$ with $\alpha\in\{s,1-s\}$ and $\beta\in\{\nu,1-\nu\}$. Recall that the \textit{Gauss hypergeometric function}~\cite[(15.3.1)]{abramowitz1965handbook} possesses an integral representation, i.e., $\int_0^1x^{b-1}(1-x)^{c-b-1}(1-zx)^{-a}\mathrm{d}x = \mathrm{B}(b,c-b){}_2F_1(a,b;c;z)$ which converges absolutely for $\Re c > \Re b >0$. By identifying that $a=\alpha-1$, $b=-\beta$, $c = 1-\beta+\alpha$, and $z=-1$, hence $J(\alpha,\beta)=\mathrm{B}(-\beta,\alpha+1){}_2F_1(\alpha-1, -\beta; 1-\beta+\alpha;-1)$. By using the \textit{Pfaff transformation}~\cite[(15.3.4)]{abramowitz1965handbook}, ${}_2F_1(\alpha-1, -\beta; 1-\beta+\alpha;-1)=2^{1-\alpha}{}_2F_1(\alpha-1,\alpha+1;1-\beta+\alpha;1/2)$, thus,
\begin{equation}
    J(\alpha, \beta)=2^{1-\alpha} \mathrm{B}(-\beta, \alpha+1){ }_2 F_1\left(\alpha-1, \alpha+1 ; 1-\beta+\alpha ; 1/2\right),
\end{equation}
which is valid initially where the integral representation converges, i.e., $\Re \beta<0$ and $\Re \alpha>-1$ (which holds true for $\alpha\in\{s,1-s\}$ with $\Re s \in(0,1)$). In the sequel, we show the analytic continuation of $J(\alpha, \beta)$ in $\beta$ (or $\nu$). Putting $\beta = 1-\nu$, then
\begin{equation}
    J(\alpha, 1-\nu)=2^{1-\alpha}\Gamma(\nu-1) \Gamma(\alpha+1){ }_2 \tilde{F}_1\left(\alpha-1, \alpha+1 ; \alpha+\nu ; 1/2\right),
\end{equation}
where ${ }_2 \tilde{F}_1(a, b ; c ; z)\triangleq { }_2 F_1(a, b ; c ; z) / \Gamma(c)$ denotes the \textit{regularized Gauss hypergeometric function}~\cite[(9.02)]{olver1997asymptotics}, for which the limits exists as $c \to-m$, where $m$ is a non-negative integer~\cite[(15.1.2)]{abramowitz1965handbook}. Note that $\Gamma(\nu-1)$ is meromorphic with simple poles at $\nu=1,0,-1,-2, \ldots$, while $\Gamma(\alpha+1)$ with $\Re \alpha>-1$ contributes no simple poles. Therefore, $J(\alpha, 1-\nu)$ is a meromorphic function on $\mathbb{C}$ of $\nu$ with simple poles from $\Gamma(\nu-1)$ at $\nu=1,0,-1,-2,\ldots$. Therefore, $\mathcal{K}_s(\nu)$ extends analytically from initial $\Re\nu>1$ to $\nu\in \mathbb{C}$ with these simple poles. In particular,
\begin{equation}
    \mathcal{K}(s,\nu)=\Gamma(\nu-1)(A(1-s,1-\nu)+A(s,1-\nu)),
\end{equation}
where $A(\alpha,\beta)\triangleq 2^{-\alpha} \cdot \Gamma(\alpha+1)_2 \tilde{F}_1(\alpha-1, \alpha+1 ; 1-\beta+\alpha ; 1 / 2)$ with $\alpha\in\{s,1-s\}$ and $\beta\in\{\nu,1-\nu\}$. Putting $\beta=\nu$ gives $\mathcal{K}_s(1-\nu)= \Gamma(-\nu) (A(1-s,\nu)+A(s,\nu))$ as well an analytic continuation with simple poles at the non-negative integers $\nu=0,1,2,\ldots$.
\end{proof}

\begin{corollary}~\label{corol:K_s_abs_int}
Fix $s \in \mathbb{C}$ with $\Re s \in(0,1)$ and $k\in\mathbb{R}\backslash\{\mathbb{Z}_{\geq 0}\}$, then $t\mapsto\mathcal{K}_s(1-k-\mathrm{i}t)\in L(-\infty,\infty)$.
\end{corollary}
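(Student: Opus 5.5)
Fix $\Re s\in(0,1)$ and $k\in\mathbb{R}\setminus\mathbb{Z}_{\ge 0}$. We need $t\mapsto\mathcal{K}_s(1-k-\mathrm{i}t)$ to be Lebesgue integrable on $\mathbb{R}$. Write $\nu=1-k-\mathrm{i}t$, which lies on the vertical line $\Re\nu=1-k$. By Lemma~\ref{lemma:k_s_analy_cont}, the only singularities of $\mathcal{K}_s$ are simple poles at $\nu=1,0,-1,\ldots$. These would correspond to $k=0,1,2,\ldots$, which are excluded, so the line avoids every pole. The function is therefore continuous in $t$, and local integrability is automatic. The whole task reduces to showing decay fast enough as $|t|\to\infty$.

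\textbf{Representation.} I would avoid the hypergeometric closed form and work from the integral directly. By Lemma~\ref{lemma:k_s_analy_cont},
\begin{equation*}
\mathcal{K}_s(\nu)=\tfrac12\int_0^1 z^{\nu-2}h_s(z)\,\mathrm{d}z,\qquad h_s(z)\triangleq (1-z)^{1-s}(1+z)^s+(1-z)^s(1+z)^{1-s}.
\end{equation*}
The function $h_s$ is smooth on $[0,1)$ and has Taylor expansion $h_s(z)=\sum_{j\ge0}c_jz^j$ about $0$, with $c_0=2$, $c_1=0$. Choose $J>k$ and split
\begin{equation*}
h_s=\sum_{j<J}c_jz^j+R_J,\qquad R_J(z)=O(z^J)\ \text{near }0.
\end{equation*}
This gives, for $\Re\nu>1$,
\begin{equation*}
\mathcal{K}_s(\nu)=\tfrac12\sum_{j<J}\frac{c_j}{\nu+j-1}+\tfrac12\int_0^1 z^{\nu-2}R_J(z)\,\mathrm{d}z .
\end{equation*}
The remainder integral converges for $\Re\nu>1-J$. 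Both sides are meromorphic, so this identity is the continuation valid on $\Re\nu=1-k$.

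\textbf{Decay of the remainder.} On the line, integrate by parts in the remainder using $z^{\nu-2}=\frac{\mathrm{d}}{\mathrm{d}z}\,z^{\nu-1}/(\nu-1)$.
\begin{itemize}
\item At $z=0$, the boundary term vanishes because $R_J=O(z^J)$ and $J>k$.
\item At $z=1$, the boundary term is $R_J(1)/(\nu-1)$, which is $O(1/|t|)$.
\item After one integration, the new integrand involves $R_J'$, which is integrable near $1$ since it behaves like $(1-z)^{-\Re s}$ or $(1-z)^{\Re s-1}$.
\end{itemize}
So one step yields only $O(1/|t|)$, which is not integrable on $\mathbb{R}$.

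\textbf{Main obstacle.} The difficulty is the endpoint $z=1$. There $h_s$ has algebraic singularities $(1-z)^{s}$ and $(1-z)^{1-s}$ rather than smoothness. The standard Mellin asymptotics at a finite endpoint then give
\begin{equation*}
\int_0^1 z^{\nu-2}(1-z)^{\lambda}\phi(z)\,\mathrm{d}z\sim\phi(1)\,\Gamma(\lambda+1)\,\nu^{-\lambda-1}\qquad (\phi\ \text{smooth},\ |\nu|\to\infty),
\end{equation*}
for instance via Watson's lemma after substituting $z=\mathrm{e}^{-x}$. With $\lambda=s$ and $\lambda=1-s$, this yields decay like $|t|^{-1-\Re s}$ and $|t|^{-2+\Re s}$.

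\textbf{Integrability of each piece.}
\begin{itemize}
\item The endpoint-$1$ contributions decay like $|t|^{-1-\Re s}$ and $|t|^{-2+\Re s}$. Both exponents exceed $1$ because $\Re s\in(0,1)$, so these pieces are integrable.
\item The finite sum $\sum_{j<J} c_j/(\nu+j-1)$ is only $O(1/|t|)$ term by term. It must cancel against the leading $O(1/|t|)$ boundary term from the remainder. I would verify this cancellation via $R_J(1)=h_s(1)-\sum_{j<J}c_j=-\sum_{j<J}c_j$, using $h_s(1)=0$ (true since $\Re s\in(0,1)$, as noted in the lemma).
\end{itemize}
Carrying out this cancellation carefully is the step I expect to be hardest. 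As a backup, I would use the closed form $\Gamma(\nu-1)\,{}_2\tilde F_1(\cdot\,;\alpha+\nu;1/2)$ together with large-$c$ asymptotics of ${}_2F_1$ and Stirling's formula to confirm the $|t|^{-1-\min(\Re s,1-\Re s)}$ rate.
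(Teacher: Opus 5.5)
Your proposal is correct and reaches the paper's rate, $|\mathcal{K}_s(1-k-\mathrm{i}t)|=O(|t|^{-1-\min(\Re s,1-\Re s)})$, but by a different mechanism.

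\textbf{The paper's route.} It never returns to the integral. It takes the closed form from Lemma~\ref{lemma:k_s_analy_cont} and rewrites $\mathcal{K}_s(1-\nu)$ as $2^{s-1}\Gamma(2-s)G(1-s,\nu)+2^{-s}\Gamma(s+1)G(s,\nu)$, where $G(\alpha,\nu)=\frac{\Gamma(-\nu)}{\Gamma(1-\nu+\alpha)}\,{}_2F_1(\alpha-1,\alpha+1;1-\nu+\alpha;1/2)$. It then quotes two standard asymptotics: ${}_2F_1(\cdot\,;c;1/2)=1+O(c^{-1})$ for large $|c|$, and $\Gamma(z)/\Gamma(z+1+\alpha)\sim z^{-1-\alpha}$. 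All the decay comes from the Gamma ratio, so no $1/|t|$ term ever appears.

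\textbf{Your route.} You continue the integral by Taylor subtraction at $z=0$ and read the exponents off the algebraic singularities $(1-z)^{s}$ and $(1-z)^{1-s}$ at $z=1$. This agrees with the exact Beta integral $\int_0^1 z^{\nu-2}(1-z)^{\lambda}\,\mathrm{d}z=\Gamma(\nu-1)\Gamma(\lambda+1)/\Gamma(\nu+\lambda)$. It is more self-contained, and it shows that $\Re s\in(0,1)$ is used twice. First, it makes both exponents $1+\Re s$ and $2-\Re s$ exceed $1$. Second, it gives $h_s(1)=0$; without that, a non-integrable $1/|t|$ term would survive.

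\textbf{The cancellation is immediate.} The step you expected to be hardest reduces to $\frac{1}{\nu+j-1}-\frac{1}{\nu-1}=\frac{-j}{(\nu+j-1)(\nu-1)}=O(|t|^{-2})$.

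\textbf{One point to tighten.} Along a vertical line $\arg\nu\to\pm\pi/2$, which is outside the sector $|\arg\nu|\le\pi/2-\delta$ where the classical Watson lemma applies. You need its Fourier-type version (Erdélyi's lemma), applied after localizing near $z=1$ with a smooth cutoff. The piece near $z=0$ decays faster than any power of $|t|$ by repeated integration by parts with $z\,\mathrm{d}/\mathrm{d}z$, which preserves $O(z^J)$.

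Alternatively, after your single integration by parts, a quantitative Riemann--Lebesgue bound already suffices. The singularities of $R_J'$ are of order $(1-z)^{-\Re s}$ and $(1-z)^{\Re s-1}$. Hence $\int_0^1 z^{\nu-1}R_J'(z)\,\mathrm{d}z=O(|t|^{-\min(\Re s,1-\Re s)})$, and the prefactor $1/(\nu-1)$ gives integrability.

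\textbf{Trade-off.} The paper's route is shorter because it reuses the closed form it needs anyway for the residue sum. Yours avoids relying on large-parameter hypergeometric asymptotics at argument $1/2$.
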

\begin{proof}
By Lemma~\ref{lemma:k_s_analy_cont} and writing $\nu=k+\mathrm{i}t$ with fixed $k\in\mathbb{R}\backslash\{\mathbb{Z}_{\geq 0}\}$, we have
\begin{align}
    &\mathcal{K}_s(1-k-\mathrm{i} t)\triangleq \mathcal{K}(s, 1-\nu) \nonumber \\
    =& \Gamma(-\nu)\left(2^{s-1}\cdot \Gamma(2-s){ }_2 \tilde{F}_1(-s, 2-s ; 2-\nu-s ; 1 / 2) + 2^{-s}\cdot \Gamma(s+1){ }_2 \tilde{F}_1(s-1, s+1 ; 1-\nu+s ; 1 / 2) \right)  \nonumber \\
    =& 2^{s-1} \cdot  \frac{\Gamma(2-s)\Gamma(-\nu)}{\Gamma(2-\nu-s)} { }_2 F_1(-s, 2-s ; 2-\nu-s ; 1 / 2)+ 2^{-s}\cdot  \frac{\Gamma(s+1)\Gamma(-\nu)}{\Gamma(1-\nu+s)}{ }_2 F_1(s-1, s+1 ; 1-\nu+s ; 1 / 2)  \nonumber \\
    \triangleq & 2^{s-1} \cdot \Gamma(2-s) \cdot G(1-s,\nu) + 2^{-s}\cdot \Gamma(s+1) \cdot G(s,\nu),
\end{align}

where $G(\alpha,\nu)\triangleq \Gamma(-\nu)/\Gamma(1-\nu+\alpha)\cdot { }_2 F_1(\alpha-1, \alpha+1 ; 1-\nu+\alpha ; 1 / 2)$ with $\alpha\in\{s,1-s\}$ fixed. Thus, the $\nu$-dependence arises from the third parameter of ${}_2 F_1(\alpha-1, \alpha+1 ; 1-\nu+\alpha ; 1 / 2)$ and the Gamma-ratio. In the sequel, we show their asymptotic behaviors separately. For the Gauss hypergeometric part, note that $c=1-\nu+\alpha=(1-k+\alpha)-\mathrm{i}t$ with other parameters $a=\alpha-1$, $b=\alpha+1$ and $z=1/2$ fixed. By~\cite[Sec. 15.12 (ii)(c)]{NIST:DLMF}, let $a,b,z\in\mathbb{R}$ or $\mathbb{C}$ and fixed, if $\Re z=1/2$ and $|\arg c|\leq \pi-\delta$ with $\delta$ an arbitrary small positive constant, then for every fixed $m\in\mathbb{Z}_{\geq 0}$, 
\begin{equation}
    {}_2F_1(a, b ; c ; z)=\sum_{s=0}^{m-1} \frac{(a)_s(b)_s}{(c)_s s!} z^s+O\left(c^{-m}\right),
\end{equation}
as $|c|\to\infty$. In our case, let $u\triangleq 1-k+\alpha$ as a constant. As $|t|\to \infty$, then $\arg c =\arctan (-t/u)$ tends to $\pm \pi/2$ depending on the sign of $t$ and $u$. Therefore, for a sufficiently large $|t|$, we write $|\arg c| \leq \pi / 2 - \varepsilon$ with $\varepsilon$ an arbitrary small positive constant, which clearly fulfill the condition $|\arg c|\leq \pi-\delta$. By taking $m=1$, we have ${}_2F_1(a, b ; c ; z)=1+O(c^{-1})$ as $|c|\to\infty$. In our case, we have ${}_2 F_1(\alpha-1, \alpha+1 ; 1-\nu+\alpha ; 1 / 2)=1+O((u-\mathrm{i}t)^{-1})$ as $|t|\to\infty$, which implies $|{}_2 F_1(\alpha-1, \alpha+1 ; 1-\nu+\alpha ; 1 / 2)|=1+O(|t|^{-1})$.

On the other hand, $\Gamma(z+a)/\Gamma(z+b)\sim z^{a-b}$ for $a,b\in\mathbb{R}$ or $\mathbb{C}$ as $z\to\infty$ in the sector $|\arg z|\leq \pi-\delta$, cf.~\cite[(5.11.12)]{NIST:DLMF}~\cite[(6.1.47)]{abramowitz1965handbook}. In our case, let $z\triangleq -\nu=-k-\mathrm{i}t$ and identify $a=0$ and $b=1+\alpha$, then $\Gamma(z)/\Gamma(z+(1+\alpha))=z^{-(1+\alpha)}(1+O(z^{-1}))$. Since $|t|\to\infty$, we further have $|\Gamma(z)/\Gamma(z+(1+\alpha))|=O(|t|^{-(1+\Re \alpha)})$, which implies a polynomial decay of the Gamma-ratio on the vertical line $\Re \nu=k$. Putting it together, we have $|G(1-s, \nu)|=O\left(|t|^{-(1+\Re(1-s))}\right)$ and $|G(s, \nu)|=O\left(|t|^{-(1+\Re s)}\right)$. Therefore, as $|t|\to \infty$,
\begin{equation}
    |\mathcal{K}_s(1-k-\mathrm{i} t)|=O\left(|t|^{-(1+\Re s)}\right)+O\left(|t|^{-(2-\Re s)}\right)=O\left(|t|^{-(1+\min (\Re s, 1-\Re s))}\right).
\end{equation}
Since $\Re s \in(0,1)$ implying $1+\min (\Re s, 1-\Re s)>1$, hence $t\mapsto\mathcal{K}_s(1-k-\mathrm{i} t) \in L(-\infty, \infty)$.
\end{proof}

Now, we prove Proposition~\ref{prop:G2F_convert}. Recall that $\mathcal{F}_{|\mathfrak{a}|}(s)=\int_0^1|\mathfrak{a}|(z)\left(\frac{1-z}{2}\left(\frac{1+z}{1-z}\right)^s+\frac{1+z}{2}\left(\frac{1-z}{1+z}\right)^s\right) \mathrm{d} z$, which is holomorphic for $\Re s\in(0,1)$ by Lemma~\ref{lemma:var_node_holo}. By Lemma~\ref{lemma:f_a_analy_cont} and~\ref{lemma:k_s_analy_cont}, we have $\mathcal{F}_{|\mathfrak{a}|}(s)=\alpha_0\cdot 1 \cdot \mathbbm{1}_{\{\alpha_0 > 0\}} + \alpha_{n}\cdot 0 \cdot \mathbbm{1}_{\{\alpha_n > 0\}} +\int_0^{+\infty}f_{|\mathfrak{a}|}(z)\cdot g_s(z)\mathrm{~d} z$. By the \textit{Parseval formula for the Mellin transform}~\cite[(3.1.11)]{paris2001asymptotics}, we obtain
\begin{equation}~\label{eq:Mellin_Barnes_G2F}
    \int_0^{+\infty}f_{|\mathfrak{a}|}(z)\cdot g_s(z)\mathrm{~d} z=\frac{1}{2 \pi \mathrm{i}} \int_{c-\mathrm{i} \infty}^{c+\mathrm{i} \infty} \mathscr{G}_{|\mathfrak{a}|}(\nu) \mathcal{K}_s(1-\nu)\mathrm{d} \nu, \quad \forall c\in(-1,0),
\end{equation}
where by Corollary~\ref{corol:K_s_abs_int} $\mathcal{K}_s(1-\nu)\in L(-\infty, \infty)$ is meromorphic on $\mathbb{C}$ with simple poles $\nu=t \in \mathbb{Z}_{\geq 0}$. More precisely,
\begin{equation}
    \mathcal{K}(s, 1-\nu) = \Gamma(-\nu) (A(1-s,\nu)+A(s,\nu)),
\end{equation}
where $A(\alpha,\beta)$ is defined in Lemma~\ref{lemma:k_s_analy_cont}. Additionally, $\mathscr{G}_{|\mathfrak{a}|}(\nu)$ is entire by Lemma~\ref{lemma:f_a_analy_cont}. The integrand of~\eqref{eq:Mellin_Barnes_G2F} is $I(s,\nu)\triangleq \mathscr{G}_{|\mathfrak{a}|}(\nu) \mathcal{K}(s, 1-\nu)$. For fixed $s \in \mathbb{C}$ with $\Re s \in(0,1)$, $I(s, \nu)$ is holomorphic except at \textit{simple poles} $\nu=t \in \mathbb{Z}_{\geq 0}$ solely from $\Gamma(-\nu)$ with residue $\operatorname{Res}_{\nu=t} \Gamma(-\nu)=(-1)^{t+1} / t!$. In what follows, we show that the integral of $I(s,\nu)$ over the \textit{right} semicircular part tends to $0$ as $R\to\infty$. Let the large semicircle, denoted as $C_N$, be $\nu\triangleq R \cdot\mathrm{e}^{\mathrm{i}\theta}\triangleq \sigma+\mathrm{i}t$ of radius $R=N+1/2$ and $\theta\in[\pi/2,-\pi/2]$. We will show that $\lim_{N\to \infty} \int_{C_N} I(s,\nu)\mathrm{~d}\nu\to 0$ via asymptotics of $\mathscr{G}_{|\mathfrak{a}|}(\nu)$ and $\mathcal{K}_s(1-\nu)$, respectively. By Lemma~\ref{lemma:f_a_analy_cont}, $\mathscr{G}_{|\mathfrak{a}|}(\nu)=\sum_{i=1}^{n-1} \alpha_i z_i^\nu$ with $|z_i|<1$ and $\alpha_i\in[0,1]$, then $|\mathscr{G}_{|\mathfrak{a}|}(\nu)|\leq \sum_{i=1}^{n-1} \alpha_i\cdot \mathrm{e}^{\log z_i\cdot  \Re \nu}\leq C_1 \mathrm{e}^{-c_1 R}$ for some positive constants $C_1$ and $c_1$. By Corollary~\ref{corol:K_s_abs_int}, $|\mathcal{K}_s(1-\nu)|=C_2\cdot R^{-c_2}$ with some positive constants $C_2$ and $c_2>1$. Therefore, $|\int_C I(s,\nu)\mathrm{~d}\nu| \leq \pi R\cdot  \sup_{\nu\in C} |I(s,\nu)|\leq C_3 R^{1-c_2}\mathrm{e}^{-c_1 R}$ with some positive constant $C_3$, where since $c_2>1$ the first term $R^{1-c_2}$ decays polynomially, while $\mathrm{e}^{-c_1 R}$ decays exponentially. Therefore, $\lim_{N\to \infty} \int_{C_N} I(s,\nu)\mathrm{~d}\nu\to 0$ and by the residue theorem~\cite[Theorem 3.2.1]{stein2010complex}, we have $\int_{c-i \infty}^{c+i \infty} I(s, \nu) d \nu=-2\pi \mathrm{i}\cdot\sum_{t=0}^{\infty} \operatorname{Res}_{\nu=t} I(s, \nu)$. Consequently, by closing the contour \textit{at right}, the \textit{Mellin–Barnes integral} equals the sum of these residues, i.e. $\int_0^{+\infty}f_{|\mathfrak{a}|}(z)\cdot g_s(z)\mathrm{~d} z=-\sum_{t=0}^{\infty} \operatorname{Res}_{\nu=t} I(s,\nu)=-\sum_{t=0}^{\infty} \frac{(-1)^{t+1}}{t!}\left(A(1-s, t)+A(s, t)\right) \mathscr{G}_{|\mathfrak{a}|}(t)$.

\subsection{Proof of Proposition~\ref{prop:F2G_convert}}~\label{sec:F2G_convert}

To prove Proposition~\ref{prop:F2G_convert}, we need the following Lemma~\ref{lemma:h_v_analy_cont}.

\begin{lemma}~\label{lemma:h_v_analy_cont}
Fix $\Re \nu >0$. Define
\begin{equation}
    \phi_\nu(w) \triangleq \frac{1}{w}\left|\frac{1-w}{1+w}\right|^\nu, \ \quad \forall w\in(0,\infty),
\end{equation}
of which the Mellin transform
\begin{equation}
    \mathcal{H}_{\nu}(s) \triangleq \mathcal{M}\{\phi_\nu\}(s)=2^{-\nu}\Gamma(\nu+1) (B(2-s,\nu)+B(s,\nu)),
\end{equation}
has no non-empty SOA, where $B(\alpha,\nu)\triangleq \Gamma(1-\alpha) {}_2\tilde{F}_1(\nu,\nu-1;2+\nu-\alpha;1/2)$. Nevertheless, $\mathcal{H}_{\nu}(s)$ admits an \textit{analytic continuation} to a meromorphic function on $\mathbb{C}$ whose singularities are \textit{simple poles} at $s=\mathbb{Z}\backslash \{1\}$.
\end{lemma}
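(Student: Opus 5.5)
The plan is to split $\phi_\nu$ at $w=1$ and use the substitution $w\mapsto 1/w$. Write $\phi_\nu=\phi_{\nu,1}+\phi_{\nu,2}$, where $\phi_{\nu,1}$ is supported on $(0,1)$ and $\phi_{\nu,2}$ on $(1,\infty)$, in analogy with Corollary~\ref{corol:f_a_1_2_cont}.

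First, the SOA. As $w\to 0^+$ we have $\phi_\nu(w)w^{s-1}\sim w^{s-2}$, which is integrable near $0$ iff $\Re s>1$. As $w\to\infty$, $|(1-w)/(1+w)|^\nu\to 1$, so again $\phi_\nu(w)w^{s-1}\sim w^{s-2}$, which is integrable near $\infty$ iff $\Re s<1$. Near $w=1$ the factor $|1-w|^{\Re\nu}$ is harmless for $\Re\nu>0$. Hence $\mathcal{M}\{\phi_{\nu,1}\}$ converges exactly for $\Re s>1$ and $\mathcal{M}\{\phi_{\nu,2}\}$ exactly for $\Re s<1$, so the two half-planes are disjoint and the Mellin transform of $\phi_\nu$ has empty SOA. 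I will therefore define $\mathcal{H}_\nu(s)$ as the sum of the meromorphic continuations of the two pieces, i.e.\ a generalized Mellin transform.

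Second, the evaluation of each piece. For $\Re s>1$, I will use the Euler integral ${}_2F_1$ representation quoted in the proof of Lemma~\ref{lemma:k_s_analy_cont} with $x=w$, $b=s-1$, $c=s+\nu$, $a=\nu$ and argument $-1$:
\begin{equation*}
\int_0^1 w^{s-2}(1-w)^\nu(1+w)^{-\nu}\,\mathrm{d}w=\mathrm{B}(s-1,\nu+1)\,{}_2F_1(\nu,s-1;s+\nu;-1).
\end{equation*}
I will then apply the Pfaff transformation to move the argument to $1/2$, and if needed the Euler transformation ${}_2F_1(a,b;c;z)=(1-z)^{c-a-b}{}_2F_1(c-a,c-b;c;z)$, to reach the form $2^{-\nu}\Gamma(\nu+1)\Gamma(s-1)\,{}_2\tilde F_1(\nu,\nu-1;s+\nu;1/2)=2^{-\nu}\Gamma(\nu+1)B(2-s,\nu)$. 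For the second piece, the substitution $w=1/u$ gives
\begin{equation*}
\int_1^\infty w^{s-2}\Bigl(\tfrac{w-1}{w+1}\Bigr)^\nu \mathrm{d}w=\int_0^1 u^{(2-s)-2}\Bigl(\tfrac{1-u}{1+u}\Bigr)^\nu\mathrm{d}u,
\end{equation*}
which is the first integral with $s$ replaced by $2-s$. It therefore equals $2^{-\nu}\Gamma(\nu+1)B(s,\nu)$ for $\Re s<1$. Since ${}_2\tilde F_1$ is entire in its third parameter, each piece extends meromorphically to $\mathbb{C}$. The poles come only from $\Gamma(s-1)$, at $s=1,0,-1,\ldots$, and from $\Gamma(1-s)$, at $s=1,2,3,\ldots$.

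Third, the removal of the pole at $s=1$, which I expect to be the main obstacle. Both pieces have a pole at $s=1$, so I will compute the residues there. For the first piece, $\operatorname{Res}_{s=1}\Gamma(s-1)=1$, giving the residue $2^{-\nu}\Gamma(\nu+1)\,{}_2\tilde F_1(\nu,\nu-1;1+\nu;1/2)$. For the second piece, $\operatorname{Res}_{s=1}\Gamma(1-s)=-1$ multiplies the same hypergeometric value, since $2+\nu-s=1+\nu$ at $s=1$. The residues therefore cancel and $s=1$ is a removable singularity of $\mathcal{H}_\nu$.

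Finally, at the remaining points only one of the two Gamma factors is singular, and there ${}_2\tilde F_1$ is finite. I must check that these poles are genuinely nonzero, i.e.\ that the hypergeometric value is not generically zero there. If it vanishes at some isolated $\nu$ the pole simply degenerates, which I would note. This leaves simple poles exactly at $s\in\mathbb{Z}\setminus\{1\}$.
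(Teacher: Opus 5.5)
The one step you did not actually carry out is where the proposal breaks. No combination of the Pfaff and Euler transformations turns your Euler-integral output into ${}_2\tilde F_1(\nu,\nu-1;s+\nu;1/2)$, because that target formula is false.

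Pfaff gives ${}_2F_1(\nu,s-1;s+\nu;-1)=2^{-\nu}\,{}_2F_1(\nu,\nu+1;s+\nu;1/2)$, since the second parameter is $c-b=\nu+1$. Applying Pfaff in the other parameter, or Euler's transformation to this result, only ever produces ${}_2F_1(s,s-1;s+\nu;1/2)$, never $(\nu,\nu-1)$.

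A direct check at $\nu=1$, $s=2$ confirms this. The first piece is $\int_0^1\frac{1-w}{1+w}\,\mathrm{d}w=2\ln 2-1$. The stated form gives $2^{-1}\Gamma(2)\Gamma(1)\,{}_2\tilde F_1(1,0;3;1/2)=1/4$. With $\nu+1$ instead, you get $\frac12\,{}_2\tilde F_1(1,2;3;1/2)=\frac{8\ln 2-4}{4}=2\ln2-1$.

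The residue at $s=2$ tells the same story. Using ${}_2F_1(a,b;a;z)=(1-z)^{-b}$, the $\nu+1$ version gives residue $2\nu$. This is exactly what the term $-2\nu w^{-2}$ in $\phi_{\nu,2}(w)=w^{-1}-2\nu w^{-2}+O(w^{-3})$ forces. The $\nu-1$ version gives $\nu/2$.

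The $\nu-1$ in the statement's definition of $B$ is a misprint. The paper's own proof works with $B(\alpha,\nu)=\Gamma(1-\alpha)\,{}_2\tilde F_1(\nu,\nu+1;2+\nu-\alpha;1/2)$. So you should perform the Pfaff step explicitly and arrive at $\int_0^1 w^{s-2}\bigl(\tfrac{1-w}{1+w}\bigr)^\nu\mathrm{d}w=2^{-\nu}\Gamma(\nu+1)\Gamma(s-1)\,{}_2\tilde F_1(\nu,\nu+1;s+\nu;1/2)$. Then prove the identity in that corrected form, not the one you set out to reach.

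Apart from this, your argument is the paper's:
\begin{itemize}
\item the same split at $w=1$;
\item the same substitution $w\mapsto 1/w$, giving $\mathcal{H}_\nu(s)=M(2-s,\nu)+M(s,\nu)$;
\item the same disjoint half-planes $\Re s>1$ and $\Re s<1$;
\item pole locations read off from $\Gamma(s-1)$ and $\Gamma(1-s)$, using that ${}_2\tilde F_1$ is entire in its third parameter.
\end{itemize}
None of this depends on the second hypergeometric parameter, so it survives the correction.

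Your residue cancellation at $s=1$ is correct. Both pieces carry the common value ${}_2\tilde F_1(\nu,\nu+1;1+\nu;1/2)$, against $\operatorname{Res}_{s=1}\Gamma(s-1)=1$ and $\operatorname{Res}_{s=1}\Gamma(1-s)=-1$. This is a tidier version of the paper's Laurent expansion $\Gamma(\pm\varepsilon)=\pm\varepsilon^{-1}-\gamma+O(\varepsilon)$.

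Your caveat about whether the remaining poles are genuine goes beyond the paper, which does not check it. The residues there are, up to sign, the Taylor coefficients of $\bigl(\tfrac{1-u}{1+u}\bigr)^\nu$; for example, they are $-2\nu$ at $s=0$ and $2\nu$ at $s=2$.
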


\begin{proof}
\begin{align}
    \mathcal{H}(s, \nu) &\triangleq \int_0^\infty \phi_\nu(w) w^{s-1}\mathrm{~d}w= \int_0^{+\infty} \left|\frac{1-w}{1+w}\right|^\nu  w^{s-2}\mathrm{d} w \\
    &= \int_0^{1} \left(\frac{1-w}{1+w}\right)^\nu w^{s-2} \mathrm{d} w + \int_1^{+\infty} \left(\frac{w-1}{1+w}\right)^\nu w^{s-2} \mathrm{d} w \\
    &= \int_0^{1} \left(\frac{1-w}{1+w}\right)^\nu w^{s-2} \mathrm{d} w + \int_0^1 \left(\frac{1-w}{1+w}\right)^\nu w^{-s} \mathrm{d} w \\
    &\triangleq M(2-s,\nu)+M(s,\nu),
\end{align}
with $M(\alpha,\nu)\triangleq \int_0^1 w^{-\alpha}(1-w)^\nu (1+w)^{-\nu}\mathrm{d} w$ with $\alpha\in\{s,2-s,1-s,1+s\}$. By the integral representation of the Gauss hypergeometric function ($a=\nu$, $b=1-\alpha$, $c=2+\nu-\alpha$ and $z=-1$), we have $M(\alpha,\nu)=\mathrm{B}(1-\alpha, \nu+1){}_2 F_1(\nu, 1-\alpha ; 2+\nu-\alpha;-1)$. By using the \textit{Pfaff transformation}~\cite[(15.3.4)]{abramowitz1965handbook}, ${}_2F_1(\nu, 1-\alpha ; 2+\nu-\alpha;-1)=2^{-\nu}{}_2F_1(\nu,\nu+1;2+\nu-\alpha;1/2)$, thus,
\begin{equation}
    M(\alpha,\nu) = 2^{-\nu}  \mathrm{B}(1-\alpha, \nu+1){}_2F_1(\nu,\nu+1;2+\nu-\alpha;1/2),
\end{equation}
which is valid initially where the integral representation converges if $\Re \alpha < 1$ and $\Re \nu>-1$. The latter is fulfilled by assumption, while the former requires $\Re s>1$ when $\alpha=2-s$ resp. $\Re s < 1$ when $\alpha = s$. Since the absolute convergence requires simultaneously $\Re s \gtrless 1$, which is impossible, there exists no non-empty SOA for $\mathcal{H}(s, \nu)$. Nevertheless, $\mathcal{H}(s, \nu)$ admits an analytic continuation in $\Re \alpha < 1$ to $\mathbb{C}$. To further justify that, let
\begin{equation}
    \phi_{\nu,1}(w) \triangleq \begin{cases}
        \frac{1}{w}\left(\frac{1-w}{1+w}\right)^\nu, \ w\in(0,1), \\
        0, \ w \in[1,\infty);
    \end{cases} \ \text{ and } \quad \phi_{\nu,2}(w) \triangleq \begin{cases}
        0, \ w\in(0,1], \\
        \frac{1}{w}\left(\frac{w-1}{1+w}\right)^\nu, \ w \in(1,\infty),
    \end{cases}
\end{equation}
then clearly $\phi_\nu(w)=\phi_{\nu,1}(w)+\phi_{\nu,2}(w)$ and $\mathcal{M}\{\phi_{\nu,1}\}(s)=M(2-s,\nu)$ is holomorphic in $\Re s>1$, while $\mathcal{M}\{\phi_{\nu,2}\}(s)=M(s,\nu)$ is holomorphic in $\Re s<1$. As $w\to 0^+$ with $|w|<1$, we have $\phi_{\nu,1}(w)=w^{-1}\mathrm{e}^{-2\nu \sum_{k=0}^{\infty} w^{2k+1}/(2k+1)}=w^{-1}-2\nu+O(w)$ which is \textit{algebraic}. By~\cite[Lemma 4.3.6]{bleistein1975asymptotic}, $\mathcal{M}\{\phi_{\nu,1}\}(s)$ can be analytically continued as a meromorphic function into $\Re s\leq 1$ with poles at $s=-(m-1)$ with $m\in\mathbb{Z}_{\geq 0}$, i.e., $s=1,0,-1,\ldots$. On the other hand, as $w\to \infty$ with $|w|\geq 1$, we have $\phi_{\nu,2}(w)=w^{-1}-2\nu \cdot w^{-2}+O(w^{-3})$ which is \textit{algebraic} too. By~\cite[Lemma 4.3.3]{bleistein1975asymptotic}, $\mathcal{M}\{\phi_{\nu,2}\}(s)$ can also be analytically continued into $\Re s\geq 1$ as a meromorphic function with poles at $s=m+1$ with $m\in\mathbb{Z}_{\geq 0}$, i.e., $s=1,2,\ldots$. Therefore, we can conclude that these continued functions are at worst meromorphic functions on $\mathbb{C}$ (\cite[p.115]{bleistein1975asymptotic} provides a similar discussion). Putting $\alpha=2-s$ resp. $\alpha=s$, then
\begin{align}
    M(2-s,\nu) &= 2^{-\nu} \Gamma(s-1) \Gamma(\nu+1) {}_2\tilde{F}_1(\nu,\nu+1;\nu+s;1/2), \\
    M(s,\nu) &= 2^{-\nu} \Gamma(1-s) \Gamma(\nu+1) {}_2\tilde{F}_1(\nu,\nu+1;2+\nu-s;1/2),
\end{align}
where $\Gamma(s-1)$ is meromorphic with simple poles at $s=1,0,-1,\ldots$, and $\Gamma(1-s)$ is meromorphic with simple poles at $s=1,2,\ldots$, while $\Gamma(\nu+1)$ with $\Re \nu>0$ and ${}_2\tilde{F}_1(\cdot)$ in both contribute no simple poles. These observations are consistent with our earlier analysis based on the decay behaviors as $w\to 0^+$ and $\infty$. Moreover, we now show that the \textit{singularity} at $s=1$ is \textit{removable}. Let $s\triangleq 1+\varepsilon$ with $\varepsilon\to 0^+$, $c_0\triangleq \nu+1$, and $F(c)\triangleq { }_2 \tilde{F}_1(\nu, \nu+1 ; c ; 1 / 2)$. Therefore,
\begin{equation}
    \lim_{\varepsilon\to 0^+} \mathcal{H}_\nu(1+\varepsilon) = 2^{-\nu} \Gamma(\nu+1)\left( \Gamma(\varepsilon) F(c_0+\varepsilon) + \Gamma(-\varepsilon) F(c_0-\varepsilon) \right).
\end{equation}
Note that $F(c)$ is entire on $c\in\mathbb{C}$, thus, $F(c_0\pm \varepsilon)=F(c_0)\pm F'(c_0)\varepsilon + O(\varepsilon^2)$, while $\Gamma(\varepsilon)$ possesses the \textit{Laurent expansion} about $\varepsilon=0$, i.e., $\Gamma(\varepsilon)=\varepsilon^{-1}-\gamma + (\gamma^2+\pi^2/6)/2\cdot \varepsilon+O(\varepsilon^2)$ with $\gamma$ as \textit{Euler's constant}, thus, $\Gamma(\pm \varepsilon)=\pm \varepsilon^{-1}-\gamma + O(\varepsilon)$. Putting it together,
\begin{equation}
    \lim_{\varepsilon\to 0^+} \mathcal{H}_\nu(1+\varepsilon) = 2^{-\nu}\Gamma(\nu+1) \left( 2(F'(c_0)-\gamma F(c_0))+O(\varepsilon) \right),
\end{equation}
which is finite, since the $\varepsilon^{-1}$ singularities from $\Gamma(\pm \varepsilon)$ are canceled exactly. Therefore, $\mathcal{H}(s, \nu)$ as a whole extends analytically from $s\in \varnothing$ to $s\in\mathbb{C}$ with \textit{simple poles} at $s=\mathbb{Z}\backslash \{1\}$. In particular,
\begin{equation}
    \mathcal{H}(s, \nu) = 2^{-\nu}\Gamma(\nu+1) (B(2-s,\nu)+B(s,\nu)),
\end{equation}
where $B(\alpha,\nu)\triangleq \Gamma(1-\alpha) {}_2\tilde{F}_1(\nu,\nu+1;2+\nu-\alpha;1/2)$ with with $\alpha\in\{s,2-s,1-s,1+s\}$. Putting $\alpha=1-s$ resp. $\alpha=1+s$ gives $\mathcal{H}(1-s, \nu)=2^{-\nu}\Gamma(\nu+1) (B(1+s,\nu)+B(1-s,\nu))$ as well an analytic continuation with \textit{simple poles} at $s\in\mathbb{Z}\backslash \{0\}$.
\end{proof}

Now, we prove Proposition~\ref{prop:F2G_convert}. Recall that $\mathcal{G}_{\mathfrak{a}}(0, \nu)=\int_{-1}^1 \mathfrak{a}(z)|z|^\nu \mathrm{d}z$, which is holomorphic for $\Re \nu>0$ by Lemma~\ref{lemma:check_node_holo}. By Lemma~\ref{lemma:f_a_w_analy_cont} and~\ref{lemma:h_v_analy_cont}, we have $\mathcal{G}_{\mathfrak{a}}(0, \nu)= \alpha_{n}\cdot 1^\nu\cdot \mathbbm{1}_{\left\{\alpha_n>0\right\}}+\alpha_{0}\cdot 0^\nu\cdot \mathbbm{1}_{\left\{\alpha_0>0\right\}} + \int_0^{+\infty} f_{\mathfrak{a}}(w) \phi_\nu(w) \mathrm{d} w$. By Corollary~\ref{corol:f_a_1_2_cont} and Lemma~\ref{lemma:h_v_analy_cont}, then $f_{\mathfrak{a},i}(w) \phi_{\nu,j}(w)=0$, $\forall i,j\in\{1,2\}$ and $i\neq j$, thus, $\int_0^{+\infty} f_{\mathfrak{a}}(w) \phi_\nu(w) \mathrm{d} w=\sum_{i\in\{1,2\}}\int_0^{+\infty} f_{\mathfrak{a},i}(w) \phi_{\nu,i}(w) \mathrm{d} w$. By the \textit{Parseval formula for the Mellin transform}, we obtain
\begin{equation}~\label{eq:Mellin_Barnes_F2G}
    \begin{aligned}
        \int_0^{+\infty} f_{\mathfrak{a},1}(w) \phi_{\nu,1}(w) \mathrm{d} w &\triangleq \frac{1}{2 \pi \mathrm{i}} \int_{c_1-\mathrm{i} \infty}^{c_1+\mathrm{i} \infty} \mathscr{F}_{\mathfrak{a}, 1}(s) M(1+s,\nu) \mathrm{d}s, \ \forall c_1 \in(-1,0), \\
        \int_0^{+\infty} f_{\mathfrak{a},2}(w) \phi_{\nu,2}(w) \mathrm{d} w &\triangleq \frac{1}{2 \pi \mathrm{i}} \int_{c_2-\mathrm{i} \infty}^{c_2+\mathrm{i} \infty} \mathscr{F}_{\mathfrak{a}, 2}(s) M(1-s,\nu) \mathrm{d}s, \ \forall c_2 \in(0,1),
    \end{aligned}
\end{equation}
where $M(\alpha,\nu)$ is defined in Lemma~\ref{lemma:h_v_analy_cont}. In particular,
\begin{equation}
    \begin{aligned}
        M(1+s, \nu)&\triangleq 2^{-\nu} \Gamma(-s) \Gamma(\nu+1)_2 \tilde{F}_1\left(\nu, \nu+1 ; 1+\nu-s ; 1/2\right), \\
        M(1-s, \nu)&\triangleq 2^{-\nu} \Gamma(s) \Gamma(\nu+1){ }_2 \tilde{F}_1\left(\nu, \nu+1 ; 1+\nu+s ; 1/2\right).
    \end{aligned}
\end{equation}
Additionally, $M(1+s, \nu)$ and $M(1-s, \nu)$ are meromorphic on $\mathbb{C}$, with simple poles at $s=0,1,2, \ldots$ and $s=0,-1,-2, \ldots$, respectively. Moreover, $\mathscr{F}_{\mathfrak{a}, 1}(s)$ and $\mathscr{F}_{\mathfrak{a}, 2}(s)$ are entire by Corollary~\ref{corol:f_a_1_2_cont}. Closing the contour of the first resp. the second integral in~\eqref{eq:Mellin_Barnes_F2G} at right resp. at left, we have

\begin{equation}
    \begin{aligned}
        \frac{1}{2 \pi \mathrm{i}} \int_{c_1-\mathrm{i} \infty}^{c_1+\mathrm{i} \infty} \mathscr{F}_{\mathfrak{a}, 1}(s) M(1+s,\nu) \mathrm{d}s &= - \sum_{t=0}^\infty \mathrm{Res}_{s=t} M(1+s,\nu) \mathscr{F}_{\mathfrak{a}, 1}(s), \\
        \frac{1}{2 \pi \mathrm{i}} \int_{c_2-\mathrm{i} \infty}^{c_2+\mathrm{i} \infty} \mathscr{F}_{\mathfrak{a}, 2}(s) M(1-s,\nu) \mathrm{d}s &= \sum_{t=0}^\infty \mathrm{Res}_{s=-t} M(1-s,\nu) \mathscr{F}_{\mathfrak{a}, 2}(s).
    \end{aligned}
\end{equation}

Since $\operatorname{Res}_{s=t} \Gamma(-s)=(-1)^{t+1} / t!$ and $\operatorname{Res}_{s=-t} \Gamma(s)=(-1)^t / t!$, then
\begin{equation}
    \begin{aligned}
        \frac{1}{2 \pi \mathrm{i}} \int_{c_1-\mathrm{i} \infty}^{c_1+\mathrm{i} \infty} \mathscr{F}_{\mathfrak{a}, 1}(s) M(1+s,\nu) \mathrm{d}s &= -\sum_{t=0}^{\infty} \frac{(-1)^{t+1}}{t!} G(\nu,t) \mathscr{F}_{\mathfrak{a}, 1}(t), \\
        \frac{1}{2 \pi \mathrm{i}} \int_{c_2-\mathrm{i} \infty}^{c_2+\mathrm{i} \infty} \mathscr{F}_{\mathfrak{a}, 2}(s) M(1-s,\nu) \mathrm{d}s &= \sum_{t=0}^{\infty}\frac{(-1)^t}{t!} G(\nu,t) \mathscr{F}_{\mathfrak{a}, 2}(-t),
    \end{aligned}
\end{equation}
where $G(\nu,t) \triangleq 2^{-\nu}\cdot \Gamma(\nu+1){}_2\tilde{F}_1(\nu, \nu+1 ; 1+\nu-t ; 1 / 2)$.

\subsection{Proof of Theorem~\ref{thm:G2F_convert2}}~\label{sec:G2F_convert2}

By definition of $\mathscr{F}_{\mathfrak{a}, 1}(s)$ in~\eqref{eq:atom2F12} and definition of $\mathscr{G}_{|\mathfrak{a}|}(\nu)$ in~\eqref{eq:atom2G} where $\beta_i = (1+z_i)/2\cdot \alpha_i$, $w_i=(1-z_i)/(1+z_i)$ and $z_i\in(0,1)$ for every $i=1,\ldots,n-1$, we write~\eqref{eq:evo_G2F1} explicitly in terms of $\alpha_i$ and $z_i$, i.e.,
\begin{equation}~\label{eq:evo_G2F1_reform}
    \sum_{i=1}^{n-1} \alpha_i \left( 1/2 \cdot (1-z_i)^s(1+z_i)^{1-s} \right) = \sum_{i=1}^{n-1} \alpha_i \left( \sum_{t=0}^{\infty} \frac{(-1)^t}{t!} A(s,t) z_i^t \right).
\end{equation}
Therefore, to prove~\eqref{eq:evo_G2F1}, it suffices to prove
\begin{equation}~\tag{\ref{eq:evo_G2F1_reform1}}
    1/2 \cdot(1-z)^s(1+z)^{1-s} = \sum_{t=0}^{\infty} \frac{(-1)^t}{t!} A(s,t) z^t, \quad \forall z\in(0,1), \ \forall s\in\mathbb{C},
\end{equation}
where $A(s,t)=2^{-s} \Gamma(s+1){}_2\tilde{F}_1(s-1,s+1;1-t+s;1/2)$. For $|z|<1$, the \textit{binomial series} \textit{converges absolutely} for every $s\in \mathbb{C}$, i.e., $(1+z)^s=\sum_{k=0}^{\infty}\binom{s}{k} z^k$, cf.~\cite[Theorem 245]{bromwich2005introduction}, thus, $(1-z)^s(1+z)^{1-s}=\sum_{t=0}^{\infty} a_t(s) z^t$ holds true for $|z|<1$ with $a_t(s)\triangleq \sum_{k=0}^t(-1)^k\binom{s}{k}\binom{1-s}{t-k}=(-1)^t \sum_{k=0}^t(-1)^k \frac{(-s)_k(s-1)_{t-k}}{k!(t-k)!}$, where the binomial coefficients are represented using \textit{(rising) Pochhammer’s Symbol}~\cite[(1.2.6)]{NIST:DLMF}, i.e., $\binom{s}{k}=\frac{(-1)^k(-s)_k}{k!}$ for $s\in\mathbb{C}$ and $k\in\mathbb{Z}_{\geq 0}$. Therefore, to prove~\eqref{eq:evo_G2F1_reform1}, it suffices to prove
\begin{equation}~\label{eq:evo_G2F1_reform2}
    a_t(s)=\frac{(-1)^t}{t!} 2A(s,t).
\end{equation}
Using the identity $(a)_{n-k}=\frac{(-1)^k(a)_n}{(1-a-n)_k}$ for $0 \leq k \leq n$, cf.~\cite[p.32]{rainville1960special}, we obtain $a_t(s)=\frac{(-1)^t(s-1)_n}{t!}{ }_2 F_1(-t,-s ; 2-s-t ;-1)$ by recognizing that $\sum_{k=0}^t \frac{(-t)_k(-s)_k}{(2-s-t)_k} \frac{(-1)^k}{k!}={ }_2 F_1(-t,-s ; 2-s-t ;-1)$ where the series terminates at $k=t$. Using the \textit{Pfaff transformation}~\cite[(15.3.4)]{abramowitz1965handbook}, then $a_t(s)=\frac{(-1)^t(s-1)_n}{t!} 2^t{ }_2 F_1(-t, 2-t ; 2-s-t ; 1 / 2)=\frac{(-1)^t}{t!} 2^t \frac{\Gamma(s+1)}{\Gamma(s-t+1)}{ }_2 F_1(-t, 2-t ; s-t+1 ; 1 / 2)$. Next by applying the \textit{Euler transformation}~\cite[(15.3.3)]{abramowitz1965handbook}, then $a_t(s)=\frac{(-1)^t}{t!} 2^t \frac{\Gamma(s+1)}{\Gamma(s-t+1)} \cdot 2^{1-s-t} \cdot{ }_2 F_1(s+1, s-1 ; s-t+1 ; 1 / 2)=\frac{(-1)^t}{t!} 2^{1-s} \Gamma(s+1)_2 \tilde{F}_1(s-1, s+1 ; 1-t+s ; 1 / 2)$, resulting in the RHS of~\eqref{eq:evo_G2F1_reform2}. This completes the proof of~\eqref{eq:evo_G2F1}. In the following, we prove~\eqref{eq:evo_F122G} in the same vein. By definition of $\mathscr{F}_{\mathfrak{a}, 1}(s)$ in~\eqref{eq:atom2F12} and definition of $\mathscr{G}_{|\mathfrak{a}|}(\nu)$ in~\eqref{eq:atom2G} as well as symmetry in Corollary~\ref{corol:F12_sym}, we write~\eqref{eq:evo_F122G} explicitly in terms of $\alpha_i$ and $w_i\triangleq (1-z_i)/(1+z_i)\in(0,1)$ for every $i=1,\ldots,n-1$, cf. Lemma~\ref{lemma:f_a_w_analy_cont}, i.e.,
\begin{equation}~\label{eq:evo_F2G1_reform1}
    \sum_{i=1}^{n-1} \alpha_i (1-w_i)^\nu(1+w_i)^{-\nu} = \sum_{i=1}^{n-1} \alpha_i \left( \sum_{t=0}^{\infty} \frac{(-1)^t}{t!} G(\nu, t) w_i^{t} \right).
\end{equation}
Therefore, it suffices to prove 
\begin{equation}~\tag{\ref{eq:evo_F2G1_reform2}}
    (1-w)^\nu(1+w)^{-\nu}=\sum_{t=0}^{\infty} \frac{(-1)^t}{t!} G(\nu, t) w^t, \quad \forall w\in(0,1), \ \forall \nu\in\mathbb{C},
\end{equation}
where $G(\nu,t)=2^{-\nu}\cdot \Gamma(\nu+1){}_2\tilde{F}_1(\nu, \nu+1 ; 1+\nu-t ; 1 / 2)$. For $|w|<1$, $(1-w)^\nu(1+w)^{-\nu}=\sum_{t=0}^{\infty} b_t(\nu) w^t$ holds true for every $\nu\in\mathbb{C}$ with $b_t(\nu)\triangleq \sum_{k=0}^t(-1)^{t-k}\binom{\nu}{t-k}\binom{-\nu}{k}=\frac{(-\nu)_n}{t!} \sum_{k=0}^t \frac{(-t)_k(\nu)_k}{(1+\nu-t)_k} \frac{(-1)^k}{k!}$, where the identity $\binom{s}{k}=\frac{(-1)^k(-s)_k}{k!}$ for $s\in\mathbb{C}$ and $k\in\mathbb{Z}_{\geq 0}$, $(a)_{n-k}=\frac{(-1)^k(a)_n}{(1-a-n)_k}$ for $0 \leq k \leq n$ are applied. This expression can be written as $\frac{(-\nu)_n}{t!}{ }_2 F_1(-t, \nu ; 1+\nu-t ;-1)$ by recognizing that $\sum_{k=0}^t \frac{(-t)_k(\nu)_k}{(1+\nu-t)_k} \frac{(-1)^k}{k!}={ }_2 F_1(-t, \nu ; 1+\nu-t ;-1)$. Additionally, since $\frac{(-\nu)_n}{t!}=\frac{1}{t!}\frac{\Gamma(-\nu+t)}{\Gamma(-\nu)}=\frac{(-1)^t}{t!}\frac{\Gamma(\nu+1)}{\Gamma(\nu-t+1)}$, we obtain $b_t(\nu)=\frac{(-1)^t}{t!} \frac{\Gamma(\nu+1)}{\Gamma(\nu-t+1)}{ }_2 F_1(-t, \nu ; \nu-t+1 ;-1)$. By using the \textit{Pfaff transformation} and the \textit{Euler transformation}, we obtain ${ }_2 F_1(-t, \nu ; \nu-t+1 ;-1)=2^t{ }_2 F_1(-t, 1-t ; \nu-t+1 ; 1 / 2)=2^t \cdot 2^{-(\nu+t)}{ }_2 F_1(\nu+1, \nu ; \nu-t+1 ; 1 / 2)$. Putting it together, we obtain $b_t(\nu)=\frac{(-1)^t}{t!}G(\nu,t)$, completing the proof of~\eqref{eq:evo_F2G1_reform2}.

\subsection{Proof of Lemma~\ref{lemma:q_base}}~\label{sec:q_base}

By induction from Table~\ref{tab:bsc_variable_evo}, we obtain $\Fai[\varoast q](s)=\sum_{i=0}^{Q-1}\binom{2Q}{i}C^{i}A^{r_i}(s)$. Recall that by definition $A(s)=C^{1/2}\cdot \mathrm{e}^{\Delta (s-1/2)}$, then $A^{r}(s)=C^{r/2}\cdot (S_r-D_r)^{s-1/2}(S_r+D_r)^{1/2-s}$. Substituting $A^{r}(s)$ with $r=2Q-2i$ into $\Fai[\varoast q](s)$, we obtain~\eqref{eq:F_q_base}. By~\eqref{eq:evo_F122G} in Theorem~\ref{thm:G2F_convert2}, we obtain~\eqref{eq:G_q_base}. It is evident from Corollary~\ref{corol:G2G_connect} and~\ref{corol:F_trans_parts} that for any bit-channel of $\mathfrak{a} \equiv W_{\operatorname{BSC}(p)}$, we have $\Ba[\ldots]=1-\Ga[\ldots](0)+2\Fai[\ldots](1/2)$. Starting from $\Ga[\ldots](0)$, we have
\begin{equation}
    \Ga[\varoast q](0) = \sum_{i=0}^{Q-1} \binom{2Q}{i} C^{i}S_{r_i}=\sum_{i=0}^{Q-1} \binom{2Q}{i}(p^{i}\barp^{2Q-i}+p^{2Q-i}\barp^{i}) = \sum_{i=0,i\neq Q}^{2Q}\binom{2Q}{i}p^{i}\barp^{2Q-i}=1-\binom{2Q}{Q}C^Q,
\end{equation}
where the \textit{binomial theorem} is applied. Therefore,
\begin{equation}
    \Ba[\varoast q] = 1-\Ga[\varoast q](0)+2\Fai[\varoast q](1/2)=C^Q\left(\binom{2Q}{Q}+2\sum_{i=0}^{Q-1} \binom{2Q}{i} \right) = C^Q \sum_{i=0}^{2Q} \binom{2Q}{i}=2^{2Q}C^Q,
\end{equation}
completing the proof of~\eqref{eq:Ba_q_base}.

\subsection{Proof of Lemma~\ref{lemma:GF_bin_base}}~\label{sec:GF_bin_base}

By Lemma~\ref{lemma:G_comp_evo} and~\ref{lemma:q_base}, after applying the $\ell$-fold $\boxast$-convolution to~\eqref{eq:G_q_base}, we have $\Ga[\varoast q \boxast \ell](\nu)=\Ga[\varoast q]^{L}(\nu)$ with $L\triangleq 2^\ell$, thus the RHS of~\eqref{eq:G_10_base}. Next, we prove that the LHS of~\eqref{eq:G_10_base} equals the RHS. By definition of $R_{\mathbf{j}}$, $\alpha_{\mathbf{j}}$ and $r_i$, we have $\sum_{\mathbf{j}\in\mathcal{J}_{QL}}R_{\mathbf{j}} \cdot C^{\alpha_\mathbf{j}}\prod_{i=0}^{Q-1}\left(S_{r_i}^{j_{r_i}(1-\nu)}D_{r_i}^{j_{r_i}\nu}\right) = \sum_{\mathbf{j}\in\mathcal{J}_{QL}}\frac{L!}{\prod_{i=0}^{Q-1}j_{r_i}!}\prod_{i=0}^{Q-1}\left(\binom{2Q}{i} C^i S_{r_i}^{(1-\nu)}D_{r_i}^{\nu}\right)^{j_{r_i}}=\left(\sum_{i=0}^{Q-1} \binom{2Q}{i} C^iS_{r_i}^{1-\nu}D_{r_i}^{\nu} \right)^L$ where the last equality is obtained by the \textit{multinomial theorem}. By~\eqref{eq:evo_G2F1} in Theorem~\ref{thm:G2F_convert2}, we obtain~\eqref{eq:F_10_base}.

\subsection{Proof of Proposition~\ref{prop:Ba_G_L}}~\label{sec:Ba_G_L}

Let $r_{\mathbf{j}}\triangleq(Y_{\mathbf{j}}/X_{\mathbf{j}})^2\in(0,1)$ with $X_{\mathbf{j}} \triangleq \prod_{i=0}^{Q-1} S_{r_i}^{j_{r_i}}$ and $Y_{\mathbf{j}} \triangleq \prod_{i=0}^{Q-1} D_{r_i}^{j_{r_i}}$, by Lemma~\ref{lemma:q_base}, we have
\begin{align}
    \Ba[\varoast q \boxast \ell] &= 1-\Ga[\varoast q \boxast \ell](0)+2\Fai[\varoast q \boxast \ell](1/2)  \nonumber \\
    &= 1+\sum_{\mathbf{j} \in \mathcal{J}_{Q L}} R_{\mathbf{j}} \cdot C^{\alpha_{\mathbf{j}}}\left[\left(\left(\prod_{i=0}^{Q-1} S_{r_i}^{j_{r_i}}\right)^2-\left(\prod_{i=0}^{Q-1} D_{r_i}^{j_{r_i}}\right)^2\right)^{1 / 2}-\prod_{i=0}^{Q-1} S_{r_i}^{j_{r_i}}\right] \nonumber \\
    &=1+\sum_{\mathbf{j} \in \mathcal{J}_{Q L}} R_{\mathbf{j}} \cdot C^{\alpha_{\mathbf{j}}}\cdot X_{\mathbf{j}}\left( (1-r_{\mathbf{j}})^{1/2}-1 \right) \nonumber \\
    &= 1-\sum_{\mathbf{j} \in \mathcal{J}_{Q L}} R_{\mathbf{j}} \cdot C^{\alpha_{\mathbf{j}}} \sum_{\kappa=1}^{\infty} \beta_\kappa Y_{\mathbf{j}}^{2\kappa}X_{\mathbf{j}}^{1-2\kappa} \nonumber \nonumber \\
    &=1-\sum_{\kappa=1}^{\infty} \beta_\kappa \sum_{\mathbf{j}\in\mathcal{J}_{QL}}\frac{L!}{\prod_{i=0}^{Q-1}j_{r_i}!}\prod_{i=0}^{Q-1}\left(\binom{2Q}{i} C^{r_i } S_{r_i}^{1-2\kappa}D_{r_i}^{2\kappa}\right)^{j_{r_i}} \nonumber \\
    &=1-\sum_{\kappa=1}^{\infty} \beta_\kappa\left(\sum_{i=0}^{Q-1}\binom{2 Q}{i} C^i S_{r_i}^{1-2 \kappa} D_{r_i}^{2 \kappa} \right)^L \nonumber \\
    &= 1-\sum_{\kappa=1}^{\infty} \beta_\kappa \Ga[\varoast q]^L(2\kappa),
\end{align}
where the \textit{Taylor series} of $1-\sqrt{1-z}=\sum_{\kappa=1}^{\infty}\beta_\kappa z^\kappa$ with $\beta_\kappa\triangleq (2\kappa-3)!!/(2^\kappa \kappa!)>0$ for every $|z|<1$ is applied. Note that $(-1)!!\equiv 1$ by convention. This completes the proof of the first equality of~\eqref{eq:B2_bin_base}. By Lemma~\ref{lemma:GF_bin_base}, the even moments generated by $\Ga[\varoast q](\nu)$ can be expressed as
\begin{align}
    &\Ga[\varoast q](2\kappa) = \sum_{i=0}^{Q-1} \binom{2Q}{i} C^{i}S_{r_i}^{1-2\kappa}D_{r_i}^{2\kappa} \nonumber \\
    =& \sum_{i=0}^{Q-1} \binom{2Q}{i}(p^{i}\barp^{2Q-i}+p^{2Q-i}\barp^{i}) \tanh^{2\kappa}((Q-i)|\Delta|) \nonumber \\
    =& \sum_{i=0,i\neq Q}^{2Q}\binom{2Q}{i}p^{i}\barp^{2Q-i} \tanh^{2\kappa}((Q-i)|\Delta|) \nonumber \\
    =& \sum_{i=0}^{2Q}\binom{2Q}{i}p^{i}\barp^{2Q-i} \tanh^{2\kappa}((Q-i)|\Delta|) \nonumber \\
    =& \mathbb{E}_{K\sim\mathrm{Bin}(2Q,p)}\left[ \tanh^{2\kappa}(\lvert Q-K \rvert \cdot \lvert\Delta \rvert) \right].
\end{align}
Let $X_Q \triangleq \tanh(\lvert Q-K \rvert \cdot \lvert\Delta \rvert)$ with $K\sim\mathrm{Bin}(2Q,p)$. Define $L\triangleq 2^\ell$ and let $X_{Q,(1)}, \ldots, X_{Q,(L)}$ be i.i.d. copies of $X_Q$, then
\begin{equation}
    \Ga[\varoast q]^L(2\kappa) = \mathbb{E}\left[ X_Q^{2\kappa} \right]^L \stackrel{(a)}{=} \mathbb{E}\left[\prod_{i=1}^L X_{Q,(i)}^{2 \kappa}\right] = \mathbb{E}\left[ \left(\prod_{i=1}^L X_{Q,(i)}\right)^{2\kappa} \right] = \mathbb{E}\left[ X_{QL}^{2\kappa} \right],
\end{equation}
where $X_{QL}\triangleq \prod_{i=1}^L X_{Q,(i)}\in[0,1)$ and $(a)$ uses independence of the $L$ copies. Therefore,
\begin{equation}
    \sum_{\kappa=1}^{\infty} \beta_\kappa \Ga[\varoast q]^L(2\kappa) = \mathbb{E}\left[\sum_{\kappa=1}^{\infty} \beta_\kappa X_{QL}^{2 \kappa}\right] = \mathbb{E}\left[1-\sqrt{1-X_{QL}^2}\right].
\end{equation}
Substituting this result into the first equality of~\eqref{eq:B2_bin_base} yields the second equality. In what follow, we provide a second proof of the second equality via the MGF. First, we write $X^2_{QL}=\mathrm{e}^{2\log X_{QL}}=\mathrm{e}^{2\sum_{i=1}^{L} \log X_{Q,(i)}}=\mathrm{e}^{2\sum_{i=1}^{L}Y_i}$ with $Y_i\triangleq \log\tanh (|Q-Z_i| \cdot|\Delta|)$ and $Z_i \stackrel{\mathrm{i.i.d.}}{\sim} \mathrm{Bin}(2Q,p)$. Since all r.v.s $Y_i$ with $i=1,\ldots,L$ are i.i.d, they possess the unique MGF, i.e., $M_{Y_i}(t)\equiv M_{Y}(t)=\mathbb{E}[\mathrm{e}^{tY}]$ with $Y\triangleq \log X_Q= \log\tanh (|Q-K| \cdot|\Delta|)$. By properties of the MGF~\cite[(8.39)]{kobayashi2011probability}, we have $M_{\log X_{QL}}(t)=\prod_{i=1}^{L}M_{Y_i}(t)=M_Y^L(t)$, thus, $M_{2\log X_{QL}}(t)=M_Y^L(2t)$. Using the power series expansion of the exponential function and the definition of MGF, we have $M_{\mathrm{e}^X}(t)=\sum_{\kappa=0}^{\infty}\frac{t^\kappa}{\kappa!}\mathbb{E}[\mathrm{e}^{nX}]=\sum_{\kappa=0}^{\infty}\frac{t^\kappa}{\kappa!}M_X(\kappa)$ for a r.v. $X$. By applying this property to $X^2_{QL}$, we have
\begin{equation}
    M_{X^2_{QL}}(t)=\sum_{\kappa=0}^{\infty}\frac{t^\kappa}{\kappa!}M_{2\log X_{QL}}(2\kappa)=\sum_{\kappa=0}^{\infty}\frac{t^\kappa}{\kappa!}M_Y^L(2\kappa).
\end{equation}
On the other hand, recall the \textit{fractional power} representation~\cite[p.vii]{schilling2012bernstein}, i.e., $\lambda^\alpha=\frac{\alpha}{\Gamma(1-\alpha)} \int_0^{\infty}\left(1-\mathrm{e}^{-\lambda t}\right) t^{-\alpha-1} \mathrm{d} t$ for $\lambda>0$ and $\alpha\in(0,1)$. Let $\alpha=1/2$ and $\lambda=1-X$ with a r.v. $X<1$, then $\sqrt{1-X}=\frac{1}{2 \sqrt{\pi}} \int_0^{\infty}\left(1-\mathrm{e}^{-t} \mathrm{e}^{t X}\right) t^{-3 / 2} \mathrm{d} t$. Therefore, $\mathbb{E}[\sqrt{1-X}]=\frac{1}{2 \sqrt{\pi}} \int_0^{\infty}\left(1-\mathrm{e}^{-t} M_X(t)\right) t^{-3 / 2} \mathrm{d} t$. Let $X=X^2_{QL}\in[0,1)$, we obtain
\begin{align}
    \mathbb{E}[\sqrt{1-X^2_{QL}}] &= \frac{1}{2 \sqrt{\pi}} \int_0^{\infty}\left(1-\mathrm{e}^{-t} \sum_{\kappa=0}^{\infty}\frac{t^\kappa}{\kappa!}M_Y^L(2\kappa) \right) t^{-3 / 2} \mathrm{d} t \nonumber \\
    &= \frac{1}{2 \sqrt{\pi}} \int_0^{\infty}\left(\left(1-\mathrm{e}^{-t}\right)-\sum_{\kappa=1}^{\infty} \mathrm{e}^{-t} \frac{t^\kappa}{\kappa!} M_Y^L(2\kappa) \right) t^{-3 / 2} \mathrm{d} t \nonumber \\
    &=\frac{1}{2 \sqrt{\pi}} \int_0^{\infty} \frac{1-\mathrm{e}^{-t}}{t^{3 / 2}} \mathrm{d} t-\sum_{\kappa=1}^{\infty}  \frac{M_Y^L(2\kappa)}{2 \sqrt{\pi} \kappa!} \int_0^{\infty} \mathrm{e}^{-t} t^{\kappa-\frac{3}{2}} \mathrm{d} t \nonumber \\
    &=1-\sum_{\kappa=1}^{\infty} \frac{\Gamma\left(\kappa-\frac{1}{2}\right)}{2 \sqrt{\pi} \kappa!}M_Y^L(2\kappa).
\end{align}
Since $M_Y(t)=\sum_{i=0}^{2Q}\binom{2Q}{i}p^{i}\barp^{2Q-i} \mathrm{e}^{t\cdot \log \tanh(|Q-i|\cdot |\Delta|)}=\sum_{i=0}^{2Q}\binom{2Q}{i}p^{i}\barp^{2Q-i} \tanh^t(|Q-i|\cdot |\Delta|)$, then $M_Y(2\kappa)=\Ga[\varoast q](2\kappa)$. Moreover, one verifies $\beta_\kappa=\frac{\Gamma\left(\kappa-\frac{1}{2}\right)}{2 \sqrt{\pi} \kappa!}$. Putting it together, this completes the second proof.

\subsection{Bhattacharyya parameters of the BSC at level \texorpdfstring{$k=3$ and $k=4$}{k=3 and k=4}}~\label{sec:BSC_34}
\begin{exmp}
At level $k=4$, the Bhattacharyya parameters are given by
\begin{align}
    \Ba[\boxast 4] &= (1- M_{32})^{1/2}, \\
    \Ba[\boxast 3 \varoast] &= 1-M_{16}, \\
    \Ba[\boxast 2 \varoast \boxast] &= \frac{1}{4}\left(1-M_8\right)\left(\left(M_{16}+6 M_8+1\right)^{1/2}+M_8+3\right), \\
    \Ba[\boxast 2 \varoast 2] &= \left(1-M_8\right)^2, \\
    \Ba[\boxast \varoast \boxast 2] &= 1-\frac{1}{16}\left(1+M_4\right)^4+\frac{1}{16}\left(\left(1+M_4\right)^8-256 M_{16}\right)^{1/2}, \\ 
    \Ba[\boxast \varoast \boxast \varoast] &= \frac{1}{16}\left(1-M_4\right)^2\left(\left(M_{8}+6 M_4+1\right)^{1 / 2}+M_4+3\right)^2, \\
    \Ba[\boxast \varoast 2 \boxast] &=\frac{\left(1-M_4\right)^2}{64}\left(16\left(1-M_4\right) \left(1+6 M_4+M_4^2\right)^{1/2}+\left(1+28 M_4+70 M_4^2+28 M_4^3+M_4^4\right)^{1/2}\right.  \nonumber \\
        & \left.\quad+8\left(1+M_4\right) \left(1+14 M_4+M_4^2\right)^{1/2}+39+18 M_4-9 M_4^2\right), \\
    \Ba[\boxast \varoast 3] &= \left(1-M_4\right)^4, \\
    \Ba[\varoast \boxast 3]  &= \left(S_2^{16}-M_{16} \right)^{1/2}+1-S_2^{8}, \\
    \Ba[\varoast \boxast 2 \varoast]  &= \left(\left(S_2^8-M_8\right)^{1/2}+1-S_2^4\right)^2, \\
    \Ba[\varoast \boxast \varoast \boxast] &= 64 C^2 \bar{C}^2 \left( S_2^8-M_8 \right)^{1/2} + 8 C \bar{C} S_2^2 \left(S_2^4-M_4\right)^{1/2}\left(S_2^4+3 M_4\right)^{1/2}+\frac{1}{4}\left(S_2^4-M_4\right)\left( S_2^8+6 S_2^4 M_4+M_8 \right)^{1/2} \nonumber \\
            &\quad+32C^2\bar{C}^2-M_4+S_2^4-\frac{1}{4}\left(32C^2\bar{C}^2-M_4+S_2^4\right)^2, ~\label{eq:Ba_1010} \\
    \Ba[\varoast \boxast \varoast 2] &= \left(\left(S_2^4-M_4\right)^{1/2}+4C\bar{C}\right)^4, \\
    \Ba[\varoast 2 \boxast 2] &= 256 C^4 \left(S_2^8-D_2^{8}\right)^{1/2}+256 C^3 \left(S_2^6S_4^2-D_2^{6}D_4^2\right)^{1/2} +96 C^2 \left(S_2^4 S_4^4-D_2^4 D_4^4\right)^{1/2} \nonumber \\
                & + 16 C \left(S_2^2 S_4^6-D_2^2 D_4^6\right)^{1/2} +\left(S_4^8-D_4^8\right)^{1/2} + 24 C^2\left(1-9 C^2+36 C^4-54 C^6\right),
                ~\label{eq:Ba_1100} \\
    \Ba[\varoast 2 \boxast \varoast] &= \left(32 \sqrt{2} C^3 S_4^{1/2} + 16 C^{2 s+1} S_2^{1/2} S_6^{1/2}  +2 \sqrt{2} C^2 S_8^{1/2}+12 C^2-36 C^4\right)^2, \\
    \Ba[\varoast 3 \boxast] &= 6272 \sqrt{2}C^7 S_4^{1/2} + C^6\left(1568 \sqrt{2}S_8^{1/2}+6272 S_2^{1/2} S_6^{1/2}\right) \nonumber \\
                & \quad+C^5\left(128 \sqrt{2}S_{12}^{1/2}+1792 S_4^{1/2} S_8^{1/2}+896 S_2^{1/2} S_{10}^{1/2}\right) \nonumber \\
                & \quad+C^4\left(2 \sqrt{2}S_{16}^{1/2}+224 S_6^{1/2} S_{10}^{1/2}+112 S_4^{1/2} S_{12}^{1/2}+32 S_2^{1/2} S_{14}^{1/2}\right) + 140 C^4-4900 C^8, ~\label{eq:Ba_1110} \\
    \Ba[\varoast 4] &= 65536 C^8.
\end{align}
\end{exmp}

\begin{proof}
Note that we have the following recurrence relations: $S_i^2-D_i^2=4C^i$, $S_i^4-D_i^4=8C^iS_{2i}$, $S_i S_j-D_i D_j=2 C^i S_{j-i}$, and $S_i S_j+D_i D_j=2 S_{i+j}$ with $j>i\in \mathbb{Z}_{> 0}$. Additionally, $D_2\equiv M_1$, $2S_2\equiv 1+M_2$, $4C=1-M_2$ and $S_2\equiv 1-2C$. Let $\tau = 1/2+\mathrm{i}u$. Some intermediate results of importance of each bit-channel at level $k=3$ and $k=4$ are summarized as follows.
\begin{enumerate}
    \item[\textit{i})] for $\mathfrak{a}^{\boxast 3}$:
    \begin{align}
        \mathscr{G}_{|\mathfrak{a}^{\boxast 3}|}(\nu) &= M_8^\nu, \\
        \mathscr{F}_{\mathfrak{a}^{\boxast 3},1}(s) &= \frac{1}{2} \left( \frac{1-M_8}{1+M_8} \right)^s(1+M_8), \quad \alpha_0^{\boxast 3} \equiv 0,  \\
        \mathfrak{B}(|\mathfrak{a}^{\boxast 3}|) &= 2\Fai[\boxast 3](1/2) = (1- M_{16})^{1/2}. ~\label{eq:B_a000}
    \end{align}
    \begin{enumerate}
        \item for $\mathfrak{a}^{\boxast 4}$:
        \begin{equation}
            \mathfrak{B}(|\mathfrak{a}^{\boxast 4}|) = 2\Fai[\boxast 4](1/2) = (1- M_{32})^{1/2}.
        \end{equation}
        \item for $\mathfrak{a}^{\boxast 3 \varoast}$:
        \begin{equation}
            \Ba[\boxast 3 \varoast] = 1-M_{16}.
        \end{equation}
    \end{enumerate}
    \item[\textit{ii})] for $\mathfrak{a}^{\boxast 2 \varoast}$:
    \begin{align}
        \mathscr{F}_{\mathfrak{a}^{\boxast 2},1}(s) &= \frac{1}{2} \left( \frac{1-M_4}{1+M_4} \right)^s(1+M_4), \quad \alpha_0^{\boxast 2} \equiv 0,  \\
        \tilde{\varphi}^{\boxast 2}(u) &= \mathscr{F}_{\mathfrak{a}^{\boxast 2},1}(\tau) \mathscr{F}_{\mathfrak{a}^{\boxast 2},1}(\tau^*)=  \frac{1}{4} (1-M_4^2) \equiv \mathscr{F}_{\varphi}^{\boxast 2}, \\
        \mathscr{F}_{\psi, 1}^{\boxast 2}(s) &= \mathscr{F}_{\psi, 2}^{\boxast 2}(s) \equiv 0, \\
        \mathscr{F}_{\mathfrak{a}^{\boxast 2 \varoast},1}(s) &= \mathscr{F}^2_{\mathfrak{a}^{\boxast 2},1}(s) = \frac{1}{4} \left( \frac{1-M_4}{1+M_4} \right)^{2s}(1+M_4)^2, \quad \alpha_0^{\boxast 2 \varoast} = 2 \mathscr{F}_{\varphi}^{\boxast 2} = \frac{1}{2} (1-M_4^2), \\
        \Ba[\boxast 2 \varoast] &= 2\Fai[\boxast 2 \varoast](1/2) + \alpha_0^{\boxast 2 \varoast} = 1-M_8. ~\label{eq:B_a001}
    \end{align}
    \begin{enumerate}
        \item for $\mathfrak{a}^{\boxast 2 \varoast \boxast}$:
        \begin{align}
            \Fai[\boxast 2 \varoast](t) + \Fai[\boxast 2 \varoast](t+1) &= \frac{1}{2}\left(1+M_8\right)\left(\frac{1-M_4}{1+M_4}\right)^{2 t}, \\
            \mathscr{G}_{|\mathfrak{a}^{\boxast 2 \varoast \boxast}|}(t) &= \frac{1}{4}\left(1+M_8\right)^2\left(\frac{2 M_4}{1+M_8}\right)^{2 t}, \\
            \Fai[\boxast 2 \varoast \boxast](s) &= \frac{1}{8}\left(1-M_4^2\right)^{2 s}\left(M_{16}+6 M_8+1\right)^{1-s}, \\
            \Ba[\boxast 2 \varoast \boxast] &= \frac{1}{4}\left(1-M_8\right)\left(\left(M_{16}+6 M_8+1\right)^{1/2}+M_8+3\right). 
        \end{align}
        \item for $\mathfrak{a}^{\boxast 2 \varoast 2}$:
        \begin{align}
            \tilphi[\boxast 2 \varoast](u) &= \Fai[\boxast 2 \varoast](\tau) \Fai[\boxast 2 \varoast](\tau^*) = \frac{1}{16}\left(1-M_8\right)^2 \equiv \Fphi[\boxast 2 \varoast],, \\
            \Fpsii[\boxast 2 \varoast](s) &= \Fpsiii[\boxast 2 \varoast](s) \equiv 0, \\
            \Fai[\boxast 2 \varoast 2](s) &= \Fai[\boxast 2 \varoast]^2(s) + 2 \alpha_0^{\boxast 2 \varoast} \Fai[\boxast 2 \varoast](s) \\
                &= \frac{1}{16}\left(1-M_4\right)^{4 s}\left(1+M_4\right)^{4(1-s)} + \frac{1}{4}\left(1-M_8\right)\left(1-M_4\right)^{2 s}\left(1+M_4\right)^{2(1-s)}, \\
            \alpha_0^{\boxast 2 \varoast 2} &= (\alpha_0^{\boxast 2 \varoast})^2 + 2 \Fphi[\boxast 2 \varoast] = \frac{3}{8}\left(1-M_8\right)^2, \\
            \Ba[\boxast 2 \varoast 2] &= 2\Fai[\boxast 2 \varoast 2](1/2) + \alpha_0^{\boxast 2 \varoast 2} =\left(1-M_8\right)^2.
        \end{align}
    \end{enumerate}
    \item[\textit{iii})] for $\mathfrak{a}^{\boxast \varoast \boxast}$:
    \begin{align}
        \mathscr{F}_{\mathfrak{a}^{\boxast },1}(s) &= \frac{1}{2} \left( \frac{1-M_2}{1+M_2} \right)^s (1+M_2), \quad \alpha_0^{\boxast } \equiv 0, \\
        \tilde{\varphi}^{\boxast }(u) &= \mathscr{F}_{\mathfrak{a}^{\boxast },1}(\tau) \mathscr{F}_{\mathfrak{a}^{\boxast },1}(\tau^*)=  \frac{1}{4} (1-M_4) \equiv \mathscr{F}_{\varphi}^{\boxast}, \\
        \mathscr{F}_{\psi, 1}^{\boxast}(s) &= \mathscr{F}_{\psi, 2}^{\boxast}(s) \equiv 0, \\
        \mathscr{F}_{\mathfrak{a}^{\boxast \varoast},1}(s) &= \mathscr{F}^2_{\mathfrak{a}^{\boxast },1}(s) = \frac{1}{4} \left( \frac{1-M_2}{1+M_2} \right)^{2s}(1+M_2)^2, \quad \alpha_0^{\boxast \varoast} = 2 \mathscr{F}_{\varphi}^{\boxast } = \frac{1}{2} (1-M_4), \\
        \mathscr{F}_{\mathfrak{a}^{\boxast \varoast},1}(t) + \mathscr{F}_{\mathfrak{a}^{\boxast \varoast},1}(t+1) &= \frac{1}{2}\left(1+M_4\right) \left( \frac{1-M_2}{1+M_2} \right)^{2t}, ~\label{eq:Fsum_a010} \\
        w_1^{\boxast \varoast} &= \left( \frac{1-M_2}{1+M_2} \right)^{2}, \quad z_1^{\boxast \varoast} = \frac{2 M_2}{1+M_4}, \quad \alpha_1^{\boxast \varoast} = \frac{1}{2} \left(1+M_4\right), \\
        \mathscr{G}_{|\mathfrak{a}^{\boxast \varoast}|}(\nu) &= 2^{\nu-1} M_2^\nu\left(1+M_4\right)^{1-\nu}, \quad \alpha_n^{\boxast \varoast} \equiv 0, \\
        \mathscr{G}_{|\mathfrak{a}^{\boxast \varoast \boxast}|}(\nu) &= \mathscr{G}_{|\mathfrak{a}^{\boxast \varoast}|}^2(\nu) = 2^{2 \nu-2} M_2^{2 \nu}\left(1+M_4\right)^{2(1-\nu)}, ~\label{eq:Ga_a010} \\
        \alpha_0^{\boxast \varoast \boxast} &= \frac{1}{4}\left(1-M_4\right)\left(3+M_4\right) = 1-\frac{1}{4}\left(1+M_4\right)^2, \quad \alpha_n^{\boxast \varoast \boxast} \equiv 0, \\
        \mathscr{G}_{|\mathfrak{a}^{\boxast \varoast \boxast}|}(t) &= \frac{1}{4} \left( \frac{2 M_2}{1+M_4} \right)^{2t} \left( 1+M_4 \right)^2, \ z_1^{\boxast \varoast \boxast} = \left( \frac{2 M_2}{1+M_4} \right)^{2}, \ \alpha_1^{\boxast \varoast \boxast} = \frac{1}{4} \left( 1+M_4 \right)^{2}, \\
        \mathscr{F}_{\mathfrak{a}^{\boxast \varoast \boxast}, 1}(s) &= \frac{1}{2}\left(\frac{1-z_1^{\boxast \varoast \boxast}}{1+z_1^{\boxast \varoast \boxast}}\right)^s\left(1+z_1^{\boxast \varoast \boxast}\right) \alpha_1^{\boxast \varoast \boxast} = \frac{1}{8}\left(1-M_4\right)^{2 s}\left(M_8+6 M_4+1\right)^{1-s}, ~\label{eq:Fa1_a010} \\
        \mathfrak{B}(|\mathfrak{a}^{\boxast \varoast \boxast}|) &= 2\Fai[\boxast \varoast \boxast](1/2) + \alpha_0^{\boxast \varoast \boxast} \nonumber \\
            &= \frac{1}{4}\left(1-M_4\right)\left(\left(M_{8}+6 M_4+1\right)^{1 / 2}+M_4+3\right). ~\label{eq:B_a010}
    \end{align}
    \begin{enumerate}
        \item for $\mathfrak{a}^{\boxast \varoast \boxast 2}$:
        \begin{align}
            \mathscr{G}_{|\mathfrak{a}^{\boxast \varoast \boxast 2}|}(\nu) &= \mathscr{G}_{|\mathfrak{a}^{\boxast \varoast \boxast}|}^2(\nu) = 2^{4 \nu-4} M_2^{4 \nu}\left(1+M_4\right)^{4(1-\nu)}, \\
            \alpha_0^{\boxast \varoast \boxast 2} &= 1-\frac{1}{16}\left(1+M_4\right)^4, \quad \alpha_n^{\boxast \varoast \boxast 2} \equiv 0, \\
            \mathscr{G}_{|\mathfrak{a}^{\boxast \varoast \boxast 2}|}(t) &= \frac{1}{16}\left(\frac{2 M_2}{1+M_4}\right)^{4 t}\left(1+M_4\right)^4, \quad z_1^{\boxast \varoast \boxast 2} = \left( \frac{2 M_2}{1+M_4} \right)^{4}, \quad \alpha_1^{\boxast \varoast \boxast 2} = \frac{1}{16} \left( 1+M_4 \right)^{4}, \\
            \Fai[\boxast \varoast \boxast 2](s) &= \frac{1}{32}\left((1+M_4)^4-16 M_8\right)^{s}\left((1+M_4)^4+16 M_8\right)^{1-s}, \\
            \Ba[\boxast \varoast \boxast 2] &= 1-\frac{1}{16}\left(1+M_4\right)^4+\frac{1}{16}\left(\left(1+M_4\right)^8-256 M_{16}\right)^{1/2}.
        \end{align}
        \item for $\mathfrak{a}^{\boxast \varoast \boxast \varoast}$:
        \begin{align}
            \tilphi[\boxast \varoast \boxast](u) &= \Fai[\boxast \varoast \boxast](\tau) \Fai[\boxast \varoast \boxast](\tau^*) = \frac{1}{64}\left(1-M_4\right)^2\left(M_8+6 M_4+1\right) \equiv \Fphi[\boxast \varoast \boxast], \\
            \Fpsii[\boxast \varoast \boxast](s) &= \Fpsiii[\boxast \varoast \boxast](s) \equiv 0, \\
            \Fai[\boxast \varoast \boxast \varoast](s) &= \Fai[\boxast \varoast \boxast]^2(s) + 2 \alpha_0^{\boxast \varoast \boxast} \Fai[\boxast \varoast \boxast](s) \nonumber \\
            &= \frac{1}{64}\left(1-M_4\right)^{4 s}\left(M_8+6 M_4+1\right)^{2(1-s)} - \frac{1}{16}\left(3+M_4\right)\left(1-M_4\right)^{2 s+1}\left(M_8+6 M_4+1\right)^{1-s} \\
            \alpha_0^{\boxast \varoast \boxast \varoast} &= (\alpha_0^{\boxast \varoast \boxast})^2 + 2 \Fphi[\boxast \varoast \boxast] = \frac{1}{32}\left(1-M_4\right)^2\left(3 M_8+18 M_4+19\right), \\
            \Ba[\boxast \varoast \boxast \varoast] &= 2\Fai[\boxast \varoast \boxast \varoast](1/2) + \alpha_0^{\boxast \varoast \boxast \varoast} \nonumber \\
            &= \frac{1}{16}\left(1-M_4\right)^2\left(\left(M_{8}+6 M_4+1\right)^{1 / 2}+M_4+3\right)^2.
        \end{align}
    \end{enumerate}
    \item[\textit{iv})] for $\mathfrak{a}^{\boxast \varoast 2}$:
    \begin{align}
        \tilphi[\boxast \varoast](u) &= \Fai[\boxast \varoast](\tau) \Fai[\boxast \varoast](\tau^*) = \frac{1}{16}\left(1-M_4\right)^2 \equiv \Fphi[\boxast \varoast], ~\label{eq:tilphi_a011} \\
        \Fpsii[\boxast \varoast](s) &= \Fpsiii[\boxast \varoast](s) \equiv 0, \\
        \Fai[\boxast \varoast 2](s) &= \Fai[\boxast \varoast]^2(s) + 2 \alpha_0^{\boxast \varoast} \Fai[\boxast \varoast](s) \nonumber \\
            &= \frac{1}{16} (1-M_2)^{4s}(1+M_2)^{4(1-s)} + \frac{1}{4}\left(1-M_4\right)\left(1-M_2\right)^{2 s}\left(1+M_2\right)^{2(1-s)},  \\
        \alpha_0^{\boxast \varoast 2} &= (\alpha_0^{\boxast \varoast})^2 + 2 \Fphi[\boxast \varoast] = \frac{3}{8} (1-M_4)^2,  \\
        \Ba[\boxast \varoast 2] &= 2\Fai[\boxast \varoast 2](1/2) + \alpha_0^{\boxast \varoast 2} = (1-M_4)^2. ~\label{eq:B_a011}
    \end{align}
    \begin{enumerate}
        \item for $\mathfrak{a}^{\boxast \varoast 2 \boxast}$:
        \begin{gather}
            \Fai[\boxast \varoast 2](t)+ \Fai[\boxast \varoast 2](t+1) = \frac{1}{8}\left(1+6M_4+M_8\right)\left(\frac{1-M_2}{1+M_2}\right)^{4 t}+\frac{1}{2}\left(1-M_8\right)\left(\frac{1-M_2}{1+M_2}\right)^{2 t}, \\
            w_1^{\boxast \varoast 2} = \left(\frac{1-M_2}{1+M_2}\right)^2, \quad z_1^{\boxast \varoast 2} = \frac{2 M_2}{1+M_4}, \quad \alpha_1^{\boxast \varoast 2} = \frac{1}{2}\left(1-M_8\right), \\
            w_2^{\boxast \varoast 2} = \left(\frac{1-M_2}{1+M_2}\right)^4, \quad z_2^{\boxast \varoast 2} = \frac{4 M_2\left(1+M_4\right)}{1+6 M_4+M_8}, \quad \alpha_2^{\boxast \varoast 2} = \frac{1}{8}\left(1+6M_4+M_8\right), \\
            \Ga[\boxast \varoast 2](\nu) = \frac{1}{8}\left(1+6 M_4+M_8\right) \left(\frac{4 M_2\left(1+M_4\right)}{1+6 M_4+M_8}\right)^{\nu} + \frac{1}{2}\left(1-M_8\right) \left(\frac{2 M_2}{1+M_4}\right)^{\nu}, \\
            \Ga[\boxast \varoast 2 \boxast](t) = \Ga[\boxast \varoast 2]^2(t)=\sum_{i=1}^3\alpha_i^{\boxast \varoast 2 \boxast} \left(z_i^{\boxast \varoast 2 \boxast}\right)^t, \\
            z_1^{\boxast \varoast 2 \boxast} = \frac{4 M_4}{\left(1+M_4\right)^2}, \quad \alpha_1^{\boxast \varoast 2 \boxast} = \frac{1}{4}\left(1-M_8\right)^2, \\
            z_2^{\boxast \varoast 2 \boxast} = \frac{16 M_4\left(1+M_4\right)^2}{\left(1+M_4+6 M_4+M_8\right)^2}, \quad \alpha_2^{\boxast \varoast 2 \boxast} = \frac{1}{64}\left(1+6 M_4+M_8\right)^2, \\
            z_3^{\boxast \varoast 2 \boxast} = \frac{8 M_4}{1+6 M_4+M_8}, \quad \alpha_3^{\boxast \varoast 2 \boxast} = \frac{1}{8}\left(1+6 M_4+M_8\right)\left(1-M_8\right), \\
            \alpha_0^{\boxast \varoast 2 \boxast} = \frac{1}{64}\left(1-M_4\right)^2\left(39+18 M_4-9 M_8\right),
        \end{gather}
        \begin{align}
            \Fai[\boxast \varoast 2 \boxast](s) =& \frac{\left(1-M_4\right)^{2 s}}{8}\left(\left(1-M_4\right)^2\left(1+6 M_4+M_8\right)^{1-s}\right) \nonumber \\
            & +\frac{\left(1-M_4\right)^{4 s}}{128}\left(1+28 M_4+70 M_8+28 M_{12}+M_{16}\right)^{1-s} \nonumber \\
            & +\frac{\left(1-M_4\right)^{2 s}}{16}\left(\left(1-M_8\right)\left(1+14 M_4+M_8\right)^{1-s}\right), \\
            \Ba[\boxast \varoast 2 \boxast] &= 2\Fai[\boxast \varoast 2 \boxast](1/2) + \alpha_0^{\boxast \varoast 2 \boxast} \nonumber \\
                & =\frac{\left(1-M_4\right)^2}{64}\left(16\left(1-M_4\right) \left(1+6 M_4+M_4^2\right)^{1/2}\right. \nonumber \\
                & \quad+\left(1+28 M_4+70 M_4^2+28 M_4^3+M_4^4\right)^{1/2} \nonumber \\
                & \left.\quad+8\left(1+M_4\right) \left(1+14 M_4+M_4^2\right)^{1/2}+39+18 M_4-9 M_4^2\right).
        \end{align}
        \item for $\mathfrak{a}^{\boxast \varoast 3}$:
        \begin{equation}
            \Ba[\boxast \varoast 3] = \left(1-M_4\right)^4.
        \end{equation}
    \end{enumerate}
    \item[\textit{v})] for $\mathfrak{a}^{\varoast \boxast 2}$:
    \begin{align}
        \Fai[\varoast](s) &= A^2(s), \quad \alpha_0^{\varoast}= 2C, \\
        \Fai[\varoast](t)+ \Fai[\varoast](t+1) &= S_2\cdot \mathrm{e}^{2\Delta t}, \\
        w_1^{\varoast} &= \mathrm{e}^{2\Delta}, \quad z_1^{\varoast} = \frac{M_1}{S_2}, \quad \alpha_1^{\varoast} = S_2 \\
        \Ga[\varoast](\nu) &= (z_1^{\varoast})^{\nu}\cdot \alpha_1^{\varoast} = S_2^{1-\nu} M_1^\nu, \quad \alpha_n^{\varoast} \equiv 0, \\
        \Ga[\varoast \boxast](\nu) &= \Ga[\varoast]^2(\nu)=S_2^{2(1-\nu)} M_1^{2\nu}, \quad \alpha_0^{\varoast \boxast} = 4C\bar{C}, \quad \alpha_n^{\varoast \boxast} = 0, \\
        \Ga[\varoast \boxast 2](\nu) &= \Ga[\varoast \boxast]^2(\nu)=S_2^{4(1-\nu)} M_1^{4\nu}, \quad \alpha_0^{\varoast \boxast 2} = 1-S_2^4, \quad \alpha_n^{\varoast \boxast 2} = 0, \\
        \Ga[\varoast \boxast 2](t) &= \Ga[\varoast \boxast]^2(t)=S_2^{4(1-t)} M_1^{4t}, \quad z_1^{\varoast \boxast 2} = \left( \frac{M_1}{S_2} \right)^4, \quad \alpha_1^{\varoast \boxast 2} = S_2^4, \\
        \Fai[\varoast \boxast 2](s) &= \frac{1}{2}\left(\frac{1-z_1^{\varoast \boxast 2}}{1+z_1^{\varoast \boxast 2}}\right)^s\left(1+z_1^{\varoast \boxast 2}\right) \alpha_1^{\varoast \boxast 2}=\frac{1}{2}\left(S_2^4+M_4\right)^{1-s}\left(S_2^4-M_4\right)^s, \\
        \Ba[\varoast \boxast 2]  &= 2\Fai[\varoast \boxast 2](1/2) + \alpha_0^{\varoast \boxast 2} = \left(S_2^8-M_8\right)^{1/2}+1-S_2^4.
    \end{align}
    \begin{enumerate}
        \item for $\mathfrak{a}^{\varoast \boxast 3}$:
        \begin{align}
            \Ga[\varoast \boxast 3](\nu) &= \Ga[\varoast \boxast 2]^2(\nu)=S_2^{8(1-\nu)} M_1^{8\nu}, \quad \alpha_0^{\varoast \boxast 3} = 1-S_2^8, \quad \alpha_n^{\varoast \boxast 3} = 0, \\
            \Ga[\varoast \boxast 3](t) &= \Ga[\varoast \boxast 2]^2(t)=S_2^{8(1-t)} M_1^{8t},  \quad z_1^{\varoast \boxast 3} = \left( \frac{M_1}{S_2} \right)^8, \quad \alpha_1^{\varoast \boxast 3} = S_2^8, \\
            \Fai[\varoast \boxast 3](s) &= \frac{1}{2}\left(S_2^8+M_8\right)^{1-s}\left(S_2^8-M_8\right)^s, \\
            \Ba[\varoast \boxast 3] &= \left(S_2^{16}-M_{16} \right)^{1/2}+1-S_2^{8}.
        \end{align}
        \item for $\mathfrak{a}^{\varoast \boxast 2 \varoast}$:
        \begin{align}
            \tilphi[\varoast \boxast 2](u) &= \Fai[\varoast \boxast 2](\tau) \Fai[\varoast \boxast 2](\tau^*)  = \frac{1}{4}\left(S_2^8-M_8\right) \equiv \Fphi[\varoast \boxast 2], \\
            \Fpsii[\varoast \boxast 2](s) &= \Fpsiii[\varoast \boxast 2](s) \equiv 0, \\
            \Fai[\varoast \boxast 2 \varoast](s) &= \Fai[\varoast \boxast 2]^2(s) + 2\alpha_0^{\varoast \boxast 2} \Fai[\varoast \boxast 2](s) \nonumber \\
            &= \frac{1}{4}\left(S_2^4-M_4\right)^{2 s}\left(S_2^4+M_4\right)^{2(1-s)}+\left(1-S_2^4\right)\left(S_2^4-M_4\right)^s\left(S_2^4+M_4\right)^{1-s}, \\
            \alpha_0^{\varoast \boxast 2 \varoast} &= \left(\alpha_0^{\varoast \boxast 2}\right)^2+2 \Fphi[\varoast \boxast 2] = 1-2S_2^4+\frac{3}{2}S_2^8-\frac{1}{2}M_8, \\
            \Ga[\varoast \boxast 2 \varoast](\nu) &= M_4^\nu S_2^{-4\nu} \left( 2(1-S_2^4)S_2^4 + 2^{\nu-1} S_2^{8\nu}(M_4^2+S_2^8)^{1-\nu} \right), \\
            \Ba[\varoast \boxast 2 \varoast] &= \left(\left(S_2^8-M_8\right)^{1/2}+1-S_2^4\right)^2.
        \end{align}
    \end{enumerate}
    \item[\textit{vi})] for $\mathfrak{a}^{\varoast \boxast \varoast}$:
    \begin{align}
        \Ga[\varoast \boxast](t) &= \Ga[\varoast]^2(t)=S_2^{2(1-t)} M_1^{2t}, \quad z_1^{\varoast \boxast} = \left( \frac{M_1}{S_2} \right)^2, \quad \alpha_1^{\varoast \boxast} = S_2^2, \\
        \Fai[\varoast \boxast](s) &= \frac{1}{2}\left(\frac{1-z_1^{\varoast \boxast}}{1+z_1^{\varoast \boxast}}\right)^s\left(1+z_1^{\varoast \boxast}\right) \alpha_1^{\varoast \boxast}= \frac{1}{2}\left(S_2^2+M_2\right)^{1-s}\left(S_2^2-M_2\right)^s, \\
        \tilphi[\varoast \boxast](u) &= \Fai[\varoast \boxast](\tau)\Fai[\varoast \boxast](\tau^*) = \frac{1}{4}\left(S_2^4-M_4\right)\equiv \Fphi[\varoast \boxast], \\
        \Fpsii[\varoast \boxast \varoast](s) &= \Fpsiii[\varoast \boxast \varoast](s) \equiv 0, \\
        \Fai[\varoast \boxast \varoast](s) &= \Fai[\varoast \boxast]^2(s) + 2 \alpha_0^{\varoast \boxast} \Fai[\varoast \boxast](s) \nonumber \\ 
            &= \frac{1}{4}\left(S_2^2+M_2\right)^{2(1-s)}\left(S_2^2-M_2\right)^{2s}+ 4C\bar{C}\left(S_2^2+M_2\right)^{1-s}\left(S_2^2-M_2\right)^s, \\
        \alpha_0^{\varoast \boxast \varoast} &= (\alpha_0^{\varoast \boxast})^2 + 2 \Fphi[\varoast \boxast] = 16C^2\bar{C}^2+\frac{1}{2}\left(S_2^4-M_4\right) \\
        \Ba[\varoast \boxast \varoast] &= 2\Fai[\varoast \boxast \varoast](1/2) + \alpha_0^{\varoast \boxast \varoast} = \left(\left(S_2^4-M_4\right)^{1/2}+4C\bar{C}\right)^2.
    \end{align}
    \begin{enumerate}
        \item for $\mathfrak{a}^{\varoast \boxast \varoast \boxast}$:
        \begin{gather}
            \Fai[\varoast \boxast \varoast](t) + \Fai[\varoast \boxast \varoast](s)(t+1) = \frac{1}{2}\left(S_2^4+M_4\right)\left(\frac{S_2^2-M_2}{S_2^2+M_2}\right)^{2 t}+8 C \bar{C} S_2^2\left(\frac{S_2^2-M_2}{S_2^2+M_2}\right)^t, \\
            w_1^{\varoast \boxast \varoast} = \frac{S_2^2-M_2}{S_2^2+M_2}, \quad  z_1^{\varoast \boxast \varoast} =\frac{M_2}{S_2^2}, \quad \alpha_1^{\varoast \boxast \varoast} = 8 C \bar{C} S_2^2, \\
            w_2^{\varoast \boxast \varoast} = \left(\frac{S_2^2-M_2}{S_2^2+M_2}\right)^2, \quad  z_2^{\varoast \boxast \varoast} =\frac{2 S_2^2 M_2}{S_2^4+M_4}, \quad \alpha_2^{\varoast \boxast \varoast} = \frac{1}{2}\left(S_2^4+M_4\right), \\
            \Ga[\varoast \boxast \varoast](\nu) = M_2^\nu\left(8 C \bar{C} S_2^{2(1-\nu)}+2^{\nu-1} S_2^{2 \nu}\left(S_2^4+M_4\right)^{1-\nu}\right), \quad \alpha_n^{\varoast \boxast \varoast} \equiv 0,
        \end{gather}
        \begin{align}
            &\Ga[\varoast \boxast \varoast \boxast](\nu) =  M_2^{2\nu}\left(64 C^2 \bar{C}^2 S_2^{4(1-\nu)} + 2^{\nu+3} C \bar{C} S_2^2\left(S_2^4+M_4\right)^{1-\nu} + 2^{2 \nu-2} S_2^{4 \nu}\left(S_2^4+M_4\right)^{2(1-\nu)} \right), \\
            &\alpha_0^{\varoast \boxast \varoast \boxast} = 32C^2\bar{C}^2-M_4+S_2^4-\frac{1}{4}\left(32C^2\bar{C}^2-M_4+S_2^4\right)^2, \\
            &z_1^{\varoast \boxast \varoast \boxast}= \left(\frac{M_2}{S_2^2}\right)^2, \quad z_2^{\varoast \boxast \varoast \boxast}=\frac{2 M_2^2}{S_2^4+M_4}, \quad z_3^{\varoast \boxast \varoast \boxast}=\left(\frac{2 M_2 S_2^2}{S_2^4+M_4}\right)^2, \\
            &\alpha_1^{\varoast \boxast \varoast \boxast}= 64 C^2 \bar{C}^2 S_2^4, \quad \alpha_2^{\varoast \boxast \varoast \boxast}=8 C \bar{C} S_2^2 \left(S_2^4+M_4\right), \quad \alpha_3^{\varoast \boxast \varoast \boxast}=\frac{1}{4}\left(S_2^4+M_4\right)^2, \\
            &\Fai[\varoast \boxast \varoast \boxast](s)=32 C^2 \bar{C}^2\left(S_2^4-M_4\right)^s\left(S_2^4+M_4\right)^{1-s}+4 C \bar{C} S_2^2\left(S_2^4-M_4\right)^s\left(S_2^4+3 M_4\right)^{1-s} \nonumber \\
            &\qquad  \qquad + \frac{1}{8}\left(S_2^4-M_4\right)^{2 s}\left(S_2^8+6 S_2^4 M_4+M_8\right)^{1-s}, \\
            &\Ba[\varoast \boxast \varoast \boxast] = 64 C^2 \bar{C}^2 \left( S_2^8-M_8 \right)^{1/2} + 8 C \bar{C} S_2^2 \left(S_2^4-M_4\right)^{1/2}\left(S_2^4+3 M_4\right)^{1/2} \nonumber \\
            &\qquad  \qquad + \frac{1}{4}\left(S_2^4-M_4\right)\left( S_2^8+6 S_2^4 M_4+M_8 \right)^{1/2} \nonumber \\
            &\qquad  \qquad + 32C^2\bar{C}^2-M_4+S_2^4-\frac{1}{4}\left(32C^2\bar{C}^2-M_4+S_2^4\right)^2.
        \end{align}
        \item for $\mathfrak{a}^{\varoast \boxast \varoast 2}$:
        \begin{align}
            \tilphi[\varoast \boxast \varoast](u) &= \Fai[\varoast \boxast \varoast](\tau)\Fai[\varoast \boxast \varoast](\tau^*)= \frac{1}{16}\left(S_2^4-M_4\right)^2 + 16 C^2 \bar{C}^2\left(S_2^4-M_4\right) \nonumber \\
            &+ 2 C \bar{C}\left(S_2^4-M_4\right)^{3 / 2} \cos \left( \log \left(\frac{S_2^2+M_2}{S_2^2-M_2}\right) u\right), \\
            \widehat{\varphi}^{\varoast \boxast \varoast}(\omega) &= 2\pi\left( \Fphi[\varoast \boxast \varoast] \cdot \delta(\omega)+ C \bar{C}\left(S_2^4-M_4\right)^{3 / 2} \delta\left(\omega\pm\omega_1\right)\right), \\
            \Fphi[\varoast \boxast \varoast] &= \frac{1}{16}\left(S_2^4-M_4\right)^2 + 16 C^2 \bar{C}^2\left(S_2^4-M_4\right), \quad \omega_1 = \log \left(\frac{S_2^2+M_2}{S_2^2-M_2}\right), \\
            \Fpsii[\varoast \boxast \varoast](s) &= C \bar{C}\left(S_2^2+M_2\right)^{2-s}\left(S_2^2-M_2\right)^{s+1}, \\
            \Fai[\varoast \boxast \varoast 2](s) &= \Fai[\varoast \boxast \varoast]^2(s) + 2\alpha_0^{\varoast \boxast \varoast}\Fai[\varoast \boxast \varoast](s)+2\Fpsii[\varoast \boxast \varoast](s) \nonumber \\
                &= \frac{1}{16}\left(S_2^2+M_2\right)^{4(1-s)}\left(S_2^2-M_2\right)^{4 s}+2 C \bar{C}\left(S_2^2+M_2\right)^{3(1-s)}\left(S_2^2-M_2\right)^{3 s} \nonumber \\
                & +24 C^2 \bar{C}^2\left(S_2^2+M_2\right)^{2(1-s)}\left(S_2^2-M_2\right)^{2 s}+\frac{1}{4}\left(S_2^2+M_2\right)^{3-2 s}\left(S_2^2-M_2\right)^{1+2 s} \nonumber \\
                & +128 C^3 \bar{C}^3\left(S_2^2+M_2\right)^{1-s}\left(S_2^2-M_2\right)^s+2 C \bar{C}\left(S_2^2+M_2\right)^{1+s}\left(S_2^2-M_2\right)^{2-s} \nonumber \\
                & +4 C \bar{C}\left(S_2^2+M_2\right)^{2-s}\left(S_2^2-M_2\right)^{1+s}, \\
            \alpha_0^{\varoast \boxast \varoast 2} &= \left( \alpha_0^{\varoast \boxast \varoast} \right)^2 + 2\Fphi[\varoast \boxast \varoast] = 256 C^4 \bar{C}^4+48 C^2 \bar{C}^2\left(S_2^4-M_4\right)+\frac{3}{8}\left(S_2^4-M_4\right)^2, \\
            \Ba[\varoast \boxast \varoast 2] &= \left(\left(S_2^4-M_4\right)^{1/2}+4C\bar{C}\right)^4.
        \end{align}
    \end{enumerate}
    \item[\textit{vii})] for $\mathfrak{a}^{\varoast 2 \boxast}$: from Table~\ref{tab:bsc_variable_evo} and Example~\ref{exmp:bsc_1101}, we have
    \begin{align}
        \mathscr{F}_{\mathfrak{a}^{\varoast 2},1}(s) &= A^4(s)+4CA^2(s)=4 C^2 (S_{2}-D_{2})^{s-\frac{1}{2}} (D_{2}+S_{2})^{\frac{1}{2}-s}+C^2 (S_{4}-D_{4})^{s-\frac{1}{2}} (D_{4}+S_{4})^{\frac{1}{2}-s}, \\
        \Fphi[\varoast 2] &= 17C^4, \quad \Fpsii[\varoast 2](s)= 4C^3A^2(s), \\
        \alpha_0^{\varoast 2} &= 6C^2, \\
        \Ga[\varoast 2](\nu) &=4 C D_{2}^{\nu} S_{2}^{1-\nu}+D_{4}^{\nu} S_{4}^{1-\nu}, \\
        \Ga[\varoast 2 \boxast](\nu) &= 16 C^2 D_2^{2 \nu} S_2^{2(1-\nu)}+8 C\left(D_2 D_4\right)^\nu\left(S_2 S_4\right)^{1-\nu}+\left(D_4\right)^{2 \nu} S_4^{2(1-\nu)}, \\
        \Fai[\varoast 2 \boxast](s) &= 2^{s+4} C^{2 (s+1)} S_{4}^{1-s}+8 C^{2 s+1} S_{2}^{s} S_{6}^{1-s}+2^s C^{4 s} S_{8}^{1-s}, \\
        \alpha_0^{\varoast 2 \boxast} &= 2\alpha_0^{\varoast 2}-\left(\alpha_0^{\varoast 2}\right)^2 = 12 C^2\left(1-3 C^2\right),\\
        \Ba[\varoast 2 \boxast] &= 2\Fai[\varoast 2 \boxast](1/2) + \alpha_0^{\varoast 2 \boxast} \nonumber \\
            &=  32 \sqrt{2} C^3 S_4^{1/2} + 16 C^{2} S_2^{1/2} S_6^{1/2}  +2 \sqrt{2} C^2 S_8^{1/2}+12 C^2-36 C^4 .
    \end{align}
    \begin{enumerate}
        \item for $\mathfrak{a}^{\varoast 2 \boxast 2}$:
        \begin{align}
            \Ga[\varoast 2 \boxast 2](\nu) &= 256 C^4 D_2^{4 \nu} S_2^{4(1-\nu)} +96 C^2 D_2^{2 \nu} D_4^{2 \nu} S_2^{2(1-\nu)} S_4^{2(1-\nu)} +256 C^3 D_2^{3 \nu} D_4^\nu S_2^{3(1-\nu)} S_4^{1-\nu} \nonumber \\
                & +D_4^{4 \nu} S_4^{4(1-\nu)} +16 C D_2^\nu D_4^{3 \nu} S_2^{1-\nu} S_4^{3(1-\nu)}, \\
            z_1^{\varoast 2 \boxast 2} = \frac{D_2^4}{S_2^4} &, z_2^{\varoast 2 \boxast 2} = \frac{D_2^3 D_4}{S_2^3 S_4}, z_3^{\varoast 2 \boxast 2} = \frac{D_2^2 D_4^2}{S_2^2 S_4^2}, z_4^{\varoast 2 \boxast 2} = \frac{D_2 D_4^3}{S_2 S_4^3}, z_5^{\varoast 2 \boxast 2} = \frac{D_4^4}{S_4^4}, \\
            & \alpha_1^{\varoast 2 \boxast 2} = 256 C^4 S_2^4, \ \alpha_2^{\varoast 2 \boxast 2} = 256 C^3 S_2^3 S_4, \ \alpha_3^{\varoast 2 \boxast 2} = 96 C^2 S_2^2 S_4^2, \\
            & \alpha_4^{\varoast 2 \boxast 2} = 16 C S_2 S_4^3, \ \alpha_5^{\varoast 2 \boxast 2} = S_4^4, \\
            & \alpha_0^{\varoast 2 \boxast 2} = 24 C^2\left(1-9 C^2+36 C^4-54 C^6\right), \\
            \Fai[\varoast 2 \boxast 2](s) = & 128 C^4\left(S_2^4-D_2^4\right)^s\left(S_2^4+D_2^4\right)^{1-s} + 128 C^3\left(S_2^3 S_4-D_2^3 D_4\right)^s\left(S_2^3 S_4+D_2^3 D_4\right)^{1-s} \nonumber \\
                & + 48 C^2\left(S_2^2 S_4^2-D_2^2 D_4^2\right)^s\left(S_2^2 S_4^2+D_2^2 D_4^2\right)^{1-s}  \nonumber \\ 
                & + 8 C\left(S_2 S_4^3-D_2 D_4^3\right)^s\left(S_2 S_4^3+D_2 D_4^3\right)^{1-s} + \frac{1}{2}\left(S_4^4-D_4^4\right)^s\left(S_4^4+D_4^4\right)^{1-s}, \\
            \Ba[\varoast 2 \boxast 2] =& 256 C^4 \left(S_2^8-D_2^{8}\right)^{1/2}+256 C^3 \left(S_2^6S_4^2-D_2^{6}D_4^2\right)^{1/2} +96 C^2 \left(S_2^4 S_4^4-D_2^4 D_4^4\right)^{1/2} \nonumber \\
                & + 16 C \left(S_2^2 S_4^6-D_2^2 D_4^6\right)^{1/2} +\left(S_4^8-D_4^8\right)^{1/2} \nonumber \\
                & + 24 C^2\left(1-9 C^2+36 C^4-54 C^6\right).
        \end{align}
        \item for $\mathfrak{a}^{\varoast 2 \boxast \varoast}$:
        \begin{align}
            \tilphi[\varoast 2 \boxast](u) &= \Fai[\varoast 2 \boxast](\tau)\Fai[\varoast 2 \boxast](\tau^*) \nonumber\\
                & = 512 C^6 S_4+64 C^4 S_2 S_6+2 C^4 S_8 \nonumber\\
                & +256 \sqrt{2} C^5 S_2^{1/2}S_4^{1/2}S_6^{1/2} \cos \left(u \log \frac{2S_6}{S_2S_4}\right) \nonumber\\
                & +64 C^5 S_4^{1/2} S_8^{1/2} \cos \left(u \log \frac{S_8}{C^2 S_4}\right) \nonumber\\
                & +16 \sqrt{2} C^4 S_2^{1/2}S_6^{1/2}S_8^{1/2} \cos \left(u \log \frac{S_2S_8}{2 C^2S_6}\right), \\
            \Fphi[\varoast 2 \boxast] &= 512 C^6 S_4+64 C^4 S_2 S_6+2 C^4 S_8, \\
            \Fpsii[\varoast 2 \boxast](s) &= 2^{8-s} C^5 S_2^s S_4^s S_6^{1-s} + 32 C^{2 s+4} S_4^s S_8^{1-s} + 2^{s+3} C^{2 s+3} S_6^s S_2^{1-s} S_8^{1-s}, \\
            \Fai[\varoast 2 \boxast \varoast](s) =& \Fai[\varoast 2 \boxast]^2(s) + 2\alpha_0^{\varoast 2 \boxast}\Fai[\varoast 2 \boxast](s)+2\Fpsii[\varoast 2 \boxast](s) \nonumber \\     
                =& \left(2^{s+4} C^{2 (s+1)} S_{4}^{1-s}+8 C^{2 s+1} S_{2}^{s} S_{6}^{1-s}+2^s C^{4 s} S_{8}^{1-s}\right)^2 \nonumber \\
                & + 24 C^2\left(1-3 C^2\right)\left(2^{s+4} C^{2 (s+1)} S_{4}^{1-s}+8 C^{2 s+1} S_{2}^{s} S_{6}^{1-s}+2^s C^{4 s} S_{8}^{1-s}\right)  \nonumber \\
                & + 2^{9-s} C^5 S_2^s S_4^s S_6^{1-s} + 64 C^{2 s+4} S_4^s S_8^{1-s} + 2^{s+4} C^{2 s+3} S_6^s S_2^{1-s} S_8^{1-s},   \nonumber \\ 
                =& 2^{s+8} C^{4 s+3} S_{2}^{s} S_{6}^{1-s} S_{4}^{1-s}+2^{2 s+5} C^{6 s+2} S_{8}^{1-s} S_{4}^{1-s}+3\ 2^{s+7} C^{2 s+4} S_{4}^{1-s}   \nonumber \\
                & - 9\cdot 2^{s+7} C^{2 s+6} S_{4}^{1-s}+2^{9-s} C^5 S_{2}^{s} S_{6}^{1-s} S_{4}^{s}+64 C^{2 s+4} S_{8}^{1-s} S_{4}^{s}+4^{s+4} C^{4 s+4} S_{4}^{2-2 s}   \nonumber \\
                & + 192 C^{2 s+3} S_{2}^{s} S_{6}^{1-s}-576 C^{2 s+5} S_{2}^{s} S_{6}^{1-s}+2^{s+4} C^{6 s+1} S_{2}^{s} S_{6}^{1-s} S_{8}^{1-s}  \nonumber \\
                & + 2^{s+4} C^{2 s+3} S_{2}^{1-s} S_{6}^{s} S_{8}^{1-s}+3\cdot 2^{s+3} C^{4 s+2} S_{8}^{1-s}-9\cdot 2^{s+3} C^{4 s+4} S_{8}^{1-s}  \nonumber \\
                & + 64 C^{4 s+2} S_{2}^{2 s} S_{6}^{2-2 s}+4^s C^{8 s} S_{8}^{2-2 s}, \\
            \Ga[\varoast 2 \boxast \varoast](\nu) =& -1152 C^6 \left(S_{4}-2 C^2\right)^{\nu } \left(2 C^2+S_{4}\right)^{1-\nu }+384 C^4 \left(S_{4}-2 C^2\right)^{\nu } \left(2 C^2+S_{4}\right)^{1-\nu } \nonumber \\
            & + 256 C^4 \left(S_{4}^{2}-4 C^4\right)^{\nu } \left(4 C^4+S_{4}^{2}\right)^{1-\nu }-576 C^5 \left(S_{6}-C^2 S_{2}\right)^{\nu } \left(S_{2} C^2+S_{6}\right)^{1-\nu } \nonumber \\
            & + 192 C^3 \left(S_{6}-C^2 S_{2}\right)^{\nu } \left(S_{2} C^2+S_{6}\right)^{1-\nu }+256 C^5 (2 S_{6}-S_{2} S_{4})^{\nu } (S_{2} S_{4}+2 S_{6})^{1-\nu } \nonumber \\
            & + 256 C^3 \left(S_{4} S_{6}-2 C^4 S_{2}\right)^{\nu } \left(2 S_{2} C^4+S_{4} S_{6}\right)^{1-\nu }+64 C^2 \left(S_{6}^{2}-C^4 S_{2}^{2}\right)^{\nu } \left(S_{2}^{2} C^4+S_{6}^{2}\right)^{1-\nu } \nonumber \\
            & - 72 C^4 \left(S_{8}-2 C^4\right)^{\nu } \left(2 C^4+S_{8}\right)^{1-\nu }+24 C^2 \left(S_{8}-2 C^4\right)^{\nu } \left(2 C^4+S_{8}\right)^{1-\nu } \nonumber \\
            & + 64 C^4 \left(S_{8}-C^2 S_{4}\right)^{\nu } \left(S_{4} C^2+S_{8}\right)^{1-\nu }+16 C^3 \left(S_{2} S_{8}-2 C^2 S_{6}\right)^{\nu } \left(2 S_{6} C^2+S_{2} S_{8}\right)^{1-\nu }  \nonumber \\
            & + 32 C^2 \left(S_{4} S_{8}-4 C^6\right)^{\nu } \left(4 C^6+S_{4} S_{8}\right)^{1-\nu }+16 C \left(S_{6} S_{8}-2 C^6 S_{2}\right)^{\nu } \left(2 S_{2} C^6+S_{6} S_{8}\right)^{1-\nu }  \nonumber \\
            & + \left(S_{8}^{2}-4 C^8\right)^{\nu } \left(4 C^8+S_{8}^{2}\right)^{1-\nu }, \\
            \alpha_0^{\varoast 2 \boxast \varoast} &= \left(\alpha_0^{\varoast 2 \boxast}\right)^2 + 2\Fphi[\varoast 2 \boxast] \nonumber \\
                &= 4 C^4 \left(324 C^4+8 C^2(32 S_{4}-27) +32 S_{2} S_{6}+S_{8}+36\right), \\
            \Ba[\varoast 2 \boxast \varoast] =& \left(32 \sqrt{2} C^3 S_4^{1/2} + 16 C^{2 s+1} S_2^{1/2} S_6^{1/2}  +2 \sqrt{2} C^2 S_8^{1/2}+12 C^2-36 C^4\right)^2.
        \end{align}
    \end{enumerate}
    \item[\textit{viii})] for $\mathfrak{a}^{\varoast 3}$: from Table~\ref{tab:bsc_variable_evo}, we have
    \begin{align}
        \mathscr{F}_{\mathfrak{a}^{\varoast 3},1}(s) &= A^8(s)+8 C A^6(s)+28 C^2 A^4(s)+56 C^3 A^2(s), \nonumber \\
        &=56 C^4 (S_{2}-D_{2})^{s-\frac{1}{2}} (D_{2}+S_{2})^{\frac{1}{2}-s}+28 C^4 (S_{4}-D_{4})^{s-\frac{1}{2}} (D_{4}+S_{4})^{\frac{1}{2}-s}\nonumber \\
        &+8 C^4 (S_{6}-D_{6})^{s-\frac{1}{2}} (D_{6}+S_{6})^{\frac{1}{2}-s}+C^4 (S_{8}-D_{8})^{s-\frac{1}{2}} (D_{8}+S_{8})^{\frac{1}{2}-s}, \\
        \alpha_{0}^{\varoast 3} &= 70C^4, \\
        \Ga[\varoast 3](\nu) &= 56 C^3 D_{2}^{\nu} S_{2}^{1-\nu}+28 C^2 D_{4}^{\nu} S_{4}^{1-\nu}+8 C D_{6}^{\nu} S_{6}^{1-\nu}+D_{8}^{\nu} S_{8}^{1-\nu}, \\
        \Ba[\varoast 3] &= 2\Fai[\varoast 3](1/2) + \alpha_0^{\varoast 3} = 256C^4. ~\label{eq:Ba_111}
    \end{align}
    \begin{enumerate}
        \item for $\mathfrak{a}^{\varoast 3 \boxast}$:
        \begin{gather}
            \mathscr{F}_{\mathfrak{a}^{\varoast 3},1}(t)+\mathscr{F}_{\mathfrak{a}^{\varoast 3},1}(t+1) = S_8 \mathrm{e}^{8 \Delta t}+8 C S_6 \mathrm{e}^{6 \Delta t}+28 C^2 S_4 \mathrm{e}^{4 \Delta t}+56 C^3 S_2 \mathrm{e}^{2 \Delta t}, \\
            z_1^{\varoast 3}= D_2/S_2, \quad z_2^{\varoast 3}= D_4/S_4, \quad z_3^{\varoast 3}= D_6/S_6, \quad z_4^{\varoast 3}= D_8/S_8, \\
            \alpha_1^{\varoast 3}= 56 C^3 S_2, \quad \alpha_2^{\varoast 3}= 28 C^2 S_4, \quad \alpha_3^{\varoast 3}= 8 C S_6, \quad \alpha_4^{\varoast 3}= S_8, \\
            \Ga[\varoast 3](\nu) = 56C^3 D_2^\nu S_2^{1-\nu} + 28C^2 D_4^\nu S_4^{1-\nu} + 8C D_6^\nu S_6^{1-\nu} +  D_8^\nu S_8^{1-\nu},
        \end{gather}
        \begin{align}
            \Ga[\varoast 3 \boxast](\nu) &= 3136 C^6 D_{2}^{2 \nu} S_{2}^{2-2 \nu}+3136 C^5 D_{2}^{\nu} D_{4}^{\nu} S_{2}^{1-\nu} S_{4}^{1-\nu}+784 C^4 D_{4}^{2 \nu} S_{4}^{2-2 \nu}+896 C^4 D_{2}^{\nu} D_{6}^{\nu} S_{2}^{1-\nu} S_{6}^{1-\nu} \nonumber \\
            &+448 C^3 D_{4}^{\nu} D_{6}^{\nu} S_{6}^{1-\nu} S_{4}^{1-\nu}+112 C^3 D_{2}^{\nu} D_{8}^{\nu} S_{2}^{1-\nu} S_{8}^{1-\nu}+56 C^2 D_{4}^{\nu} D_{8}^{\nu} S_{8}^{1-\nu} S_{4}^{1-\nu} \nonumber \\
            &+64 C^2 D_{6}^{2 \nu} S_{6}^{2-2 \nu}+16 C D_{6}^{\nu} D_{8}^{\nu} S_{6}^{1-\nu} S_{8}^{1-\nu}+D_{8}^{2 \nu} S_{8}^{2-2 \nu},
        \end{align}
        \begin{gather}
            z_1^{\varoast 3 \boxast} = \frac{D_2^2}{S_2^2}, \quad
            z_2^{\varoast 3 \boxast} = \frac{D_2D_4}{S_2S_4}, \quad
            z_3^{\varoast 3 \boxast} = \frac{D_2D_6}{S_2S_6}, \quad
            z_4^{\varoast 3 \boxast} = \frac{D_2D_8}{S_2S_8}, \quad
            z_5^{\varoast 3 \boxast} = \frac{D_4^2}{S_4^2}, \\
            z_6^{\varoast 3 \boxast} = \frac{D_4D_6}{S_4S_6}, \quad
            z_7^{\varoast 3 \boxast} = \frac{D_4D_8}{S_4S_8}, \quad
            z_8^{\varoast 3 \boxast} = \frac{D_6^2}{S_6^2}, \quad
            z_9^{\varoast 3 \boxast} = \frac{D_6D_8}{S_6S_8}, \quad
            z_{10}^{\varoast 3 \boxast} = \frac{D_8^2}{S_8^2}, \\
            \alpha_1^{\varoast 3 \boxast} = 3136 C^6 S_2^2,
            \alpha_2^{\varoast 3 \boxast} = 3136 C^5 S_2 S_4,
            \alpha_3^{\varoast 3 \boxast} = 896 C^4 S_2 S_6,
            \alpha_4^{\varoast 3 \boxast} = 112 C^3 S_2 S_8,
            \alpha_5^{\varoast 3 \boxast} = 784 C^4 S_4^2, \\
            \alpha_6^{\varoast 3 \boxast} = 448 C^3 S_4 S_6, \
            \alpha_7^{\varoast 3 \boxast} = 56 C^2 S_4 S_8, \
            \alpha_8^{\varoast 3 \boxast} = 64 C^2 S_6^2, \
            \alpha_9^{\varoast 3 \boxast} = 16 C S_6 S_8, \
            \alpha_{10}^{\varoast 3 \boxast} = S_8^2, \\
            \alpha_0^{\varoast 3 \boxast} = 140 C^4-4900 C^8,
        \end{gather}
        \begin{align}
            \Fai[\varoast 3 \boxast](s) &= 49\cdot 2^{s+6} C^{2 s+6} S_{4}^{1-s}+896 C^{2 s+4} S_{8}^{1-s} S_{4}^{s}+56 C^{4 s+2} S_{12}^{1-s} S_{4}^{s}+3136 C^{2 s+5} S_{2}^{s} S_{6}^{1-s} \nonumber \\
            &+49\cdot 2^{s+4} C^{4 (s+1)} S_{8}^{1-s}+448 C^{4 s+3} S_{2}^{s} S_{10}^{1-s}+112 C^{2 s+3} S_{6}^{s} S_{10}^{1-s}+2^{s+6} C^{6 s+2} S_{12}^{1-s}\nonumber \\
            &+16 C^{6 s+1} S_{2}^{s} S_{14}^{1-s}+2^s C^{8 s} S_{16}^{1-s},
        \end{align}
        \begin{align}
            \Ba[\varoast 3 \boxast] = &6272 \sqrt{2}C^7 S_4^{1/2} \nonumber \\
                & +C^6\left(1568 \sqrt{2}S_8^{1/2}+6272 S_2^{1/2} S_6^{1/2}\right) \nonumber \\
                & +C^5\left(128 \sqrt{2}S_{12}^{1/2}+1792 S_4^{1/2} S_8^{1/2}+896 S_2^{1/2} S_{10}^{1/2}\right) \nonumber \\
                & +C^4\left(2 \sqrt{2}S_{16}^{1/2}+224 S_6^{1/2} S_{10}^{1/2}+112 S_4^{1/2} S_{12}^{1/2}+32 S_2^{1/2} S_{14}^{1/2}\right) \nonumber \\
                & +140 C^4-4900 C^8.
        \end{align}
        \item for $\mathfrak{a}^{\varoast 4}$:
        \begin{align}
            \Fai[\varoast 4](s) &= A^{16}(s)+16 C A^{14}(s)+120 C^2 A^{12}(s)+560 C^3 A^{10}(s)  \nonumber\\
                    & +1820 C^4 A^8(s) + 4368 C^5 A^6(s) + 8008 C^6 A^4(s) + 11440 C^7 A^2(s), \\
                    & = 11440 C^8 (S_{2}-D_{2})^{s-\frac{1}{2}} (D_{2}+S_{2})^{\frac{1}{2}-s}+8008 C^8 (S_{4}-D_{4})^{s-\frac{1}{2}} (D_{4}+S_{4})^{\frac{1}{2}-s} \nonumber\\
                    & + 4368 C^8 (S_{6}-D_{6})^{s-\frac{1}{2}} (D_{6}+S_{6})^{\frac{1}{2}-s}+1820 C^8 (S_{8}-D_{8})^{s-\frac{1}{2}} (D_{8}+S_{8})^{\frac{1}{2}-s} \nonumber\\
                    & + 560 C^8 (S_{10}-D_{10})^{s-\frac{1}{2}} (D_{10}+S_{10})^{\frac{1}{2}-s}+120 C^8 (S_{12}-D_{12})^{s-\frac{1}{2}} (D_{12}+S_{12})^{\frac{1}{2}-s} \nonumber\\
                    & + 16 C^8 (S_{14}-D_{14})^{s-\frac{1}{2}} (D_{14}+S_{14})^{\frac{1}{2}-s}+C^8 (S_{16}-D_{16})^{s-\frac{1}{2}} (D_{16}+S_{16})^{\frac{1}{2}-s}, \\
            \alpha_0^{\varoast 4} &= 12870 C^8, \\
            \Ga[\varoast 4](\nu) &= 11440 C^7 D_{2}^{\nu} S_{2}^{1-\nu}+8008 C^6 D_{4}^{\nu} S_{4}^{1-\nu}+4368 C^5 D_{6}^{\nu} S_{6}^{1-\nu}+1820 C^4 D_{8}^{\nu} S_{8}^{1-\nu} \nonumber\\
            &+560 C^3 D_{10}^{\nu} S_{10}^{1-\nu}+120 C^2 D_{12}^{\nu} S_{12}^{1-\nu}+16 C D_{14}^{\nu} S_{14}^{1-\nu}+D_{16}^{\nu} S_{16}^{1-\nu}, \\
            \Ba[\varoast 4] &= 65536 C^8.
        \end{align} 
    \end{enumerate}
\end{enumerate}
\end{proof}


\bibliographystyle{unsrt}  
\bibliography{references}

\end{document}